\documentclass[12pt]{article}

\usepackage[margin=1in]{geometry}

\usepackage{graphicx}
\usepackage{subcaption}

\usepackage{amsmath,amssymb}
\usepackage{amsthm}
\usepackage{bm}

\usepackage[colorlinks=true,
            linkcolor=blue,
            urlcolor=blue,
            citecolor=blue]{hyperref}
\usepackage{enumerate}
\usepackage{comment}
\usepackage{natbib}

\newtheorem{theorem}{Theorem}[section]
\newtheorem{corollary}[theorem]{Corollary}
\newtheorem{lemma}[theorem]{Lemma}
\newtheorem{proposition}[theorem]{Proposition}
\theoremstyle{remark}
\newtheorem{remark}{Remark}[section]

\newcommand{\sectionmath}[2]{\texorpdfstring{#1}{#2}}

\newcommand{\R}{\mathbb{R}}
\newcommand{\norm}[1]{\left\lVert #1 \right\rVert}
\newcommand{\ind}[1]{\mathbb{I}\left[ #1 \right]}
\newcommand{\sign}{\mathrm{sign}}
\newcommand{\col}{\mathrm{col}}
\newcommand{\dir}{\mathrm{dir}}

\newcommand{\E}{\mathrm{E}}
\newcommand{\Var}{\mathrm{Var}}
\newcommand{\Unif}{\mathrm{Unif}}
\newcommand{\Beta}{\mathrm{Beta}}

\newcommand{\Expo}{\mathrm{Expo}}

\newcommand{\suffstat}{\bm{S}^{(1 : k)}}
\newcommand{\suffstats}{\bm{S}^{(1)}}
\newcommand{\glasso}{\hat{\bm{\beta}}^\lambda}
\newcommand{\glassos}{\hat{\beta}^\lambda}
\newcommand{\glassoc}{\tilde{\bm{\beta}}^\lambda}
\newcommand{\betak}{\check{\bm{\beta}}^\lambda}
\newcommand{\ols}{\hat{\bm{\beta}}_{\mathrm{OLS}}}
\newcommand{\olsk}{\hat{\bm{\beta}}_{1:k, \mathrm{OLS}}}
\newcommand{\olsj}{\hat{\bm{\beta}}_{j, \mathrm{OLS}}}
\newcommand{\yhat}{\hat{\bm{y}}_{1:k}}
\newcommand{\sigmahat}{\hat{\sigma}_{1:k}}
\newcommand{\sphere}{\mathbb{S}}
\newcommand{\hyp}{H_{1 : k}}
\newcommand{\condsample}{\tilde{\bm{y}}}
\newcommand{\error}{\hat{\bm{e}}_{1:k}}

\DeclareMathOperator*{\argmin}{arg\,min}

\title{The $L$-test: Increasing the Linear Model $F$-test's Power Under Sparsity Without Sacrificing Validity}

\author{\begin{tabular}{ccc}
Danielle Paulson & Souhardya Sengupta & Lucas Janson
\end{tabular}}

\date{November 2025}

\begin{document}

\maketitle

\begin{center}
\textbf{Abstract}
\end{center}
\noindent
\begin{quote}
\small
We introduce a new procedure for testing the significance of a set of regression coefficients in a Gaussian linear model with $n \geq d$. Our method, the $L$-test, provides the same statistical validity guarantee as the classical $F$-test, while attaining higher power when the nuisance coefficients are sparse. Although the $L$-test requires Monte Carlo sampling, each sample's runtime is dominated by simple matrix--vector multiplications so that the overall test remains computationally efficient. Furthermore, we provide a Monte-Carlo-free variant that can be used for particularly large-scale multiple testing applications. We give intuition for the power of our approach, validate its advantages through extensive simulations, and illustrate its practical utility in both single- and multiple-testing contexts with an application to an HIV drug resistance dataset. In the concluding remarks, we also discuss how our methodology can be applied to a more general class of parametric models that admit asymptotically Gaussian estimators.
\end{quote}

\section{Introduction}
\label{sec:intro}

\subsection{Motivation}
\label{sec:motivation}
Consider the homoscedastic Gaussian linear model
\begin{align}
    \label{eq:linear_model}
    \bm{y} \sim \mathcal N(\bm{X}\bm{\beta}, \sigma^2 \bm{I}),
\end{align}
where $\bm{X} \in \R^{n \times d}$ is full-column rank with $n \geq d$. The design matrix could be random, but we treat it as fixed throughout to avoid having to explicitly condition on it. The parameters $\bm{\beta} \in \R^d$ and $\sigma \in \R_{>0}$ are treated as unknown. To assess the contribution of a specified subset of covariates, we focus without loss of generality on the first $k$ and test the hypothesis $$\hyp: \bm{\beta}_{1:k} = \mathbf 0.$$

The classic approach is the linear regression $F$-test proposed by Fisher a century ago \cite{Fisher1925}. While the $F$-test is widely used, it lacks flexibility in that it does not allow a user to incorporate known structural information. In particular, we are interested in settings where the coefficient vector is \textit{sparse}, meaning many of the true coefficients are zero (though we find our method also performs well under various forms of \emph{approximate} sparsity as well). Real-world systems in the sciences and engineering often exhibit sparsity because although numerous potential predictors may exist, typically only a small number meaningfully affects the outcome \citep[e.g.,][]{han2021, Belloni2011}. The goal of this work is to allow a practitioner to leverage a belief about sparsity outside the tested subset for increased power while maintaining exact Type I error control under the same assumptions as the $F$-test. The recent work \cite{Sengupta2024} addresses this goal in the case $k = 1$, but their approach does not extend to the multivariate case due to the greater complexity that quickly arises beyond a single dimension.

\subsection{Our contribution}
\label{sec:contribution}
We propose a new hypothesis test for $H_{1:k}$ called the $L$-test that uses a group LASSO estimate for $\bm \beta_{1:k}$ in its test statistic. Under the same assumptions as the $F$-test, it retains exact validity while achieving greater power when the nuisance parameters are sparse and, empirically, only suffers at most very minor power loss even when the nuisance parameters are dense. While the $L$-test is computationally efficient, we show how a slight modification to the $L$-test statistic yields an even faster companion test valuable in particularly large-scale multiple testing settings. We provide insight into our methodology's power, demonstrate its power gains and robustness through extensive simulations, and showcase its practical effectiveness on an HIV drug resistance dataset.

\subsection{Background and related work}
\label{sec:related_work}
Many works have tried to generalize the $F$-test to settings where its classical assumptions---particularly normality---may not hold by developing more robust tests \citep{Friedman1937, Pitman1937, Pitman1938, Kruskal1952, Tukey1958, Hajek1962, Adichie1967, Jaeckel1972, Efron1979, Freedman1981, Gutenbrunner1993, Lei2020}. Because these methods are focused on extending the $F$-test to more general settings, they typically do not surpass the $F$-test's performance within the standard Gaussian linear model. In contrast, the present work aims to make the \textit{same} assumptions as the $F$-test, while leveraging information about sparsity for increased power.

A wide range of methods that leverage sparsity in particular to improve testing in the linear model have been developed. A prominent example is fixed-X knockoffs \citep{Barber2015, Candès2018, Huang2020} and its many extensions \citep[e.g.,][]{Spector2022, Luo2022, Ren2024, Lee2024} as well as methods based on mirror statistics \citep{Dai2022, Xing2023}. All of these methods, however, are designed for multiple testing and cannot be used for testing a single hypothesis, which is the focus of this work.

A related approach to this paper's is Frequentist, assisted by Bayes (FAB), which incorporates auxiliary information through a prior and uses it to design frequentist procedures---confidence intervals \cite{Hoff2019} and hypothesis tests \citep{Hoff2022, Bryan2021, Ding2021}---that can outperform classical ones when the prior aligns with the truth. The $L$-test is similar in spirit, in that it leverages available information about the unknown parameters for improved inference, just not in the form of a Bayesian prior. The $L$-test is also tailored to exploit sparsity for higher power, and to our knowledge, FAB has not been used to target this structural feature in hypothesis testing.

Closer to our setting, a few works have studied the problem of leveraging sparsity for single multi-parameter hypothesis testing in the linear model. For example, \cite{Candès2011} develops procedures for testing the global null that are asymptotically powerful under alternatives with different sparsity levels. These procedures are calibrated for asymptotic (not finite-sample) validity, and because they are tailored to testing the global null, they rely on sparsity exclusively among the null coefficients. By contrast, our approach is exactly valid in finite samples and leverages sparsity in the nuisance parameters---an advantage when the tested group is small, making it most useful to exploit sparsity outside the null. A complementary line of work uses the asymptotic distribution of the debiased LASSO estimator \citep{vandeGeer2014, Zhang2014, Montanari2014} to construct confidence regions and p-values for low-dimensional subsets of $\bm \beta$; these tests are likewise only asymptotically valid and, moreover, their validity relies on a strict sparsity assumption. In our case, validity does not hinge on sparsity; sparsity is solely used to increase power. The recent work \cite{Sengupta2024} is the one that is closest to ours in the literature, as it achieves our precise objective when $k=1$. Unfortunately, key elements of that work do not extend beyond $k=1$, as we will discuss in more detail when we lay out the background for our approach in the next section.

\subsection{Notation}
\label{sec:notation}

Boldfaced symbols represent matrices and vectors. For a matrix $\bm{A} \in \R^{n \times m}$, $\col(\bm{A})$ refers to the column space of $\bm{A}$. We also use $\bm{A}_{1:i}$ to represent its first $i$ columns, $\bm{A}_{-1:i}$ to represent its submatrix with the first $i$ columns dropped, and $A_{ij}$ to represent the entry in the $i$-th row and $j$-th column. Similarly, for a vector $\bm{a} \in \R^n$, we use $\bm{a}_{1:i}$ to represent its first $i$ elements, $\bm{a}_{-1:i}$ to represent its subvector with the first $i$ elements dropped, and $a_{i}$ to represent its $i$-th entry. All vectors are assumed to be column vectors. We use the notation $\ind{\cdot}$ to refer to the indicator function that takes the value 1 if the condition inside the brackets is true and 0 otherwise. We also use the notation $\sphere^{n-1}$ to represent the unit sphere in $\R^n$. Finally, we use $\norm{\cdot}_2$ to denote the $\ell^2$-norm and $\norm{\cdot}_1$ to denote the $\ell^1$-norm. When no subscript is included on a normed expression, the $\ell^2$-norm is assumed.


\subsection{Code}
\label{sec:code}
\texttt{Python} code for implementing our methodology and reproducing all empirical results in this paper can be accessed at this GitHub repository: \href{https://github.com/dipaulson/L-test/}{https://github.com/dipaulson/L-test/}.

\section{Methodology}
\label{sec:method}

\subsection{Conceptual framework}
\label{sec:extend_l_test}
We start by following the idea of the $\ell$-test \citep{Sengupta2024} and defining an analogous test statistic for general $k$ whose null distribution we obtain by conditioning on a suitable sufficient statistic. The minimal sufficient statistic for the linear model in \eqref{eq:linear_model} under $\hyp$ is $\suffstat := (\bm{X}_{-1:k}^T\bm{y}, \bm{y}^T\bm{y})$. By definition of a sufficient statistic, the distribution of $\bm y \mid \suffstat$ under $\hyp$ is free of unknown parameters, and hence the distribution of any deterministic function of $\bm{y}$ conditional on $\suffstat$ under $\hyp$ will also be parameter-free. Define the following group LASSO estimate \citep{Bakin1999, Zhang2003, Yuan&Lin2006} with penalty $\lambda$ \footnote{The penalty in the original definition of the group LASSO by \cite{Bakin1999} is formulated as a sum of Mahalanobis norms over groups, with each group $j$ associated with a positive semidefinite matrix $\bm{K}_j$. Different specifications of $\bm{K}_j$ lead to different variants of the group LASSO. We use the common default choice of $\bm{K}_j = \bm{I}$, suggested in \cite{Yuan&Lin2006}.}:
\begin{align}
\label{eq:group_lasso}
\glasso  
    :=\argmin_{\bm{\beta} \in \R^d}\left(\frac{1}{2n}\norm{\bm{y} - \bm{X}\bm{\beta}}_2^2 + \lambda\left( \norm{\bm{\beta}_{1:k}}_2 + \norm{\bm{\beta}_{-1:k}}_1\right)\right).
\end{align}
Note that while \eqref{eq:group_lasso} seems natural for our setting, many other LASSO-based estimators have been proposed in the literature \citep[e.g., ][]{Tibshirani1996, Zou&Hastie2005, Zou2006, Tibshirani2013}, and we motivate this choice in the context of some of these alternatives in Appendix \ref{sec:power_plots_add}. Also let $\bm{V} \in \R^{n \times (n-d+k)}$ denote a matrix whose columns constitute an orthonormal basis for $\col(\bm{X}_{-1:k})^\perp$ with the first $k$ columns given by $\bm{V}_i = \frac{(\bm{I}-\bm{P}_{-1:i})\bm{X}_i}{\norm{(\bm{I}-\bm{P}_{-1:i})\bm{X}_i}}$ for $i \in \{1, \ldots, k\}$ and $\bm{P}_{-1:i}$ the projection matrix onto $\col(\bm{X}_{-1:i})$ (for an explicit construction, see Appendix \ref{sec:V_construction}). By the previous sufficiency argument, the test statistic
\begin{align}
\label{eq:L-test_stat_1}
T(\bm y) := \norm{\bm{V}_{1:k}^T\bm{X}_{1:k}\glasso_{1:k}}
\end{align}
conditional on $\suffstat$ under $\hyp$ has a distribution free of unknown parameters. To understand the form of \eqref{eq:L-test_stat_1}, note that the $F$-test can be shown to be exactly equivalent to rejecting large values of 
\begin{align}
\label{eq:F-test_stat_cond}
\norm{\bm{V}_{1:k}^T\bm{X}_{1:k}\olsk}
\end{align}
conditional on $\suffstat$, where $\ols$ denotes the OLS estimate (see Lemma \ref{lma:cond_F} for a proof). So, \eqref{eq:L-test_stat_1} is simply the result of replacing the OLS estimate $\olsk$ with the group LASSO estimate $\glasso_{1:k}$ in the $F$-test statistic defined in \eqref{eq:F-test_stat_cond}. The motivation is that since the group LASSO is a better estimator of $\bm{\beta}_{1:k}$ than OLS when $\bm{\beta}_{-1:k}$ is sparse, we expect using $\glasso_{1:k}$ in the test statistic will lead to superior power in such a setting.

One way to compute a p-value for the test that rejects for large values of $T(\bm{y})$ conditional on $\suffstat$ is via the co-sufficient sampling (CSS) framework \citep{Bartlett1937, Stephens2012, Barber2022}, using Monte Carlo (MC) draws:
\begin{align}
\label{eq:mc_p}
\frac{1}{M+1}\left(1 + \sum_{i = 1}^M\ind{T(\bm{y}^{(i)}) \geq T(\bm{y})}\right), \quad \bm{y}^{(i)} \overset{\text{i.i.d.}}{\sim} \bm{y} \mid \suffstat \text{ under $\hyp$}.
\end{align}
We can sample from $\bm{y} \mid \suffstat \text{ under $\hyp$}$ by letting $\yhat = \bm{P}_{-1:k}\bm{y}$ and $\sigmahat^2 = \norm{(\bm{I} - \bm{P}_{-1:k})\bm{y}}^2$ and noting that there exists a unique unit vector $\bm{u} \in \R^{n-d+k}$ such that 
\begin{align}
\label{eq:decomp}
\bm{y} = \yhat + \sigmahat \bm{V}\bm{u}, \quad \bm{u} \mid \suffstat \overset{\hyp}{\sim} \Unif(\sphere^{n-d+k-1}), 
\end{align}
a standard result that is derived as a multivariate extension of Proposition E.1 in \cite{Luo2022} in Lemma \ref{lma:decomp}. Because the data $\bm y$ and its sampled copies $\{\bm y^{(1)}, \ldots, \bm y^{(M)}\}$ are i.i.d. conditional on $\suffstat$ under $\hyp$, the data and its copies are exchangeable under $\hyp$. As a result, the MC p-value in \eqref{eq:mc_p} stochastically dominates $\Unif(0, 1)$ under $\hyp$ and is hence valid.

The difficulty with this MC approach, however, is that it requires running the group LASSO on each MC sample---a computationally expensive task, especially when large $M$ is needed for high-resolution p-values, as in multiple testing settings. Everything in this subsection thus far matches the preliminaries of the $\ell$-test \citep{Sengupta2024} when $k=1$, but that paper is then able to avoid MC by deriving the null distribution of $T(\bm{y}) \mid \suffstat$ (when $k = 1$). Their proof approach relies on insightfully characterizing $\glassos_1$ as a function of $u_1$ (Theorem 2.1 in \cite{Sengupta2024})---the first component of the unit vector $\bm u$ defined in \eqref{eq:decomp}---to convert the p-value from $\glassos_1$-space to $u_1$-space, but extending this idea to higher dimensions is intractable. In particular, for $k>1$, understanding $\glasso_{1:k}$ as a function of $\bm{u}_{1:k}$ is more complicated since the matrix derivative $\frac{\partial \glasso_{1:k}}{\partial \bm{u}_{1:k}}$ has no simplifying structure, and it is also unclear what analytic properties would even enable a transformation of the p-value into $\bm{u}_{1:k}$-space due to its greater complexity in $\glasso_{1:k}$-space; see Appendix \ref{sec:analytic_challenges} for a full discussion. In the remainder of this section, we will introduce a new approach to computationally tractably implement the conceptual idea introduced in this subsection for $k\ge 1$.

\subsection{Constructing the $L$-test}
\label{sec:construction}
To address the computational bottleneck of solving a group LASSO for every MC draw in \eqref{eq:mc_p}, we define an affine approximation for $\glasso_{1:k}(\bm{y})$ that depends only on $\suffstat$, thereby eliminating the need to run a group LASSO on each sample. We call the test that uses this approximation the $L$-test, which notably only requires two group LASSO runs, a substantial improvement over the original MC test, as Section \ref{sec:sim_studies} will show. The experiments in Section \ref{sec:sim_studies} also demonstrate that the $L$-test achieves statistical performance nearly identical to that of the original MC method.

We start by defining the function $f_{\suffstat}$ such that $\glasso_{1:k}=f_{\suffstat}(\bm u_{1:k})$. As we discuss in detail in Appendix \ref{sec:glasso_unit}, $f_{\suffstat}$ is challenging to analyze so we instead focus on its inverse $f^{-1}_{\suffstat}$. Although evaluating this inverse on an input requires running a LASSO and is therefore computationally demanding (Theorem \ref{thm:glasso_unit}), the gradient of $f^{-1}_{\suffstat}$ constrained to a hypersphere of radius $r$, $\sphere^{k-1}(r)$, reveals this inverse has a relatively simple structure. The gradient is shown in Lemma~\ref{lma:deriv} below and proved in Appendix \ref{sec:lma_deriv_pf}.
\begin{lemma}
\label{lma:deriv}
Fixing $r, \lambda \in \R_{>0}$, let $f_{\suffstat}^{-1}(\bm{b}; r) = f_{\suffstat}^{-1}(\bm{b})\biggr\vert_{\sphere^{k-1}(r)}$. Then,
\begin{equation}\label{eq:finverse}
    \nabla f_{\suffstat}^{-1}(\bm{b}; r) = \frac{1}{\sigmahat} (\bm{X}_{1:k}^T \bm{V}_{1:k})^{-1}\left[\bm{X}_{1:k}^T(\bm{I} - \bm{P}_{\mathcal{A}(\bm{b})})\bm{X}_{1:k} + \frac{n\lambda}{r} \bm{I}\right],
\end{equation}
where
\begin{align*}
    \mathcal{A}(\bm{b}) &= \left\{i \in \{k+1, \ldots, d\} : \glassos_i(\bm b) \neq 0\right\} \\
    \glasso_{-1:k}(\bm{b}) &= \argmin_{\bm{\beta}_{-1:k} \in \R^{d-k}}\left(\frac{1}{2n}\norm{\bm{y} - \bm{X}_{1:k}\bm{b} - \bm{X}_{-1:k}\bm{\beta}_{-1:k}}_2^2 + \lambda\norm{\bm{\beta}_{-1:k}}_1\right).
\end{align*}
\end{lemma}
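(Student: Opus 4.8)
The plan is to derive an explicit formula for the inverse $f_{\suffstat}^{-1}(\cdot\,;r)$ from the first-order (KKT) conditions of the group LASSO \eqref{eq:group_lasso} and then differentiate it in $\bm b$; essentially all of the work is a LASSO sensitivity calculation for the nuisance block. As preliminaries I would record two orthogonality properties of $\bm V$ (immediate from its construction in Appendix~\ref{sec:V_construction}): since its columns span $\col(\bm X_{-1:k})^\perp$ we have $\bm X_{-1:k}^T\bm V=\bm 0$; and since, for $i>k$, $\bm V_i$ is orthogonal to $\col(\bm X_{-1:k})$ and to $\bm V_1,\dots,\bm V_k$, while $\col(\bm X_{1:k})\subseteq\col(\bm X_{-1:k})+\mathrm{span}\{\bm V_1,\dots,\bm V_k\}$ (a downward induction on $i$ from $k$ to $1$ using $\bm X_i=\bm P_{-1:i}\bm X_i+\norm{(\bm I-\bm P_{-1:i})\bm X_i}\,\bm V_i$), we get $\bm X_{1:k}^T\bm V_i=\bm 0$ for $i>k$ and hence $\bm X_{1:k}^T\bm V\bm u=\bm X_{1:k}^T\bm V_{1:k}\bm u_{1:k}$; moreover $\bm X_{1:k}^T\bm V_{1:k}$ is triangular with positive diagonal $\norm{(\bm I-\bm P_{-1:i})\bm X_i}$, hence invertible (well-definedness of $f_{\suffstat}$ and of its inverse is supplied by Theorem~\ref{thm:glasso_unit}). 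I would also use that, when $\glasso_{1:k}=\bm b$, the block $\glasso_{-1:k}$ minimizes \eqref{eq:group_lasso} with $\bm\beta_{1:k}$ held at $\bm b$; since the $\norm{\bm\beta_{1:k}}_2$ term is then constant, this coincides with the $\glasso_{-1:k}(\bm b)$ of the statement, with active set $\mathcal A(\bm b)$.

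Now fix $\bm b\in\sphere^{k-1}(r)$ with $f_{\suffstat}(\bm u_{1:k})=\bm b$. Since $\bm b\neq\bm 0$, the group-$\ell^2$ penalty is differentiable there, so the stationarity condition for the $\bm\beta_{1:k}$-block reads $\frac1n\bm X_{1:k}^T\big(\bm y-\bm X_{1:k}\bm b-\bm X_{-1:k}\glasso_{-1:k}(\bm b)\big)=\lambda\bm b/\norm{\bm b}=\frac{\lambda}{r}\bm b$. Substituting the decomposition $\bm y=\yhat+\sigmahat\bm V\bm u$ from \eqref{eq:decomp}, applying $\bm X_{1:k}^T\bm V\bm u=\bm X_{1:k}^T\bm V_{1:k}\bm u_{1:k}$, and solving for $\bm u_{1:k}$ gives the closed form
\[
f_{\suffstat}^{-1}(\bm b;r)=\frac{1}{\sigmahat}(\bm X_{1:k}^T\bm V_{1:k})^{-1}\!\left[\frac{n\lambda}{r}\bm b-\bm X_{1:k}^T\big(\yhat-\bm X_{1:k}\bm b-\bm X_{-1:k}\glasso_{-1:k}(\bm b)\big)\right].
\]
Because $\yhat$ and $\sigmahat$ are functions of $\suffstat$ and hence constant in $\bm b$, differentiating yields $\nabla f_{\suffstat}^{-1}(\bm b;r)=\frac1{\sigmahat}(\bm X_{1:k}^T\bm V_{1:k})^{-1}\big[\frac{n\lambda}{r}\bm I+\bm X_{1:k}^T\bm X_{1:k}+\bm X_{1:k}^T\bm X_{-1:k}\,\partial\glasso_{-1:k}(\bm b)/\partial\bm b\big]$, so the lemma reduces to evaluating $\bm X_{-1:k}\,\partial\glasso_{-1:k}(\bm b)/\partial\bm b$.

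This reduction is the main obstacle, and it is a LASSO sensitivity computation. Since $\glasso_{-1:k}(\bm b)$ is the LASSO fit of the partial response $\bm y-\bm X_{1:k}\bm b$ (affine in $\bm b$) onto $\bm X_{-1:k}$, it is continuous and piecewise linear in $\bm b$, and for $\bm b$ off a measure-zero set the active set $\mathcal A(\bm b)$ and the signs of $\glasso_{\mathcal A(\bm b)}(\bm b)$ are locally constant; I would need to argue this carefully (and note that it is all that is needed, since the gradient is evaluated only at such $\bm b$). On such a neighborhood the inactive coordinates vanish identically---contributing nothing to the derivative---while the reduced stationarity equations $\frac1n\bm X_{\mathcal A}^T\big(\bm y-\bm X_{1:k}\bm b-\bm X_{\mathcal A}\glasso_{\mathcal A}(\bm b)\big)=\lambda\,\sign(\glasso_{\mathcal A}(\bm b))$ can be implicitly differentiated---$\bm X_{\mathcal A}$ has full column rank because $\bm X$ does, so $\bm X_{\mathcal A}^T\bm X_{\mathcal A}$ is invertible---to give $\partial\glasso_{\mathcal A}(\bm b)/\partial\bm b=-(\bm X_{\mathcal A}^T\bm X_{\mathcal A})^{-1}\bm X_{\mathcal A}^T\bm X_{1:k}$. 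Hence $\bm X_{-1:k}\,\partial\glasso_{-1:k}(\bm b)/\partial\bm b=\bm X_{\mathcal A}\,\partial\glasso_{\mathcal A}(\bm b)/\partial\bm b=-\bm P_{\mathcal A(\bm b)}\bm X_{1:k}$, with $\bm P_{\mathcal A(\bm b)}$ the orthogonal projection onto $\col(\bm X_{\mathcal A(\bm b)})$. Substituting this back turns $\bm X_{1:k}^T\bm X_{1:k}-\bm X_{1:k}^T\bm P_{\mathcal A(\bm b)}\bm X_{1:k}$ into $\bm X_{1:k}^T(\bm I-\bm P_{\mathcal A(\bm b)})\bm X_{1:k}$ and produces exactly \eqref{eq:finverse}; everything outside the sensitivity step is orthogonality bookkeeping and one first-order condition.
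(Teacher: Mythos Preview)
Your proposal is correct and follows essentially the same approach as the paper: both derive the explicit formula for $f_{\suffstat}^{-1}(\bm b;r)$ (you from the KKT conditions directly, the paper by citing Theorem~\ref{thm:glasso_unit}, which gives the same expression), then differentiate and reduce to a LASSO sensitivity calculation that implicitly differentiates the first-order conditions on the active set to obtain $\partial\glasso_{\mathcal A}(\bm b)/\partial\bm b=-(\bm X_{\mathcal A}^T\bm X_{\mathcal A})^{-1}\bm X_{\mathcal A}^T\bm X_{1:k}$. The paper additionally establishes continuity of $\glasso_{-1:k}(\bm b)$ via Berge's maximum theorem before invoking local constancy of the active set, which you might want to make explicit, but the core argument is the same.
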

Since the active set $\mathcal{A}(\bm{b})$ is constant in a neighborhood of $\bm{b}$, Lemma \ref{lma:deriv} implies that $f_{\suffstat}^{-1}(\bm{b}; r)$ is piecewise affine. Corollary \ref{cor:affine_piece} below, proved in Appendix \ref{sec:cor_affine_piece_pf}, explicitly specifies the functional form of one of these affine pieces.
\begin{corollary}
\label{cor:affine_piece}
    Given a point $\bm b^* \in \R^k$, for $\bm b$ in a sufficiently small neighborhood of $\bm{b}^*$ such that $\norm{\bm b} = \norm{\bm b^*}$,
    \begin{align}
        f_{\suffstat}^{-1}(\bm{b}; \norm{\bm b^*}) &= \nabla f_{\suffstat}^{-1}(\bm{b}^*; \norm{\bm b^*})\bm{b} + \bm{\nu}(\bm{b}^*) \nonumber \\
        \bm{\nu}(\bm{b}^*) &= -\frac{1}{\hat{\sigma}_{1:k}} \bm V_{1:k}^T\bm{X}_{1:k}\betak_{1:k}(\bm{b}^*) \nonumber \\
 \betak_{1:k}(\bm{b}^*) &= \bm{S}^{-1}\bm{X}_{1:k}^T\bm{P}_{-1:k}(\bm{y} - \bm{X}_{-1:k}\glasso_{-1:k}(\bm{b}^*) - P_{\mathcal{A}(\bm{b}^*)}\bm{X}_{1:k}\bm{b}^*) \label{eq:beta_est} \\
 \bm{S} &= \bm{X}_{1:k}^T\bm{V}_{1:k}\bm{V}_{1:k}^T\bm{X}_{1:k}. \nonumber
    \end{align}
\end{corollary}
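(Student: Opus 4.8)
The plan is to derive the corollary from Lemma~\ref{lma:deriv} together with a single evaluation at $\bm b^*$, so that the only substantive step is verifying one algebraic identity coming from the group LASSO stationarity conditions.

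First I would record that $f_{\suffstat}^{-1}(\,\cdot\,;\norm{\bm b^*})$ is affine near $\bm b^*$. By Lemma~\ref{lma:deriv}, its constrained gradient at a point depends on that point only through the inner-LASSO active set $\mathcal{A}(\cdot)$, which is locally constant near a generic $\bm b^*$ by continuity of the inner-LASSO solution; hence $\mathcal{A}(\cdot)$ is constant on a connected neighborhood $U$ of $\bm b^*$ in $\sphere^{k-1}(\norm{\bm b^*})$, and so $\nabla f_{\suffstat}^{-1}(\bm b;\norm{\bm b^*}) \equiv \nabla f_{\suffstat}^{-1}(\bm b^*;\norm{\bm b^*})$ on $U$. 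Integrating this constant tangential derivative along a path in $U$ from $\bm b^*$ to $\bm b$ yields
\[
    f_{\suffstat}^{-1}(\bm b;\norm{\bm b^*}) = f_{\suffstat}^{-1}(\bm b^*) + \nabla f_{\suffstat}^{-1}(\bm b^*;\norm{\bm b^*})(\bm b - \bm b^*),
\]
so the corollary reduces to the single identity $f_{\suffstat}^{-1}(\bm b^*) = \nabla f_{\suffstat}^{-1}(\bm b^*;\norm{\bm b^*})\,\bm b^* + \bm{\nu}(\bm b^*)$.

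To establish that identity I would unwind its two sides. Let $\bm y$ be a response consistent with $\suffstat$ whose group LASSO \eqref{eq:group_lasso} satisfies $\glasso_{1:k} = \bm b^*$; since $\bm V_{1:k}^T\yhat = \bm 0$ (because $\yhat \in \col(\bm X_{-1:k})$ and $\col(\bm V_{1:k}) \perp \col(\bm X_{-1:k})$), the decomposition \eqref{eq:decomp} gives $f_{\suffstat}^{-1}(\bm b^*) = \tfrac{1}{\sigmahat}\bm V_{1:k}^T\bm y$. Because $\bm b^* \neq \bm 0$, the $\bm{\beta}_{1:k}$-block stationarity condition of \eqref{eq:group_lasso} reads $\bm X_{1:k}^T\bm r = n\lambda\,\bm b^*/\norm{\bm b^*}$, where $\bm r := \bm y - \bm X_{1:k}\bm b^* - \bm X_{-1:k}\glasso_{-1:k}(\bm b^*)$ is the full group LASSO residual (the $\bm{\beta}_{-1:k}$-block of the joint minimizer coincides with the inner-LASSO minimizer at $\bm{\beta}_{1:k} = \bm b^*$). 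Decomposing $\bm X_{1:k} = \bm P_{-1:k}\bm X_{1:k} + \bm V_{1:k}\bm V_{1:k}^T\bm X_{1:k}$ (which holds since $\col(\bm V_{1:k}) = (\bm I - \bm P_{-1:k})\col(\bm X_{1:k})$), I would solve the stationarity equation for $\bm V_{1:k}^T\bm r$ in terms of $\bm b^*$ and $\bm X_{1:k}^T\bm P_{-1:k}\bm r$, substitute into $\bm V_{1:k}^T\bm y = \bm V_{1:k}^T\bm X_{1:k}\bm b^* + \bm V_{1:k}^T\bm r$ (using $\bm V_{1:k}^T\bm X_{-1:k} = \bm 0$), and simplify using $\bm P_{-1:k}\bm r = \yhat - \bm P_{-1:k}\bm X_{1:k}\bm b^* - \bm X_{-1:k}\glasso_{-1:k}(\bm b^*)$, $\bm P_{-1:k}\bm y = \yhat$ (fixed by $\suffstat$), and $\bm P_{-1:k}\bm P_{\mathcal{A}(\bm b^*)} = \bm P_{\mathcal{A}(\bm b^*)}$. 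Feeding in $\bm S = \bm X_{1:k}^T(\bm I - \bm P_{-1:k})\bm X_{1:k}$, $\bm S^{-1} = (\bm V_{1:k}^T\bm X_{1:k})^{-1}(\bm X_{1:k}^T\bm V_{1:k})^{-1}$, and $\bm X_{1:k}^T(\bm P_{-1:k} - \bm P_{\mathcal{A}(\bm b^*)})\bm X_{1:k} = \bm X_{1:k}^T(\bm I - \bm P_{\mathcal{A}(\bm b^*)})\bm X_{1:k} - \bm S$ into the definitions of $\betak_{1:k}(\bm b^*)$ and $\bm{\nu}(\bm b^*)$ should collapse the right-hand side $\nabla f_{\suffstat}^{-1}(\bm b^*;\norm{\bm b^*})\bm b^* + \bm{\nu}(\bm b^*)$ exactly to $\tfrac{1}{\sigmahat}\bm V_{1:k}^T\bm y$.

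The main obstacle is precisely this final collapse: the fact that $\betak_{1:k}(\bm b^*)$ is built from $\bm P_{\mathcal{A}(\bm b^*)}\bm X_{1:k}\bm b^*$ rather than $\bm P_{-1:k}\bm X_{1:k}\bm b^*$ is exactly what has to conspire with the $\bm X_{1:k}^T(\bm I - \bm P_{\mathcal{A}(\bm b^*)})\bm X_{1:k}$ term of the Jacobian in Lemma~\ref{lma:deriv}, so one must keep careful track of how the three projections $\bm P_{-1:k}$, $\bm P_{\mathcal{A}(\bm b^*)}$, and $\bm V_{1:k}\bm V_{1:k}^T$ interact and of the subspace identities relating them to $\bm S$ and $\bm X_{1:k}^T\bm V_{1:k}$. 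Everything else — the genericity of the active set, the continuity of the inner LASSO, and the integration argument — is routine.
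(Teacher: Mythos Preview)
Your high-level structure---use Lemma~\ref{lma:deriv} to get local affinity and then identify the intercept via $\bm{\nu}(\bm b^*)=f_{\suffstat}^{-1}(\bm b^*;\|\bm b^*\|)-\nabla f_{\suffstat}^{-1}(\bm b^*;\|\bm b^*\|)\bm b^*$---is exactly the paper's approach. Where you diverge is in how you evaluate $f_{\suffstat}^{-1}(\bm b^*)$: you reintroduce a response $\bm y$ with $\glasso_{1:k}=\bm b^*$ and work back through the group LASSO KKT conditions, whereas the paper simply plugs in the explicit closed-form expression for $f_{\suffstat}^{-1}(\bm b;r)$ (stated in Theorem~\ref{thm:glasso_unit} and reproduced at the start of the proof of Lemma~\ref{lma:deriv}) together with the gradient formula from Lemma~\ref{lma:deriv}. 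The $n\lambda\bm b^*/\|\bm b^*\|$ and $\bm X_{1:k}^T\bm X_{1:k}\bm b^*$ terms cancel immediately, and one line of rewriting $(\bm X_{1:k}^T\bm V_{1:k})^{-1}=\bm V_{1:k}^T\bm X_{1:k}\bm S^{-1}$ finishes it.

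Your route is valid---the KKT identity you invoke is precisely what Theorem~\ref{thm:glasso_unit} encodes, so the algebra must collapse to the same thing---but it is longer and introduces a small wrinkle you glossed over: asserting the existence of a $\bm y$ consistent with $\suffstat$ whose group LASSO hits an \emph{arbitrary} $\bm b^*$ is not automatic (the image of $f_{\suffstat}$ is constrained by $\|\bm u_{1:k}\|\le 1$). You can sidestep this entirely by treating the KKT equation as a formal algebraic identity in $\bm b^*$, or, more simply, by using the closed form for $f_{\suffstat}^{-1}$ directly as the paper does, which avoids the detour through $\bm y$, $\bm r$, and the projection bookkeeping altogether.
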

While an exact characterization of $f_{\suffstat}^{-1}(\bm{b}; r)$ could be obtained by tracking how the active set $\mathcal{A}(\bm{b})$ evolves with its input, it is expensive to compute $\mathcal{A}(\bm{b})$ so we instead propose a simpler approach: selecting a single affine piece of this piecewise affine map and using it as a global approximation. This simplification is supported by simulations in which we visualized $f_{\suffstat}^{-1}(\bm{b}; r)$ and found it to be approximately ellipsoidal, suggesting that the affine pieces are similar enough for the map to be well-approximated as globally affine; see Appendix \ref{sec:f_viz} for further details.

Using the affine function corresponding to a point $\bm{b}^* \in \R^k$ specified in Corollary \ref{cor:affine_piece} as an approximation for $f_{\suffstat}^{-1}(\bm{b})$ gives 
$$\glasso_{1:k} = f_{\suffstat}(\bm u_{1:k}) \approx (\nabla f_{\suffstat}^{-1}(\bm{b}^*; \|\bm{b}^*\|))^{-1}(\bm u_{1:k} - \bm{\nu}(\bm{b}^*)),$$ 
which when plugged into the expression for $T(\bm{y})$ in Equation~\eqref{eq:L-test_stat_1} gives
$$T(\bm{y}) \approx \norm{\bm{V}_{1:k}^T\bm{X}_{1:k}(\nabla f_{\suffstat}^{-1}(\bm{b}^*; \|\bm{b}^*\|))^{-1}(\bm u_{1:k} - \bm{\nu}(\bm{b}^*))}.$$
To use this test statistic, we need to choose a $\bm{b}^*$. And, although we have until now treated $\lambda$ as fixed (and correspondingly suppressed it in our notation), in practice we need a data-driven way to choose it; we will follow the same strategy for choosing both $\bm{b}^*$ and $\lambda$. The obvious strategy of using cross-validation to choose $\lambda$ and (the first $k$ elements of) cross-validated group LASSO to choose $\bm{b}^*$ invalidates our MC p-value \eqref{eq:mc_p} because these choices still depend on the data after conditioning on $\suffstat$ under $\hyp$.\footnote{One could make these choices valid by refitting them to each MC sample, but this sacrifices the computational advantage of our affine approximation.} Instead, we propose sampling $\tilde{\bm u} \sim \Unif(\sphere^{n-d+k-1})$, substituting $\tilde{\bm u}$ into \eqref{eq:decomp} to obtain $\condsample$ independent of $\bm y$ conditional on $\suffstat$ under $\hyp$, and then applying this `obvious strategy' to the data $(\condsample, \bm X)$, denoting the corresponding choices simply (and without risk of confusion, as we fix this $\lambda$ choice for the rest of the paper) by $\lambda$ and $\bm b^* = \glassoc_{1:k}$; Appendix \ref{sec:penalty} discusses this approach further.

A last remaining issue is that if $\glassoc_{1:k}=\bm 0$, then Equation~\eqref{eq:finverse} for $\nabla f_{\suffstat}^{-1}$ diverges to $\infty$ with $r=0$ plugged in, resulting in our test statistic in Equation~\eqref{eq:L-test_stat_1} being set to zero for the data and all MC samples. This causes all test statistic comparisons in the MC p-value \eqref{eq:mc_p} to evaluate as ties, producing a trivial p-value of 1. We can avoid this pathology by instead taking a limit as $r\rightarrow 0$ to give nontrivial comparisons and potentially still a powerful test. Thus, we define our final test statistic for the $L$-test as
\begin{align*}
L(\bm y) := \norm{\bm{A}(\glassoc_{1:k}, \lambda)(\bm u_{1:k} - \bm{\nu}(\glassoc_{1:k}))},
\end{align*}
where 
\begin{equation*}
\bm{A}(\glassoc_{1:k}, \lambda) := \begin{array}{rl}
\bm{V}_{1:k}^T\bm{X}_{1:k}(\nabla f_{\suffstat}^{-1}(\glassoc_{1:k}; \|\glassoc_{1:k}\|))^{-1} & \text{ if }\glassoc_{1:k}\neq \bm 0 \\
\lim_{r\rightarrow 0}\bm{V}_{1:k}^T\bm{X}_{1:k}(\nabla f_{\suffstat}^{-1}(\glassoc_{1:k}; r))^{-1}/r = \frac{\sigmahat}{n\lambda} \bm{V}_{1:k}^T\bm{X}_{1:k}\bm{X}_{1:k}^T \bm{V}_{1:k} & \text{ if } \glassoc_{1:k} = \bm 0.
\end{array}
\end{equation*}



By construction, the MC $L$-test p-value is valid under the same assumptions as the $F$-test; we state this formally as a corollary below.

\begin{corollary}[Validity of the $L$-test]
For model \eqref{eq:linear_model} and all $\alpha \in [0, 1]$,
$$P_{\hyp}\left(\frac{1}{M+1}\left(1 + \sum_{i = 1}^M\ind{L(\bm{y}^{(i)}) \geq L(\bm{y})}\right) \leq \alpha \right) \leq \alpha.$$
\end{corollary}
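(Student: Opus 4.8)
The plan is to reproduce, essentially verbatim, the exchangeability argument already used in Section~\ref{sec:method} to establish validity of the Monte Carlo $T$-test p-value in \eqref{eq:mc_p}, adding only one extra step: conditioning also on the external randomness $\tilde{\bm u}$ used to pick the tuning quantities. First I would record the key structural fact that, for fixed values of the pair $(\suffstat, \tilde{\bm u})$, the map $\bm y \mapsto L(\bm y)$ is a fixed measurable function of $\bm y$. Indeed, $\lambda$ and $\bm b^\ast = \glassoc_{1:k}$ are produced by applying cross-validation to $(\condsample, \bm X)$, and by \eqref{eq:decomp} the vector $\condsample$ is a deterministic function of $\suffstat$ (through $\yhat$ and $\sigmahat$) together with $\tilde{\bm u}$; hence $\bm{A}(\glassoc_{1:k}, \lambda)$ and $\bm\nu(\glassoc_{1:k})$ — which only involve $\yhat, \sigmahat, \glasso_{-1:k}(\cdot), \mathcal A(\cdot)$, all computable from $\suffstat$ and $\tilde{\bm u}$ — depend on the data solely through $(\suffstat, \tilde{\bm u})$, while $\bm u_{1:k} = \bm V_{1:k}^T(\bm I - \bm P_{-1:k})\bm y \,/\, \norm{(\bm I - \bm P_{-1:k})\bm y}$ is a deterministic function of $\bm y$. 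The $\glassoc_{1:k} = \bm 0$ branch needs no separate handling here: $\bm{A}(\bm 0, \lambda)$ is still a (nonzero) function of $(\suffstat, \tilde{\bm u})$, so $L$ is not forced to be constant.

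Next I would invoke the conditional i.i.d.\ structure. Under $\hyp$, $\bm y \mid \suffstat$ has a parameter-free law by sufficiency and \eqref{eq:decomp}; the Monte Carlo copies $\bm y^{(1)}, \ldots, \bm y^{(M)}$ are drawn i.i.d.\ from that same conditional law; and $\tilde{\bm u} \sim \Unif(\sphere^{n-d+k-1})$ is drawn independently of $\bm y$ and of the copies given $\suffstat$. Consequently, conditional on $(\suffstat, \tilde{\bm u})$, the vectors $\bm y, \bm y^{(1)}, \ldots, \bm y^{(M)}$ are i.i.d., and therefore — since $L(\cdot)$ is a fixed function once $(\suffstat, \tilde{\bm u})$ is fixed — the scalars $L(\bm y), L(\bm y^{(1)}), \ldots, L(\bm y^{(M)})$ are i.i.d., hence exchangeable, conditional on $(\suffstat, \tilde{\bm u})$.

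Then I would close with the standard rank computation: for any exchangeable real-valued tuple $(Z_0, Z_1, \ldots, Z_M)$, the quantity $\frac{1}{M+1}\big(1 + \sum_{i=1}^M \ind{Z_i \ge Z_0}\big)$ stochastically dominates $\Unif(0,1)$ (the usual conservative-ties co-sufficient-sampling p-value bound; see, e.g., \cite{Barber2022}). Applying this with $Z_0 = L(\bm y)$ and $Z_i = L(\bm y^{(i)})$ gives, for every $\alpha \in [0,1]$, $P_{\hyp}\big(\hat p \le \alpha \mid \suffstat, \tilde{\bm u}\big) \le \alpha$ almost surely, where $\hat p$ denotes the $L$-test p-value in the corollary. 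Taking expectations over $(\suffstat, \tilde{\bm u})$ under $\hyp$ yields the claimed unconditional bound.

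I do not expect a genuine obstacle, as this is a direct instance of co-sufficient-sampling validity; the one point that requires care — and the only reason this is more than an immediate corollary of the discussion around \eqref{eq:mc_p} — is the verification in the first step that the data-adaptive choices of $\lambda$ and $\bm b^\ast$ inject no dependence on $\bm y$ beyond $(\suffstat, \tilde{\bm u})$. This is exactly why the construction fits cross-validation to the resampled $\condsample$ rather than to $\bm y$ itself, and it is the hinge of the whole proof.
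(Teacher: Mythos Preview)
Your proposal is correct and matches the paper's approach: the paper does not give a separate proof of this corollary, simply stating it holds ``by construction'' since $\condsample$ is independent of $\bm y$ given $\suffstat$ under $\hyp$, which is precisely the exchangeability argument you spell out. Your added care in verifying that $\bm A(\glassoc_{1:k},\lambda)$ and $\bm\nu(\glassoc_{1:k})$ depend on the data only through $(\suffstat,\tilde{\bm u})$---in particular, that the LASSO $\glasso_{-1:k}(\bm b^\ast)$ enters only through $\bm X_{-1:k}^T\bm y$ and $\yhat$---is exactly the point the paper leaves implicit.
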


\subsection{Monte-Carlo-free p-values}
\label{sec:precise_p_values}
Computing the MC p-value with $L$ requires only two group LASSO runs---one to obtain $\glassoc_{1:k}$ and another to obtain $\glasso_{-1:k}(\glassoc_{1:k})$---yielding a substantial computational gain over the original MC p-value that requires re-running the group LASSO for each MC sample. However, as can be seen by Equation~\eqref{eq:mc_p}, the MC p-value is lower-bounded by $1/(M+1)$ (recall that $M$ is the number of MC samples). This means that for settings where power requires very small p-values, such as in multiple testing settings with high multiplicities, $M$ may need to be so large that recomputing $L$ for each MC sample, which scales linearly in $M$, dominates the two group LASSOs' computation time. For such settings that would require an MC p-value to use an astronomically high $M$, we would prefer a method that does not require MC at all.

To this end, we propose omitting the premultiplying matrix $\bm{A}(\glassoc_{1:k}, \lambda)$ in $L$ because we can analytically characterize the density of $\norm{\bm{u}_{1:k} - \bm{\nu}} \mid \suffstat$ under $\hyp$, enabling p-value computation via efficient numerical integration of a two-dimensional integral that is not limited by a p-value lower-bound that scales inversely with computation time. The proof of the following proposition in Appendix~\ref{sec:thm_density_pf} relies on properties of Beta distributions and strategic changes of variables.
\begin{proposition}
\label{thm:density}
For $\bm u \sim \Unif(\sphere^{n-d+k-1})$, $k \geq 2$, and $\bm c \in \R^k$, the density of the random variable $Z = \norm{\bm u_{1:k} - \bm c}$ is given by 
\begin{align*}
f_Z(z) &= \ind{\norm{\bm{c}} \leq 1}\left\{zg(z)\ind{z \in [0, \norm{\bm{c}}+1]}\right\} \\
&+ \ind{\norm{\bm{c}} > 1}\left\{zg(z)\ind{z \in [\norm{\bm{c}}-1, \norm{\bm{c}}+1]}\right\}\\
g(z) &=
\begin{cases}
    \displaystyle\int_{\norm{\bm{c}}^2-2z\norm{\bm{c}} +z^2}^{\norm{\bm{c}}^2+2z\norm{\bm{c}} +z^2} h(z, t)~dt &\text{if $0 \leq z \leq \left|\norm{\bm{c}} - 1\right|$} \\
    \displaystyle\int_{\norm{\bm{c}}^2-2z\norm{\bm{c}} +z^2}^1 h(z, t)~dt &\text{if $\left|\norm{\bm{c}}-1\right| < z \leq \norm{\bm{c}}+1$}
\end{cases} \\
h(z, t) &= \frac{D}{\norm{\bm{c}}} \cdot (1-t)^{\frac{n-d-2}{2}}\left(t-\left(\frac{t + \norm{\bm{c}}^2 - z^2}{2\norm{\bm{c}}}\right)^2\right)^{\frac{k-3}{2}} \\
D &= \frac{\Gamma(\frac{k}{2})\Gamma(\frac{n-d+k}{2})}{\sqrt{\pi}\Gamma(\frac{k-1}{2})\Gamma(\frac{k}{2})\Gamma(\frac{n-d}{2})}.
\end{align*}
For the density in the $k = 1$ case, see Appendix \ref{sec:special_case}.
\end{proposition}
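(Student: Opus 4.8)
The plan is to compute the law of $Z$ through a radial--angular decomposition of $\bm u_{1:k}$ followed by a single change of variables. Write $N = n-d+k$, so $\bm u \sim \Unif(\sphere^{N-1})$, and assume $n>d$ so that there is a nontrivial complementary dimension. Using the standard representation $\bm u \overset{d}{=} \bm g/\norm{\bm g}$ with $\bm g \sim \mathcal N(\bm 0, \bm I_N)$, the squared radius $B := \norm{\bm u_{1:k}}^2$ and the direction $\bm\omega := \bm u_{1:k}/\norm{\bm u_{1:k}}$ are independent, with $B \sim \Beta(k/2,(n-d)/2)$ and $\bm\omega \sim \Unif(\sphere^{k-1})$. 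Since the law of $\bm\omega$ is rotation-invariant, $\bm c^T\bm\omega$ has the same distribution as $\norm{\bm c}\,W$, where $W$ is the first coordinate of a uniform point on $\sphere^{k-1}$; for $k\ge 2$, $W$ has density $c_k(1-w^2)^{(k-3)/2}$ on $[-1,1]$ with $c_k = \Gamma(k/2)/(\sqrt{\pi}\,\Gamma((k-1)/2))$. Consequently $Z^2$ has the same distribution as $B - 2\sqrt{B}\,\norm{\bm c}\,W + \norm{\bm c}^2$ with $B$ and $W$ independent, and it remains to find the density of this expression.

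Next I would change variables from $(B,W)$ to $(Z,T)$ with $T := B$. Holding $t\in(0,1)$ fixed, the map $w\mapsto z$ determined by $z^2 = t - 2\sqrt{t}\,\norm{\bm c}\,w + \norm{\bm c}^2$ is a strictly decreasing bijection of $(-1,1)$ onto its image, with inverse $w = (t+\norm{\bm c}^2-z^2)/(2\sqrt{t}\,\norm{\bm c})$ and Jacobian $\lvert\partial w/\partial z\rvert = z/(\sqrt{t}\,\norm{\bm c})$. Multiplying the product density of $(B,W)$ by this Jacobian and substituting the identity $1-w^2 = t^{-1}\big(t - \big(\tfrac{t+\norm{\bm c}^2-z^2}{2\norm{\bm c}}\big)^2\big)$, one checks that every power of $t$ cancels except those inside $(1-t)^{(n-d-2)/2}$ and inside the bracketed term; the result is $f_{Z,T}(z,t) = z\,h(z,t)$, where the constant $D$ equals $c_k$ times the normalizing constant of the $\Beta(k/2,(n-d)/2)$ density and reduces to the stated expression after cancelling a common factor of $\Gamma(k/2)$.

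The last step is to integrate $f_{Z,T}(z,\cdot)$ over the admissible range of $t$. This range is the intersection of $[0,1]$ (the support of $B$) with the set on which $1-w^2\ge 0$, i.e.\ on which $\big(\tfrac{t+\norm{\bm c}^2-z^2}{2\norm{\bm c}}\big)^2 \le t$. Clearing denominators turns this into the quadratic inequality $t^2 - 2t(\norm{\bm c}^2+z^2) + (\norm{\bm c}^2-z^2)^2 \le 0$, whose roots are $(\norm{\bm c}\pm z)^2$, so the set is $[(\norm{\bm c}-z)^2,(\norm{\bm c}+z)^2]$. Intersecting with $[0,1]$ splits according to whether $(\norm{\bm c}+z)^2\le 1$ or $(\norm{\bm c}+z)^2>1$, which reproduces the two branches of $g(z)$, while the requirement that this intersection be nonempty, together with $\norm{\bm u_{1:k}}\le 1$, yields the support indicators ($Z\in[0,\norm{\bm c}+1]$ when $\norm{\bm c}\le 1$ and $Z\in[\norm{\bm c}-1,\norm{\bm c}+1]$ when $\norm{\bm c}>1$). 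Then $f_Z(z) = \int f_{Z,T}(z,t)\,dt = z\,g(z)$ on the support and $0$ elsewhere.

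I expect this final bookkeeping to be the main obstacle: one must carefully translate the two constraints $B\in[0,1]$ and $W\in[-1,1]$ into $z$-dependent integration limits and verify that the (deliberately redundant) way the statement partitions $g$, combined with the support indicators, exactly recovers $[0,1]\cap[(\norm{\bm c}-z)^2,(\norm{\bm c}+z)^2]$; in particular one should check that when $\norm{\bm c}>1$ the first branch of $g$ is always multiplied by a vanishing indicator and hence never contributes. The change of variables and the cancellation of the $t$-powers, although the algebraic core of the argument, are routine once the decomposition is in place. The degenerate case $\bm c=\bm 0$ (where $Z=\sqrt{B}$) follows by continuity in $\norm{\bm c}$ or by direct computation, and the $k=1$ case, excluded here because $\sphere^0$ consists of two points so $W$ has no density, is treated separately in Appendix~\ref{sec:special_case}.
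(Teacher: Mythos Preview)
Your proposal is correct and follows essentially the same approach as the paper: both decompose $\bm u_{1:k}$ into its squared radius $B\sim\Beta(k/2,(n-d)/2)$ and an independent angular component, then change variables and integrate out $t=B$. The only difference is cosmetic: the paper writes the angular part as $S\sqrt{B_1}$ with a random sign $S$ and $B_1\sim\Beta(1/2,(k-1)/2)$ and conditions on $S$ to handle two cases, whereas your single variable $W$ (which is exactly the law of $S\sqrt{B_1}$) avoids that case split and makes the bookkeeping a bit cleaner.
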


The experiments in Section \ref{sec:sim_studies} show that this test often performs similarly to the $L$-test, but can outperform or underperform in certain situations, and in general less closely matches the original MC test that uses the test statistic \eqref{eq:L-test_stat_1}. Nevertheless, it offers notable gains over the $F$-test and may be of interest to users who require very precise p-values---such as in multiple testing settings---since they can be obtained efficiently. In Section \ref{sec:application}, we show the computational speed up and power gain this test can offer in a real multiple testing setting. 

\begin{remark}
This test is similar to the recentered $\bm u_{1:k}$-test that is analogous to the recentered test in \cite{Sengupta2024} and discussed in Appendix \ref{sec:ell_recentered}, except that it uses a different recentering vector and has the key advantage that its p-value can be computed without MC sampling. While both recentering vectors are principled choices and give rise to similar test performance, we prefer the one presented in this work since it fits more neatly with the approach that gives rise to the $L$-test.
\end{remark}

\section{Intuition for the $L$-test's power gain}
\label{sec:power_analysis}


\subsection{An oracle test}
\label{sec:oracle_test}
To better understand the power gain of the $L$-test over the $F$-test under sparsity, it is helpful to compare it to an oracle test that outperforms the $F$-test. To wit, consider an oracle that knows the \emph{direction} of $\bm \beta_{1:k}$, i.e., $\bm\beta_{1:k}/\|\bm\beta_{1:k}\|$, and uses it to first construct the design matrix $\tilde{\bm{X}} := 
\begin{pmatrix}
\bm{X}_{1:k}\frac{\bm{\beta}_{1:k}}{\norm{\bm{\beta}_{1:k}}} & \bm{X}_{-1:k}
\end{pmatrix}$ and then perform a 1-sided $t$-test in the (well-specified) linear model
\begin{align}
\label{eq:linear_model_oracle}
\bm{y} \sim \mathcal{N}(\tilde{\bm{X}}\bm{\gamma}, \sigma^2\bm{I})
\end{align}
of $\tilde{H}_0: \gamma_1 = 0$ against $\tilde{H}_1: \gamma_1 > 0$. Since $\gamma_1=\|\bm\beta_{1:k}\|$, this is a valid test of $\hyp$; note $\tilde{\bm{X}}$ is ill-defined under $\hyp$, but as this section is about power, we will only consider the setting when $\hyp$ is false and hence the oracle is well-defined. When $k=1$, this oracle is exactly performing the 1-sided $t$-test on $\beta_1$ (in the correct direction), which \cite{Sengupta2024} shows the $\ell$-test mimics the performance of. We will argue that the $L$-test also aims to mimic the performance of the oracle test in the general $k\ge 1$ case, but allowing $k>1$ creates interesting dynamics not observed in \citep{Sengupta2024}. 
In particular, the oracle's privileged information, $\bm\beta_{1:k}/\|\bm\beta_{1:k}\|$, is a unit vector in $k$ dimensions, which means that as $k$ grows, the oracle's privileged information gets richer. On the one hand, this makes the oracle's power improvement over the $F$-test increasing in $k$. On the other hand, this also makes it increasingly hard for the $L$-test (or, one would expect, any test) to approach the power of the oracle as $k$ grows. On net, we find in Section~\ref{sec:sim_studies} that the $L$-test's power gains over the $F$-test can be substantially greater for $k>1$ than the $\ell$-test's ($k=1$) power gain over the $t$-test.


To compare the oracle with the $L$-test, it is useful to reframe the oracle procedure as a conditional test given $\suffstat$. In fact, it is equivalent to rejecting for large values of 
\begin{align}
\label{eq:oracle_stat}
O_m := \norm{\bm{u}_{1:k} + m \bm{V}_{1:k}^T\bm{X}_{1:k}\frac{\bm{\beta}_{1:k}}{\norm{\bm{\beta}_{1:k}}}}
\end{align}
conditional on $\suffstat$, in the limit as $m \to \infty$. We provide a formal statement and proof of this equivalence in Appendix \ref{sec:oracle_equiv}.

\subsection{$L$-test power}
\label{sec:L_test_power}
Lemma \ref{lma:cond_F} shows that the $F$-test is equivalent to the test that rejects for large $\norm{\bm{u}_{1:k}}$ conditional on $\suffstat$. Thus, the oracle test, $F$-test, and $L$-test statistics all take the following form
$$\norm{\bm{\phi}(\bm{u}_{1:k} - \bm{\omega})},$$
for $\bm{\omega} \in \R^k$ and positive definite $\bm{\phi}: \R^k \to \R^k$ that are conditionally independent of the data given $\suffstat$. Letting $\tilde{\bm{u}}_{1:k}$ denote the first $k$ elements of a $\Unif(\sphere^{n-d+k-1})$ drawn independently of the data (and continuing to denote by $\bm u$ the vector computed from the data via Equation~\eqref{eq:decomp}), this test statistic's conditional MC p-value in the limit as the number of MC samples, $M$, goes to infinity, is
\begin{align*}
    &P_{\hyp}\left( (\tilde{\bm{u}}_{1:k}-\bm{\omega})^T\bm{\phi}^T\bm\phi(\tilde{\bm{u}}_{1:k}-\bm{\omega})\geq ({\bm{u}}_{1:k}-\bm{\omega})^T\bm{\phi}^T\bm\phi({\bm{u}}_{1:k}-\bm{\omega}) \;\biggr\vert\; \suffstat\right) \\
    =\, &P_{\hyp}\left( (\tilde{\bm{u}}_{1:k}-\bm{\omega})^T\frac{1}{c}\bm{\phi}^T\bm\phi(\tilde{\bm{u}}_{1:k}-\bm{\omega})\geq 1 \;\biggr\vert\; \suffstat\right) \text{, where} \\
    c =\, &(\bm{u}_{1:k}-\bm{\omega})^T\bm{\phi}^T\bm\phi(\bm{u}_{1:k}-\bm{\omega}).
\end{align*}
The set of points $\bm{x} \in \R^k$ that satisfy $(\bm{x}-\bm{\omega})^T\frac{1}{c}\bm{\phi}^T\bm\phi(\bm{x}-\bm{\omega})=1$ defines a $k$-dimensional ellipsoid, so the p-value corresponds to the mass of $\tilde{\bm{u}}_{1:k}$ outside this ellipsoid. This ellipsoid's center is determined by the recentering vector $\bm{\omega}$, and its semi-axes are specified by the eigenvectors and eigenvalues of $\frac{1}{c}\bm{\phi}^T\bm\phi$. 
Below we analyze the recentering vector and premultiplier in the context of each of the three tests to shed light on when and how we expect the $L$-test to gain power over the $F$-test.

\subsubsection{The recentering}

The oracle's recentering vector is $\bm{\omega}^{O} = m\bm{V}_{1:k}^T\bm{X}_{1:k}\frac{\bm{\beta}_{1:k}}{\norm{\bm{\beta}_{1:k}}}$ for $m \gg 0$, and the $L$-test's recentering vector is $\bm \omega^{L} = \bm{\nu}(\glassoc_{1:k})$. The decomposition in \eqref{eq:decomp} implies that $\bm u_{1:k} = \frac{1}{\hat{\sigma}_{1:k}}\bm{V}_{1:k}^T\bm y$, from which we get that
\begin{align}
\label{eq:approx_dir}
\dir(\bm{u}_{1:k}) = \dir(\bm{V}_{1:k}^T\bm y) \approx \dir(\E[\bm{V}_{1:k}^T\bm y]) = \dir(\bm{V}_{1:k}^T\bm{X}_{1:k}\bm{\beta}_{1:k}).
\end{align}
The oracle essentially recenters with a vector that points with infinite magnitude in the direction exactly opposite \eqref{eq:approx_dir} because it has the benefit of knowing the exact direction of $\bm \beta_{1:k}$. In a similar manner, the $L$-test's recentering vector points approximately opposite \eqref{eq:approx_dir} by making a guess of the direction of $\bm{\beta}_{1:k}$. We can see this by thinking of $\betak_{1:k}(\glassoc_{1:k})$ from \eqref{eq:beta_est} as an estimator of
\begin{align*}
\hat{\bm{\beta}}^*_{1:k} &:= \bm{S}^{-1}\bm{X}_{1:k}^T\bm{P}_{-1:k}(\bm{y} - \bm{X}_{-1:k}\bm{\beta}_{-1:k}) \\
&\sim \mathcal N\left(\bm{S}^{-1}\bm{X}_{1:k}^T\bm{P}_{-1:k}\bm{X}_{1:k}\bm{\beta}_{1:k},\; \sigma^2\bm{S}^{-1}\bm{X}_{1:k}^T\bm{P}_{-1:k}\bm{X}_{1:k}\bm{S}^{-1}\right), 
\end{align*}
the same expression that defines $\betak_{1:k}(\glassoc_{1:k})$ but with $\glasso_{-1:k}(\glassoc_{1:k})$ replaced with $\bm \beta_{-1:k}$ and $\bm{P}_{\mathcal{A}(\glassoc_{1:k})}$ set to zero. Since the group LASSO is exactly designed to handle sparsity, we would expect $\glasso_{-1:k}(\glassoc_{1:k})$ to estimate $\bm\beta_{-1:k}$ best when $\bm\beta_{-1:k}$ is sparse. Also, when $\bm\beta_{-1:k}$ is sparse, we would expect the group LASSO estimate $\hat{\bm\beta}^\lambda_{-1:k}(\glassoc_{1:k})$ to be sparse as well, meaning it should have a small active set and hence $\bm{P}_{\mathcal{A}(\glassoc_{1:k})} \approx \bm{0}$.
Taken together, these heuristic arguments suggest that when $\bm\beta_{-1:k}$ is sparse, $\betak_{1:k}(\glassoc_{1:k})$ should approximately be distributed like $\hat{\bm{\beta}}_{1:k}^*$, which in turn allows us to say that
\begin{align*}
    \dir(\bm{\nu}(\glassoc_{1:k})) 
    = -\dir\left( \bm V_{1:k}^T\bm{X}_{1:k}\betak_{1:k}(\glassoc_{1:k})\right) 
    \approx -\dir(\bm{V}_{1:k}^T\bm{X}_{1:k}\hat{\bm{\beta}}^*_{1:k}).
\end{align*}
Further approximating by replacing $\hat{\bm{\beta}}^*_{1:k}$ by its expectation gives

\begin{equation*}
    \dir(\bm{\nu}(\glassoc_{1:k})) \approx -\dir((\bm{X}_{1:k}^T\bm{V}_{1:k})^{-1}\bm{X}_{1:k}^T\bm{P}_{-1:k}\bm{X}_{1:k}\bm{\beta}_{1:k}),
\end{equation*}
which forms an obtuse angle with \eqref{eq:approx_dir}. Note that there is no recentering effect when the blocks $\bm{X}_{1:k}$ and $\bm{X}_{-1:k}$ are orthogonal since $\betak_{1:k}(\glassoc_{1:k}) = \bm{0}$, a case we study empirically in Section \ref{sec:power_orthog}.

\subsubsection{The premultiplier}

The $L$-test is the only one of the three tests that takes $\bm{\phi} = \bm{A}(\glassoc_{1:k}, \lambda)$ to not be the identity. A detailed analysis of this premultiplier is performed in Appendix \ref{sec:premult_role} and reveals that its structure depends on how the variance is allocated among the first $k$ predictors. In particular, when the variance is spread evenly across the first $k$ predictors, $\bm{A}(\glassoc_{1:k}, \lambda) \approx \bm{I}$. By contrast, when nearly all the variance is concentrated in a single principal component (PC) and the true $\bm{\beta}_{1:k}$ is well aligned with that component, then $\bm{A}(\glassoc_{1:k}, \lambda)$ has one dominant eigenvalue with eigenvector approximately parallel to $\bm{u}_{1:k}$, while the remaining $k-1$ eigenvalues are close to zero. Geometrically, this essentially translates to the ellipse having a single finite principal axis along $\bm{u}_{1:k}$ and all other axes orthogonal to it extending infinitely.

\subsubsection{Putting the pieces together}
\begin{figure}[ht]
    \centering
    \begin{subfigure}[t]{0.25\linewidth}
        \centering
        \includegraphics[width=\linewidth]{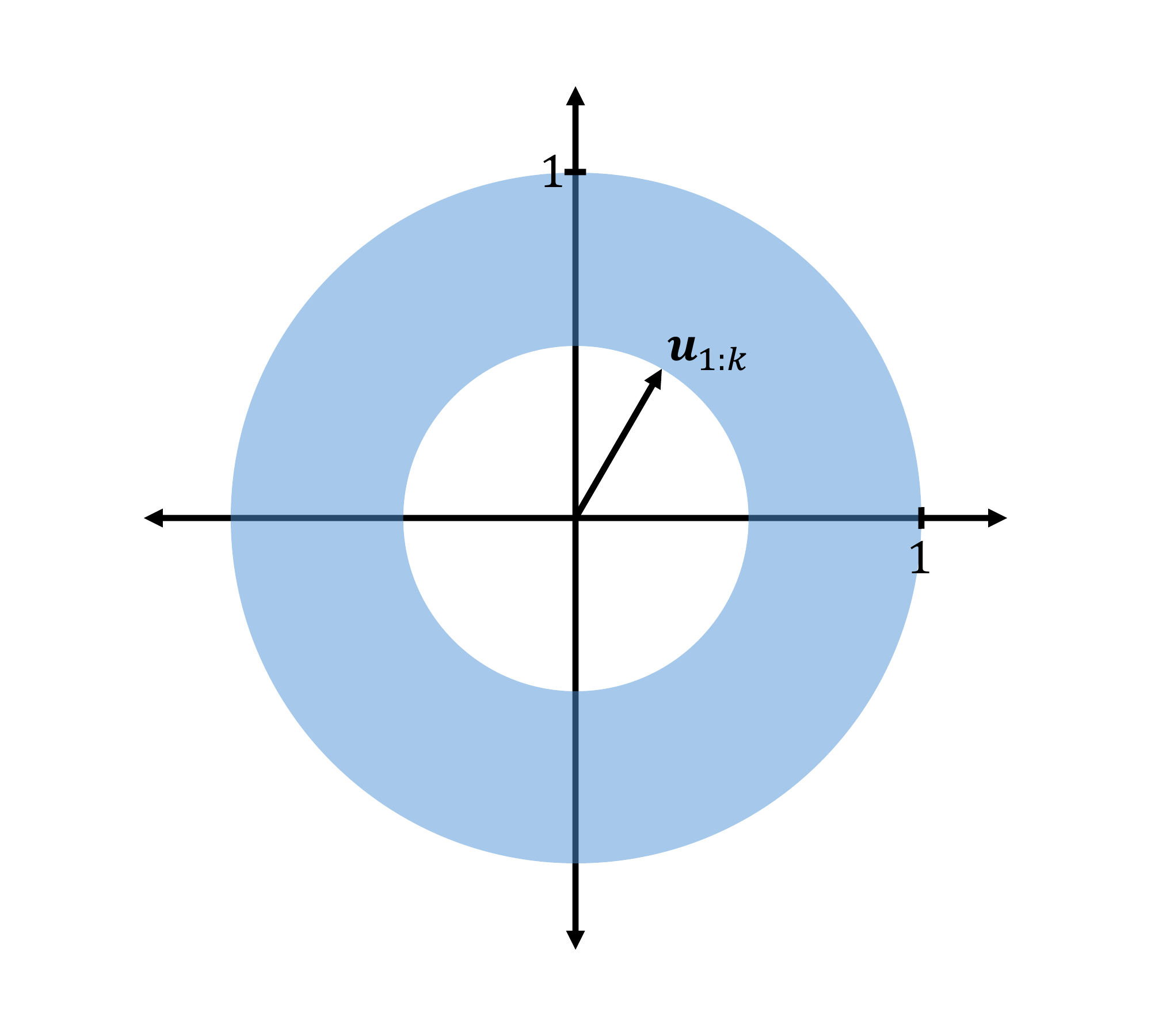}
        \caption{$F$-test geometry.}
        \label{fig:pvalue_F}
    \end{subfigure}
    \hspace{0.5cm}
    \begin{subfigure}[t]{0.25\linewidth}
        \centering
        \includegraphics[width=\linewidth]{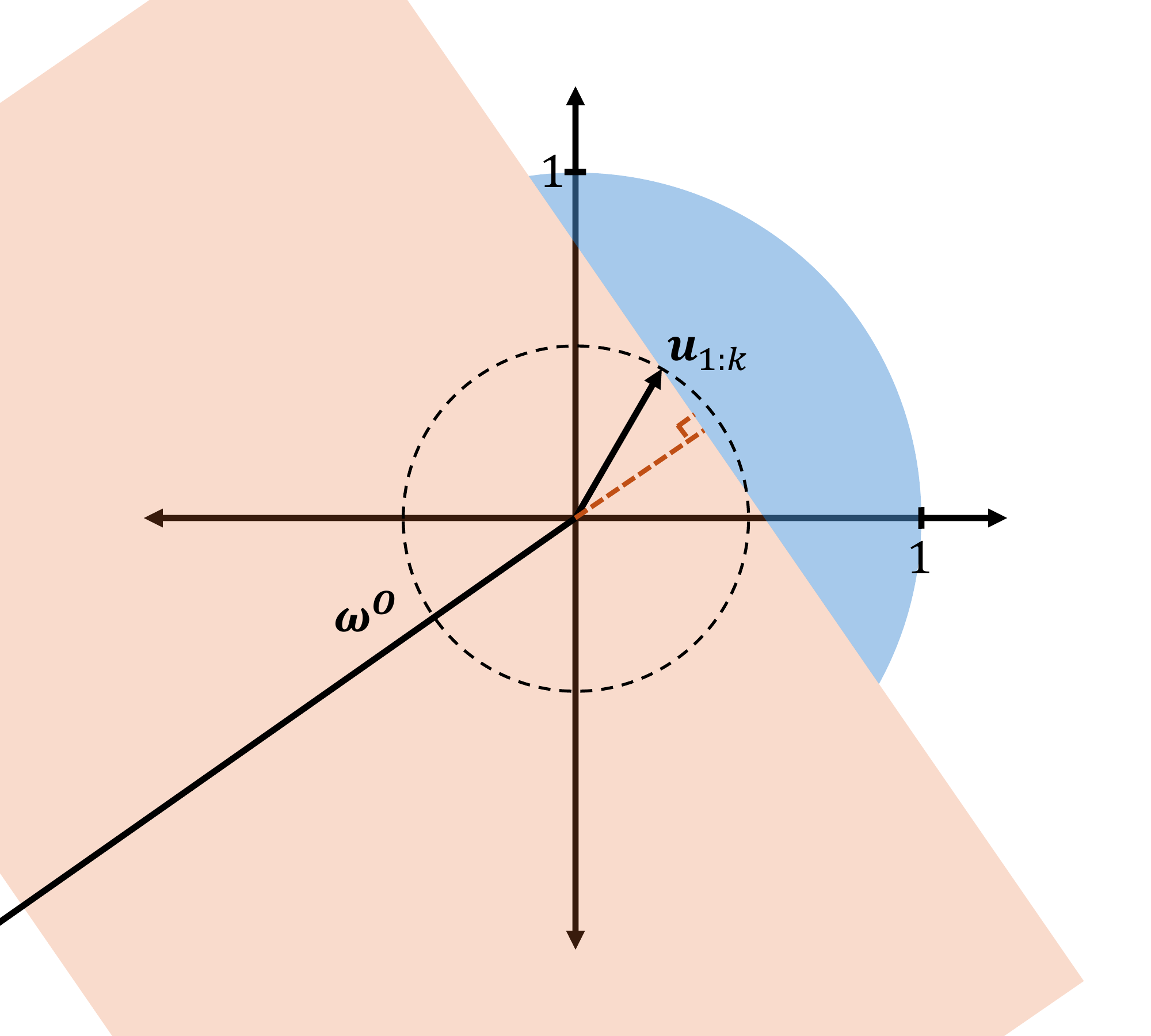}
        \caption{Oracle geometry.}
        \label{fig:pvalue_oracle}
    \end{subfigure}
    
    \begin{subfigure}[t]{0.25\linewidth}
        \centering
        \includegraphics[width=\linewidth]{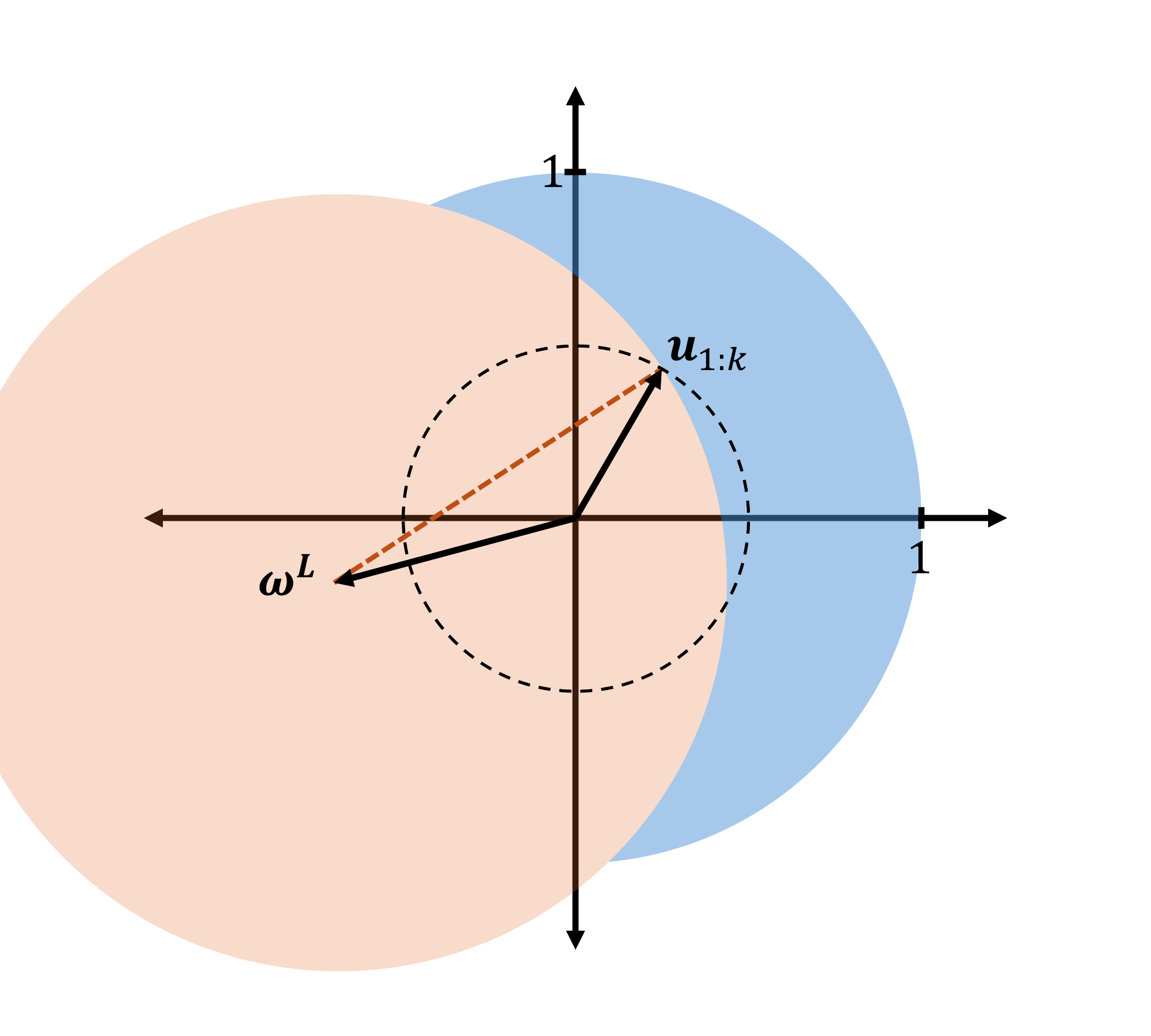}
        \caption{$L$-test geometry when $\bm{X}_{1:k}$ is orthonormal.}
        \label{fig:pvalue_L1}
    \end{subfigure}
    \hspace{0.5cm}
    \begin{subfigure}[t]{0.25\linewidth}
        \centering
        \includegraphics[width=\linewidth]{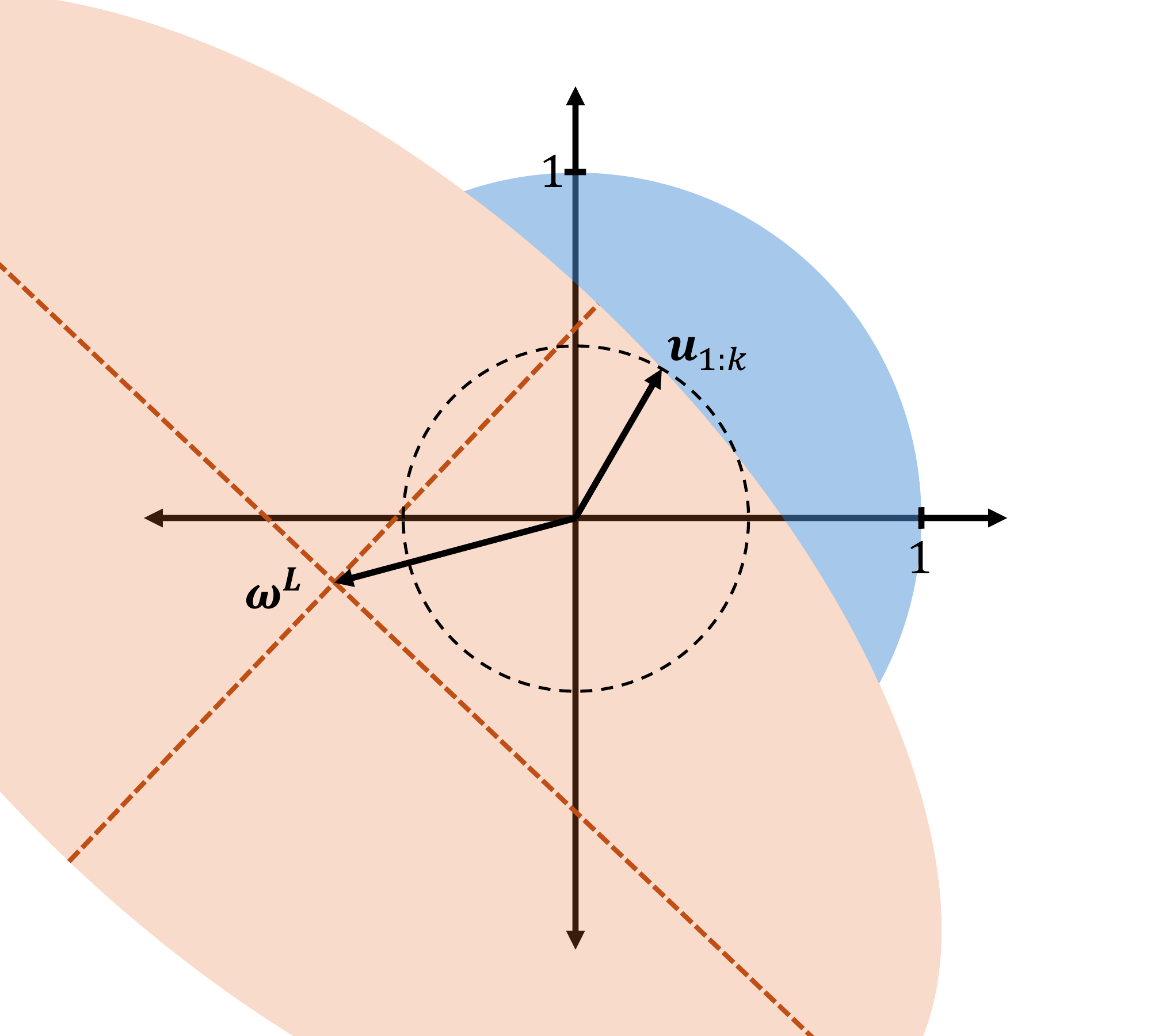}
        \caption{$L$-test geometry when $\bm{X}_{1:k}$ is near rank 1.}
        \label{fig:pvalue_L2}
    \end{subfigure}

    \caption{Schematics representative of the geometries of the oracle, $F$-, and $L$-tests. In all panels, the blue region corresponds to the region whose mass equals the p-value. 
    Note that the distribution of $\tilde{\bm u}_{1:k}$ is not uniform within the unit ball in $\R^k$.}
    \label{fig:p-value_masses}
\end{figure}

Since the $F$-test performs no recentering and takes $\bm{\phi} = \bm{I}$, its p-value corresponds to the mass in the disk region in Figure \ref{fig:pvalue_F}. Meanwhile, the oracle p-value corresponds to the mass of $\tilde{\bm{u}}_{1:k}$ outside the circle centered at $\bm\omega^O = m\bm{V}_{1:k}^T\bm{X}_{1:k}\frac{\bm{\beta}_{1:k}}{\norm{\bm{\beta}_{1:k}}}$ of radius $\norm{\bm{u}_{1:k} - m\bm{V}_{1:k}^T\bm{X}_{1:k}\frac{\bm{\beta}_{1:k}}{\norm{\bm{\beta}_{1:k}}}}$. As $m \to \infty$, this is equivalent to the region outside the half-space shown in Figure \ref{fig:pvalue_oracle}.

The $L$-test p-value corresponds to the mass of $\tilde{\bm{u}}_{1:k}$ lying outside an ellipse centered at $\bm\omega^L=\bm \nu(\glassoc_{1:k})$, which will typically be smaller in magnitude and less anti-aligned with $\bm u_{1:k}$ than $\bm\omega^O$. When variance is distributed relatively uniformly in the span of the first $k$ predictors, the premultiplier essentially reduces to the identity, so the ellipse is simply a circle of radius $\norm{\bm{u}_{1:k} - \bm{\nu}(\glassoc_{1:k})}$ as Figure \ref{fig:pvalue_L1} shows. This case isolates the role of the recentering vector: it enables the $L$-test to leverage auxiliary information about sparsity to infer the direction of $\bm{\beta}_{1:k}$---and, in turn, the opposite direction of $\bm{u}_{1:k}$---emulating the oracle. At the other extreme, when nearly all variance is captured by a single principal component and the true $\bm{\beta}_{1:k}$ aligns with it, the ellipse degenerates into one principal axis roughly aligned with $\bm{u}_{1:k}$ and $k-1$ orthogonal axes of very large magnitude, shown in Figure \ref{fig:pvalue_L2}. We can interpret the principal component direction as a second attempt at inferring the direction of $\bm{\beta}_{1:k}$ since this degenerate geometry only arises if the two are well aligned. Thus, this second case demonstrates how the premultiplier allows the $L$-test to sometimes exploit additional, unsupervised information encoded in the geometry of $\bm{X}_{1:k}$ to further infer the direction of $\bm{\beta}_{1:k}$ and come even closer to approximating the power of the oracle.



\section{Simulation studies}
\label{sec:sim_studies}
In this section, we conduct simulations to evaluate the $L$-test's performance. In all simulations, we use the linear model \eqref{eq:linear_model} and perform inference on the first $k$ coefficients $\bm{\beta}_{1:k}$. The rows of $\bm{X}$ are drawn i.i.d. from $\mathcal{N}(\bm{0}, \bm{\Sigma})$, where $\Sigma_{ij} = \rho^{|i-j|}$ for $\rho \in [0, 1)$, and then the columns are standardized. We randomly choose $k_1$ out of the first $k$ coefficients of $\bm{\beta}$ to be $\pm A/\sqrt{k_1}$ with equal probability and sample $k_2$ out of the remaining $d-k$ independently from a standard Gaussian. The rest of the coefficients are set to zero. All tests are performed at the $\alpha = 0.05$ level. For the implementation of tests that involve Monte Carlo sampling, $M = 200$, unless specified otherwise. Power estimates are computed as averages on $1000$ simulations, and error bars represent plus or minus two standard errors. We always take $\sigma = 1$, and in each of the experiments, we specify the parameters $n, d, k, A, k_1, k_2$, and $\rho$.

\subsection{Power of the $L$-test}
\label{sec:power_plots}
In Figure~\ref{fig:powers_standard}, we compare the powers of the $L$-test (``L-test"), MC-free test from Section \ref{sec:precise_p_values} (``MC-free"), group LASSO MC test from \eqref{eq:mc_p} (``gLASSO"), oracle test (``Oracle"), and $F$-test (``F-test") when we vary the signal of the tested coefficients, sparsity among null and non-null variables, and covariate correlations. For an additional benchmark, we also consider a naive extension of the $\ell$-test from \cite{Sengupta2024} that applies it separately to each of the tested coefficients with a Bonferroni correction (``Bonf-$\ell$").

The $L$-test significantly outperforms the $F$-test in sparse settings under various signal strengths and covariate correlations. As expected, its performance is insensitive to sparsity among null coefficients (because the group LASSO does not regularize towards sparsity within a group) but weakens when sparsity increases among non-null coefficients. As discussed in Section~\ref{sec:oracle_test}, since $k=10$, the $L$-test exhibits a large power gap with the oracle but outperforms the $F$-test by a substantial margin, and that margin is larger than when $k=1$. For instance, the maximal power gain of the $L$-test in the top left panel (13.8 percentage points) is about 30\% higher than that of the $\ell$-test in its paper's analogous $k=1$ simulations (10.7 percentage points) \citep{Sengupta2024}.
\begin{figure}[ht]
    \centering
    \begin{subfigure}[t]{0.4\linewidth}
        \centering
        \includegraphics[width=\linewidth]{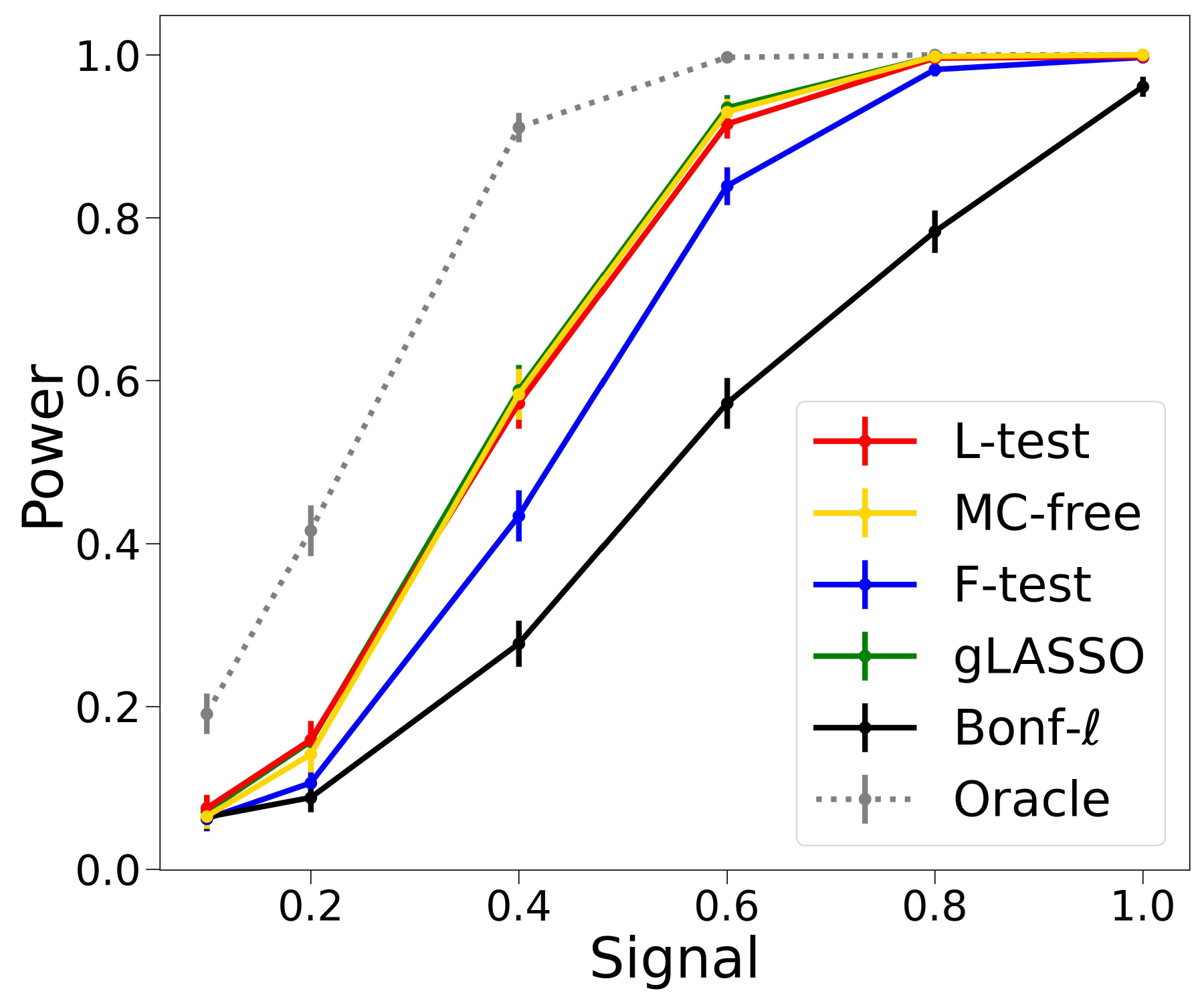}
    \end{subfigure}
    \hspace{0.2cm}
    \begin{subfigure}[t]{0.4\linewidth}
        \centering
        \includegraphics[width=\linewidth]{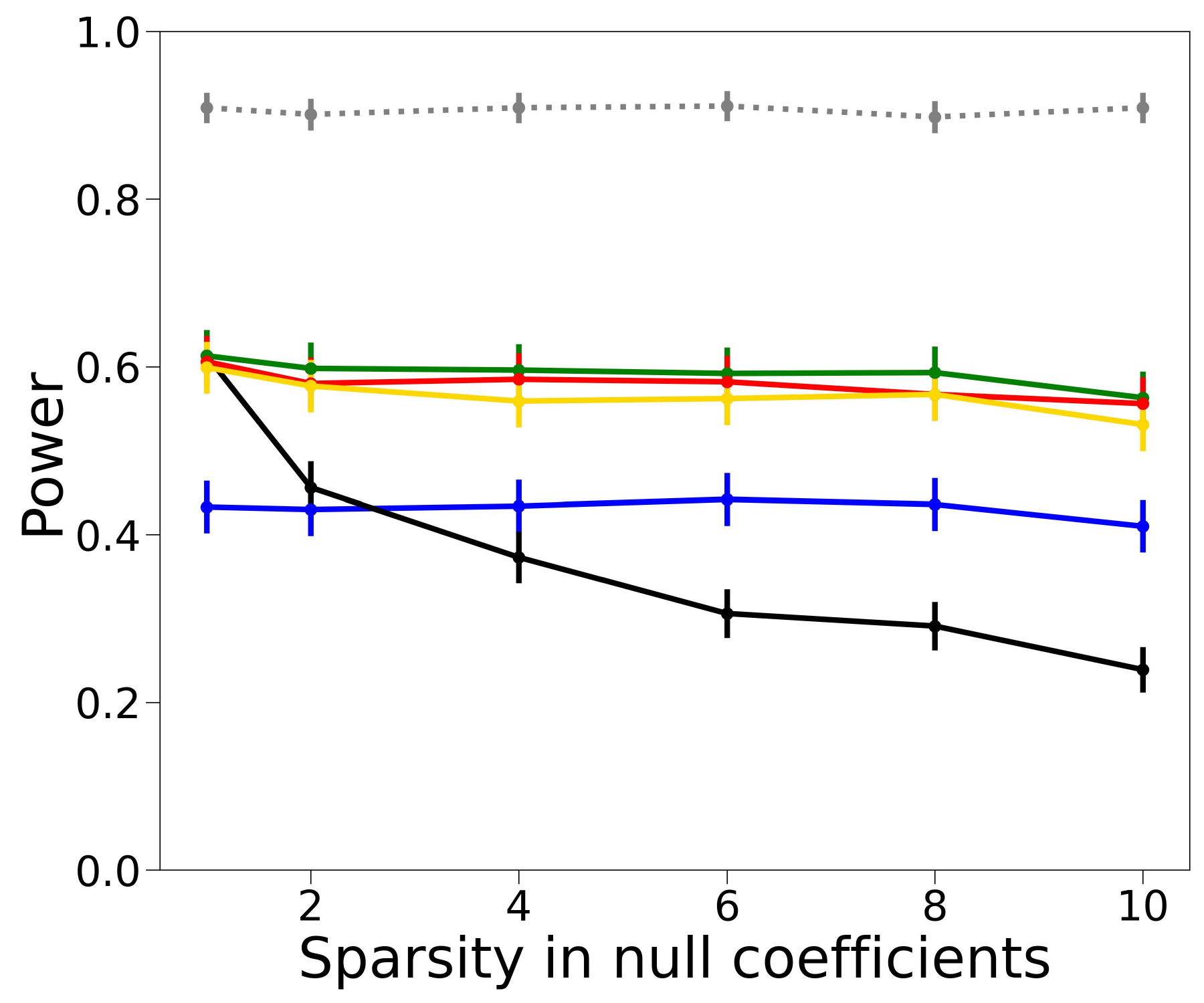}
    \end{subfigure} \\
    \begin{subfigure}[t]{0.4\linewidth}
        \centering
        \includegraphics[width=\linewidth]{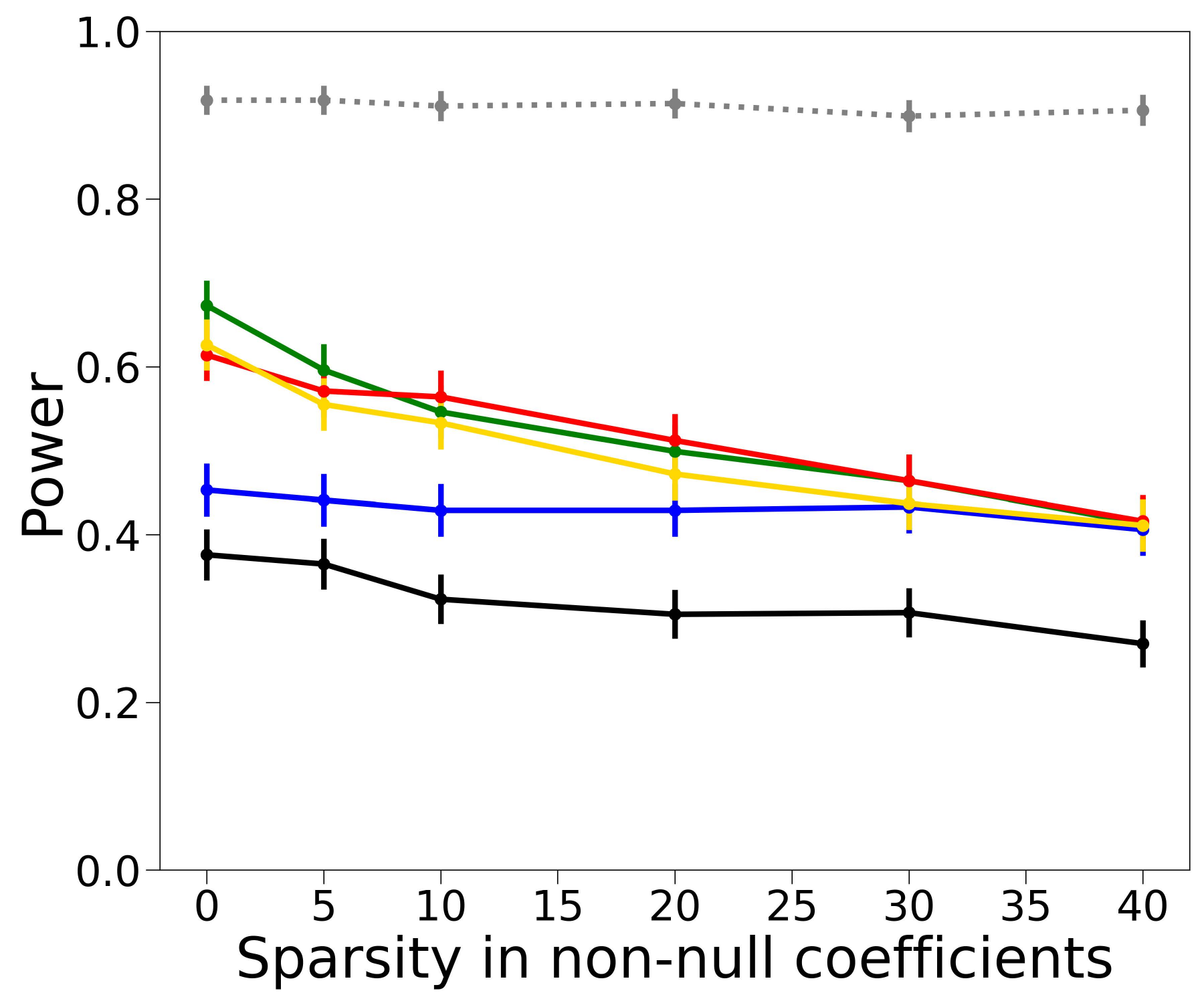}
    \end{subfigure}
    \hspace{0.2cm}
    \begin{subfigure}[t]{0.4\linewidth}
        \centering
        \includegraphics[width=\linewidth]{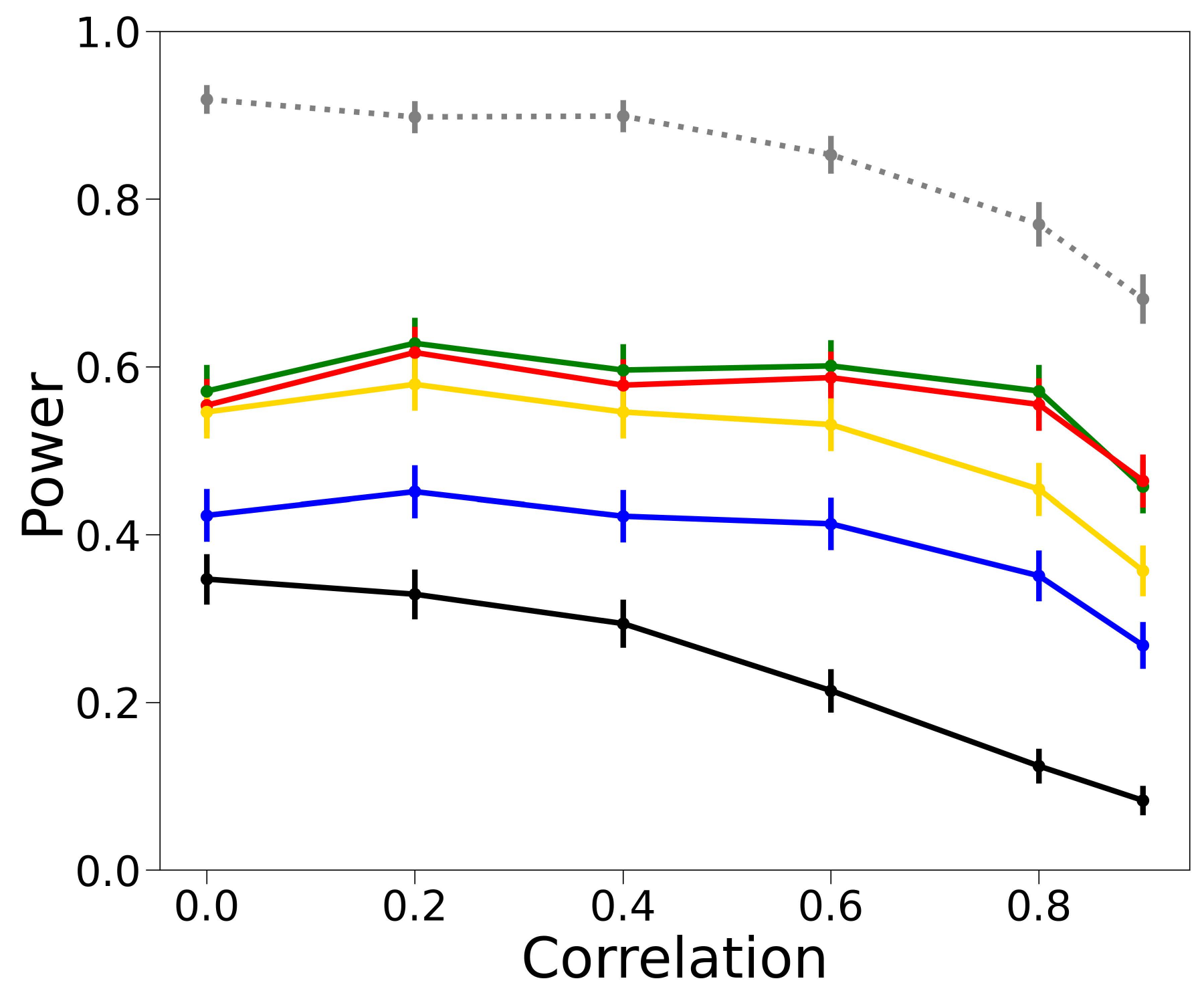}
    \end{subfigure}
    \caption{Comparison of test powers. In all panels, $n = 100$, $d = 50$, and $k = 10$. The top left takes $k_1 = k$, $k_2 = 4$, and $\rho = 0$ and varies $A$; top right takes $A = 0.4$, $k_2 = 4$, and $\rho = 0$ and varies $k_1$; bottom left takes $A = 0.4$, $k_1 = 4$, and $\rho = 0$ and varies $k_2$; bottom right takes $A = 0.4$, $k_1 = 4$, and $k_2 = 4$ and varies $\rho$.}
    \label{fig:powers_standard}
\end{figure}

In all settings, the $L$-test closely tracks the power of the group LASSO MC test (while, of course, being far more computationally efficient).
The MC-free test performs similarly to the $L$-test except when correlations are high, since variance concentrates in a few directions and the premultiplier $A(\glassoc_{1:k}, \lambda)$ exploits this geometry to give the $L$-test an additional boost; a similar effect occurs when dimensionality increases (see Figure \ref{fig:powers_high} in Appendix \ref{sec:power_plots_add}). The Bonferroni-$\ell$-test generally performs substantially worse than all other tests, approaching our methods' performance (and outperforming the $F$-test)  only in the special case where the null is 1-sparse. Similar trends are evident in larger models, shown in Figure \ref{fig:powers_large} of Appendix \ref{sec:power_plots_add}.

\subsection{Block-orthogonal design setting}
\label{sec:power_orthog}
To confirm our understanding of the premultiplier $\bm{A}(\glassoc_{1:k}, \lambda)$ in the $L$-test empirically, we constructed a related procedure, the PC-test, that leverages a similar geometry in $\bm{X}_{1:k}$ for power. This test selects the top $r$ PC directions explaining $85\%$ of the variance in $\bm{X}_{1:k}$ and then performs an $F$-test on the retained directions.

To isolate the impact of $\bm{A}(\glassoc_{1:k}, \lambda)$, we orthogonalized $\bm{X}_{1:k}$ and $\bm{X}_{-1:k}$ so the recentering $\bm \nu(\glassoc_{1:k})$ vanishes, and then assessed the $L$-test's similarity to the PC-test. As the left panel in Figure \ref{fig:PC_vary_corr} shows, both outperform the $F$-test as $\rho$ increases and variance concentrates in a few directions. Since both rely on alignment between $\bm{\beta}_{1:k}$ and these PCs, they underperform when $\bm{\beta}_{1:k}$ is anti-aligned (middle panel) but improve under maximum alignment (left vs. right panels). Further experiments in Appendix \ref{sec:power_plots_add} show that the MC-free test is more robust in the anti-aligned regime since it omits the premultiplier.
\begin{figure}[ht]
  \centering
  \begin{subfigure}[t]{0.3\textwidth}
    \centering
    \includegraphics[width=\textwidth]{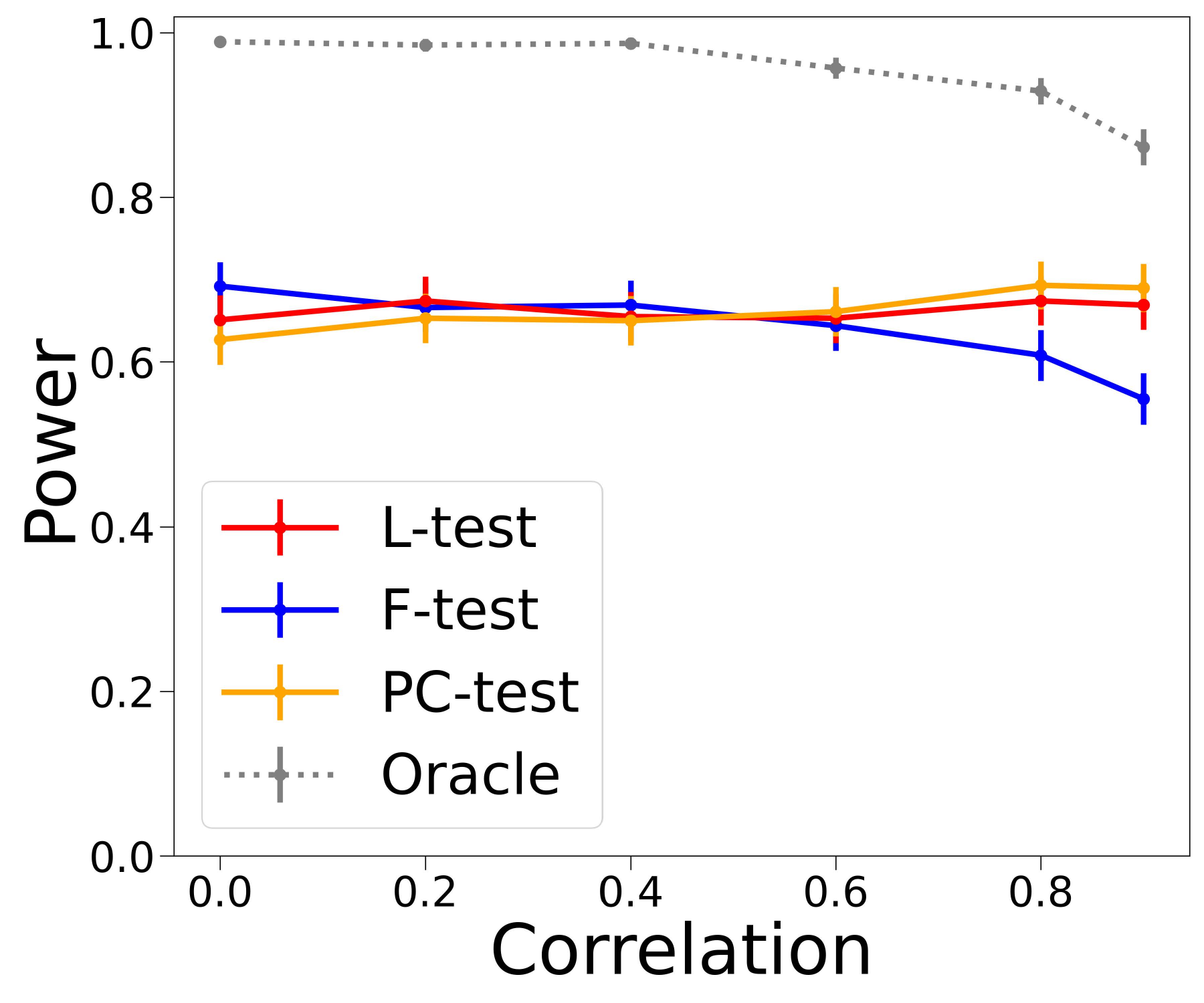}
  \end{subfigure}
  \hspace{0.2cm}
  \begin{subfigure}[t]{0.3\textwidth}
    \centering
    \includegraphics[width=\textwidth]{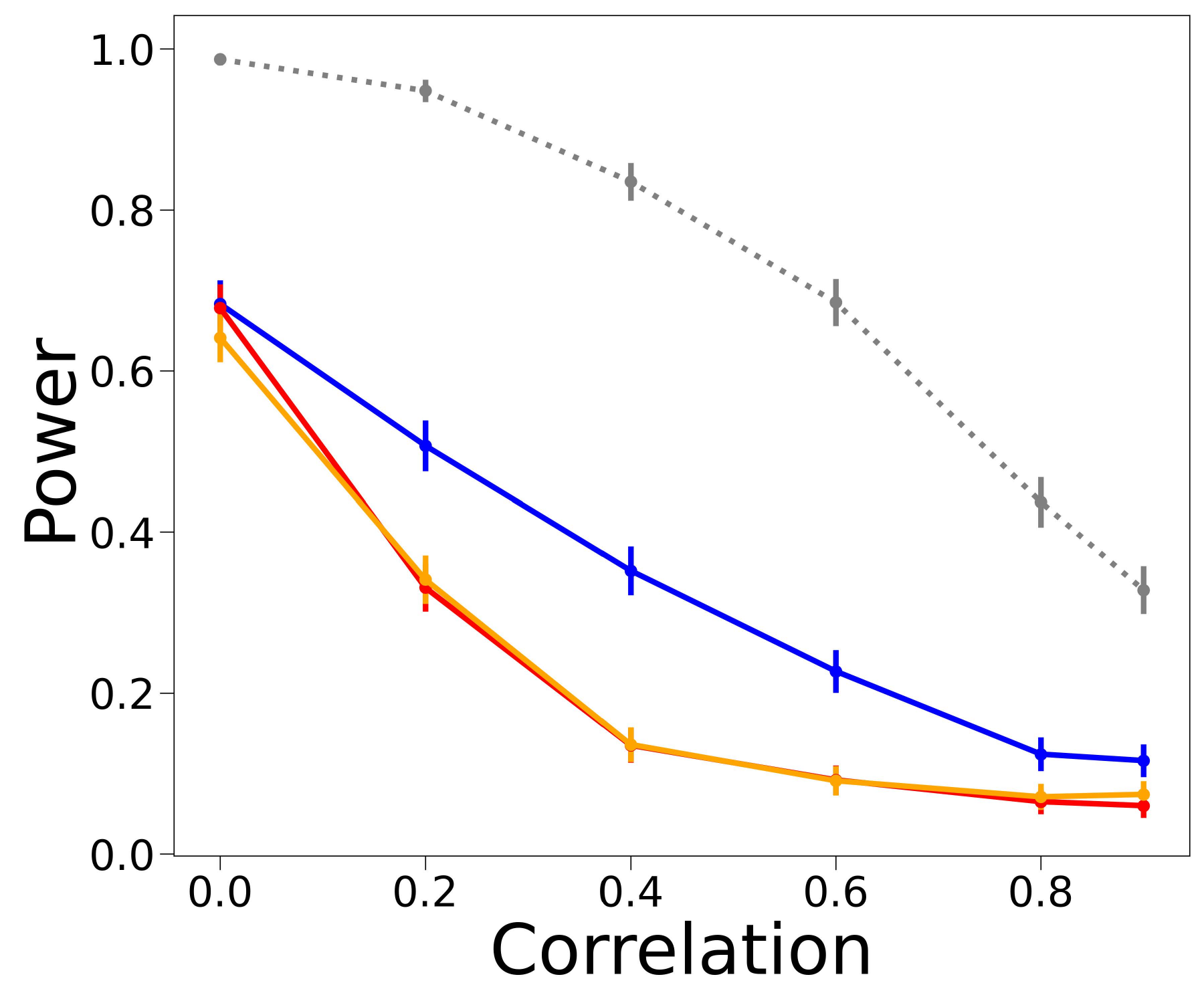}
  \end{subfigure}
  \hspace{0.2cm}
  \begin{subfigure}[t]{0.3\textwidth}
    \centering
    \includegraphics[width=\textwidth]{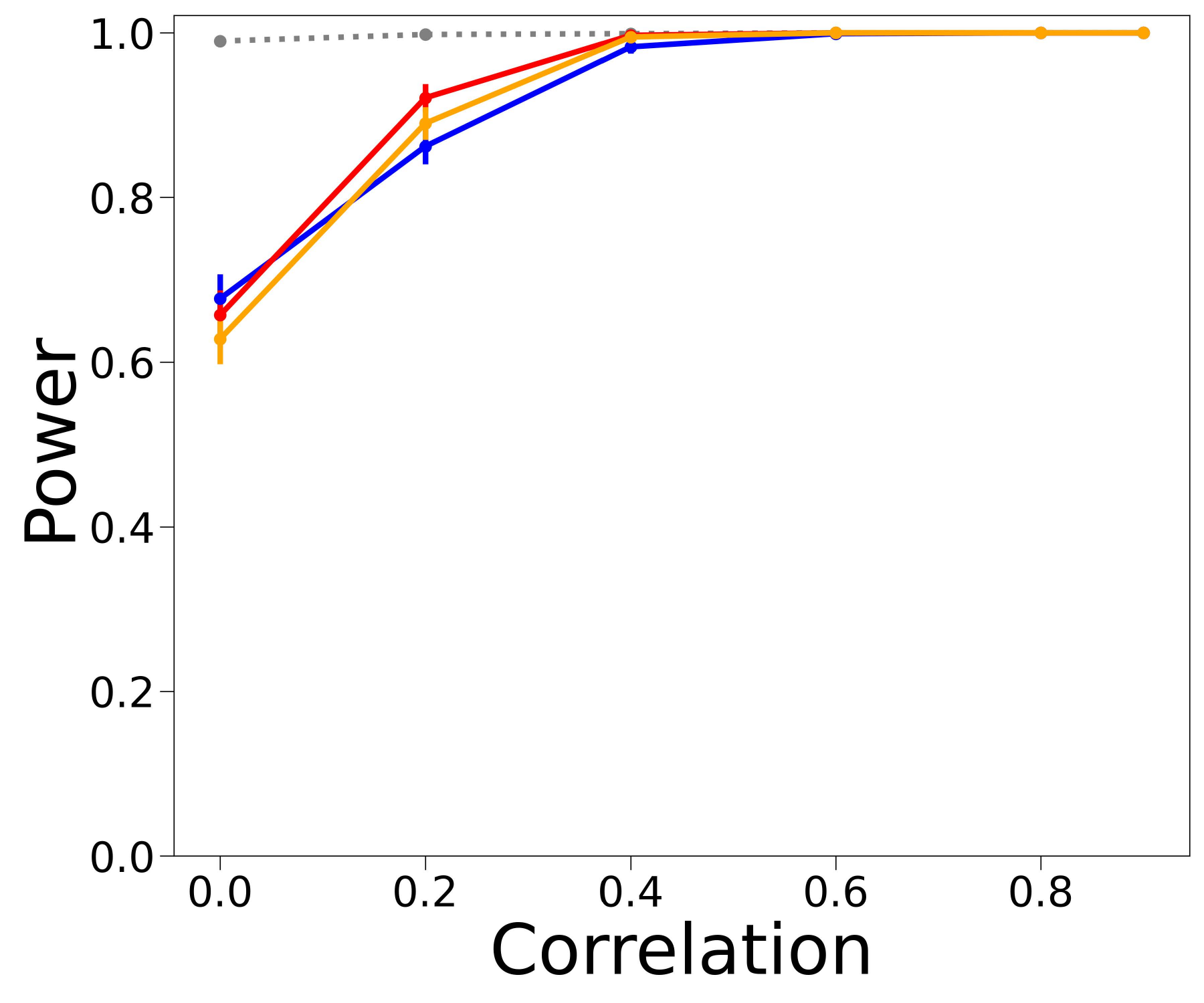}
  \end{subfigure}
  \caption{Similarity between $L$-test and PC-test under block-orthogonal design. In all panels, we orthogonalize $\bm{X}_{1:k}$ to $\bm{X}_{-1:k}$ by using the design $\begin{pmatrix}
      \bm{X}_{1:k} & (\bm{I} - \bm{P}_{1:k})\bm{X}_{-1:k}
  \end{pmatrix}$, followed by standardization. The left panel has the same setting as the bottom right panel of Figure \ref{fig:powers_standard}. The middle panel changes only $\bm \beta$ to be dense with alternating signs, giving maximum anti-alignment with $\bm{1}_k$, the top PC under maximum correlation; the right panel instead uses a dense, nonnegative $\bm \beta$, yielding maximum alignment.}
  \label{fig:PC_vary_corr}
\end{figure}


\subsection{Robustness}
\label{sec:robustness}
In this section, we assess the robustness of the $L$-test's validity to linear model violations, similar to the approach taken in \cite{Sengupta2024}. Four different types of violations are considered: heavy-tailed errors, skewed errors, heterosckedasticity, and non-linearity. In all of the experiments, we consider various sample sizes $n$ and take $d = n/2$, $k = 3$, $k_1 = 0$, $k_2 = 1$, and $\rho = 0$ or $\rho = 0.5$. Figure \ref{fig:robustness} shows that the $L$-test exhibits similar robustness to the $F$-test for each model violation considered when $\rho = 0$. Appendix \ref{sec:robustness_add} presents additional settings for each type of model violation---with Figure \ref{fig:robustness} showing the most extreme setting for each---and also performs the same experiments when $\rho = 0.5$.
\begin{figure}[ht]
    \centering
    \begin{subfigure}[t]{0.4\linewidth}
        \centering
        \includegraphics[width=\linewidth]{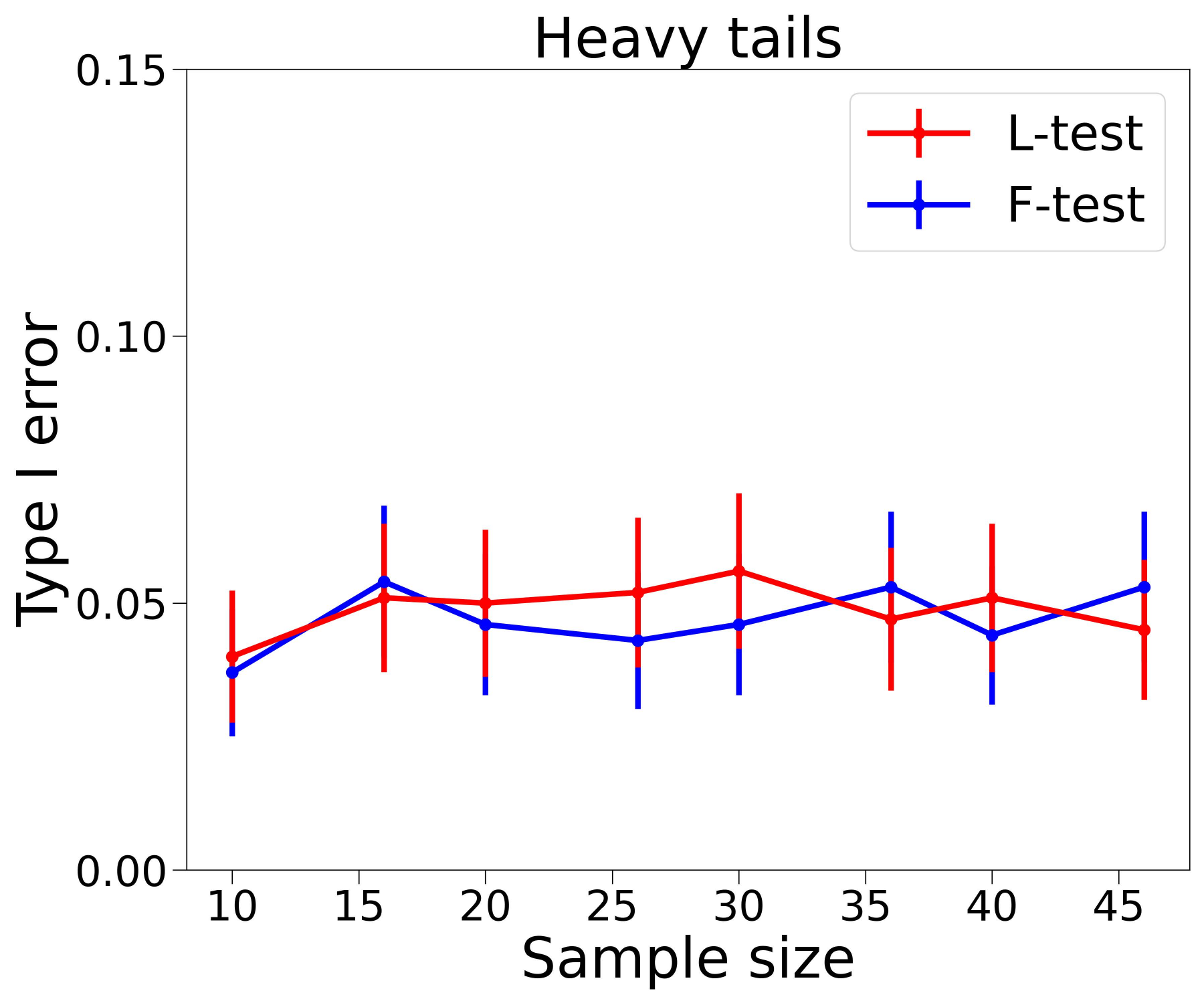}
    \end{subfigure}
    \hspace{0.2cm}
    \begin{subfigure}[t]{0.4\linewidth}
        \centering
        \includegraphics[width=\linewidth]{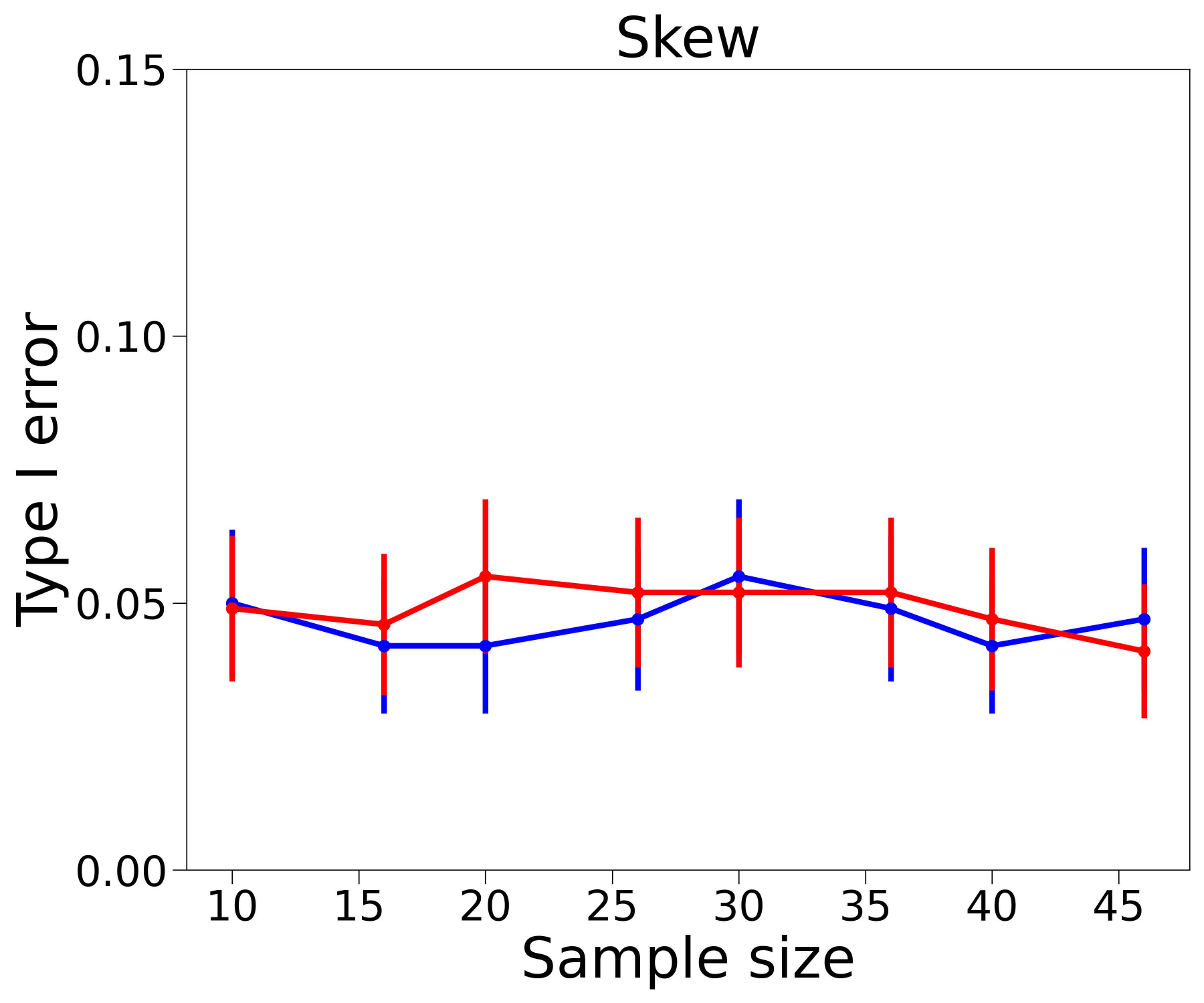}
    \end{subfigure} \\
    \begin{subfigure}[t]{0.4\linewidth}
        \centering
        \includegraphics[width=\linewidth]{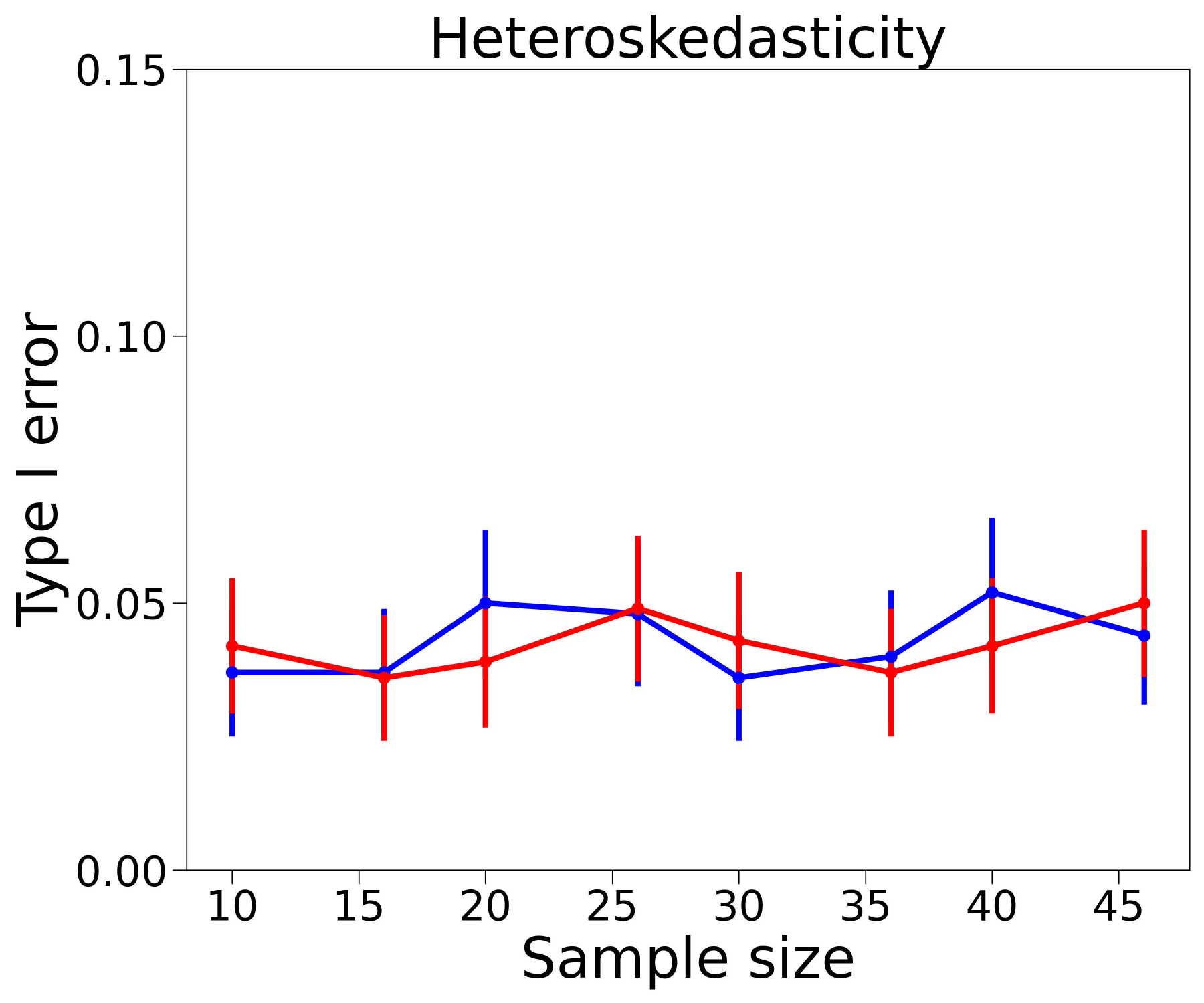}
    \end{subfigure}
    \hspace{0.2cm}
    \begin{subfigure}[t]{0.4\linewidth}
        \centering
        \includegraphics[width=\linewidth]{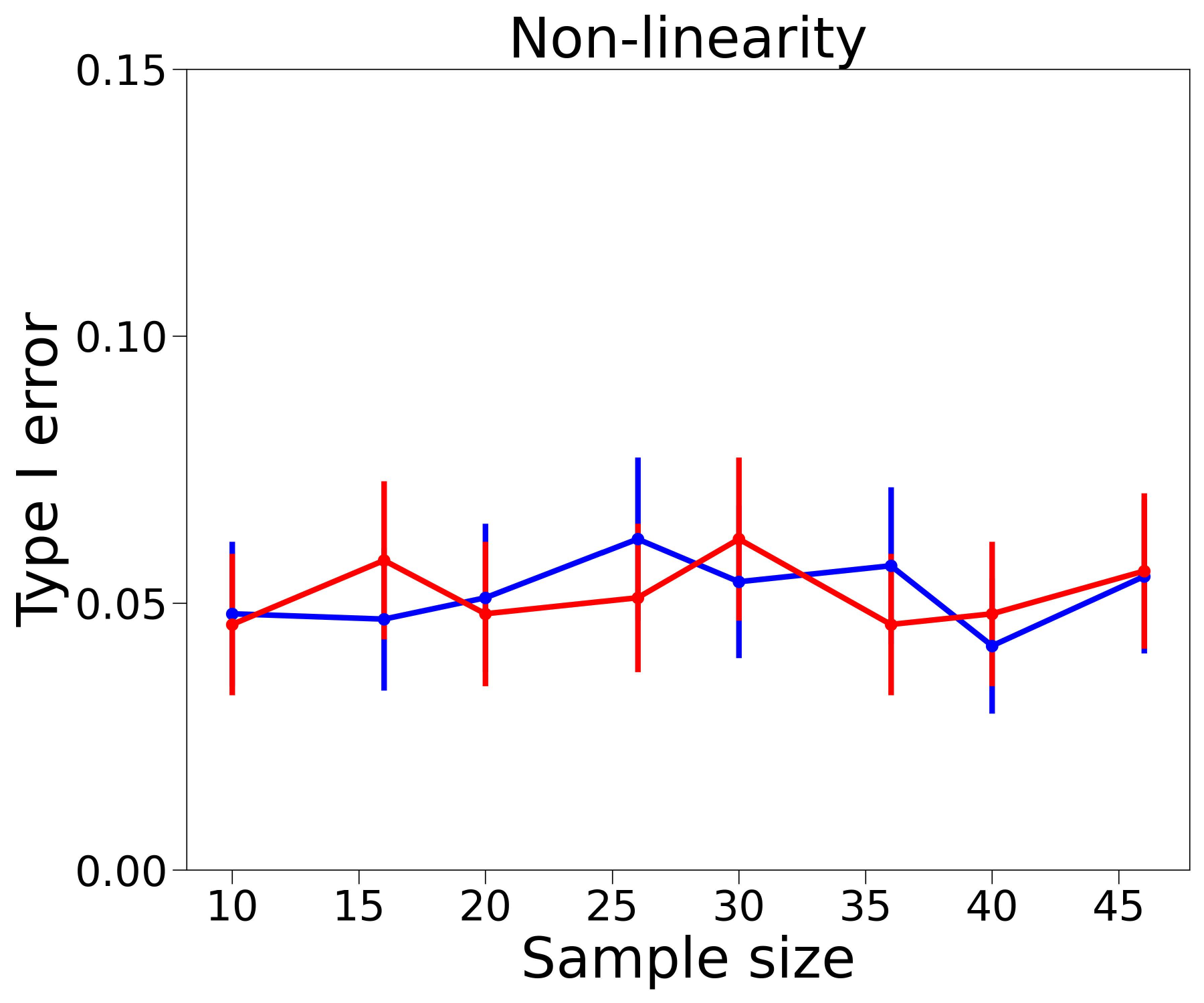}
    \end{subfigure}
    \caption{$F$-test and $L$-test Type I errors for $H_{1:k}$ under four different violations of model \eqref{eq:linear_model} when $\rho = 0$. In the top left panel, $\epsilon_i \overset{i.i.d.}{\sim} t_2$. In the top right, $\epsilon_i \overset{i.i.d.}{\sim} \Expo(1)$ standardized with its theoretical mean and std. In the bottom left, $\epsilon_i \overset{i.i.d.}{\sim} \mathcal N(0, 1)$ if the $i$-th row mean of $\bm{X}$ is less than 0 and $\epsilon_i \overset{i.i.d.}{\sim} \mathcal{N}(0, 8)$ otherwise. In the bottom right, the design matrix has $(i,j)$-th entry $\sign(X_{i, j})|X_{i, j}|^4$.}
    \label{fig:robustness}
\end{figure}

\section{Application to HIV drug resistance data}
\label{sec:application}

To demonstrate the power gains of our methodology on real data, we apply it to the task of discovering genes in the Human Immunodeficiency Virus Type 1 (HIV-1) that are associated with drug resistance. The data set from \cite{Rhee2006} consists of drug resistance measurements (the response) and genetic mutations (the covariates). It contains 16 data sets, each corresponding to a different drug.

We followed the same pre-processing steps outlined in \cite{Barber2015}, resulting in 16 regressions with $n$ ranging between 328 and 842 and $d$ ranging between 147 and 313. For each of the 16 regressions, we aggregated mutations to the gene level by grouping them based on their positions. Across the regressions, the group size $k$ ranged from 1 to 11, and the $F$-test, $L$-test, and MC-free test were performed for $\hyp$. In the single testing analysis, we also performed the group LASSO MC test in \eqref{eq:mc_p} to provide a benchmark for comparison.

\subsection{Single testing}
All tests were performed at the $\alpha = 0.05$ level, and the leftmost panel in Figure \ref{fig:single_testing} shows that the $F$-test and $L$-test had average powers of 24.30\% and 25.88\%, respectively, a 6.50\% improvement, with the largest power gains occurring on moderate to larger group sizes. Although the group LASSO MC test performed slightly better than the $L$-test overall with an average power of 26.45\%, this is mainly because it performed better on smaller group sizes (overrepresented in the middle panel); the right panel shows their performances were actually quite similar, indicating the $L$-test is a good approximation even in real data. The MC-free test demonstrated the best performance with an average power of 27.02\% (a 11.19\% improvement).

\begin{figure}[ht]
  \centering
  \begin{subfigure}[t]{0.3\textwidth}
    \centering
    \includegraphics[width=\textwidth]{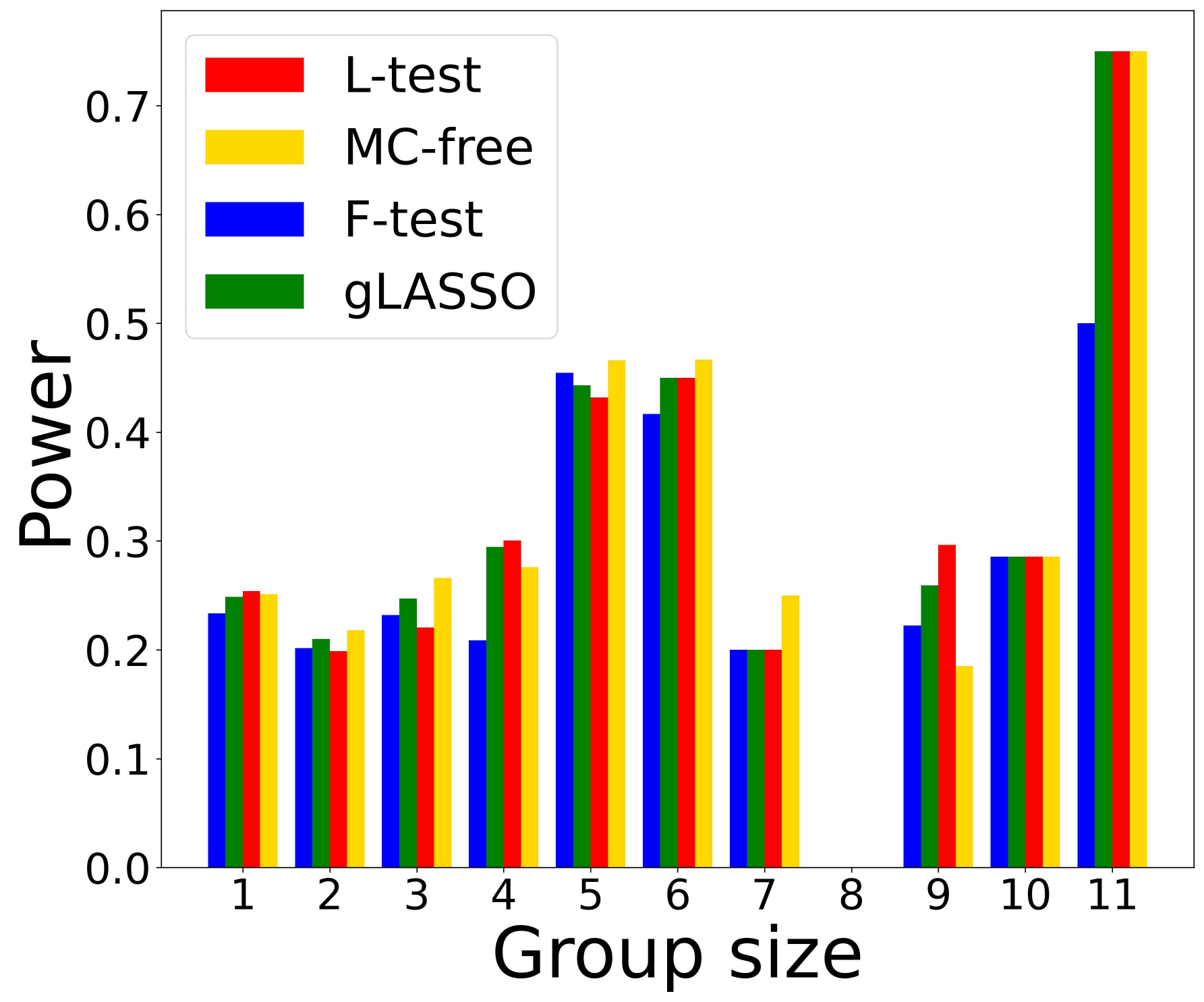}
  \end{subfigure}
  \hspace{0.2cm}
  \begin{subfigure}[t]{0.3\textwidth}
    \centering
    \includegraphics[width=\textwidth]{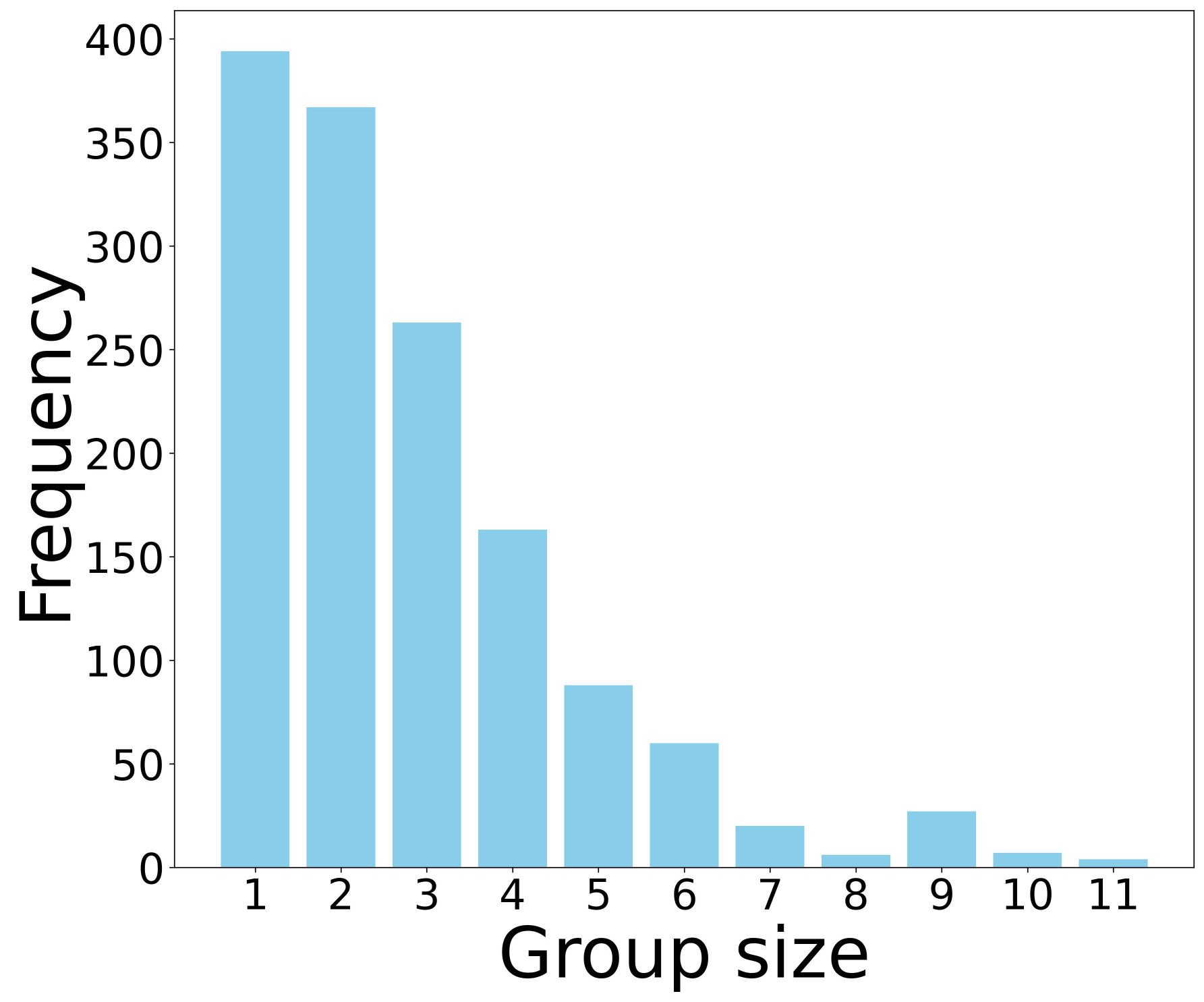}
  \end{subfigure}
  \hspace{0.2cm}
  \begin{subfigure}[t]{0.3\textwidth}
    \centering
    \includegraphics[width=\textwidth]{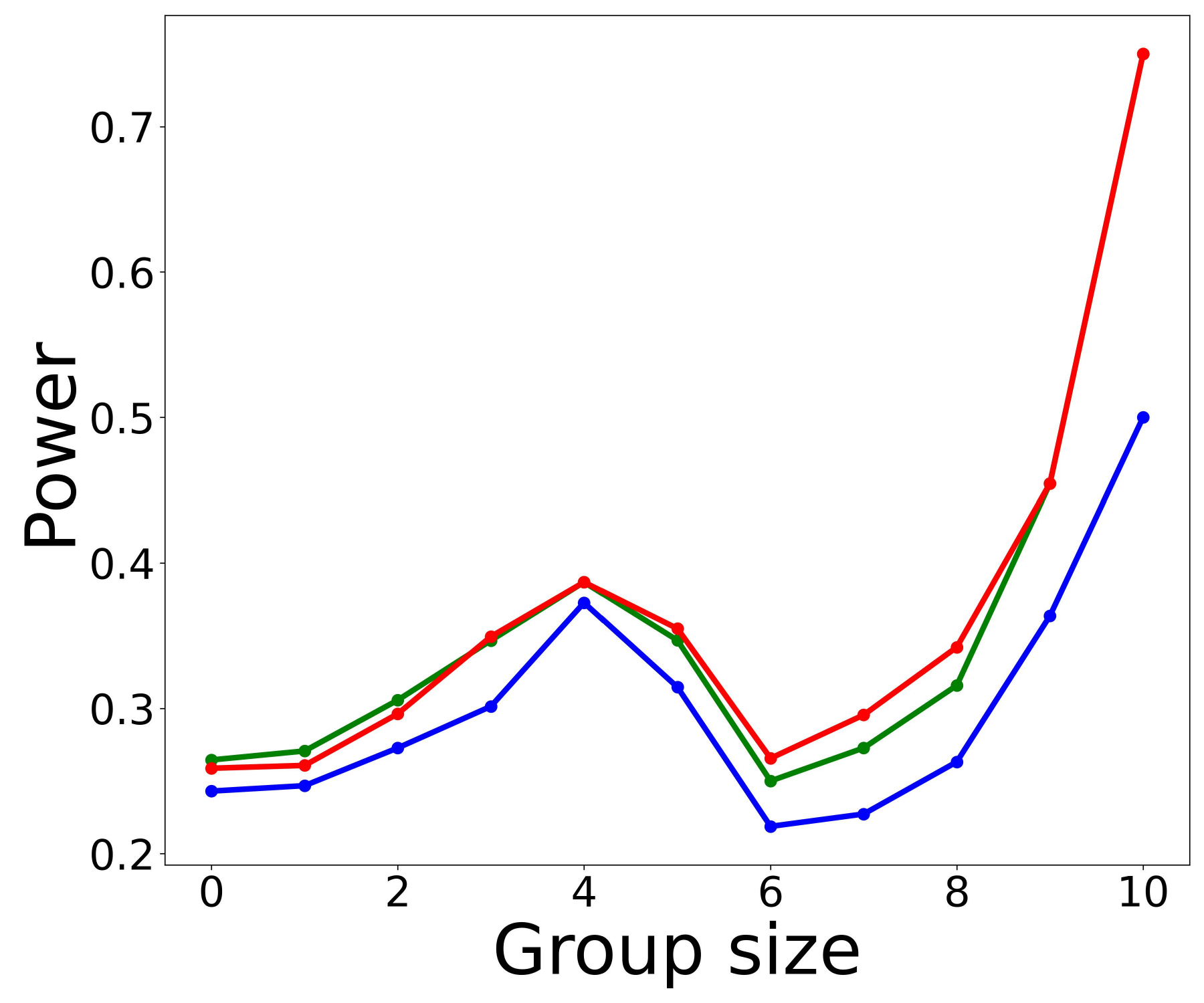}
  \end{subfigure}
  \caption{The left panel shows average powers by null group size across all 16 regressions. The middle panel shows the number of tests sorted by null group size across the regressions. The right panel shows average powers across all hypothesis tests that included subsets of coefficients of at least the size indicated on the $x$-axis.}
  \label{fig:single_testing}
\end{figure}

\subsection{Multiple testing}
To demonstrate the efficiency benefit of our MC-free test, we applied two multiple testing procedures: Holm at 5\% and Benjamini--Hochberg (BHq) at 10\%. As shown in the left panel of Figure \ref{fig:multiple_testing}, achieving a power gain with the $L$-test over the $F$-test with the Holm procedure requires more than $10^4$ MC samples. Fewer MC samples suffices to obtain more power than the $F$-test with BHq, but even a similar number of samples is needed for the $L$-test to reach maximum power. This added precision, however, comes at the cost of substantially higher computation time, as the right panel shows. In contrast, the MC-free test delivers the precision needed for high power at computation times an order of magnitude faster.

\begin{figure}[ht]
  \centering
  \begin{subfigure}[t]{0.4\textwidth}
    \centering
    \includegraphics[width=\textwidth]{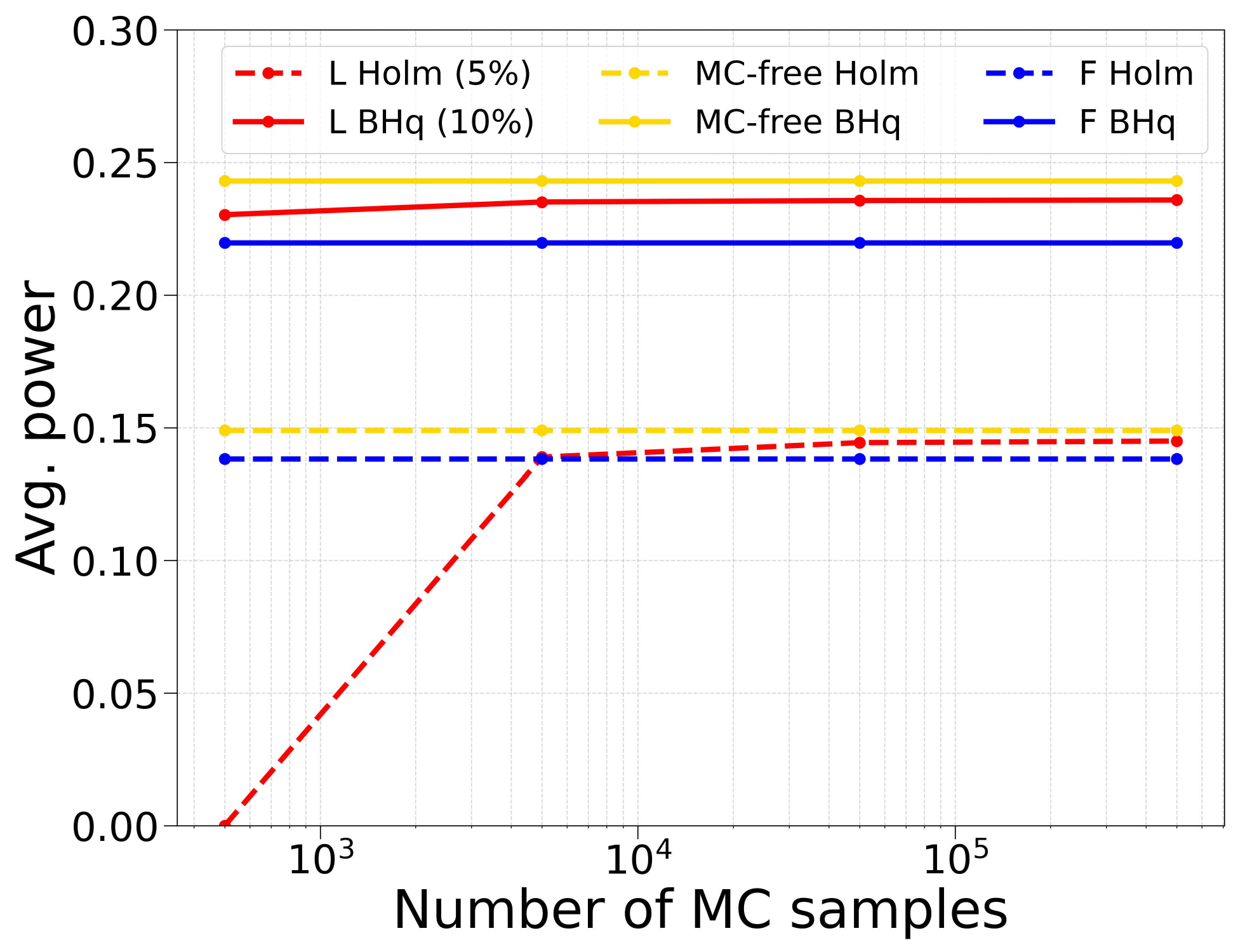}
  \end{subfigure}
  \hspace{0.2cm}
  \begin{subfigure}[t]{0.4\textwidth}
    \centering
    \includegraphics[width=\textwidth]{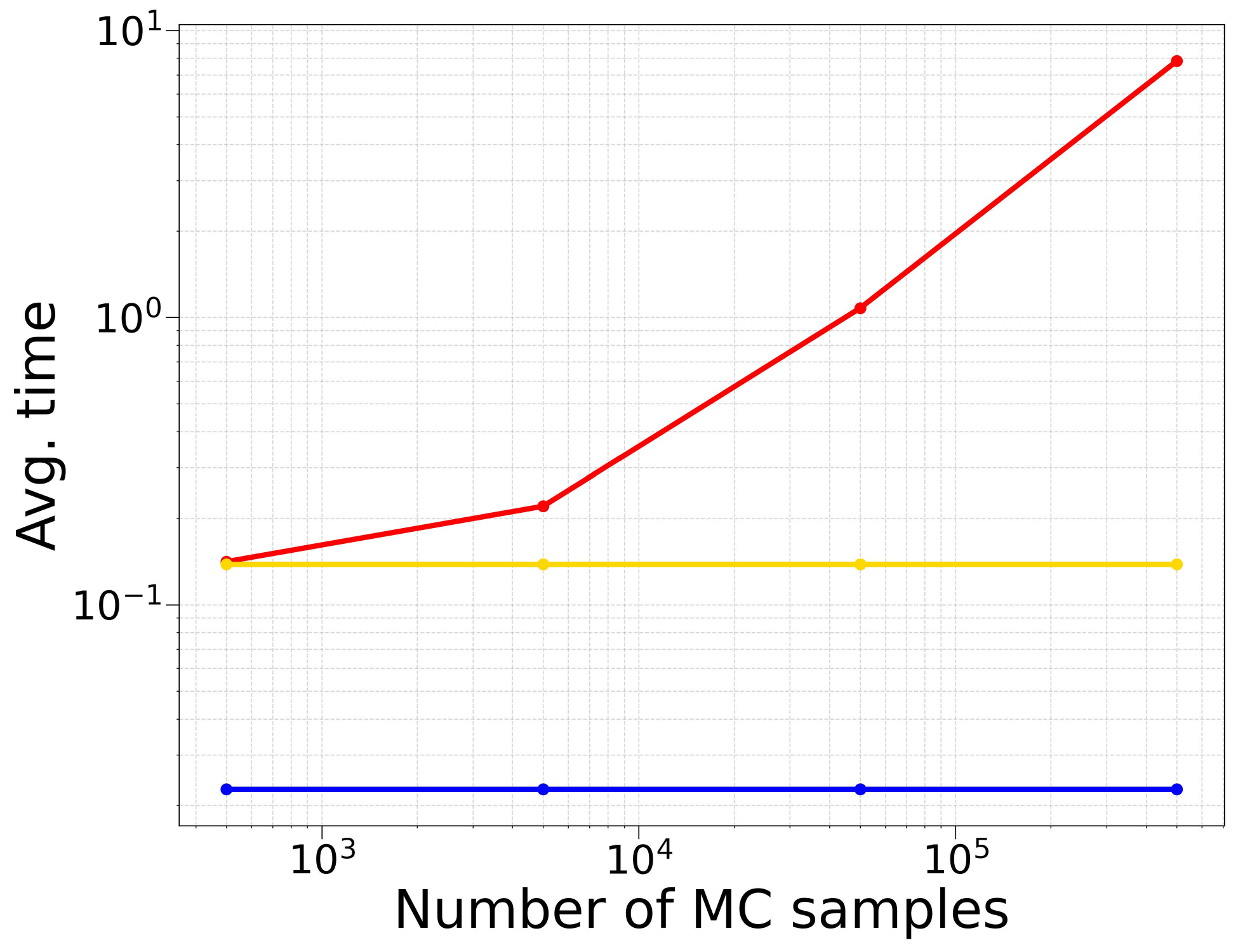}
  \end{subfigure}
  \caption{For each procedure (Holm and BHq), power was computed for each regression and then averaged across all 16. In order to be able to attribute the increased $L$-test power to the increased number of MC samples and not the randomness of the test, we performed this entire data analysis 50 times. The left panel shows the Holm and BHq powers averaged across these 50 iterations for each test. The right panel shows the average time associated with a single application of each test, computed over all tests from the 16 regressions and 50 iterations. Because the standard errors associated with the mean estimates in these plots were negligible, we chose to omit them.}
  \label{fig:multiple_testing}
\end{figure}

\section{Discussion}
\label{sec:discussion}
In closing, we identify some directions for future research:
\begin{enumerate}
    \item \textit{Generating confidence regions:} Note that we can extend the $L$-test to the more general hypothesis $H_{1:k}(\bm{a}): \bm{\beta}_{1:k} = \bm{a}$. This is because $\bm y$ satisfying $H_{1:k}(\bm{a})$ is equivalent to $\bm y - \bm{X}_{1:k}\bm{a}$ satisfying $\hyp(\bm 0) = \hyp$, so we can just apply the $L$-test to $(\bm y - \bm X_{1:k} \bm a, \bm X)$. 
    Letting $p^{(1:k)}(\bm a)$ denote the $L$-test p-value for $H_{1:k}(\bm{a})$, this generalization allows us to, in principle, generate a $100(1-\alpha)\%$ $L$-test confidence region given by $$\{\bm a \in \R^k: p^{(1:k)}(\bm a) > \alpha\}.$$
    First, naively generating the region in this way would be incredibly computationally expensive, so finding an efficient construction is an interesting direction for further work. Second, it would be useful to study whether these regions could be constructed to admit an interpretable geometry, which would be especially useful in higher dimensions.
    
    \item \textit{Exploiting additional structure:} In Section \ref{sec:extend_l_test} and Appendix \ref{sec:power_plots_add}, we consider replacing the OLS in \eqref{eq:F-test_stat_cond} with alternative estimators when the null group is sparse. It could be interesting to adapt our methodology to exploit additional structure---leveraging sparsity in the null group via the LASSO \cite{Tibshirani1996} or sparse group LASSO \cite{Tibshirani2013}, or encoding known covariance structure within the tested group by using the original Mahalanobis-norm formulation of the group LASSO \citep{Bakin1999, Yuan&Lin2006}, with $\bm{K}_j$ chosen to reflect that structure.

    \item \textit{Applying to Gaussian means and, asymptotically, to general parametric models}: In a Gaussian means problem, $\bm w \sim \mathcal N(\bm{\beta}, \bm{\Sigma})$ is observed for some known positive definite matrix $\bm \Sigma$ and $\bm{\beta}$ is unknown. This model can always be converted to a Gaussian linear regression setup by taking $\bm X = \bm \Sigma^{-1/2}$ and $\bm y = \bm{\Sigma}^{-1/2}\bm{w}$ so that $\bm y \sim \mathcal N(\bm X \bm{\beta}, \bm{I}_n)$. The standard method for testing $\hyp$ in this case of known variance is the likelihood ratio test, which uses the test statistic: $$\frac{1}{\sigma^2}(\norm{(\bm{I} - \bm{P}_{-1:k})\bm{y}}^2 - \norm{(\bm{I} - \bm{P})\bm{y}}^2) \sim \chi^2_k \text{ under $\hyp$}.$$ Following similar steps to the proof of Lemma \ref{lma:rewriteF} that rewrites the $F$-test statistic, it can be shown that the above is equivalent to $$\frac{1}{\sigma^2} \norm{\bm{V}_{1:k}^T\bm{X}_{1:k}\olsk}^2.$$ If we proceed in the same manner we obtained the $L$-test from the $F$-test by replacing $\olsk$ with $\glasso_{1:k}$ and conditioning on $\bm{X}_{-1:k}^T\bm{y}$, we will obtain the conditional test that rejects for large values of the $L$-test statistic. So, we might analogously expect the $L$-test to outperform the likelihood ratio test in this setting when the true $\bm \beta_{-1:k}$ is sparse. One promising implication of this is that the $L$-test may yield more powerful asymptotic inference (under sparsity) for a vast class of multivariate parametric models that admit asymptotically multivariate Gaussian estimators coupled with consistent estimators for the covariance matrix (see \citep{Sengupta2024} for a similar discussion and \citep{Fithian2021} for such an asymptotic result but for knockoffs).
\end{enumerate}

\section{Acknowledgments}
\label{sec:acknowledgments}
LJ and SS were partially supported by DMS-2045981.

\newpage
\bibliographystyle{plain}
\bibliography{references}

\newpage
\appendix

\section{Extending the $\ell$-test}
\label{sec:extending_ell}
\subsection{Challenges of analytic p-value characterization}
\label{sec:analytic_challenges}
Theorem \ref{thm:glasso_unit} in Appendix \ref{sec:proofs} establishes a functional relationship between $\glasso_{1:k}$ and $\bm{u}_{1:k}$ by extending the inverse characterization result of Theorem 2.1 in \cite{Sengupta2024}. After a change of variables, this result yields the density of $\glasso_{1:k} \mid \suffstat$ under $\hyp$ and therefore allows us to express the p-value of the conditional test that rejects for large values of \eqref{eq:L-test_stat_1} as follows. Letting $l > 0$ denote the observed test statistic,
\begin{align*}
P_{\hyp}\left(\norm{\bm{V}_{1:k}^T\bm{X}_{1:k}\glasso_{1:k}} \geq l \biggr\vert \suffstat\right) &= \int_{\{\bm{b} : \bm{b}^T\bm{M}\bm{b} \geq l^2\}}f_{\glasso_{1:k}}^{\hyp}(\bm b \mid \suffstat)~d\bm{b} \\
&=\int_{\{\bm{b} : \bm{b}^T\bm{M}\bm{b} \geq l^2\}}f_{\bm u_{1:k}}\left(f^{-1}_{\suffstat}(\bm{b})\right) \cdot \left|\frac{\partial f^{-1}_{\suffstat}(\bm{b})}{\partial \bm{b}}\right|~d\bm{b} \text{, where} \\
\bm{M} &= \bm{X}_{1:k}^T\bm{V}_{1:k}\bm{V}_{1:k}^T\bm{X}_{1:k}.
\end{align*}
In the second equality, note that although $f^{-1}_{\suffstat}$ is nondifferentiable on the measure-zero set where the LASSO estimate inside (see Theorem \ref{thm:glasso_unit}) changes its active set, this does not affect the validity of the change-of-variables formula for the integral, so we apply the Jacobian expression in the usual way. Evaluating this integral, however, is intractable. Even for moderately large $k$ and a well-behaved integrand, evaluating a $k$-dimensional integral over an ellipsoidal region is computationally demanding. In our setting, the problem is compounded by the fact that evaluating the integrand is an extremely computationally intensive task that requires solving a LASSO to obtain $\Lambda_{1:k}(\bm b, \frac{\bm b}{\norm{\bm b}})$ as well as computing the determinant of the matrix derivative $\frac{\partial f^{-1}_{\suffstat}(\bm{b})}{\partial \bm{b}}$.

When $k = 1$, \cite{Sengupta2024} circumvents this issue by showing that $\glassos_1$ is a monotone increasing function of $u_1$. This monotonicity allows the p-value to be mapped from $\glassos_1$-space to $u_1$-space as follows
\begin{align*}
P_{\hyp}\left(|\glassos_1| \geq l \biggr\vert \suffstats\right) &= P_{\hyp}(\glassos_1 \geq l \mid \suffstats) + P_{\hyp}(\glassos_1 \leq -l \mid \suffstats) \\
&= P(u_1 \geq f^{-1}_{\suffstats}(l) ) + P(u_1 \leq f^{-1}_{\suffstats}(-l)),
\end{align*}
which reduces to evaluating the CDF of $u_1$ via its one-to-one mapping to a $t$-distribution, $\frac{\sqrt{n-d}u_1}{\sqrt{1-u_1^2}} \sim t_{n-d}$. Crucially, this approach avoids the need to integrate a complex integrand that runs a LASSO.

When $k > 1$, the relationship between $\glasso_{1:k}$ and $\bm{u}_{1:k}$ becomes substantially harder to characterize. The matrix derivative $\nabla \glasso_{1:k}(\bm{u}_{1:k})$, evaluated explicitly in Proposition \ref{prop:glasso_matrix_deriv}, is quite complex and does not have any special structure (e.g. symmetry, diagonality, positive definiteness). Moreover, unlike in the $k = 1$ case where the monotonicity property of $\frac{\partial \glassos_1}{\partial u_1}$ enabled a straightforward transformation from $\glassos_1$-space to $u_1$-space, here it is unclear what kind of analytic information about $\glasso_{1:k}$ as a function of $\bm{u}_{1:k}$ would even facilitate such a transformation, given the additional complexity of the p-value expression in $\glasso_{1:k}$-space when $k > 1$.

\subsection{\sectionmath{Recentered $\bm{u}_{1:k}$-test}{Recentered u-test}}
\label{sec:ell_recentered}
While the exact characterization of \eqref{eq:L-test_stat_1} conditional on $\suffstat$ under $\hyp$ is restricted to $k=1$, \cite{Sengupta2024} also mentions a recentered $u_1$-test that approximates the $\ell$-test well and requires only a small, fixed number of LASSOs to perform. A multivariate analogue can be obtained by using Theorem \ref{thm:glasso_unit} and recentering $\bm{u}_{1:k}$ by the center of the ellipsoidal region $f^{-1}_{\suffstat}(\bm{0})$. This multivariate $\bm u_{1:k}$-test, however, has two shortcomings. First, while its performance in many settings is similar to the test that uses the p-value in \eqref{eq:mc_p}, it is not hard to find settings where it significantly underperforms. Second, even when it does perform well, if one wants to do multiple testing or for any other reason be able to obtain precise p-values, then one needs a large number of MC samples which, even without needing a LASSO for each, can become expensive. To make sparsity-informed coefficient group testing practical, we address both of these problems with new methods in this paper. 

\section{Power of the $L$-test}
\label{sec:premult_role}
In Section \ref{sec:L_test_power}, we claimed that the premultiplier in the $L$-test statistic takes on different structure depending on how variance is distributed among the first $k$ predictors $\bm X_{1:k}$. In this section, we formally establish those properties. Recall that the $L$-test premultiplier is given by 
$$\phi = \bm{A}(\glassoc_{1:k}, \lambda) = \bm{V}_{1:k}^T\bm{X}_{1:k} \left[\bm{X}_{1:k}^T(\bm{I} - \bm{P}_{\mathcal{A}(\glassoc_{1:k})})\bm{X}_{1:k} + \frac{n\lambda}{\norm{\glassoc_{1:k}}} \bm{I}\right]^{-1}\bm{X}_{1:k}^T \bm{V}_{1:k}.$$ We study it in two cases---when the block $\bm{X}_{1:k}$ is orthonormal and when it has rank 1. These two setups correspond to opposite extreme situations where the predictors provide maximally distinct vs. completely redundant information. 

Throughout the analysis below, we assume $\bm{\beta}_{-1:k}$ is sparse so that we can expect the active set $\mathcal{A}(\glassoc_{1:k})$ defined in Lemma \ref{lma:deriv} to be small. Because only a small fraction of the predictors are active in this case, the projection $\bm{P}_{\mathcal{A}(\glassoc_{1:k})}$ acts in a very low-dimensional space, capturing only about $|\mathcal{A}(\glassoc_{1:k})|/n$ of the variance, which is negligible when $|\mathcal{A}(\glassoc_{1:k})|$ is small. In this sense, 
\begin{align}
\label{eq:approx}
\bm{X}_{1:k}^T(\bm{I} - \bm{P}_{\mathcal{A}(\glassoc_{1:k})})\bm{X}_{1:k} \approx \bm{X}_{1:k}^T\bm{X}_{1:k},
\end{align}
an approximation we will make in both cases below. 

\subsection{Orthonormal case}
Since the columns of $\bm{X}_{1:k}$ are orthonormal, $\bm{X}_{1:k}^T\bm{X}_{1:k} = \bm{I}$. Moreover, for $i = 1, \ldots, k$,
\begin{align*}
    \frac{(\bm{I} - \bm{P}_{-1:i})\bm{X}_i}{\norm{(\bm{I} - \bm{P}_{-1:i})\bm{X}_i}} = \frac{\bm{X}_i}{\norm{\bm{X}_i}} \implies \bm{X}_{1:k}^T\bm{V}_{1:k} = \bm{I}.
\end{align*}
Making approximation \eqref{eq:approx} and substituting these expressions for $\bm{X}_{1:k}^T\bm{X}_{1:k}$ and $\bm{X}_{1:k}^T\bm{V}_{1:k}$, we get that 
$$\bm{A}(\glassoc_{1:k}, \lambda) \approx \left(1 + \frac{n\lambda}{\norm{\glassoc_{1:k}}}\right)^{-1} \bm{I}.$$
Thus, in the case that the variance is evenly distributed across predictors, the premultiplier has essentially no impact and the recentering vector plays the primary role.

\subsection{Rank-1 case}
Making approximation \eqref{eq:approx}, taking $\bm{X}_{1:k} = \bm{v}\bm{b}^T$, and applying the Sherman-Morrison formula, we have for some constant $C \in \R$
$$\bm{A}(\glassoc_{1:k}, \lambda) \approx C \cdot \bm{V}_{1:k}^T\bm{v}(\bm{V}_{1:k}^T\bm{v})^T.$$
The matrix $\bm{V}_{1:k}^T\bm{v}(\bm{V}_{1:k}^T\bm{v})^T$ is symmetric and has rank 1, so it has one non-zero eigenvalue corresponding to eigenvector $\bm{V}_{1:k}^T\bm{v}$ and $k-1$ zero eigenvalues. Writing out $\bm{u}_{1:k}$, we have that
\begin{align*}
\bm{u}_{1:k} &= \frac{1}{\sigmahat}\bm{V}_{1:k}^T(\bm{X}_{1:k}\bm{\beta}_{1:k} + \bm{\epsilon}) \text{, by Lemma \ref{lma:decomp}} \\
&= \frac{1}{\sigmahat}\bm{V}_{1:k}^T(\bm{v}\bm{b}^T\bm{\beta}_{1:k} + \bm{\epsilon}) \\
&= \frac{1}{\sigmahat}\left((\bm{b}^T\bm{\beta}_{1:k})\bm{V}_{1:k}^T\bm{v} + \bm{V}_{1:k}^T\bm{\epsilon}\right).
\end{align*}
If the true $\bm{\beta}_{1:k}$ and the PC direction $\bm{b}$ that explains all the variance in $\bm{X}_{1:k}$ are well alligned (ie. the inner product $\bm{b}^T\bm{\beta}_{1:k}$ is large in magnitude), we can expect the eigenvector $\bm{V}_{1:k}^T\bm{v}$ to be approximately parallel to $\bm{u}_{1:k}$.

\section{Proofs of results}
\label{sec:proofs}

\subsection{\sectionmath{Sampling from $\bm y \mid \suffstat$ under $\hyp$}{Sampling from y | S under H}}
\label{sec:lma_decomp_pf}
\begin{lemma}
\label{lma:decomp}
For the linear model in \eqref{eq:linear_model}, define $\yhat = \bm{P}_{-1:k}\bm{y}$ and $\sigmahat^2 = \norm{(\bm{I} - \bm{P}_{-1:k})\bm{y}}^2$, and recall the matrix $\bm{V} \in \R^{n \times (n-d+k)}$ described following Equation \eqref{eq:group_lasso}. There exists a unique vector $\bm{u} \in \R^{n-d+k}$, such that $\|\bm u\| = 1$ and the following decomposition holds:
\begin{align*}
\bm{y} = \yhat + \sigmahat \bm{V}\bm{u}.
\end{align*}
Moreover,
\begin{align*}
\bm{u} \mid \suffstat \overset{\hyp}{\sim} \Unif(\sphere^{n-d+k-1}).  
\end{align*}
\end{lemma}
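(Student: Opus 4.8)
The plan is to split the lemma into a deterministic linear-algebra claim (existence and uniqueness of $\bm u$) and a distributional claim (the conditional law of $\bm u$ under $\hyp$), following the template of Proposition E.1 in \cite{Luo2022} but in $n-d+k$ dimensions rather than one.

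For existence and uniqueness, I would first record the structural facts about $\bm V$. Since $\bm X$ is full column rank and $n\ge d$, $\col(\bm X_{-1:k})$ has dimension $d-k$, so $\col(\bm X_{-1:k})^\perp$ has dimension $n-d+k$, exactly the number of orthonormal columns of $\bm V$; hence $\bm V\bm V^T=\bm I-\bm P_{-1:k}$ and $\bm V^T\bm V=\bm I_{n-d+k}$. Writing $\bm y=\bm P_{-1:k}\bm y+(\bm I-\bm P_{-1:k})\bm y=\yhat+\bm V(\bm V^T\bm y)$ and noting $\sigmahat^2=\norm{(\bm I-\bm P_{-1:k})\bm y}^2=\norm{\bm V^T\bm y}^2$, on the probability-one event $\{\sigmahat>0\}$ I set $\bm u=\bm V^T\bm y/\sigmahat$, which is a unit vector satisfying $\yhat+\sigmahat\bm V\bm u=\bm y$. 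Uniqueness follows because any other unit vector $\bm u'$ with $\yhat+\sigmahat\bm V\bm u'=\bm y$ gives $\bm V(\bm u-\bm u')=\bm 0$, and $\bm V$ is injective.

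For the distributional claim, under $\hyp$ we have $\bm\beta_{1:k}=\bm 0$, so $\bm y\sim\mathcal N(\bm X_{-1:k}\bm\beta_{-1:k},\sigma^2\bm I)$. I would show that $\bm u$ is in fact independent of $\suffstat$ with $\bm u\sim\Unif(\sphere^{n-d+k-1})$, which immediately yields the stated conditional law. Three ingredients: (i) $\bm P_{-1:k}\bm y$ and $(\bm I-\bm P_{-1:k})\bm y$ are jointly Gaussian with cross-covariance $\sigma^2\bm P_{-1:k}(\bm I-\bm P_{-1:k})=\bm 0$, hence independent, and since $(\bm u,\sigmahat)$ is a deterministic function of $\bm V^T\bm y$ (and hence of $(\bm I-\bm P_{-1:k})\bm y$), we get $(\bm u,\sigmahat)\perp\yhat$; (ii) because $\bm V^T\bm X_{-1:k}=\bm 0$, $\bm V^T\bm y\sim\mathcal N(\bm 0,\sigma^2\bm I_{n-d+k})$ is isotropic, so its direction $\bm u$ is uniform on $\sphere^{n-d+k-1}$ and independent of its norm $\sigmahat$; (iii) $\suffstat=(\bm X_{-1:k}^T\bm y,\bm y^T\bm y)$ generates the same $\sigma$-algebra as $(\yhat,\sigmahat)$, since $\bm X_{-1:k}^T\bm y=\bm X_{-1:k}^T\yhat$ and $\yhat=\bm X_{-1:k}(\bm X_{-1:k}^T\bm X_{-1:k})^{-1}\bm X_{-1:k}^T\bm y$ recover one another, while $\bm y^T\bm y=\norm{\yhat}^2+\sigmahat^2$ recovers $\sigmahat\ge 0$ from $\yhat$ and $\bm y^T\bm y$. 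Combining (i) and (ii) gives $\bm u\perp(\yhat,\sigmahat)$, and then (iii) gives $\bm u\perp\suffstat$, so $\bm u\mid\suffstat\overset{\hyp}{\sim}\Unif(\sphere^{n-d+k-1})$.

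The main thing to be careful about is bookkeeping rather than any deep difficulty: keeping straight which statements are purely deterministic (existence and uniqueness of $\bm u$, the identities for $\bm V\bm V^T$ and $\sigmahat$) versus which actually invoke $\hyp$ and Gaussianity (the independence and uniformity in ingredients (i)--(iii)), and verifying in (iii) that conditioning on the minimal sufficient statistic is genuinely equivalent to conditioning on $(\yhat,\sigmahat)$. A minor technical point is the null event $\{\sigmahat=0\}$, on which $\bm u$ is not defined; since $n-d+k\ge 1$ the residual subspace is nontrivial and $\bm y$ has a density, so this event has probability zero and can be ignored.
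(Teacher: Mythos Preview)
Your proof is correct and follows essentially the same route as the paper. The existence/uniqueness argument is identical (set $\bm u=\bm V^T(\bm y-\yhat)/\sigmahat$ and use injectivity of $\bm V$), and for the distributional claim the paper simply cites Proposition~E.1 of \cite{Luo2022}; your ingredients (i)--(iii) spell out precisely the independence-and-isotropy argument that citation encapsulates.
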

\begin{proof}
Since $\bm{V}$ is full-column rank, there exists a unique vector $\bm{u}$ such that 
\begin{align*}
\bm{V}\bm{u} &= \frac{\bm{y} - \yhat}{\sigmahat}.
\end{align*}
In particular, $\bm{u} = \bm{V}^T\left(\frac{\bm{y} - \yhat}{\sigmahat}\right)$ with $\norm{\bm{u}} = \norm{\bm{V}\bm{u}} = \norm{\frac{\bm{y} - \yhat}{\sigmahat}} = \frac{1}{\sigmahat}\norm{\bm{y} - \yhat} = 1$ since $\bm{V}$ is an orthogonal transformation and hence preserves norms. This proves the decomposition. The conditional distribution of $\bm{u}$ follows from the fact that $$\bm{y} \mid \suffstat \overset{\hyp}{\sim} \yhat + \sigmahat \bm{V}\bm{u} \text{, where $\bm{u} \sim \Unif(\sphere^{n-d+k-1})$.}$$ 
The proof of this result is identical to that of Proposition E.1 in \cite{Luo2022}, except we use $\bm{P}_{-1:k}$, $(\bm{I} - \bm{P}_{-1:k})$, and $\bm{V}$ accordingly.
\end{proof}

\subsection{Conditional $F$-test statistic}
\begin{lemma}
\label{lma:cond_F}
Let $\olsk$ denote the first $k$ OLS estimate coefficients. The $F$-test for $\hyp$ in model \eqref{eq:linear_model} is equivalent to the test that rejects for large values of 
\begin{align*}
\norm{\bm{V}_{1:k}^T\bm{X}_{1:k}\olsk}
\end{align*}
conditional on $\suffstat$. This latter test is also equivalent to the conditional test rejecting for large values of $\norm{\bm{u}_{1:k}}$.
\end{lemma}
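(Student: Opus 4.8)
The plan is to rewrite the (unconditional) $F$-statistic so that, after conditioning on $\suffstat=(\bm X_{-1:k}^T\bm y,\bm y^T\bm y)$, it becomes a fixed strictly increasing function of $\norm{\bm V_{1:k}^T\bm X_{1:k}\olsk}$, and then to translate the $F$-test's fixed cutoff into a $\suffstat$-measurable cutoff on this statistic. The one genuinely structural ingredient is that the first $k$ columns of $\bm V$ form an orthonormal basis for $\col(\bm X)\ominus\col(\bm X_{-1:k})$. I would verify this directly from the definition: for $i\le k$, the vector $\bm V_i\propto(\bm I-\bm P_{-1:i})\bm X_i$ lies in $\col(\bm X_i,\ldots,\bm X_d)\subseteq\col(\bm X)$ and is orthogonal to $\col(\bm X_{i+1},\ldots,\bm X_d)$, which contains $\col(\bm X_{-1:k})$; moreover for $j>i$ one has $\bm V_j\in\col(\bm X_{i+1},\ldots,\bm X_d)$, so $\bm V_i^T\bm V_j=0$. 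Hence $\{\bm V_1,\ldots,\bm V_k\}$ is an orthonormal set of $k$ vectors inside the $k$-dimensional space $\col(\bm X)\cap\col(\bm X_{-1:k})^\perp$, hence a basis for it; writing $\bm P$ for the projection onto $\col(\bm X)$, this gives the two identities $\bm V_{1:k}\bm V_{1:k}^T=\bm P-\bm P_{-1:k}$ and $\bm V_{1:k}^T\bm X_{-1:k}=\bm 0$. Projecting $\bm P\bm y=\bm X_{1:k}\olsk+\bm X_{-1:k}\hat{\bm{\beta}}_{-1:k,\mathrm{OLS}}$ by $\bm V_{1:k}^T$ and using $\bm V_{1:k}^T(\bm I-\bm P)=\bm 0$ then yields $\bm V_{1:k}^T\bm y=\bm V_{1:k}^T\bm X_{1:k}\olsk$, so $\norm{(\bm P-\bm P_{-1:k})\bm y}^2=\norm{\bm V_{1:k}^T\bm y}^2=\norm{\bm V_{1:k}^T\bm X_{1:k}\olsk}^2=:t^2$.

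Next I would substitute into the $F$-statistic. With $\mathrm{RSS}_0=\norm{(\bm I-\bm P_{-1:k})\bm y}^2$ and $\mathrm{RSS}_1=\norm{(\bm I-\bm P)\bm y}^2$, the Pythagorean identity (valid since $\col(\bm X_{-1:k})\subseteq\col(\bm X)$) gives $\mathrm{RSS}_0-\mathrm{RSS}_1=t^2$, so $F=\tfrac{n-d}{k}\cdot\tfrac{t^2}{\mathrm{RSS}_0-t^2}$. The key observation is that $\mathrm{RSS}_0=\bm y^T\bm y-(\bm X_{-1:k}^T\bm y)^T(\bm X_{-1:k}^T\bm X_{-1:k})^{-1}(\bm X_{-1:k}^T\bm y)$ is a deterministic function of $\suffstat$, so conditionally on $\suffstat$ the map $t\mapsto F$ is a fixed, continuous, strictly increasing bijection from $[0,\sqrt{\mathrm{RSS}_0})$ onto $[0,\infty)$ (and $t<\sqrt{\mathrm{RSS}_0}$ with probability one since $\mathrm{RSS}_1>0$ with probability one when $n>d$). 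Therefore $\{F>F_{k,n-d,1-\alpha}\}$ coincides, within each $\suffstat$-stratum, with the upper level set $\{t>g(\suffstat)\}$ for the explicit $\suffstat$-measurable cutoff $g(\suffstat)=\sqrt{c_\alpha\,\mathrm{RSS}_0}$ with $c_\alpha:=\tfrac{(k/(n-d))F_{k,n-d,1-\alpha}}{1+(k/(n-d))F_{k,n-d,1-\alpha}}\in(0,1)$; that is, the $F$-test has exactly the same rejection region as the test that rejects for large $\norm{\bm V_{1:k}^T\bm X_{1:k}\olsk}$ conditional on $\suffstat$.

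For the final equivalence I would invoke Lemma~\ref{lma:decomp}, $\bm y=\yhat+\sigmahat\bm V\bm u$. Since $\bm V_{1:k}^T\yhat=\bm V_{1:k}^T\bm P_{-1:k}\bm y=\bm 0$ and $\bm V_{1:k}^T\bm V=(\bm I_k\ \ \bm 0)$, applying $\bm V_{1:k}^T$ gives $\bm V_{1:k}^T\bm y=\sigmahat\bm u_{1:k}$, hence $t=\sigmahat\norm{\bm u_{1:k}}$; since $\sigmahat^2=\mathrm{RSS}_0$ is $\suffstat$-measurable and positive, within each $\suffstat$-stratum $t$ is a fixed positive multiple of $\norm{\bm u_{1:k}}$, so the two conditional tests share the same rejection region. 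The only points that need care are the orthonormal-basis claim for $\bm V_{1:k}$, which is what makes the rewrite possible, and being precise about the meaning of ``equivalent'': because the function relating $F$ and $t$ depends on the realized $\suffstat$ through $\mathrm{RSS}_0$, the equivalence holds at the level of rejection regions, with the conditional test's cutoffs allowed to depend on $\suffstat$.
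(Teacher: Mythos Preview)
Your argument is correct and in several places cleaner than the paper's. The paper proceeds through four sub-lemmas: a block-inverse formula for $\olsk$ via Schur complements, an algebraic rewrite of $F$ in terms of $\norm{\bm V_{1:k}^T\bm X_{1:k}\olsk}$, the relation $\olsk=\sigmahat(\bm V_{1:k}^T\bm X_{1:k})^{-1}\bm u_{1:k}$, and finally a separate lemma that $F\perp\suffstat$ under $\hyp$, obtained by checking that the conditional and marginal laws of $F$ coincide. You replace the first two steps by the single projection identity $\bm V_{1:k}\bm V_{1:k}^T=\bm P-\bm P_{-1:k}$, which immediately gives $\norm{(\bm P-\bm P_{-1:k})\bm y}=\norm{\bm V_{1:k}^T\bm y}=\norm{\bm V_{1:k}^T\bm X_{1:k}\olsk}$ without any Schur-complement algebra. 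And instead of invoking independence, you note that $\mathrm{RSS}_0=\sigmahat^2$ is $\suffstat$-measurable, so the fixed-$\alpha$ rejection region $\{F>F_{k,n-d,1-\alpha}\}$ is already of the form $\{t>g(\suffstat)\}$.

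The only substantive difference is what ``equivalent'' delivers. The paper's independence lemma gives the stronger statement that the \emph{level-$\alpha$} conditional test on $t$ (or on $\norm{\bm u_{1:k}}$) has exactly the $F$-test's rejection region, i.e., that your threshold $g(\suffstat)$ is in fact the conditional $(1-\alpha)$-quantile; this is what the paper later relies on when it writes the $F$-test's p-value as $P_{\hyp}(\norm{\tilde{\bm u}_{1:k}}\ge\norm{\bm u_{1:k}}\mid\suffstat)$ in Section~\ref{sec:L_test_power}. Your rejection-region equivalence is a priori weaker, as you correctly flag. But you have all the ingredients to close this gap in one line: combining your identities gives $F=\tfrac{n-d}{k}\cdot\tfrac{\norm{\bm u_{1:k}}^2}{1-\norm{\bm u_{1:k}}^2}$, which depends on the data only through $\norm{\bm u_{1:k}}$, and since $\bm u\mid\suffstat\sim\Unif(\sphere^{n-d+k-1})$ under $\hyp$ by Lemma~\ref{lma:decomp}, the conditional law of $F$ is free of $\suffstat$, recovering independence (and hence the level-$\alpha$ equivalence) without the paper's separate Lemma~\ref{lma:indep}.
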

\begin{proof}
The proof of Lemma \ref{lma:cond_F} relies on a series of sub-lemmas. The first gives a representation for a subvector of the OLS that extends the convenient representation for one OLS coefficient $\olsj = \frac{\bm{X}_j^T(\bm{I} - \bm{P}_{-j})\bm{y}}{\norm{(\bm{I} - \bm{P}_{-j})\bm{X}_j}^2}$ used in \cite{Fithian2014}.
\begin{lemma}
\label{lma:rewrite_OLS}
The first $k$ OLS coefficients can be written as 
\begin{align*}
\olsk = \bm{S}^{-1}\bm{X}_{1:k}^T(\bm{I} - \bm{P}_{-1:k})\bm{y},
\end{align*}
where $\bm{S} = \bm{X}_{1:k}^T(\bm{I} - \bm{P}_{-1:k})\bm{X}_{1:k}$. 
\end{lemma}
\begin{proof}[Proof of Lemma \ref{lma:rewrite_OLS}]
We start by decomposing the OLS estimate into block matrices as follows
\begin{align*}
\ols &= (\bm{X}^T\bm{X})^{-1}\bm{X}^T\bm{y} \\
&= 
\left(
\begin{pmatrix}
    \bm{X}_{1:k}^T \\
    \bm{X}_{-1:k}^T
\end{pmatrix}
\begin{pmatrix}
    \bm{X}_{1:k} & \bm{X}_{-1:k}
\end{pmatrix}
\right)^{-1}
\begin{pmatrix}
    \bm{X}_{1:k}^T \\
    \bm{X}_{-1:k}^T
\end{pmatrix}
\bm{y} \\
&= \begin{pmatrix}
    \bm{X}_{1:k}^T\bm{X}_{1:k} & \bm{X}_{1:k}^T\bm{X}_{-1:k} \\
    \bm{X}_{-1:k}^T \bm{X}_{1:k} & \bm{X}_{-1:k}^T \bm{X}_{-1:k}
\end{pmatrix}^{-1}
\begin{pmatrix}
    \bm{X}_{1:k}^T \\
    \bm{X}_{-1:k}^T
\end{pmatrix}
\bm{y} \\
&= \begin{pmatrix}
    \bm{A} & \bm{B} \\
    \bm{C} & \bm{D}
\end{pmatrix}^{-1}
\begin{pmatrix}
    \bm{X}_{1:k}^T \\
    \bm{X}_{-1:k}^T
\end{pmatrix}
\bm{y} \textrm{ [where $\bm A, \bm B, \bm C$ and $\bm D$ represent the block-matrices they are replacing]}\\
&= \begin{pmatrix}
    \bm{E} & \bm{F} \\
    \bm{G} & \bm{H}
\end{pmatrix}
\begin{pmatrix}
    \bm{X}_{1:k}^T \\
    \bm{X}_{-1:k}^T
\end{pmatrix}
\bm{y} \textrm{ (say)}\\
&= \begin{pmatrix}
    \bm{E}\bm{X}_{1:k}^T + \bm{F}\bm{X}_{-1:k}^T \\
    \bm{G}\bm{X}_{1:k}^T + \bm{H}\bm{X}_{-1:k}^T
\end{pmatrix}
\bm{y}.
\end{align*}
In particular, letting $\bm{S} = \bm{A} - \bm{B}\bm{D}^{-1}\bm{C} = \bm{X}_{1:k}^T(\bm{I} - \bm{P}_{-1:k})\bm{X}_{1:k}$ denote the Schur complement of $\bm{D}$, then $\bm{E} = \bm{S}^{-1}$ and $\bm{F} = -\bm{S}^{-1}\bm{B}\bm{D}^{-1}$. Note that the block inversion is well-defined because $\bm{D}$ and $\bm S$ are both invertible. The invertibility of $\bm D$ follows from the fact that $\bm X_{-1:k}$ is also full column-rank, while that of $\bm S$ follows from the fact that so is $(\bm I - \bm P_{-1:k})\bm X_{1:k}$. To see the latter fact, note that for any vector $\bm v$,   $(\bm I - \bm P_{-1:k})\bm X_{1:k}\bm v = \bm 0$ implies that $\bm X_{1:k}\bm v$ lies in the column-space of $\bm X_{-1:k}$. Because the columns of $\bm X_{1:k}$ are linearly independent from that of $\bm X_{-1:k}$, this can only happen if $\bm v = 0$; thereby showing that $(\bm I - \bm P_{-1:k})\bm X_{1:k}$ is full column-rank.

Next, from the above decomposition of $\ols$, it follows that
\begin{align*}
\olsk &= \left(\bm{E}\bm{X}_{1:k}^T + \bm{F}\bm{X}_{-1:k}^T\right)\bm{y} \\
&= \left(\bm{S}^{-1}\bm{X}_{1:k}^T -\bm{S}^{-1}\bm{X}_{1:k}^T\bm{X}_{-1:k}(\bm{X}_{-1:k}^T \bm{X}_{-1:k})^{-1}\bm{X}_{-1:k}^T\right)\bm{y} \\
&= \bm{S}^{-1}\bm{X}_{1:k}^T\left(\bm{I} - \bm{P}_{-1:k}\right)\bm{y}.
\end{align*} 
\end{proof}

\begin{lemma}
\label{lma:rewriteF}
The $F$-test statistic for $\hyp$ in model \eqref{eq:linear_model} can be written as
\begin{align}
\label{eq:rewriteF}
F := \frac{\norm{\bm{V}_{1:k}^T\bm{X}_{1:k}\olsk}^2/k}{\left(\sigmahat^2 - \norm{\bm{V}_{1:k}^T\bm{X}_{1:k}\olsk}^2\right)/(n-d)}.
\end{align}
\end{lemma}
\begin{proof}[Proof of Lemma \ref{lma:rewriteF}]
We start by manipulating the classic $F$-test statistic as follows
\begin{align*}
    F &= \frac{(\norm{\bm{y} - \bm{P}_{-1:k}\bm{y}}^2 - \norm{\bm{y} - \bm{P}\bm{y}}^2)/k}{\norm{\bm{y} - \bm{P}\bm{y}}^2/(n-d)} \\
    &= \frac{\norm{\bm{P}\bm{y}-\bm{P}_{-1:k}\bm{y}}^2/k}{\norm{(\bm{I} - \bm{P})\bm{y}}^2/(n-d)} \\
    &= \frac{\norm{\bm{P}(\bm{I}-\bm{P}_{-1:k})\bm{y}}^2/k}{\norm{(\bm{I} - \bm{P})\bm{y}}^2/(n-d)} \\
    &= \frac{\norm{\bm{P}(\bm{I}-\bm{P}_{-1:k})\bm{y}}^2/k}{\norm{(\bm{I} - \bm{P})(\bm{I} - \bm{P}_{-1:k})\bm{y}}^2/(n-d)} \\
    &= \frac{\norm{\bm{P}(\bm{I}-\bm{P}_{-1:k})\bm{y}}^2/k}{\left(\sigmahat^2 - \norm{\bm{P}(\bm{I} - \bm{P}_{-1:k})\bm{y}}^2\right)/(n-d)}.
\end{align*}
By expanding the projection matrix $\bm{P}$, decomposing $\bm{X}^T$ into block matrices, and using Lemma \ref{lma:rewrite_OLS}, we can rewrite $\bm{P}(\bm{I}-\bm{P}_{-1:k})\bm{y}$ in terms of $\olsk$ as follows
\begin{align*}
\bm{P}(\bm{I}-\bm{P}_{-1:k})\bm{y} &= \bm{X}(\bm{X}^T\bm{X})^{-1}
\begin{pmatrix}
\bm{X}_{1:k}^T \\
\bm{X}_{-1:k}^T
\end{pmatrix}
(\bm{I}-\bm{P}_{-1:k})\bm{y} \\
&= \bm{X}(\bm{X}^T\bm{X})^{-1}
\begin{pmatrix}
\bm{S}\olsk \\
\bm{0}
\end{pmatrix} \\
&= (\bm{X}(\bm{X}^T\bm{X})^{-1})_{1:k}\bm{S}\olsk \\
&= (\bm{X}(\bm{X}^T\bm{X})^{-1})_{1:k}\bm{X}_{1:k}^T(\bm{I}-\bm{P}_{-1:k})\bm{X}_{1:k}\olsk \\
&= (\bm{X}(\bm{X}^T\bm{X})^{-1})_{1:k}\bm{X}_{1:k}^T\bm{V}\bm{V}^T\bm{X}_{1:k}\olsk \\
&= \bm{P}\bm{V}\bm{V}^T\bm{X}_{1:k}\olsk,
\end{align*}
where we use the fact that $\bm{V}\bm{V}^T = (\bm{I} - \bm{P}_{-1:k})$ since the columns of $\bm V$ form an orthonormal basis for the orthogonal complement of the column space of $\bm{X}_{-1:k}$. The definition of $\bm{V}$ implies that the columns of $\bm{V}_{-1:k}$ are orthogonal to the column space of $\bm{X}$, so $\bm{V}\bm{V}^T\bm{X}_{1:k} = \bm{V}_{1:k}\bm{V}_{1:k}^T\bm{X}_{1:k}$. We also have that $\bm{P}\bm{V}_{1:k} = \bm{V}_{1:k}$ since the columns of $\bm{V}_{1:k}$ lie in the column space of $\bm{X}$. Thus, $$\bm{P}\bm{V}\bm{V}^T\bm{X}_{1:k}\olsk = \bm{V}_{1:k}\bm{V}_{1:k}^T\bm{X}_{1:k}\olsk.$$ Substituting this expression into the $F$-test statistic and noting that $\bm{V}_{1:k}$ is orthogonal and hence norm-preserving, we find that
\begin{align*}
    F = \frac{\norm{\bm{V}_{1:k}^T\bm{X}_{1:k}\olsk}^2/k}{\left(\sigmahat^2 - \norm{\bm{V}_{1:k}^T\bm{X}_{1:k}\olsk}^2\right)/(n-d)}.
\end{align*}
\end{proof}

\begin{lemma}
\label{lma:OLS_unit}
The first $k$ OLS coefficients can be written in terms of the first $k$ coefficients of $\bm{u}$ from \eqref{eq:decomp} as follows
\begin{align*}
\olsk &= \sigmahat (\bm{V}_{1:k}^T\bm{X}_{1:k})^{-1}\bm{u}_{1:k}.
\end{align*}
\end{lemma}
\begin{proof}[Proof of Lemma \ref{lma:OLS_unit}]
By Lemma \ref{lma:rewrite_OLS}, we have that
\begin{align*}
\olsk &= \bm{S}^{-1}\bm{X}_{1:k}^T(\bm{I} - \bm{P}_{-1:k})\bm{y} \\
&= \bm{S}^{-1}\bm{X}_{1:k}^T(\bm{I} - \bm{P}_{-1:k})(\yhat + \sigmahat\bm{V}\bm{u}) \text{, by Lemma \ref{lma:decomp}} \\
&= \sigmahat\bm{S}^{-1}\bm{X}_{1:k}^T(\bm{I} - \bm{P}_{-1:k})\bm{V}\bm{u} \\
&= \sigmahat\bm{S}^{-1}\bm{X}_{1:k}^T\bm{V}\bm{u},
\end{align*}
because the columns of $\bm{V}$ lie in the space orthogonal to the columns of $\bm{X}_{-1:k}$. Since the columns of $\bm{V}_{-1:k}$ are orthogonal to the column space of $\bm{X}$, $\bm{X}_{1:k}^T\bm{V}\bm{u} = \bm{X}_{1:k}^T\bm{V}_{1:k}\bm{u}_{1:k}$. Moreover, $\bm{S}^{-1} = (\bm{X}_{1:k}^T(\bm{I}-\bm{P}_{-1:k})\bm{X}_{1:k})^{-1} = (\bm{X}_{1:k}^T\bm{V}\bm{V}^T\bm{X}_{1:k})^{-1} = (\bm{X}_{1:k}^T\bm{V}_{1:k}\bm{V}_{1:k}^T\bm{X}_{1:k})^{-1}$. Thus,
\begin{align*}
\olsk &= \sigmahat (\bm{V}_{1:k}^T\bm{X}_{1:k})^{-1}(\bm{X}_{1:k}^T\bm{V}_{1:k})^{-1}\bm{X}_{1:k}^T\bm{V}_{1:k}\bm{u}_{1:k} = \sigmahat(\bm{V}_{1:k}^T\bm{X}_{1:k})^{-1}\bm{u}_{1:k}.
\end{align*}
\end{proof}

\begin{lemma}
\label{lma:indep}
    The $F$-test statistic is independent of $\suffstat$ under $\hyp$.
\end{lemma}
\begin{proof}[Proof of Lemma \ref{lma:indep}]
By Lemma \ref{lma:rewriteF}, the $F$-test statistic can be written as
\begin{align*}
F &= \frac{\norm{\bm{V}_{1:k}^T\bm{X}_{1:k}\olsk}^2/k}{\left(\sigmahat^2 - \norm{\bm{V}_{1:k}^T\bm{X}_{1:k}\olsk}^2\right)/(n-d)} \\
&= \frac{\frac{\norm{\bm{V}_{1:k}^T\bm{X}_{1:k}\olsk}^2}{\sigmahat^2}/k}{\left(1 - \frac{\norm{\bm{V}_{1:k}^T\bm{X}_{1:k}\olsk}^2}{\sigmahat^2}\right)/(n-d)}. 
\end{align*}
It suffices to show that $\frac{\norm{\bm{V}_{1:k}^T\bm{X}_{1:k}\olsk}^2}{\sigmahat^2}$ is independent of $\suffstat$ under $\hyp$, which we will demonstrate by showing that this expression's conditional distribution is the same as its marginal distribution under $\hyp$. By Lemma \ref{lma:OLS_unit} and \ref{lma:decomp}, under $\hyp$,
\begin{align*}
\frac{\norm{\bm{V}_{1:k}^T\bm{X}_{1:k}\olsk}^2}{\sigmahat^2} \biggr\vert \suffstat \sim \norm{\bm{u}_{1:k}}^2 \mid \suffstat \sim \norm{\tilde{\bm{u}}_{1:k}}^2,
\end{align*}
where $\tilde{\bm{u}}\sim \Unif(\sphere^{n-d+k-1})$.

Next, under $\hyp$, we can write
\begin{align*}
    \bm{y} = \bm{X}_{-1:k} \bm{\beta}_{-1:k} + \bm{ \epsilon} \text{, for some $\bm{ \epsilon} \sim \mathcal{N}(\bm{0}, \sigma^2 \bm{I})$}.
\end{align*}
Recall $(\bm{I} - \bm{P}_{-1:k}) = \bm{V}\bm{V}^T$ and $\bm{S}^{-1} =(\bm{X}_{1:k}^T\bm{V}_{1:k}\bm{V}_{1:k}^T\bm{X}_{1:k})^{-1}$. Since the columns of $\bm{V}_{-1:k}$ are orthogonal to the column space of $\bm{X}$, we also have $\bm{X}_{1:k}^T\bm{V}\bm{V}^T = \bm{X}_{1:k}^T\bm{V}_{1:k}\bm{V}_{1:k}^T$. Thus,
\begin{align*}
\frac{\norm{\bm{V}_{1:k}^T\bm{X}_{1:k}\olsk}^2}{\sigmahat^2} &= \frac{\norm{\bm{V}_{1:k}^T\bm{X}_{1:k}\bm{S}^{-1}\bm{X}_{1:k}^T(\bm{I}-\bm{P}_{-1:k})\bm{y}}^2}{\sigmahat^2} \text{, by Lemma \ref{lma:rewrite_OLS}} \\
&= \frac{\norm{\bm{V}_{1:k}^T\bm{X}_{1:k}\bm{S}^{-1}\bm{X}_{1:k}^T\bm{V}\bm{V}^T\bm{y}}^2}{\norm{\bm{V}\bm{V}^T\bm{y}}^2} \\
&= \frac{\norm{\bm{V}_{1:k}^T\bm{X}_{1:k}\bm{S}^{-1}\bm{X}_{1:k}^T\bm{V}_{1:k}\bm{V}_{1:k}^T\bm{y}}^2}{\norm{\bm{V}^T\bm{y}}^2} \\
&= \frac{\norm{\bm{V}_{1:k}^T\bm{y}}^2}{\norm{\bm{V}^T\bm{y}}^2} \\
&= \norm{\bm{u}^*_{1:k}}^2,
\end{align*}
where $\bm{u}^* = \frac{\bm{V}^T\bm{y}}{\norm{\bm{V}^T\bm{y}}} \sim \Unif(\sphere^{n-d+k-1})$ since $\bm{V}^T\bm{y} = \bm{V}^T\bm{\epsilon} \sim \mathcal{N}(\bm{0}, \sigma^2 \bm{I}_{n- d + k})$. Thus, $$ F \perp \suffstat \text{ under $\hyp$}.$$
\end{proof}

By Lemmas \ref{lma:rewriteF} and \ref{lma:indep}, the $F$-test is equivalent to the test that rejects for large values of \eqref{eq:rewriteF} conditional on $\suffstat$. Conditional on $\suffstat$, however, \eqref{eq:rewriteF} is a monotonically increasing function of \eqref{eq:F-test_stat_cond}. Thus, the $F$-test is equivalent to the test that rejects for large values of \eqref{eq:F-test_stat_cond} conditional on $\suffstat$, giving the result. The last part of the lemma can easily be seen by substituting the expression for $\olsk$ from Lemma \ref{lma:OLS_unit} into \eqref{eq:F-test_stat_cond}.
\end{proof}

\subsection{\sectionmath{$\glasso_{1:k}$ as a function of $\bm{u}_{1:k}$}{Group LASSO as a function of u}}
\label{sec:glasso_unit}
In an attempt to demonstrate why it is difficult to analytically study $\glasso_{1:k}$ as a function of $\bm{u}_{1:k}$, we state and prove a result below that specifies the matrix derivative of $\glasso_{1:k}$ with respect to $\bm{u}_{1:k}$. Different elements of the matrix derivative shown in Proposition \ref{prop:glasso_matrix_deriv} exhibit nice properties. For example, $\bm{X}_{1:k}^T(\bm{I} - \bm{P}_{\mathcal{A}(\bm{u}_{1:k})})\bm{X}_{1:k}$ is positive definite, the inverted matrix is symmetric, and $\bm{X}_{1:k}^T\bm{V}_{1:k}$ is a diagonal matrix with positive diagonal entries. Nevertheless, a study of the composite matrix reveals that it does not exhibit any clear useful properties (ie. symmetric, positive definite, sparsity, low-rank structure) that would lend insight into $\glasso_{1:k}$'s behavior as a function of $\bm{u}_{1:k}$.
\begin{proposition}
\label{prop:glasso_matrix_deriv}
Suppose that $\bm{u}_{1:k} \in \mathbb{R}^k$ is such that $\glasso_{1:k}(\bm{u}_{1:k}) \neq \bm{0}$. Then, 
\begin{align*}
    \nabla \glasso_{1:k}(\bm{u}_{1:k}) &= \sigmahat \left(\bm{X}_{1:k}^T(\bm{I} - \bm{P}_{\mathcal{A}(\bm{u}_{1:k})})\bm{X}_{1:k} + \frac{n\lambda}{\norm{\glasso_{1:k}}} \left(\bm{I} - \frac{\glasso_{1:k}(\glasso_{1:k})^T}{\norm{\glasso_{1:k}}^2}\right)\right)^{-1}\bm{X}_{1:k}^T\bm{V}_{1:k} \text{, where} \\
    \mathcal{A}(\bm{u}_{1:k}) &= \left\{ i \in \{k + 1, \ldots, d \} : \glassos_i(\bm{u}_{1:k}) \neq 0\right\}
\end{align*}
\end{proposition}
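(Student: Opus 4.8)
The plan is to view $\glasso$ locally as a $C^1$ solution of the group-LASSO stationarity (KKT) conditions and to differentiate those conditions implicitly in $\bm u_{1:k}$. The first step is to make precise the sense in which ``$\glasso_{1:k}$ is a function of $\bm u_{1:k}$'' and to identify the relevant data derivative. Writing $\bm y = \yhat + \sigmahat\bm V\bm u = \bm w + \sigmahat\bm V_{-1:k}\bm u_{-1:k}$ with $\bm w := \yhat + \sigmahat\bm V_{1:k}\bm u_{1:k}$, and using that $\col(\bm V_{-1:k}) = \col(\bm X)^\perp$ while $\bm w\in\col(\bm X)$ (since $\yhat\in\col(\bm X_{-1:k})$ and the columns of $\bm V_{1:k}$ lie in $\col(\bm X)$), one obtains for every $\bm\beta$ the orthogonal split $\norm{\bm y - \bm X\bm\beta}^2 = \norm{\bm w - \bm X\bm\beta}^2 + \sigmahat^2\norm{\bm u_{-1:k}}^2$. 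Since the last term does not involve $\bm\beta$, the group-LASSO minimizer depends on $\bm u$ only through $\bm w$, hence only through $\bm u_{1:k}$ given $\suffstat$ (this is essentially Theorem~\ref{thm:glasso_unit}), and $\partial\bm w/\partial\bm u_{1:k} = \sigmahat\bm V_{1:k}$.

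Second, fix $\bm u_{1:k}$ with $\glasso_{1:k} := \glasso_{1:k}(\bm u_{1:k})\neq\bm 0$ at a point where $\glasso$ is differentiable --- generically, where the active set $\mathcal A := \mathcal A(\bm u_{1:k})\subseteq\{k+1,\dots,d\}$ and the signs of $\glasso_{\mathcal A}$ are locally constant. On such a neighborhood, writing $\mathcal A^+ := \{1,\dots,k\}\cup\mathcal A$ and letting $\bm X_{\mathcal A^+}$ be the corresponding columns of $\bm X$, the solution is characterized by the smooth reduced stationarity system $\frac{1}{n}\bm X_{\mathcal A^+}^T(\bm w - \bm X_{\mathcal A^+}\glasso_{\mathcal A^+}) = \lambda\,\bm g(\glasso_{\mathcal A^+})$, where $\bm g$ stacks the group subgradient $\glasso_{1:k}/\norm{\glasso_{1:k}}$ on the first $k$ coordinates and the (locally constant) signs of $\glasso_{\mathcal A}$ on $\mathcal A$; here $\bm X_{\mathcal A^+}^T\bm y$ may be replaced by $\bm X_{\mathcal A^+}^T\bm w$ since their difference lies in $\col(\bm X)^\perp$. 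Differentiating in $\bm u_{1:k}$, using $\partial\bm w/\partial\bm u_{1:k} = \sigmahat\bm V_{1:k}$, that the $\mathcal A$-sign block of $\bm g$ is constant, and that the Jacobian of $\bm x\mapsto \bm x/\norm{\bm x}$ at $\glasso_{1:k}$ is $\frac{1}{\norm{\glasso_{1:k}}}\big(\bm I - \frac{\glasso_{1:k}(\glasso_{1:k})^T}{\norm{\glasso_{1:k}}^2}\big)$, yields the linear system $\big(\bm X_{\mathcal A^+}^T\bm X_{\mathcal A^+} + n\lambda\,\bm\Lambda\big)\,\frac{\partial\glasso_{\mathcal A^+}}{\partial\bm u_{1:k}} = \sigmahat\,\bm X_{\mathcal A^+}^T\bm V_{1:k}$, with $\bm\Lambda$ block-diagonal, equal to $\frac{1}{\norm{\glasso_{1:k}}}\big(\bm I - \frac{\glasso_{1:k}(\glasso_{1:k})^T}{\norm{\glasso_{1:k}}^2}\big)$ on the group block and $\bm 0$ on $\mathcal A$.

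Third, extract the first $k$ rows. Because $\col(\bm X_{\mathcal A})\subseteq\col(\bm X_{-1:k})$ and $\bm V_{1:k}\perp\col(\bm X_{-1:k})$, we have $\bm X_{\mathcal A}^T\bm V_{1:k} = \bm 0$, so the right-hand side of the system has zero $\mathcal A$-block and top block $\sigmahat\bm X_{1:k}^T\bm V_{1:k}$; the same containment gives $\bm X_{1:k}^T\bm X_{\mathcal A}(\bm X_{\mathcal A}^T\bm X_{\mathcal A})^{-1}\bm X_{\mathcal A}^T\bm X_{1:k} = \bm X_{1:k}^T\bm P_{\mathcal A}\bm X_{1:k}$, with $\bm P_{\mathcal A}$ the projection onto $\col(\bm X_{\mathcal A})$. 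Eliminating the $\mathcal A$-block via block (Schur-complement) inversion of the coefficient matrix, its top-left inverse block equals $\big[\bm X_{1:k}^T(\bm I - \bm P_{\mathcal A})\bm X_{1:k} + \frac{n\lambda}{\norm{\glasso_{1:k}}}\big(\bm I - \frac{\glasso_{1:k}(\glasso_{1:k})^T}{\norm{\glasso_{1:k}}^2}\big)\big]^{-1}$, and applying it to $\sigmahat\bm X_{1:k}^T\bm V_{1:k}$ gives exactly the claimed $\nabla\glasso_{1:k}(\bm u_{1:k})$. Along the way one checks invertibility: $\bm X_{\mathcal A}^T\bm X_{\mathcal A}$ is invertible since active columns are linearly independent, and the bracketed matrix is positive definite, being a sum of two positive semidefinite matrices with no common nonzero null vector --- $(\bm I - \bm P_{\mathcal A})\bm X_{1:k}\bm v = \bm 0$ would force $\bm X_{1:k}\bm v\in\col(\bm X_{\mathcal A})\subseteq\col(\bm X_{-1:k})$, impossible for $\bm v\neq\bm 0$ since $\bm X$ is full-column-rank.

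I expect the main obstacle to be handling the nonsmoothness of the group LASSO: justifying that at the chosen point the active set and signs are locally constant so that the implicit differentiation is valid, and correctly computing the derivative of the group-norm subgradient $\glasso_{1:k}/\norm{\glasso_{1:k}}$ --- this is precisely where the \emph{rank-deficient} curvature term $\bm I - \glasso_{1:k}(\glasso_{1:k})^T/\norm{\glasso_{1:k}}^2$ appears, in contrast to the full-rank $\frac{n\lambda}{r}\bm I$ of Lemma~\ref{lma:deriv} (whose sphere restriction kills the radial direction). The remaining block-matrix elimination is routine once the orthogonality relations $\bm X_{\mathcal A}^T\bm V_{1:k} = \bm 0$ and $\bm X_{1:k}^T\bm P_{\mathcal A}\bm X_{1:k} = \bm X_{1:k}^T\bm X_{\mathcal A}(\bm X_{\mathcal A}^T\bm X_{\mathcal A})^{-1}\bm X_{\mathcal A}^T\bm X_{1:k}$ are noted.
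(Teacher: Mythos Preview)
Your proposal is correct and follows essentially the same strategy as the paper's proof: implicitly differentiate the KKT conditions on the active set, use local constancy of $\mathcal A$ and $\sign(\glasso_{\mathcal A})$, compute the Jacobian of $\bm x\mapsto \bm x/\norm{\bm x}$, and eliminate the $\mathcal A$-block to isolate $\nabla\glasso_{1:k}$. The only differences are organizational: the paper routes the differentiation through the intermediate variable $\bm a=\olsk$ and applies the chain rule $\nabla\glasso_{1:k}(\bm u_{1:k}) = \nabla\glasso_{1:k}(\bm a)\cdot\sigmahat(\bm V_{1:k}^T\bm X_{1:k})^{-1}$, whereas you differentiate directly in $\bm u_{1:k}$ via $\partial\bm w/\partial\bm u_{1:k}=\sigmahat\bm V_{1:k}$ and exploit $\bm X_{\mathcal A}^T\bm V_{1:k}=\bm 0$ to kill the off-block right-hand side; and you frame the elimination as a Schur complement (plus an explicit positive-definiteness check the paper omits) rather than the paper's substitute-and-solve.
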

\begin{proof}
To show how $\glasso_{1:k}$ can be written as a function of $\bm{u}_{1:k}$, we start by first rewriting the objective function associated with the group LASSO in \eqref{eq:group_lasso} using a similar approach to that in the proof of Theorem 2.1 in \cite{Sengupta2024}. Recalling that $\bm y = \yhat + \sigmahat \bm{V}\bm{u}$ from \eqref{eq:decomp} and letting $\error = \sigmahat \bm{V}\bm{u}$, we can decompose the error term $\error$ as $\error = \hat{\bm{e}}_{\parallel} + \hat{\bm{e}}_{\perp}$, where
\begin{align*}
\hat{\bm{e}}_{\parallel} &= (\bm{I} - \bm{P}_{-1:k})\bm{X}_{1:k}(\bm{X}_{1:k}^T(\bm{I} - \bm{P}_{-1:k})\bm{X}_{1:k})^{-1}\bm{X}_{1:k}^T(\bm{I} - \bm{P}_{-1:k})\error \\
&= (\bm{I} - \bm{P}_{-1:k})\bm{X}_{1:k}\olsk \text{, by Lemma \ref{lma:rewrite_OLS}} \\
\hat{\bm{e}}_{\perp} &= \left(\bm{I} - (\bm{I} - \bm{P}_{-1:k})\bm{X}_{1:k}(\bm{X}_{1:k}^T(\bm{I} - \bm{P}_{-1:k})\bm{X}_{1:k})^{-1}\bm{X}_{1:k}^T(\bm{I} - \bm{P}_{-1:k})\right)\error
\end{align*}
Note that $\hat{\bm{e}}_{\parallel}$ is the projection of $\error$ onto the column span of $(\bm{I} - \bm{P}_{-1:k})\bm{X}_{1:k}$ and $\hat{\bm{e}}_{\perp}$ is perpendicular to the column space of $\bm{X}$. The objective function in \eqref{eq:group_lasso} can thus be written as 
\begin{align*}
    \norm{\bm y - \bm x \bm{\beta}}^2 + 2n\lambda \norm{\bm{\beta}_{1:k}}_2 + 2n\lambda \norm{\bm{\beta}_{-1:k}}_1 &= \norm{\yhat + \hat{\bm{e}}_{\parallel} + \hat{\bm{e}}_{\perp} - \bm X \bm{\beta}}^2 + 2n\lambda \norm{\bm{\beta}_{1:k}}_2 + 2n\lambda \norm{\bm{\beta}_{-1:k}}_1 \\
    &= \norm{\yhat  - \bm{P}_{-1:k}\bm{X}_{1:k}\bm{\beta}_{1:k} -\bm{X}_{-1:k}\bm{\beta}_{-1:k}}^2  + \norm{\hat{\bm{e}}_{\perp}}^2 \\
    &+ \norm{(\bm{I} - \bm{P}_{-1:k})\bm{X}_{1:k}(\olsk - \bm{\beta}_{1:k})}^2 + 2n\lambda \norm{\bm{\beta}_{1:k}}_2 \\
    &+ 2n\lambda \norm{\bm{\beta}_{-1:k}}_1 \\
    &= \tilde{f}(\bm{\beta}; \bm y , \bm X) + \norm{(\bm{I} - \bm{P}_{-1:k})\bm{X}_{1:k}(\olsk - \bm{\beta}_{1:k})}^2,
\end{align*}
where $\tilde{f}$ denotes the expression it is replacing. Defining the function 
\begin{align*}
\glasso(\bm{a}) &= \argmin_{\bm{\beta} \in \R^d}\left(\tilde{f}(\bm{\beta}; \bm y , \bm X) + \norm{(\bm{I} - \bm{P}_{-1:k})\bm{X}_{1:k}(\bm{a} - \bm{\beta}_{1:k})}^2\right) \\
&= \argmin_{\bm{\beta} \in \R^d} U(\bm{a}, \bm{\beta}),
\end{align*}
we see that $\glasso_{1:k} = \glasso_{1:k}(\olsk) = \glasso_{1:k}(\sigmahat (\bm{V}_{1:k}^T \bm{X}_{1:k})^{-1}\bm{u}_{1:k})$ by Lemma \ref{lma:OLS_unit}. Then, by the chain rule, we have that 
$$\nabla \glasso_{1:k}(\bm{u}_{1:k}) = \nabla\glasso_{1:k}(\olsk)\cdot \nabla \olsk(\bm{u}_{1:k}).$$
$\nabla \olsk(\bm{u}_{1:k}) = \sigmahat (\bm{V}_{1:k}^T \bm{X}_{1:k})^{-1}$, so all we need is $\nabla\glasso_{1:k}(\bm{a})$. 
We start by establishing that $\glasso(\bm{a})$ is continuous in $\bm a$ by invoking Lemma 3 from \cite{Sengupta2024}, the proof of which relies on an application of Berge's maximum theorem. The first assumption of this lemma is satisfied since $U(\bm{a}, \bm{\beta})$ is a continuous function of its arguments. For the second assumption, let $\bm{a}^* \in \R^k$, $d > 0$, and $\epsilon = \sup_{\bm{a} \in B(\bm{a}^*, d)} U(\bm{a}, \bm{0})$, and note that $U(\bm{a}, \bm{\beta}) \geq \tilde{f} \geq 2n\lambda(\norm{\bm{\beta}_{1:k}}_2 + \norm{\bm{\beta}_{-1:k}}_1) \geq 2n \lambda \norm{\bm{\beta}}$. Take $M = \frac{\epsilon}{2n\lambda}$ and note that if $\norm{\bm \beta} > M$, then $\inf_{\bm{a} \in B(\bm{a}^*, d)} U(\bm{a}, \bm{\beta}) \geq 2n\lambda \norm{\bm{\beta}} > \epsilon$, as desired. It follows that $\glasso(\bm{a})$ is continuous in $\bm a$.

Assume that $\bm u_{1:k}$ is such that $\glasso_{1:k} \neq \bm 0$, and let $\mathcal{A}(\bm{a}) = \left\{i \in \{k + 1, \ldots, d\} : \glassos_i(\bm{a}) \neq 0\right\}$. Using $[k]$ as shorthand for $\{1, \ldots, k\}$, since $\glasso(\bm{a})$ minimizes $U$, it follows that $\glasso_{[k] \cup \mathcal{A}(\bm{a})}$ is a minimizer of $U(\bm{a}, \bm{\gamma})$, where $\bm{\gamma}$ is a vector of length $k + |\mathcal{A}(\bm{a})|$. Here we abuse notation slightly: $U(\bm{a}, \bm{\beta})$ is originally defined with $\bm \beta \in \R^p$ as its second argument, while $\bm \gamma$ denotes the subvector of parameters indexed by $[k] \cup \mathcal{A}(\bm{a})$. Thus, $U(\bm{a}, \bm{\gamma})$ is to be interpreted as evaluating $U$ at the vector in $\R^p$ that agrees with $\bm \gamma$ on $[k] \cup \mathcal{A}(\bm{a})$ and is zero elsewhere.

Because all of the entries of $\glasso_{[k] \cup \mathcal{A}(\bm{a})}$ are non-zero, $U(\bm{a}, \bm{\gamma})$ is differentiable in $\bm \gamma$ at $\glasso_{[k] \cup \mathcal{A}(\bm{a})}$. The first-order conditions indicate that for $i \in \mathcal{A}(\bm{a})$, 
\begin{align*}
 \frac{\partial U}{\partial \gamma_i}\biggr\vert_{\bm{\gamma} = \glasso_{[k] \cup \mathcal{A}(\bm{a})}}   &= 0 \\
\implies -\bm{X}_i^T\left(\yhat - \bm{P}_{-1:k} \bm{X}_{1:k}\glasso_{1:k}(\bm{a}) - \bm{X}_{\mathcal{A}(\bm{a})}\glasso_{\mathcal{A}(\bm{a})}(\bm{a})\right) + n\lambda \sign(\glassos_i(\bm{a})) &= 0.
\end{align*}
Thus,
\begin{align*}
-\bm{X}_{\mathcal{A}(\bm{a})}^T\left(\yhat - \bm{P}_{-1:k} \bm{X}_{1:k}\glasso_{1:k}(\bm{a}) - \bm{X}_{\mathcal{A}(\bm{a})}\glasso_{\mathcal{A}(\bm{a})}(\bm{a})\right) + n\lambda \sign(\glasso_{\mathcal{A}(\bm{a})}(\bm{a})) = \bm{0}.
\end{align*}
Note that $\glasso(\bm a)$ is continuous in the neighborhood of $\bm a$ in which the active set $\mathcal A(\bm a)$ does not change and hence the signs of the active variables also do not change. So, differentiating both sides of the equation above with respect to $\bm{a}$ yields
\begin{align}
\bm{X}_{\mathcal{A}(\bm{a})}^T\left(\bm{P}_{-1:k} \bm{X}_{1:k} \nabla \glasso_{1:k}(\bm{a}) + \bm{X}_{\mathcal{A}(\bm{a})} \nabla \glasso_{\mathcal{A}(\bm{a})}(\bm{a})\right)    &= \bm{0} \nonumber \\
\implies  \nabla \glasso_{\mathcal{A}(\bm{a})}(\bm{a}) &= -(\bm{X}_{\mathcal{A}(\bm{a})}^T\bm{X}_{\mathcal{A}(\bm{a})})^{-1}\bm{X}_{\mathcal{A}(\bm{a})}^T\bm{P}_{-1:k} \bm{X}_{1:k} \nabla \glasso_{1:k}(\bm{a})
\label{eq:sub_1}
\end{align}
Similarly, for $i \in [k]$,
\begin{align*}
\frac{\partial U}{\partial \gamma_i}
\biggr\vert_{\bm{\gamma} = \glasso_{[k] \cup \mathcal{A}(\bm{a})}} &= 0 \\
\implies
 -\bm{X}_i^T\left(
    \yhat
    - \bm{P}_{-1:k} \bm{X}_{1:k}\glasso_{1:k}(\bm{a})
    - \bm{X}_{\mathcal{A}(\bm{a})}\glasso_{\mathcal{A}(\bm{a})}(\bm{a})
  \right) \\
  + \; n\lambda
    \frac{\glassos_i(\bm{a})}{\norm{\glasso_{1:k}(\bm{a})}}
  - \bm{X}_i^T(\bm{I} - \bm{P}_{-1:k})\bm{X}_{1:k}
    \left(\bm{a} - \glasso_{1:k}(\bm{a})\right) &= 0 \\
\implies 
 -\bm{X}_{1:k}^T\left(
    \yhat
    - \bm{P}_{-1:k} \bm{X}_{1:k}\glasso_{1:k}(\bm{a})
    - \bm{X}_{\mathcal{A}(\bm{a})}\glasso_{\mathcal{A}(\bm{a})}(\bm{a})
  \right)& \\
 + \; n\lambda
    \frac{\glasso_{1:k}(\bm{a})}{\norm{\glasso_{1:k}(\bm{a})}}
  - \bm{X}_{1:k}^T(\bm{I} - \bm{P}_{-1:k})\bm{X}_{1:k}
    \left(\bm{a} - \glasso_{1:k}(\bm{a})\right) &= 0
\end{align*}
Differentiating again with respect to $\bm{a}$ yields 
\begin{align}
 \bm{X}_{1:k}^T\bm{P}_{-1:k} \bm{X}_{1:k}\nabla \glasso_{1:k}(\bm{a})
    +  \bm{X}_{1:k}^T\bm{X}_{\mathcal{A}(\bm{a})}\nabla \glasso_{\mathcal{A}(\bm{a})}(\bm{a})& \nonumber \\
 + \; n\lambda
    \frac{1}{\norm{\glasso_{1:k}(\bm{a})}}\left(\bm{I} - \frac{\glasso_{1:k}(\bm{a})\glasso_{1:k}(\bm{a})^T}{\norm{\glasso_{1:k}(\bm{a})}^2}\right)\nabla \glasso_{1:k}(\bm{a})& \nonumber \\
  - \bm{X}_{1:k}^T(\bm{I} - \bm{P}_{-1:k})\bm{X}_{1:k} + \bm{X}_{1:k}^T(\bm{I} - \bm{P}_{-1:k})\bm{X}_{1:k}\nabla \glasso_{1:k}(\bm{a}) &= 0
  \label{eq:sub_2}
\end{align}
Substituting \eqref{eq:sub_1} into \eqref{eq:sub_2} and rearranging, we get
\begin{align*}
\nabla \glasso_{1:k}(\bm{a}) &= \left(\bm{X}_{1:k}^T(\bm{I} - \bm{P}_{\mathcal{A}(\bm{a})})\bm{X}_{1:k} + \frac{n\lambda}{\norm{\glasso_{1:k}}} \left(\bm{I} - \frac{\glasso_{1:k}(\glasso_{1:k})^T}{\norm{\glasso_{1:k}}^2}\right)\right)^{-1}\bm{X}_{1:k}^T(\bm{I} - \bm{P}_{-1:k})\bm{X}_{1:k}
\end{align*}
Note that $\bm{X}_{1:k}^T(\bm{I} - \bm{P}_{-1:k})\bm{X}_{1:k} = \bm{X}_{1:k}^T\bm{V}\bm{V}^T\bm{X}_{1:k} = \bm{X}_{1:k}^T\bm{V}_{1:k}\bm{V}_{1:k}^T\bm{X}_{1:k}$ since $\bm{V}\bm{V}^T = (\bm{I} - \bm{P}_{-1:k})$ and the columns of $\bm{V}_{-1:k}$ are orthogonal to the column space of $\bm{X}$ by definition of $\bm{V}$. So, combining the above with $\nabla \olsk(\bm{u}_{1:k})$ delivers the result. 
\end{proof}

\subsection{\sectionmath{Inverse of $\glasso_{1:k}$ as a function of $\bm{u}_{1:k}$}{Inverse of group LASSO as a function of u}}
\label{sec:thm_glasso_unit_pf}
\begin{theorem}
\label{thm:glasso_unit}
For any $\bm{b} \in \mathbb{R}^k$ and $\bm{\epsilon} \in B_k(\bm 0, 1)$, the unit ball centered at the origin in $\mathbb R^k$, define the functions
\begin{align}
\label{eq:obj}
\glasso_{-1:k}(\bm{b}) &= \argmin_{\bm{\beta}_{-1:k} \in \R^{d-k}}\left(\frac{1}{2n}\norm{\bm{y} - \bm{X}_{1:k}\bm{b} - \bm{X}_{-1:k}\bm{\beta}_{-1:k}}_2^2 + \lambda\norm{\bm{\beta}_{-1:k}}_1\right) \\
\Lambda_{1:k}(\bm{b}, \bm{\epsilon}) &= \frac{1}{\sigmahat} (\bm{X}_{1:k}^T \bm{V}_{1:k})^{-1}\left(-\bm{X}_{1:k}^T(\yhat - \bm{X}_{1:k}\bm{b} -\bm{X}_{-1:k}\glasso_{-1:k}(\bm{b})) + n \lambda \bm{\epsilon}\right) \nonumber
\end{align}
Then, the function $f_{\bm{S}^{(1:k)}}: \mathbb{R}^k \to \mathbb{R}^k$ defined via its inverse as 
\begin{align*}
f_{\suffstat}^{-1}(\bm{b}) &=
\begin{cases}
    \Lambda_{1:k}(\bm{b}, \frac{\bm{b}}{\|\bm{b}\|}) \text{, if $\bm{b} \neq \bm{0}$} \\
    \left\{\Lambda_{1:k}(\bm{0}, \bm{\epsilon}):\|\bm{\epsilon}\| \leq 1\right\} \text{, if $\bm{b} = \bm{0}$}
\end{cases}
\end{align*}
satisfies $\glasso_{1:k} = f_{\suffstat}(\bm{u}_{1:k})$. 
\end{theorem}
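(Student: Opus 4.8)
The plan is to read the identity $\glasso_{1:k} = f_{\suffstat}(\bm u_{1:k})$ directly off the subgradient (KKT) optimality conditions of the group LASSO in \eqref{eq:group_lasso}, after substituting the decomposition $\bm y = \yhat + \sigmahat\bm V\bm u$ from Lemma~\ref{lma:decomp}. Write $\glasso$ for the group LASSO solution at $\bm y$; since $\bm X$ has full column rank the objective in \eqref{eq:group_lasso} is strictly convex, so $\glasso$ is unique and the optimality conditions are both necessary and sufficient. Splitting the penalty by block, stationarity in the first block reads
$$\bm X_{1:k}^T\bigl(\bm y - \bm X_{1:k}\glasso_{1:k} - \bm X_{-1:k}\glasso_{-1:k}\bigr) = n\lambda\,\bm s, \qquad \bm s \in \partial\norm{\glasso_{1:k}}_2,$$
while the conditions on coordinates $k+1,\dots,d$ are exactly the soft-thresholding conditions of the LASSO. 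Observe that with $\glasso_{1:k}$ held fixed, the group LASSO objective restricted to $\bm\beta_{-1:k}$ equals, up to an additive constant, the LASSO in \eqref{eq:obj} with offset $\bm X_{1:k}\glasso_{1:k}$; by uniqueness of the LASSO solution this forces $\glasso_{-1:k} = \glasso_{-1:k}(\glasso_{1:k})$, so the coordinate-$(k+1,\dots,d)$ conditions carry no further information.

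Next I would substitute $\bm y = \yhat + \sigmahat\bm V\bm u$ into the first-block stationarity condition. Because $\col(\bm V_{-1:k})\perp\col(\bm X)$ we have $\bm X_{1:k}^T\bm V\bm u = \bm X_{1:k}^T\bm V_{1:k}\bm u_{1:k}$, so isolating this term gives
$$\sigmahat\,\bm X_{1:k}^T\bm V_{1:k}\,\bm u_{1:k} \;=\; -\bm X_{1:k}^T\bigl(\yhat - \bm X_{1:k}\glasso_{1:k} - \bm X_{-1:k}\glasso_{-1:k}(\glasso_{1:k})\bigr) + n\lambda\,\bm s.$$
Since $\bm X_{1:k}^T\bm V_{1:k}$ is invertible (by the construction of $\bm V$; cf.\ Lemma~\ref{lma:OLS_unit}), left-multiplying by $\tfrac1{\sigmahat}(\bm X_{1:k}^T\bm V_{1:k})^{-1}$ yields $\bm u_{1:k} = \Lambda_{1:k}(\glasso_{1:k},\bm s)$. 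When $\glasso_{1:k}\neq\bm 0$ the subgradient is pinned down to $\bm s = \tfrac{\glasso_{1:k}}{\norm{\glasso_{1:k}}}$, so $\bm u_{1:k} = \Lambda_{1:k}\bigl(\glasso_{1:k},\tfrac{\glasso_{1:k}}{\norm{\glasso_{1:k}}}\bigr) = f_{\suffstat}^{-1}(\glasso_{1:k})$; when $\glasso_{1:k} = \bm 0$ we know only $\norm{\bm s}\leq 1$, and then $\bm u_{1:k} = \Lambda_{1:k}(\bm 0,\bm s)$ with $\bm s\in B_k(\bm 0,1)$, i.e.\ $\bm u_{1:k}\in f_{\suffstat}^{-1}(\bm 0)$. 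Either way $\bm u_{1:k}\in f_{\suffstat}^{-1}(\glasso_{1:k})$.

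Finally I would check that $f_{\suffstat}$ is genuinely a function on the set of attainable $\bm u_{1:k}$, so that $f_{\suffstat}(\bm u_{1:k})$ is unambiguously $\glasso_{1:k}$: this requires the sets $\{f_{\suffstat}^{-1}(\bm b)\}_{\bm b\in\R^k}$ to be pairwise disjoint. For this I would run the computation in reverse. If $\bm u_{1:k}\in f_{\suffstat}^{-1}(\bm b)$, then $\bm u_{1:k} = \Lambda_{1:k}(\bm b,\bm\epsilon)$ for an admissible $\bm\epsilon$ (namely $\bm\epsilon = \bm b/\norm{\bm b}$ if $\bm b\neq\bm 0$, or $\norm{\bm\epsilon}\leq 1$ if $\bm b=\bm 0$); re-substituting $\bm y = \yhat + \sigmahat\bm V\bm u$ and clearing $(\bm X_{1:k}^T\bm V_{1:k})^{-1}$ shows $(\bm b,\glasso_{-1:k}(\bm b))$ satisfies the first-block stationarity condition with valid subgradient $\bm\epsilon$, while the remaining conditions hold by definition of $\glasso_{-1:k}(\bm b)$. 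Hence $(\bm b,\glasso_{-1:k}(\bm b))$ is a KKT point of \eqref{eq:group_lasso} and, by uniqueness of the group LASSO solution, $\bm b = \glasso_{1:k}$. Thus each attainable $\bm u_{1:k}$ lies in exactly one preimage, $f_{\suffstat}$ is well-defined, and $f_{\suffstat}(\bm u_{1:k}) = \glasso_{1:k}$. The computations are routine rearrangements; the part requiring genuine care --- and what I would flag as the crux --- is the $\bm b=\bm 0$ case, where one must verify that the admissible subgradients trace out exactly the unit ball $B_k(\bm 0,1)$, so that the set-valued preimage $\{\Lambda_{1:k}(\bm 0,\bm\epsilon):\norm{\bm\epsilon}\leq 1\}$ is neither too large nor too small, and that it is disjoint from every $f_{\suffstat}^{-1}(\bm b)$ with $\bm b\neq\bm 0$ --- both of which again reduce to uniqueness of $\glasso$.
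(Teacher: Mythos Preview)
Your proposal is correct and follows essentially the same route as the paper: write the subgradient (KKT) optimality conditions of the group LASSO, substitute the decomposition $\bm y=\yhat+\sigmahat\bm V\bm u$, use $\bm X_{1:k}^T\bm V\bm u=\bm X_{1:k}^T\bm V_{1:k}\bm u_{1:k}$, and read off $\bm u_{1:k}=\Lambda_{1:k}(\glasso_{1:k},\bm s)$ with $\bm s\in\partial\|\glasso_{1:k}\|_2$. The one structural difference is in how uniqueness is established. The paper packages the equivalence $\bm 0\in\partial_{\bm\beta_{1:k}}f_\lambda|_{\glasso_*(\bm b)}\iff\glasso=\glasso_*(\bm b)\iff\glasso_{1:k}=\bm b$ into an intermediate lemma proved via a blockwise coordinate descent argument (appealing to a convergence result for the sparse group LASSO). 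You instead invoke strict convexity of the objective---available since $\bm X$ has full column rank---to conclude directly that any KKT point is the unique minimizer, and likewise for the restricted LASSO in $\bm\beta_{-1:k}$. Your route is shorter and self-contained; the paper's route avoids explicitly invoking the full-column-rank hypothesis for uniqueness but relies on an external convergence result. Both yield the same conclusion with no substantive gap.
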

\begin{proof}
In order to prove Theorem \ref{thm:glasso_unit}, we first need the intermediate result below, following the same approach taken in the proof of Theorem 2.1 in \cite{Sengupta2024}. 
\begin{lemma}
\label{lma:equiv_statements}
For data $(\bm{y}, \bm{X})$ with $\bm{X}$ full column-rank and $\lambda > 0$, define $\glasso_*(\bm{b})$ to be $\bm{b}$ for the first $k$ coordinates and $\glasso_{-1:k}(\bm{b})$ on the rest, where $\glasso_{-1:k}(\bm{b})$ is defined as in the statement of Theorem \ref{thm:glasso_unit}. Also let $$f_\lambda(\bm{y}; \bm{\beta}) = \frac{1}{2n}\norm{\bm{y} - \bm{X}\bm{\beta}}_2^2 + \lambda(\norm{\bm{\beta}_{1:k}}_2 + \norm{\bm{\beta}_{-1:k}}_1)$$
be the group LASSO objective function. Then, for any $\bm{b}\in \R^k$, the following are equivalent statements:
\begin{enumerate}
    \item $\bm{0} \in \partial_{\bm{\beta}_{1:k}}f_\lambda(\bm{y}; \bm{\beta})\vert_{\bm{\beta} = \glasso_*(\bm{b})}$
    \item $\glasso = \glasso_*(\bm{b})$
    \item $\glasso_{1:k} = \bm{b}$
\end{enumerate}
\end{lemma}
\begin{proof}[Proof of Lemma \ref{lma:equiv_statements}]
First, we show equivalence between (1) and (2), so assume $\bm{0} \in \partial_{\bm{\beta}_{1:k}}f_\lambda(\bm{y}; \bm{\beta})\vert_{\bm{\beta} = \glasso_*(\bm{b})}$. The function $f_{\lambda}(\bm{y}; (\bm{\beta}_{-1:k} = \glasso_{-1:k}(\bm{b}), \bm{\beta}_{1:k}))$ is convex in $\bm{\beta}_{1:k}$, so since $\bm{0} \in \partial_{\bm{\beta}_{1:k}}f_\lambda(\bm{y}; \bm{\beta})\vert_{\bm{\beta} = \glasso_*(\bm{b})} \iff \bm{0} \in\partial_{\bm{\beta}_{1:k}}f_{\lambda}(\bm{y}; (\bm{\beta}_{-1:k} = \glasso_{-1:k}(\bm{b}), \bm{\beta}_{1:k}))\vert_{\bm{\beta}_{1:k} = \bm{b}}$,
\begin{align*}
    \bm{b} &= \argmin_{\bm{\beta}_{1:k}} \left(f_\lambda(\bm{y}; (\bm{\beta}_{-1:k} = \glasso_{-1:k}(\bm{b}), \bm{\beta}_{1:k}))\right).
\end{align*}
By definition of $\glasso_{-1:k}(\bm{b})$, we also have that
\begin{align*}
\glasso_{-1:k}(\bm{b}) &= \argmin_{\bm{\beta}_{-1:k}}\left(f_\lambda(\bm{y}; (\bm{\beta}_{-1:k}, \bm{\beta}_{1:k} = \bm{b}))\right).
\end{align*}
The above implies that running a blockwise coordinate descent with the blocks $[1:k]$ and $\{[j]\}_{(k+1) \leq j \leq d}$ starting at $\glasso_*(\bm{b})$ will result in iterates being constant at $\glasso_*(\bm{b})$. The algorithm for the optimization of the sparse group LASSO taking $\lambda_2 = 0$ in \cite{Tibshirani2010} implies that blockwise coordinate descent over groups will result in the minimizer. Thus, $\glasso = \glasso_*(\bm{b})$. Assuming (2), since $\glasso$ is the minimizer of the objective $f_\lambda(\bm{y}; \bm{\beta})$, the sub-gradient of the objective with respect to each coordinate, in particular the first $k$, evaluated at the minimizer must contain 0. Since $\glasso = \glasso_*(\bm{b})$, condition (1) follows.

We now show equivalence between (2) and (3). That (2) implies (3) follows from the definition of $\glasso_*(\bm{b})$. So, assuming (3), the blockwise coordinate descent argument from before implies 
\begin{align*}
    \glasso_{-1:k} = \argmin_{\bm{\beta}_{-1:k}}\left(f_\lambda(\bm{y}; (\bm{\beta}_{-1:k}, \bm{\beta}_{1:k} = \glasso_{1:k}))\right)
\end{align*}
Then since $\glasso_{1:k} = \bm{b}$ by assumption, we have that $\glasso_{-1:k} = \glasso_{-1:k}(\bm{b})$. So, $\glasso = \glasso_*(\bm{b})$.
\end{proof}
We now proceed with the proof of Theorem \ref{thm:glasso_unit} by using the decomposition in \eqref{eq:decomp} from Lemma \ref{lma:decomp} to decompose the group LASSO objective function as follows
\begin{align*}
f_{\lambda}(\bm{y}; \bm{\beta}) &= \frac{1}{2n}\norm{\bm{y} - \bm{X} \bm{\beta}}_2^2 + \lambda \norm{\bm{\beta}_{1:k}}_2 + \lambda\norm{\bm{\beta}_{-1:k}}_1 \\
&= \frac{1}{2n}\norm{\yhat - \bm{X}_{-1:k} \bm{\beta}_{-1:k} - \bm{P}_{-1:k}\bm{X}_{1:k}\bm{\beta}_{1:k}}_2^2 + \frac{1}{2n}\norm{\sigmahat\bm{V}\bm{u} - (\bm{I} - \bm{P}_{-1:k})\bm{X}_{1:k}\bm{\beta}_{1:k}}_2^2 \\
&+\lambda \norm{\bm{\beta}_{1:k}}_2 + \lambda\norm{\bm{\beta}_{-1:k}}_1.
\end{align*}
The subdifferential of $f_{\lambda}(\bm{y}; \bm{\beta})$ with respect to $\bm{\beta}_{1:k}$ is given by
\begin{align*}
\partial_{\bm{\beta}_{1:k}}f_{\lambda}(\bm{y}; \bm{\beta}) &= 
-\frac{1}{n}(\bm{P}_{-1:k}\bm{X}_{1:k})^T\left(\yhat - \bm{X}_{-1:k} \bm{\beta}_{-1:k} - \bm{P}_{-1:k}\bm{X}_{1:k}\bm{\beta}_{1:k}\right) \\
&- \frac{1}{n}((\bm I - \bm{P}_{-1:k})\bm{X}_{1:k})^T\left(\sigmahat\bm{V}\bm{u} - (\bm{I} - \bm{P}_{-1:k})\bm{X}_{1:k}\bm{\beta}_{1:k}\right) \\
&+ \lambda s(\bm{\beta}_{1:k}) \text{, where}\\
s(\bm{\beta}_{1:k}) &= 
\begin{cases}
\frac{\bm{\beta}_{1:k}}{\norm{\bm{\beta}_{1:k}}} \text{, if $\bm{\beta}_{1:k} \neq \bm{0}$} \\
\{\bm{v} \in \mathbb{R}^k: \norm{\bm{v}} \leq 1\} \text{, otherwise}
\end{cases}
\end{align*}
The above simplifies as follows
\begin{align*}
\partial_{\bm{\beta}_{1:k}}f_{\lambda}(\bm{y}; \bm{\beta}) &= -\frac{1}{n}\bm{X}_{1:k}^T\yhat - \frac{\sigmahat}{n}\bm{X}_{1:k}^T\bm{V} \bm{u} + \frac{\bm{X}_{1:k}^T \bm{X}_{-1:k}\bm{\beta}_{-1:k} + \bm{X}^T_{1:k} \bm{X}_{1:k}\bm{\beta}_{1:k}}{n} + \lambda s(\bm{\beta}_{1:k}) \\
&= -\frac{1}{n}\bm{X}_{1:k}^T(\yhat - \bm{X}\bm{\beta}) - \frac{\sigmahat}{n}\bm{X}_{1:k}^T\bm{V} \bm{u} + \lambda s(\bm{\beta}_{1:k}). 
\end{align*}
We can write $\bm{X}_{1:k}^T\bm{V} \bm{u}$ as $\bm{X}_{1:k}^T\bm{V}_{1:k} \bm{u}_{1:k}$ since the columns of $\bm{V}_{-1:k}$ are orthogonal to the column space of $\bm{X}$. We then have that
\begin{align*}
\partial_{\bm{\beta}_{1:k}}f_{\lambda}(\bm{y}; \bm{\beta}) &= -\frac{1}{n}\bm{X}_{1:k}^T(\yhat - \bm{X}\bm{\beta}) - \frac{\sigmahat}{n}\bm{X}_{1:k}^T\bm{V}_{1:k} \bm{u}_{1:k} + \lambda s(\bm{\beta}_{1:k}). 
\end{align*}
Setting $\bm{\beta} = \glasso*(\bm{b})$ for $\bm{b} \in \R^k$ and applying Lemma \ref{lma:equiv_statements}, we have the following equivalences
\begin{align*}
\glasso_{1:k} = \bm{b} &\iff \bm{0} \in \partial_{\bm{\beta}_{1:k}} f_{\lambda}(\bm{y}; \bm{\beta})|_{\bm{\beta} = \glasso_*(\bm{b})} \\
&\iff \bm{u}_{1:k} \in \frac{1}{\sigmahat} (\bm{X}_{1:k}^T \bm{V}_{1:k})^{-1}\left(-\bm{X}_{1:k}^T(\yhat - \bm{X}_{1:k}\bm{b} -\bm{X}_{-1:k}\glasso_{-1:k}(\bm{b})) + n \lambda s(\bm{b})\right).
\end{align*}
Equivalently, using $\Lambda_{1:k}(\bm{b}, \bm{\epsilon})$ as defined in the statement of Theorem \ref{thm:glasso_unit}, we find that $\glasso_{1:k} = f_{\suffstat}(\bm{u}_{1:k})$, where
\begin{align*}
f_{\suffstat}^{-1}(\bm{b}) &=
\begin{cases}
    \Lambda_{1:k}(\bm{b}, \frac{\bm{b}}{\norm{\bm{b}}}) \text{, if $\bm{b} \neq \bm{0}$} \\
    \left\{\Lambda_{1:k}(\bm{0}, \bm{\epsilon}):\norm{\bm{\epsilon}} \leq 1\right\} \text{, if $\bm{b} = \bm{0}$}
\end{cases}.
\end{align*}
\end{proof}

\subsection{Proof of Lemma \ref{lma:deriv}}
\label{sec:lma_deriv_pf}
\begin{proof}
From Theorem \ref{thm:glasso_unit}, the function $f_{\suffstat}^{-1}(\bm{b}; r)$ specified in the lemma's statement is given by $$f_{\suffstat}^{-1}(\bm{b}; r) = \frac{1}{\sigmahat} (\bm{X}_{1:k}^T \bm{V}_{1:k})^{-1}\left(-\bm{X}_{1:k}^T(\yhat - \bm{X}_{1:k}\bm{b} -\bm{X}_{-1:k}\glasso_{-1:k}(\bm{b})) + n \lambda \frac{\bm{b}}{r}\right).$$
We start by establishing that $\glasso_{-1:k}(\bm b)$ is continuous in $\bm b$. Let $U(\bm{b}, \bm{\beta}_{-1:k})$ denote the objective function in \eqref{eq:obj}, and note that it is a continuous function of its arguments. Also, $U(\bm{b}, \bm{\beta}_{-1:k}) \geq \lambda \norm{\bm{\beta}_{-1:k}}$. So letting $\bm{b}^* \in \R^k$, $ d > 0$, and $\epsilon = \sup_{\bm{b} \in B(\bm{b}^*, d)}U(\bm{b}, \bm{0})$, if we take $M = \frac{\epsilon}{\lambda}$, then $\norm{\bm{\beta}_{-1:k}} > M$ implies that $\inf_{\bm{b} \in B(\bm{b}^*, d)} U(\bm{b}, \bm{\beta}_{-1:k}) \geq \lambda \norm{\bm{\beta}_{-1:k}} > \epsilon$. By Lemma 3 from \cite{Sengupta2024}, $\glasso_{-1:k}(\bm{b})$ is continuous in $\bm b$.

We will have that $\nabla \glasso_{-1:k}(\bm{b}) = \bm{0}$ if $\glasso_{-1:k}(\bm{b}) = \bm{0}$ so consider a value of $\bm b \in \R^k$ such that $\glasso_{-1:k}(\bm{b}) \neq \bm{0}$, and define the active set $\mathcal{A}(\bm{b}) = \left\{i \in \{k+1, \ldots, d\} : \glassos_{i}(\bm b) \neq 0\right\}$. Since $\glasso_{-1:k}(\bm{b})$ minimizes $U$, it follows that $\glasso_{\mathcal{A}(\bm{b})}(\bm{b})$ is a minimizer of $U(\bm{b}, \bm{\gamma})$, where $\bm{\gamma}$ is a vector of length $|\mathcal{A}(\bm{b})|$. Because all of the entries of $\glasso_{\mathcal{A}(\bm{b})}(\bm{b})$ are non-zero, $U(\bm{b}, \bm{\gamma})$ is differentiable in $\bm \gamma$ at $\glasso_{\mathcal{A}(\bm{b})}(\bm{b})$. Thus, the first-order conditions indicate that for $i \in \mathcal{A}(\bm{b})$, 
\begin{align*}
    \frac{\partial}{\partial \gamma_i} U(\bm{b}, \bm{\gamma})\biggr\vert_{\bm{\gamma} = \glasso_{\mathcal{A}(\bm{b})}(\bm{b})} &= 0 \\
    \implies -\bm{X}_i^T(\bm{y} - \bm{X}_{1:k}\bm{b} - X_{\mathcal{A}(\bm{b})}\glasso_{\mathcal{A}(b)}(\bm{b})) + 2n\lambda \sign(\glassos_i(\bm{b})) &= 0
\end{align*}
Note that $\glasso_{-1:k}(\bm{b})$ is continuous in the neighborhood of $\bm{b}$ in which the active set $\mathcal{A}(\bm{b})$ does not change and hence the signs of the active variables also do not change. So, differentiating both sides of the equation above with respect to $\bm{b}$ yields,
\begin{align*}
 \bm{X}_i^T \bm{X}_{1:k} + \bm{X}_i^T\bm{X}_{\mathcal{A}(\bm{b})} \nabla \glasso_{\mathcal{A}(\bm{b})}(\bm{b}) &= 0 \text{, for all $i \in \mathcal{A}(\bm{b})$} \\
 \implies  \bm{X}_{\mathcal{A}(\bm{b})}^T \bm{X}_{1:k} + \bm{X}_{\mathcal{A}(\bm{b})}^T\bm{X}_{\mathcal{A}(\bm{b})} \nabla \glasso_{\mathcal{A}(\bm{b})}(\bm{b}) &= 0 \\
 \implies \nabla \glasso_{\mathcal{A}(\bm{b})}(\bm{b}) &= -(\bm{X}_{\mathcal{A}(\bm{b})}^T\bm{X}_{\mathcal A(\bm{b})})^{-1}\bm{X}_{\mathcal A(\bm{b})}^T\bm{X}_{1:k}.
\end{align*}
With $\nabla \glasso_{-1:k}(\bm{b})$ determined, the result follows from simple manipulations.
\end{proof}

\subsection{Proof of Corollary \ref{cor:affine_piece}}
\label{sec:cor_affine_piece_pf}
\begin{proof}
By Lemma \ref{lma:deriv}, for $\bm b$ in a sufficiently small neighborhood of $\bm{b}^*$ such that $\norm{\bm{b}} = \norm{\bm{b}^*}$, 
$$f_{\suffstat}^{-1}(\bm{b}; \norm{\bm{b}^*}) = \nabla f_{\suffstat}^{-1}(\bm{b}^*; \norm{\bm{b}^*})\bm{b} + \bm{\nu}(\bm{b}^*)\text{, where } \bm{\nu}(\bm{b}^*) = f_{\suffstat}^{-1}(\bm{b}^*; \norm{\bm{b}^*}) - \nabla f_{\suffstat}^{-1}(\bm{b}^*; \norm{\bm{b}^*})\bm{b}^*.$$
All we need to show is that $\bm{\nu}(\bm{b}^*)$ simplifies to the form specified in the corollary's statement. Observe that
\begin{align*}
f_{\suffstat}^{-1}(\bm{b}^*; \norm{\bm{b}^*}) - \nabla f_{\suffstat}^{-1}(\bm{b}^*; \norm{\bm{b}^*})\bm{b}^* &= \frac{1}{\sigmahat} (\bm{X}_{1:k}^T \bm{V}_{1:k})^{-1} \\
&\cdot \left(-\bm{X}_{1:k}^T(\yhat - \bm{X}_{1:k}\bm{b}^* -\bm{X}_{-1:k}\glasso_{-1:k}(\bm{b}^*)) + n \lambda \frac{\bm{b}^*}{\norm{\bm{b}^*}}\right) \\
&- \frac{1}{\sigmahat} (\bm{X}_{1:k}^T \bm{V}_{1:k})^{-1}\left[\bm{X}_{1:k}^T(\bm{I} - \bm{P}_{\mathcal{A}(\bm{b}^*)})\bm{X}_{1:k} + \frac{n\lambda}{\norm{\bm{b}^*}} \bm{I}\right]\bm{b}^* \\
&= -\frac{1}{\sigmahat} (\bm{X}_{1:k}^T \bm{V}_{1:k})^{-1}\bm{X}_{1:k}^T\yhat + \frac{1}{\sigmahat} (\bm{X}_{1:k}^T \bm{V}_{1:k})^{-1}\bm{X}_{1:k}^T\bm{X}_{-1:k}\glasso_{-1:k}(\bm{b}^*) \\
&+ \frac{1}{\sigmahat} (\bm{X}_{1:k}^T \bm{V}_{1:k})^{-1}\bm{X}_{1:k}^T\bm{P}_{\mathcal{A}(\bm{b}^*)}\bm{X}_{1:k}\bm{b}^* \\
&= -\frac{1}{\sigmahat} (\bm{X}_{1:k}^T \bm{V}_{1:k})^{-1}\bm{X}_{1:k}^T(\yhat-\bm{X}_{-1:k}\glasso_{-1:k}(\bm{b}^*)-\bm{P}_{\mathcal{A}(\bm{b}^*)}\bm{X}_{1:k}\bm{b}^*) \\
&= -\frac{1}{\sigmahat} (\bm{V}_{1:k}^T\bm{X}_{1:k})\bm{S}^{-1}\bm{X}_{1:k}^T(\yhat-\bm{X}_{-1:k}\glasso_{-1:k}(\bm{b}^*)-\bm{P}_{\mathcal{A}(\bm{b}^*)}\bm{X}_{1:k}\bm{b}^*).
\end{align*}
Using the notation in the corollary statement, the result follows from the line above.
\end{proof}

\subsection{Proof of Proposition \ref{thm:density}}
\label{sec:thm_density_pf}
\begin{proof}
Since $\bm{u}_{1:k}$ is rotationally invariant, 
\begin{align*}
\norm{\bm{u}_{1:k} - \bm{c}} \sim \norm{\bm{u}_{1:k} - \norm{\bm{c}}\bm{e}_1},
\end{align*}
where $\bm e_1 = (1, 0, \ldots, 0) \in \R^{k}$. Let $\bm{u}_{1:k} = \frac{\bm{Z}_{1:k}}{\norm{\bm{Z}}}$, where $\bm{Z} \sim \mathcal N(\bm{0}, \bm{I}_{n-d+k})$. Then, 
\begin{align*}
\norm{\bm{u}_{1:k} - \norm{\bm{c}}\bm{e}_1}^2 &= (\bm{u}_{1:k}^T - \norm{\bm{c}}\bm{e}_1^T)(\bm{u}_{1:k} - \norm{\bm{c}}\bm{e}_1) \\
&= \norm{\bm{u}_{1:k}}^2 - 2\norm{\bm{c}}u_1 + \norm{\bm{c}}^2 \\
&= \frac{Z_1^2 + \ldots + Z_k^2}{Z_1^2 + \ldots + Z_{n-d+k}^2} - 2\norm{\bm{c}}\frac{Z_1}{\sqrt{Z_1^2 + \ldots + Z_{n-d+k}^2}} + \norm{\bm{c}}^2.
\end{align*}
Let $Z_1 = S X$, where $S$ is a random sign and $X \sim \chi_1$, so that $\{S, X, Z_2, \ldots, Z_{n-d+k}\}$ are jointly independent. Also, let $G_1 = Z_1^2$, $G_2 = Z_2^2 +\ldots + Z_k^2$, and $G_3 = Z_{k+1}^2 + \ldots + Z_{n-d+k}^2$. Then, we have that
\begin{align*}
\frac{Z_1^2 + \ldots + Z_k^2}{Z_1^2 + \ldots + Z^2_{n-d+k}} &= \frac{G_1 + G_2}{G_1 + G_2 + G_3} \\
&=B_2 \sim \Beta\left(\frac{k}{2}, \frac{n-d}{2}\right) \\
\frac{Z_1}{\sqrt{Z_1^2 + \ldots + Z^2_{n-d+k}}} &= \frac{SX}{\sqrt{Z_1^2 + \ldots + Z^2_{n-d+k}}} \\
&= S \sqrt{\frac{Z_1^2}{Z_1^2 + \ldots + Z_{n-d+k}^2}} \\
&= S \sqrt{\frac{G_1}{G_1 + G_2 + G_3}} \\
&= S\sqrt{B_1 B_2} \text{, where $B_1 = \frac{G_1}{G_1+G_2}\sim \Beta\left(\frac{1}{2}, \frac{k-1}{2}\right)$} \\
\implies \norm{\bm{u}_{1:k} - \norm{\bm{c}}\bm{e}_1}^2 &= B_2 - 2 \norm{\bm{c}}S \sqrt{B_1 B_2} + \norm{\bm{c}}^2.
\end{align*}
Note that $B_1 \perp B_2$ is a standard result, and $S \perp \{B_1, B_2\}$ since $B_1$ and $B_2$ are functions of $\{X, Z_2, \ldots, Z_{n-d+k}\}$ and $S$ was chosen to be independent of $\{X, Z_2, \ldots, Z_{n-d+k}\}$. Thus, $\{S, B_1, B_2\}$ are jointly independent. Letting $W = \norm{\bm{u}_{1:k} -\norm{\bm{c}} \bm{e}_1}^2$, the density $f_W(w)$ is
\begin{align*}
f_W(w) &= f_{W\mid S = 1}(w)\frac{1}{2} + f_{W\mid S = -1}(w)\frac{1}{2} \\
&=\frac{1}{2}(f_{W_1}(w) + f_{W_2}(w)) \text{, where} \\
W_1 &= B_2 + 2 \norm{\bm{c}}\sqrt{B_1B_2} +  \norm{\bm{c}}^2  \\
W_2 &= B_2 - 2 \norm{\bm{c}}\sqrt{B_1B_2} +  \norm{\bm{c}}^2.
\end{align*}
We will now determine the densities of $W_1$ and $W_2$. Define the invertible function
\begin{align*}
    g_1(B_1, B_2) &= (W_1, B_2) \\
    \implies g_1^{-1}(W_1, B_2) &= \left(\frac{1}{B_2}\left(\frac{W_1 -B_2 - \norm{\bm{c}}^2}{2\norm{\bm{c}}}\right)^2, B_2\right). 
\end{align*}
By change of variables, 
\begin{align*}
f_{W_1, B_2}(w, t) &= f_{B_1, B_2}(g_1^{-1}(w, t)) \left|\frac{w-t-\norm{\bm{c}}^2}{2t\norm{\bm{c}}^2}\right| \\
&= h(\sqrt{w}, t),
\end{align*}
where $h(w, t)$ is defined in the statement of Proposition \ref{thm:density}. It follows that
\begin{align*}
f_{W_1}(w)&=
\begin{cases}
    \displaystyle\int_{\norm{\bm{c}}^2-2\sqrt{w}\norm{\bm{c}}+w}^{w-\norm{\bm{c}}^2}h(\sqrt{w}, t)~dt &\text{if $\norm{\bm{c}}^2 \leq w \leq \norm{\bm{c}}^2 + 1$} \\
    \displaystyle\int_{\norm{\bm{c}}^2-2\sqrt{w}\norm{\bm{c}}+w}^{1}h(\sqrt{w}, t)~dt &\text{if $\norm{\bm{c}}^2 + 1 < w \leq (\norm{\bm{c}}+1)^2$} \\
    0 &\text{otherwise}
\end{cases}.
\end{align*}
Proceeding in a similar manner, define the invertible function
\begin{align*}
    g_2(B_1, B_2) &= (W_2, B_2) \\
    \implies g_{2}^{-1}(W_2, B_2) &= \left(\frac{1}{B_2}\left(\frac{\norm{\bm{c}}^2 + B_2 - W_2}{2\norm{\bm{c}}}\right)^2, B_2\right).
\end{align*}
By change of variables, 
\begin{align*}
f_{W_2, B_2}(w, t) &= f_{B_1, B_2}(g_2^{-1}(w, t))\left|\frac{\norm{\bm{c}}^2 + t - w}{2t\norm{\bm{c}}^2}\right| \\
&= h(\sqrt{w}, t).
\end{align*}
It follows that 
\begin{align*}
f_{W_2}^{\norm{\bm{c}} \leq 1}(w) &=
\begin{cases}
\displaystyle\int_{\norm{\bm{c}}^2-2\sqrt{w}\norm{\bm{c}} +w}^{\norm{\bm{c}}^2+2\sqrt{w}\norm{\bm{c}} +w} h(\sqrt{w}, t)~dt &\text{if $0 \leq w \leq (\norm{\bm{c}} - 1)^2$} \\
\displaystyle\int_{\norm{\bm{c}}^2 -2\sqrt{w}\norm{\bm{c}}+w}^1 h(\sqrt{w}, t)~dt &\text{if $(\norm{\bm{c}}-1)^2 < w \leq \norm{\bm{c}}^2$} \\
    \displaystyle\int_{w-\norm{\bm{c}}^2}^1 h(\sqrt{w}, t)~dt &\text{if $\norm{\bm{c}}^2 < w \leq \norm{\bm{c}}^2 + 1$} \\
    0 &\text{otherwise}
\end{cases} \\
f_{W_2}^{\norm{\bm{c}} > 1}(w) &=
\begin{cases}
\displaystyle\int_{\norm{\bm{c}}^2 -2\sqrt{w}\norm{\bm{c}}+w}^1 h(\sqrt{w}, t)~dt &\text{if $(\norm{\bm{c}}-1)^2 < w \leq \norm{\bm{c}}^2$} \\
    \displaystyle\int_{w-\norm{\bm{c}}^2}^1 h(\sqrt{w}, t)~dt &\text{if $\norm{\bm{c}}^2 < w \leq \norm{\bm{c}}^2 + 1$} \\
    0 &\text{otherwise}
\end{cases}.
\end{align*}
Putting these pieces together, we can obtain the density of $W$, and after applying change of variables, we find that the density of the random variable $Z = \sqrt{W} = \norm{\bm u_{1:k} - \norm{\bm{c}} \bm{e}_1}$ is
\begin{align*}
f_Z(z) &= \ind{\norm{\bm{c}} \leq 1}\left\{zg(z)\ind{z \in [0, \norm{\bm{c}}+1]}\right\} \\
&+ \ind{\norm{\bm{c}} > 1}\left\{zg(z)\ind{z \in [\norm{\bm{c}}-1, \norm{\bm{c}}+1]}\right\}\\
g(z) &=
\begin{cases}
    \displaystyle\int_{\norm{\bm{c}}^2-2z\norm{\bm{c}} +z^2}^{\norm{\bm{c}}^2+2z\norm{\bm{c}} +z^2} h(z, t)~dt &\text{if $0 \leq z \leq \left|\norm{\bm{c}} - 1\right|$} \\
    \displaystyle\int_{\norm{\bm{c}}^2-2z\norm{\bm{c}} +z^2}^1 h(z, t)~dt &\text{if $\left|\norm{\bm{c}}-1\right| < z \leq \norm{\bm{c}}+1$}
\end{cases}
\end{align*}
\end{proof}

\subsection{\sectionmath{Density of $\norm{\bm{u}_{1:k} - \bm{\nu}} \mid \suffstat$ under $\hyp$ in the special case $k = 1$}{Density of recentered test statistic conditional on S under H in the special case k = 1}}
\label{sec:special_case}
\begin{proof}
The approach taken in the proof of Proposition \ref{thm:density} does not apply to the special case $k = 1$, so this section addresses this case. For any $z \in \R$, with some rewriting, we have that
\begin{align}
\label{eq:interm_step}
P(|u_1 -c| \geq z) = 1-P(u_1 \leq z+c) + P(u_1 \leq -z+c).
\end{align}
By similar reasoning from the proof of Proposition \ref{thm:density}, $u_1 \sim S\sqrt{B}$ where $S$ is a random sign and $B \sim \Beta\left(\frac{1}{2}, \frac{n-d}{2}\right)$ with $S \perp B$. It follows that for $a \in \R$, 
\begin{align*}
    P(u_1 \leq a) &= \frac{1}{2}\left(P(\sqrt{B} \leq a) + P(-\sqrt{B} \leq a)\right) \\
    &= \frac{1}{2}\left(1 + \sign(a)F_{B}(a^2)\right), 
\end{align*}
where $F_{B}(\cdot)$ denotes the CDF of $B$. Taking $a = c\pm z$ and substituting into \eqref{eq:interm_step}, after some simplification, we find that $$P(|u_1 - c| \geq z) = 1 \mp \frac{1}{2}\sign(c \pm z)F_{B}\left((c \pm z)^2\right).$$
\end{proof}

\subsection{Oracle test equivalence}
\label{sec:oracle_equiv}
Denote the cumulative distribution function (CDF) for the distribution of $O_m$ from \eqref{eq:oracle_stat} conditional on $\suffstat$ under $\hyp$ evaluated at some $x \in \mathbb{R}$ by $$F_{O_m}(x \mid \suffstat) := P_{\hyp}(O_m \leq x \mid \suffstat).$$
We can then use $\bar{F}_{O_m}(O_m \mid \suffstat)$ as a valid $p$-value for $H_{1:k}$ that rejects for large $O_m$ conditional on $\suffstat$, where $$\bar{F}_{O_m}(x \mid \suffstat) := 1-F_{O_m}(x \mid \suffstat).$$
The following result formalizes how we can think of the oracle test as being equivalent in some sense to the test that uses the p-value $\bar{F}_{O_m}(O_m \mid \suffstat)$ in the limit as $m \to \infty$.
\begin{proposition}
\label{prop:equiv}
Let $T$ denote the one-sided $t$-test statistic for $H_1$ in model \eqref{eq:linear_model_oracle}, and let $F_T(\cdot)$ denote the CDF of the $t$-distribution with $n-d+k-1$ degrees of freedom so that $\bar{F}_{T}(T):=1-F_{T}(T)$ represents the p-value of the test. Also let $\bar{F}_{O_m}(O_m \mid \suffstat)$ denote the p-value associated with the conditional test for $\hyp$ in model \eqref{eq:linear_model} that rejects for large $O_m$. Then, $$\bar{F}_{O_m}(O_m \mid \suffstat)\overset{a.s.}{\to} \bar{F}_{T}(T) \text{ as $m \to \infty$}.$$
\end{proposition}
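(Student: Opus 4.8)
The plan is to send the Monte-Carlo count to infinity inside $\bar{F}_{O_m}(O_m\mid\suffstat)$, show the limit is a simple conditional tail probability, and then identify that probability as $\bar{F}_T(T)$ via the conditional reduction of the $F$/$t$-test from Lemma~\ref{lma:cond_F}. Write $\bm{w}:=\bm{V}_{1:k}^T\bm{X}_{1:k}\bm{\beta}_{1:k}/\norm{\bm{\beta}_{1:k}}$; since $\bm{V}_{1:k}^T\bm{X}_{1:k}$ is invertible (cf.\ Lemma~\ref{lma:OLS_unit}) and $\bm{\beta}_{1:k}\neq\bm{0}$ because $\hyp$ is false, $\bm{w}\neq\bm{0}$. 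By Lemma~\ref{lma:decomp}, a resample $\tilde{\bm{y}}\mid\suffstat$ under $\hyp$ gives a fresh $\tilde{\bm{u}}\sim\Unif(\sphere^{n-d+k-1})$, so $\bar{F}_{O_m}(O_m\mid\suffstat)=P(\norm{\tilde{\bm{u}}_{1:k}+m\bm{w}}\geq\norm{\bm{u}_{1:k}+m\bm{w}}\mid\suffstat)$ with $\bm{u}$ fixed given the data. Both sides being nonnegative, squaring, cancelling the common $m^2\norm{\bm{w}}^2$, and dividing by $2m$ gives
\[
\bar{F}_{O_m}(O_m\mid\suffstat)=P\!\left(\tilde{\bm{u}}_{1:k}^T\bm{w}+\tfrac{\norm{\tilde{\bm{u}}_{1:k}}^2}{2m}\ \geq\ \bm{u}_{1:k}^T\bm{w}+\tfrac{\norm{\bm{u}_{1:k}}^2}{2m}\ \middle|\ \suffstat\right).
\]
Since $\norm{\bm{u}}=\norm{\tilde{\bm{u}}}=1$, both correction terms lie in $[0,\tfrac1{2m}]$, so with $a:=\bm{u}_{1:k}^T\bm{w}$ the right side is squeezed between $P(\tilde{\bm{u}}_{1:k}^T\bm{w}\geq a+\tfrac1{2m}\mid\suffstat)$ and $P(\tilde{\bm{u}}_{1:k}^T\bm{w}\geq a-\tfrac1{2m}\mid\suffstat)$. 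Because $\bm{w}\neq\bm{0}$, $\tilde{\bm{u}}_{1:k}^T\bm{w}$ equals $\norm{\bm{w}}$ times a single coordinate of $\tilde{\bm{u}}$ in a rotated frame, hence has a continuous distribution whenever the oracle's $t$-test has positive degrees of freedom ($n-d+k-1\geq1$); its CDF is then continuous at $a$, both bounds converge to $P(\tilde{\bm{u}}_{1:k}^T\bm{w}\geq a\mid\suffstat)$, and $\bar{F}_{O_m}(O_m\mid\suffstat)\to P(\tilde{\bm{u}}_{1:k}^T\bm{w}\geq\bm{u}_{1:k}^T\bm{w}\mid\suffstat)$.

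To identify this limit with $\bar{F}_T(T)$, I would apply the computations behind Lemma~\ref{lma:cond_F} (Lemmas~\ref{lma:rewrite_OLS}, \ref{lma:rewriteF}, \ref{lma:OLS_unit}) to model~\eqref{eq:linear_model_oracle} with single tested covariate $\bm{x}_1^*:=\bm{X}_{1:k}\bm{\beta}_{1:k}/\norm{\bm{\beta}_{1:k}}$ and nuisance block $\bm{X}_{-1:k}$; since $\col(\bm{X}_{-1:k})$ is unchanged, $\suffstat$ is again the relevant sufficient statistic under $\tilde{H}_0=\hyp$. Writing $\bm{v}_1^*:=(\bm{I}-\bm{P}_{-1:k})\bm{x}_1^*/\norm{(\bm{I}-\bm{P}_{-1:k})\bm{x}_1^*}=\bm{V}_{1:k}\bm{w}/\norm{\bm{w}}$ and $u_1^*:=\bm{v}_1^{*T}\bm{V}\bm{u}=\bm{u}_{1:k}^T\bm{w}/\norm{\bm{w}}$, Frisch--Waugh and Schur-complement identities give $\bm{v}_1^{*T}\bm{x}_1^*=\norm{(\bm{I}-\bm{P}_{-1:k})\bm{x}_1^*}=\norm{\bm{w}}>0$, $\hat{\gamma}_1=\sigmahat\,u_1^*/\norm{\bm{w}}$, and residual sum of squares $\sigmahat^2(1-(u_1^*)^2)$, so the one-sided $t$-statistic is
\[
T=\frac{\sqrt{n-d+k-1}\,u_1^*}{\sqrt{1-(u_1^*)^2}}=:\psi(u_1^*),
\]
a strictly increasing bijection $\psi:(-1,1)\to\R$; one-sidedness matters precisely here, since $\bm{v}_1^{*T}\bm{x}_1^*>0$ forces $\hat{\gamma}_1$ and $u_1^*$ to share a sign, so $T$ tracks large $u_1^*$ rather than large $|u_1^*|$. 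Under $\hyp$, $T\sim t_{n-d+k-1}$, and conditionally on $\suffstat$ the variable $T=\psi(\langle\bm{V}^T\bm{v}_1^*,\bm{u}\rangle)$ still has this law by Lemma~\ref{lma:decomp}; hence $\bar{F}_T(T)=1-F_T(T)=P(\psi(\tilde{u}_1^*)\geq\psi(u_1^*)\mid\suffstat)=P(\tilde{u}_1^*\geq u_1^*\mid\suffstat)$, which after multiplying the inequality by $\norm{\bm{w}}>0$ is exactly the limit above. This yields $\bar{F}_{O_m}(O_m\mid\suffstat)\to\bar{F}_T(T)$; the ``a.s.''\ qualifier only discards the probability-zero event $\{|u_1^*|=1\}$ where $\psi$, hence $T$, is undefined.

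The limiting step is a routine squeeze, so the main obstacle will be the second part: carefully redoing the conditional representation of the oracle's \emph{one-sided} $t$-test and keeping the sign bookkeeping straight, so that ``reject for large $T$'' corresponds to ``reject for large $\bm{u}_{1:k}^T\bm{w}$'' and not to the two-sided $\norm{\bm{u}_{1:k}}$-type statistic of the $F$-test, and then verifying the explicit identity $T=\psi(u_1^*)$. Once that is in hand, matching the limit to $\bar{F}_T(T)$ follows at once from monotonicity of $\psi$ and the pivotality/independence facts already used for the $F$-test.
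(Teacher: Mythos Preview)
Your proposal is correct and follows essentially the same two-step strategy as the paper: (i) expand $O_m^2$, drop the data-free $m^2\norm{\bm w}^2$ term, and pass to the limit to obtain the conditional tail probability $P(\tilde{\bm u}_{1:k}^T\bm w\ge \bm u_{1:k}^T\bm w\mid\suffstat)$; (ii) identify this with the one-sided $t$-test $p$-value by showing the oracle's $t$-statistic is a strictly increasing function of $u_1^*=\bm u_{1:k}^T\bm w/\norm{\bm w}$ and that it is pivotal given $\suffstat$. Two minor execution differences are worth noting: for (i) the paper appeals to ``nested events'' and continuity of probability, whereas your squeeze via $0\le\norm{\bm u_{1:k}}^2,\norm{\tilde{\bm u}_{1:k}}^2\le 1$ together with continuity of the CDF of $\tilde{\bm u}_{1:k}^T\bm w$ is arguably cleaner (the events in the paper are not literally nested since $D=\tfrac12(\norm{\bm u_{1:k}}^2-\norm{\tilde{\bm u}_{1:k}}^2)$ can take either sign, though the conclusion is of course correct); for (ii) the paper imports the monotone relation $T\leftrightarrow u_1$ and the ancillarity of $T$ from \cite{Sengupta2024}, while you derive the explicit bijection $T=\sqrt{n-d+k-1}\,u_1^*/\sqrt{1-(u_1^*)^2}$ directly from Frisch--Waugh/Schur identities, making the argument self-contained.
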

\begin{proof}
We start by recharacterizing the one-sided $t$-test for $H_1: \gamma_1 = 0$ in the model $(\bm{y}, \tilde{\bm{X}})$ specified in \eqref{eq:linear_model_oracle}. Note that the statistic $\suffstat$ is also the minimal sufficient statistic for this model under the null $H_1: \gamma_1 = 0$. In the proof of Lemma 2.2 of \cite{Sengupta2024}, it is shown that the one-sided $t$-test statistic $T$ is a monotone increasing function of $u_1$ conditional on $\suffstat$. So, the conditional one-sided $t$-test is equivalent to the conditional test using $u_1$. Since $u_1 = \frac{1}{\sigmahat} \frac{\tilde{\bm{X}}_1^T(\bm{I} - \bm{P}_{-1:k})}{\norm{(\bm{I} - \bm{P}_{-1:k})\tilde{\bm{X}}_1}}\bm{y}$, however, the conditional one-sided test using $u_1$ is the same as the conditional one-sided test using $\tilde{\bm{X}}_1^T(\bm{I} - \bm{P}_{-1:k}) \bm{y}$. Moreover, since $T$ is independent of $\bm{S}^{(1:k)}$, which follows from theory on ancillary statistics and is also proved in Appendix D.4 of \cite{Sengupta2024}, the standard one-sided $t$-test is equivalent to the conditional one-sided $t$-test and hence the conditional test using $\tilde{\bm{X}}_1^T(\bm{I} - \bm{P}_{-1:k}) \bm{y}$.

The square of the test statistic $O_m$ defined in \eqref{eq:oracle_stat} can be written as
\begin{align*}
\norm{\bm{u}_{1:k}}^2 + 2m\left(\frac{\bm{\beta}_{1:k}^T}{\norm{\bm{\beta}_{1:k}}}\right)\left(\bm{X}_{1:k}^T\bm{V}_{1:k}\right)\bm{u}_{1:k} + m^2 \norm{\left(\bm{V}_{1:k}^T\bm{X}_{1:k}\right)\left(\frac{\bm{\beta}_{1:k}}{\norm{\bm{\beta}_{1:k}}}\right)}^2
\end{align*}
The third term does not depend on the data $\bm{y}$, so the test that rejects for large values of $O_m$ conditional on $\suffstat$ is equivalent to the conditional test that rejects for large values of the test statistic
\begin{align*}
\left(\bm{u}_{1:k}^T + 2m\left(\frac{\bm{\beta}_{1:k}^T}{\norm{\bm{\beta}_{1:k}}}\right)\left(\bm{X}_{1:k}^T\bm{V}_{1:k}\right)\right)\bm{u}_{1:k}.
\end{align*}
Recall $\bm{u}_{1:k} = \frac{1}{\sigmahat}\bm{V}_{1:k}^T\bm{y}$ and $\bm{X}_{1:k}^T\bm{V}_{1:k}\bm{V}_{1:k}^T = \bm{X}_{1:k}^T\bm{V}\bm{V}^T = \bm{X}_{1:k}^T(\bm{I} - \bm{P}_{-1:k})$. The realized p-value for the test that rejects for large values of $O_m$ conditional on $\suffstat$ is
\begin{align*}
&P_{\hyp}\left(\left(\tilde{\bm{u}}_{1:k}^T + 2m\left(\frac{\bm{\beta}_{1:k}^T}{\norm{\bm{\beta}_{1:k}}}\right)\left(\bm{X}_{1:k}^T\bm{V}_{1:k}\right)\right)\tilde{\bm{u}}_{1:k} \geq \left(\bm{u}_{1:k}^T + 2m\left(\frac{\bm{\beta}_{1:k}^T}{\norm{\bm{\beta}_{1:k}}}\right)\left(\bm{X}_{1:k}^T\bm{V}_{1:k}\right)\right)\bm{u}_{1:k}\biggr\vert \suffstat\right) \\
&= P_{\hyp}\left(\frac{\bm{\beta}_{1:k}^T}{\norm{\bm{\beta}_{1:k}}}\left(\bm{X}_{1:k}^T\bm{V}_{1:k}\right)\tilde{\bm{u}}_{1:k} \geq \frac{\bm{\beta}_{1:k}^T}{\norm{\bm{\beta}_{1:k}}}\left(\bm{X}_{1:k}^T\bm{V}_{1:k}\right)\bm{u}_{1:k} + \frac{D}{m}\biggr\vert \suffstat \right) \\
&= P_{\hyp}\left(\frac{1}{\sigmahat}\frac{\bm{\beta}_{1:k}^T}{\norm{\bm{\beta}_{1:k}}}\bm{X}_{1:k}^T(\bm{I}-\bm{P}_{-1:k})\tilde{\bm{y}} \geq \frac{1}{\sigmahat}\frac{\bm{\beta}_{1:k}^T}{\norm{\bm{\beta}_{1:k}}}\bm{X}_{1:k}^T(\bm{I}-\bm{P}_{-1:k})\bm{y} + \frac{D}{m}\biggr\vert \suffstat \right) \text{, where} \\
D &= \frac{1}{2}\left(\norm{\bm{u}_{1:k}}^2 - \norm{\tilde{\bm{u}}_{1:k}}^2\right).
\end{align*}
Letting $A_m$ denote the event corresponding to the last probability expression above, we see that the sequence of events $A_1, A_2, \ldots$ are nested. So, by the continuity of probability, as $m \to \infty$ the probability above converges to 
\begin{align*}
P_{\hyp}\left(\frac{\bm{\beta}_{1:k}^T}{\norm{\bm{\beta}_{1:k}}}\bm{X}_{1:k}^T(\bm{I}-\bm{P}_{-1:k})\tilde{\bm{y}} \geq \frac{\bm{\beta}_{1:k}^T}{\norm{\bm{\beta}_{1:k}}}\bm{X}_{1:k}^T(\bm{I}-\bm{P}_{-1:k})\bm{y}\biggr\vert \suffstat\right).
\end{align*}
Since $\tilde{\bm{X}}_1 = \frac{\bm{X}_{1:k}\bm{\beta}_{1:k}}{\|\bm{\beta}_{1:k}\|}$, this is the p-value associated with the conditional test using $\tilde{\bm{X}}_1^T(\bm{I} - \bm{P}_{-1:k}) \bm{y}$ that we previously justified is equivalent to the one-sided $t$-test for $H_1:\gamma_1 = 0$ in the model $(\bm{y}, \tilde{\bm{X}})$ from \eqref{eq:linear_model_oracle}. Using $\bar{F}_{O_m}(\cdot \mid \suffstat)$ and $\bar{F}_{T}(\cdot)$ as defined in the statement of Proposition \ref{prop:equiv}, it follows that $$\bar{F}_{O_m}(O_m \mid \suffstat)\overset{a.s.}{\to} \bar{F}_{T}(T) \text{ as $m \to \infty$}.$$
\end{proof}

\section{\sectionmath{Choice of $\lambda$ and $\bm b^*$}{Choice of l and b}}
\label{sec:penalty}
In Section \ref{sec:construction}, we recommended choosing the tuning parameter $\lambda$ and $\bm b^*$ for the $L$-test by cross-validating on $(\tilde{\bm y}, \bm X)$, where $\tilde{\bm y}$ is a draw from the distribution $\bm y \mid \suffstat$ under $\hyp$. Taking a similar approach to \cite{Sengupta2024}, to motivate this choice, in the first section below, we consider two conventions used to pick the optimal $\lambda$ in CV (with $\bm b^*$ then taken to be the group LASSO estimate at that chosen $\lambda$) and conclude that the min rule delivers the best results. In the following section, we quantify the randomness that selecting $\lambda$ and $\bm b^*$ in this manner introduces into the $L$-test p-value. We also compare the $L$-test that performs CV on $(\tilde{\bm y}, \bm X)$ to the $L$-test that performs CV on $(\yhat, \bm{X})$, which would introduce no variability into the test. While cross-validating on the full dataset $(\bm y, \bm X)$ would invalidate the p-value, we still consider how this invalid test would do since it serves as a good benchmark for the performance we might hope to achieve with our valid $L$-test.

\subsection{Min rule vs. 1se rule}
In this section, we compare two conventions used to select the optimal hyperparameter in CV: the min rule, which selects the $\lambda$ corresponding to the minimum cross-validated error, and the 1se rule, which selects the largest $\lambda$ resulting in a cross-validated error within one standard deviation of the minimum cross-validated error. Figure \ref{fig:rules} compares the performance of the $L$-test that performs CV on $(\tilde{\bm y}, \bm X)$ using these two rules and shows that the min rule performs noticeably better in all settings considered.
\begin{figure}[ht]
    \centering
    \begin{subfigure}[t]{0.35\linewidth}
        \centering
        \includegraphics[width=\linewidth]{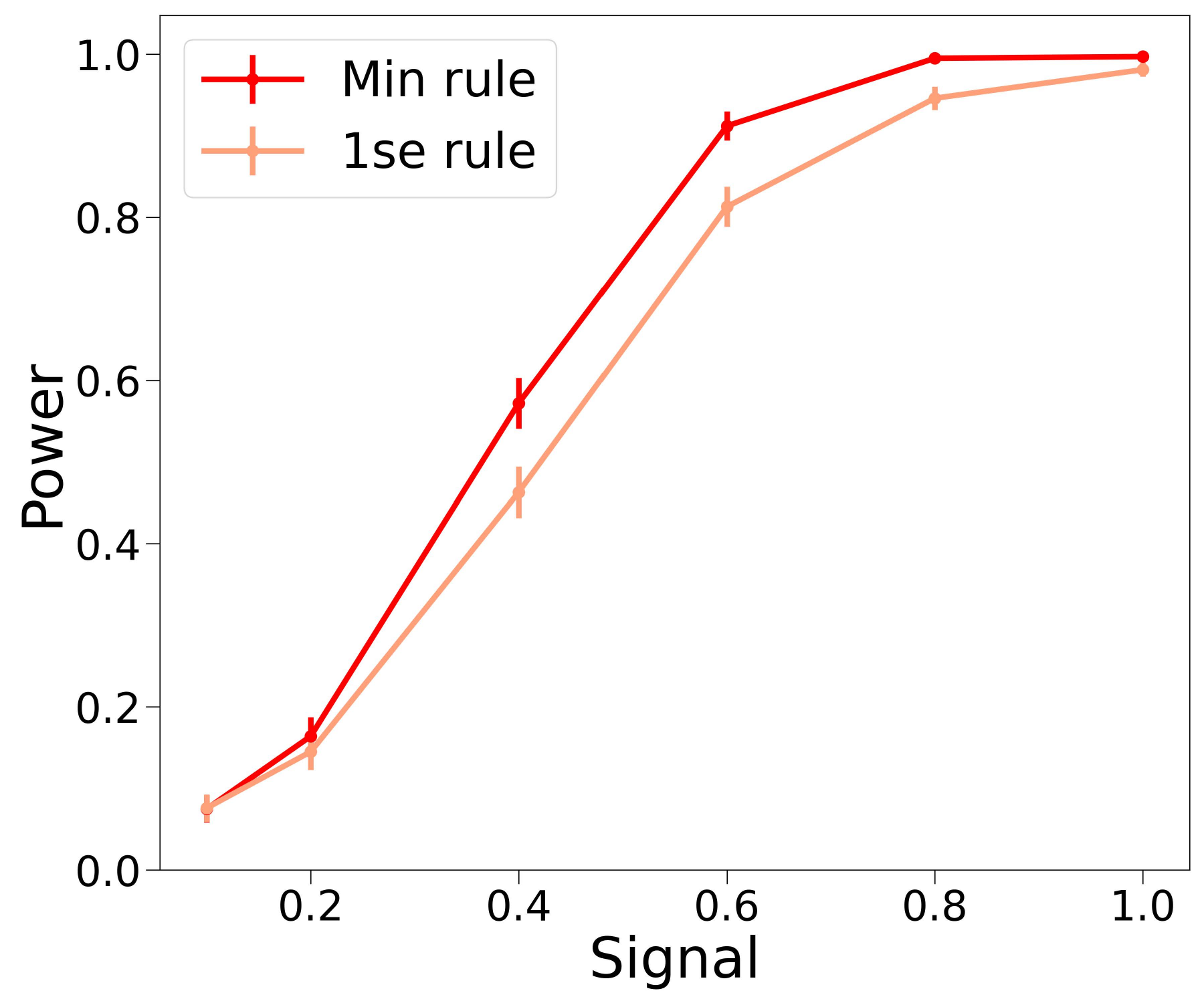}
    \end{subfigure}
    \hspace{0.2cm}
    \begin{subfigure}[t]{0.35\linewidth}
        \centering
        \includegraphics[width=\linewidth]{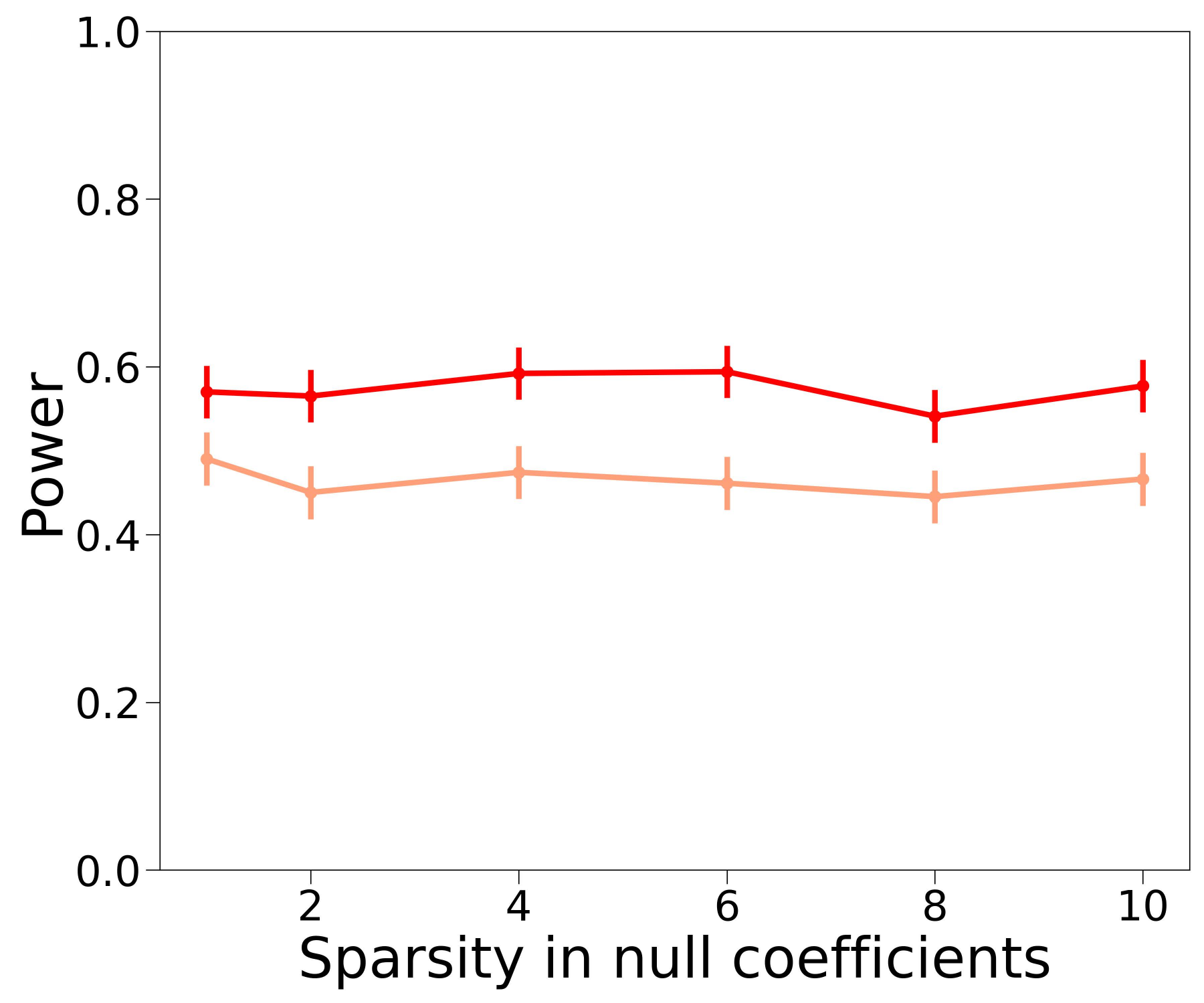}
    \end{subfigure} \\
    \begin{subfigure}[t]{0.35\linewidth}
        \centering
        \includegraphics[width=\linewidth]{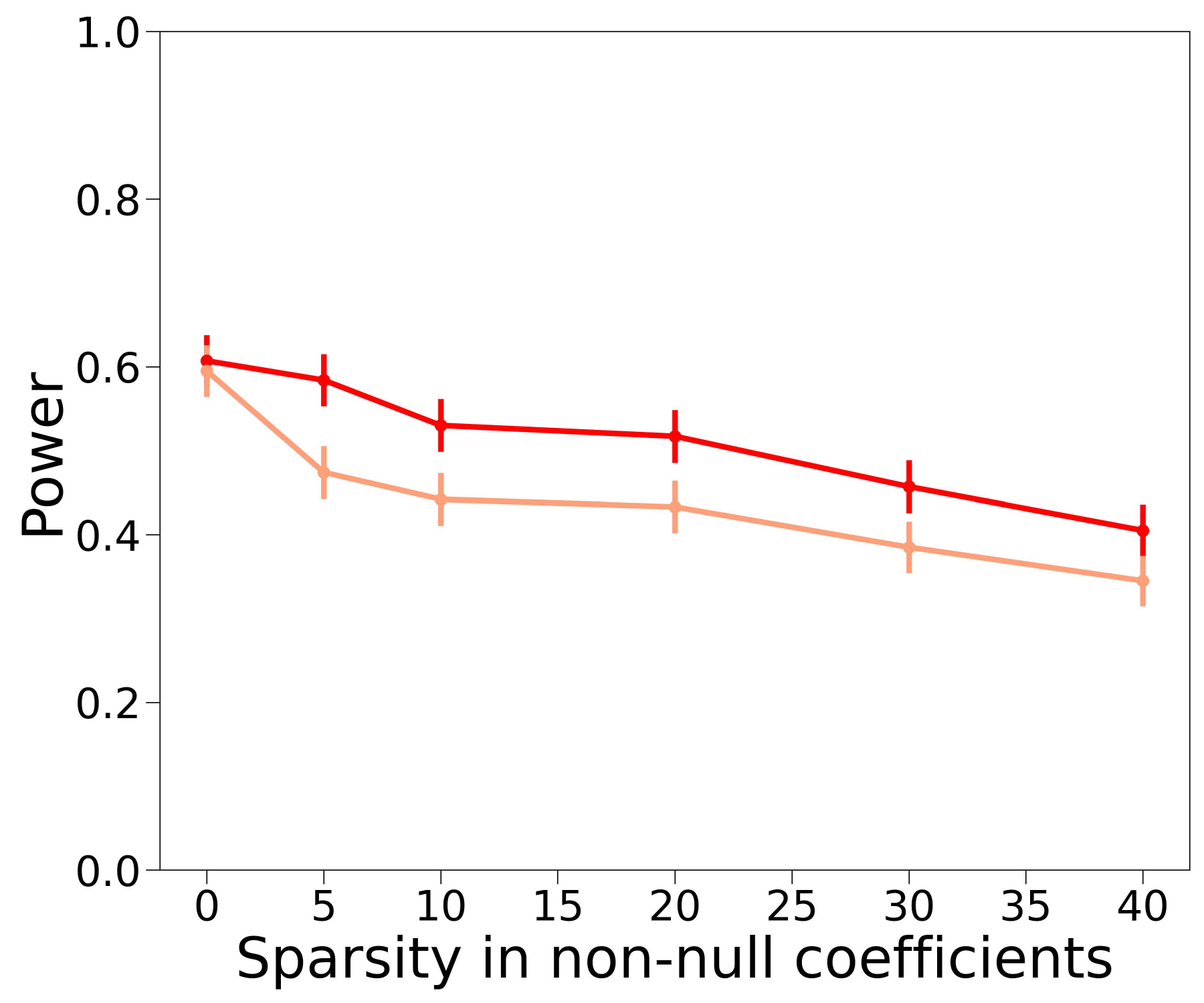}
    \end{subfigure}
    \hspace{0.2cm}
    \begin{subfigure}[t]{0.35\linewidth}
        \centering
        \includegraphics[width=\linewidth]{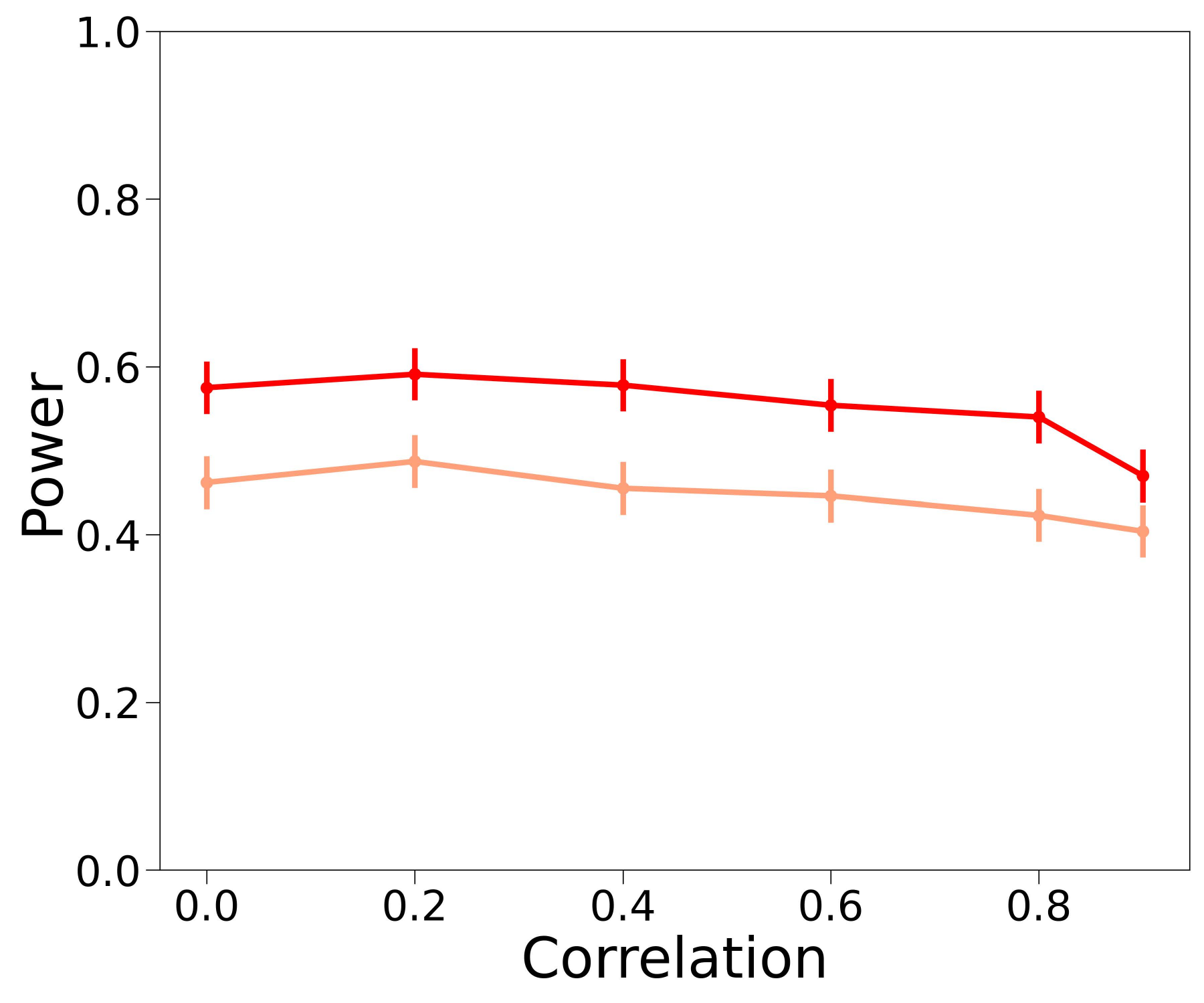}
    \end{subfigure}
    \caption{Min vs. 1se rule $L$-tests. Panel settings are the same as those in Figure \ref{fig:powers_standard}.}
    \label{fig:rules}
\end{figure}

\subsection{Randomness in $L$-test}
To assess the randomness due to cross-validating on $(\tilde{\bm y}, \bm X)$, we run $m = 100$ replications for each of the first five signals in the top-left panel of Figure \ref{fig:powers_standard}, denoting the $i$-th replication's dataset by $(\bm y^{(i)}, \bm X^{(i)})$. For each replication, we draw 100 samples of $\tilde{\bm y}$, yielding p-values $p_{ij}$ corresponding to the $j$-th sample for dataset $i$.

The overall standard deviation of the p-values given by $\sqrt{\Var(p_{i, j})}$ and the standard deviation conditioned on a particular dataset given by $\sqrt{\E(\Var(p_{ij} \mid (\bm{y}^{(i)}, \bm X^{(i)})))}$ are estimated using 
\begin{align*}
\sqrt{\frac{1}{m^2-1}\sum_{i, j}^{m} (p_{ij} - \bar p)^2} \quad \text{and} \quad  \sqrt{\frac{1}{m(m-1)}\sum_{i, j=1}^m (p_{ij} - \bar{p}_i)^2},
\end{align*}
    respectively, where $\bar p = \frac{1}{m^2} \sum_{i, j = 1}^{m} p_{ij}$ and $\bar p_i = \frac{1}{m}\sum_{j = 1}^m p_{ij}$. This entire procedure is repeated 200 times and the means and associated standard errors of the overall and within standard deviation estimates are plotted in Figure \ref{fig:var_exp}. We conduct this experiment for both the $L$-test and MC-free test from Section \ref{sec:precise_p_values}, as the small p-values in the highest signal setting would typically be computed using the MC-free test per our recommendation.

As shown in Figure \ref{fig:var_exp}, sampling $\tilde{\bm{y}}$ accounts for at most 33\% and 16\% of the overall standard deviation (11\% and 2.7\% of the overall variance) for the $L$-test and MC-free test, respectively. This variance contribution comes down even further to less than 3\% and 1\% for the $L$-test and MC-free test, respectively, when we take a mean of 5 repeated $\lambda$'s and $\bm b^*$'s. Repeating the experiment with $k = 1$ reproduced the results of \cite{Sengupta2024} ($\approx 1\%$ of overall variance), so increasing $k$ raised the variability slightly, though it did not worsen further when we tried $k = 20$.
\begin{figure}[ht]
    \centering
    \begin{subfigure}[t]{0.4\linewidth}
        \centering
        \includegraphics[width=\linewidth]{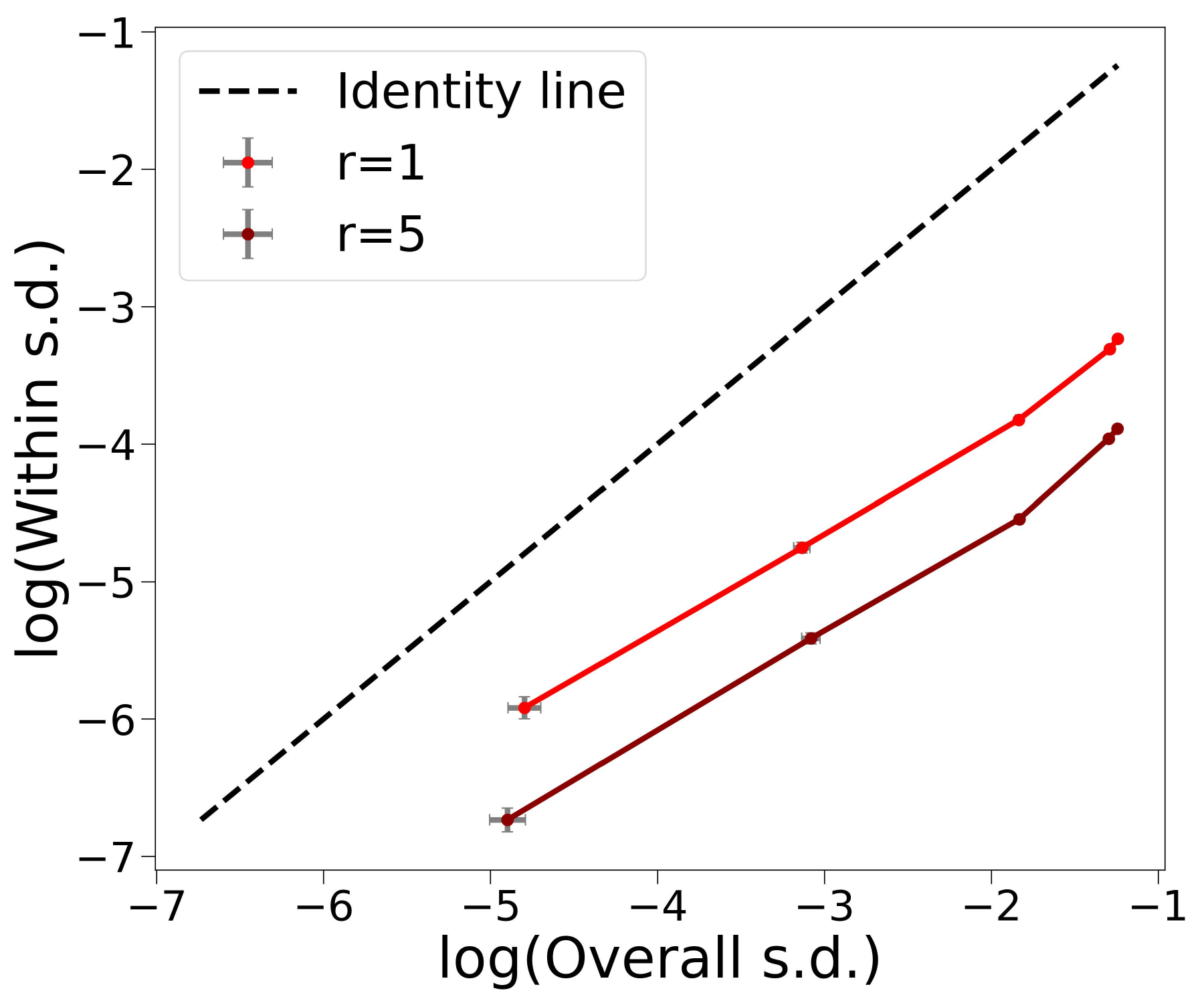}
    \end{subfigure}
    \hspace{0.2cm}
    \begin{subfigure}[t]{0.4\linewidth}
        \centering
        \includegraphics[width=\linewidth]{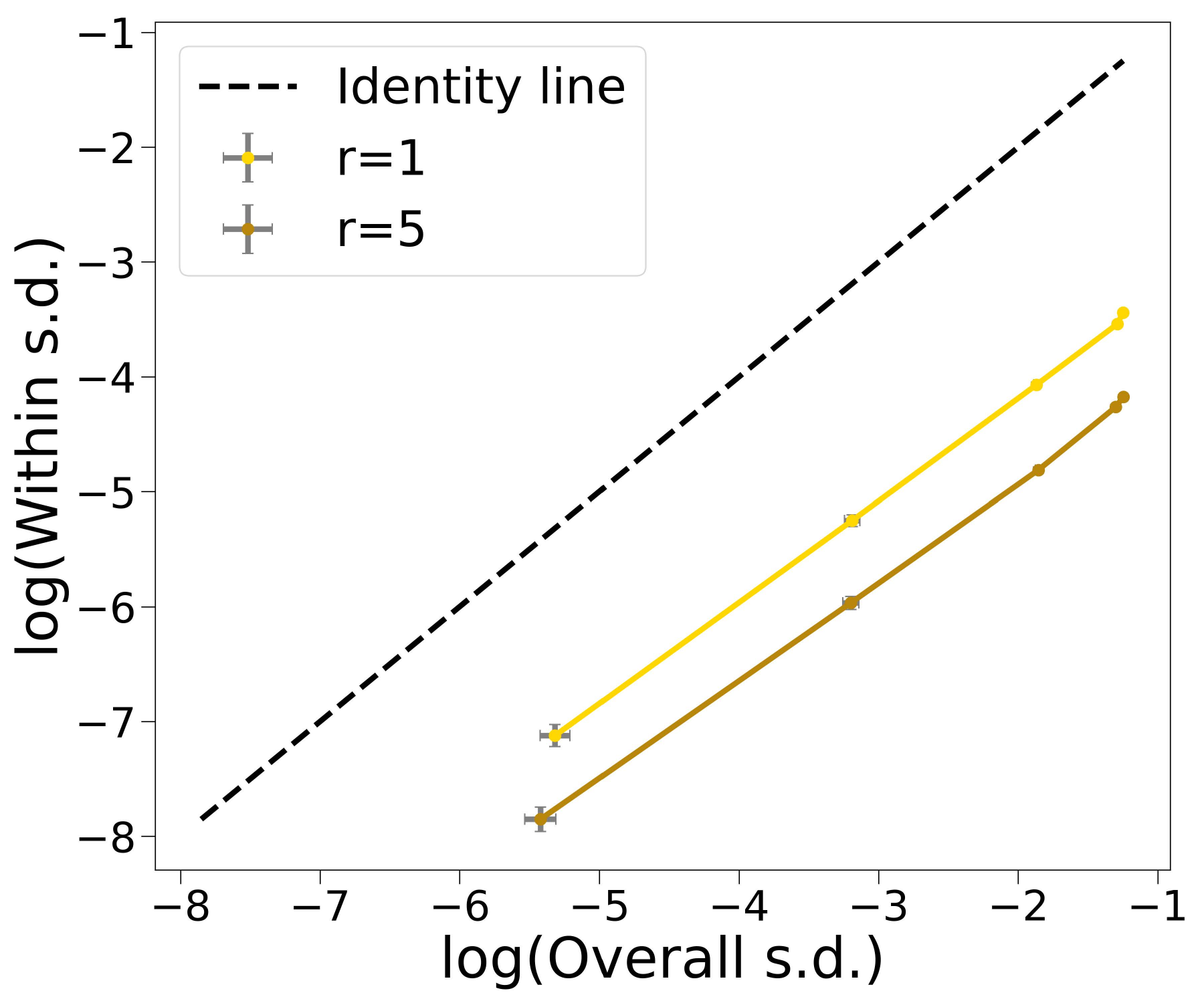}
    \end{subfigure}
    \caption{Plot showing the randomness introduced by conditional sampling of $\tilde{\bm{y}}$ in the $L$-test (left) and MC-free test (right) for each of the first five signal settings in Figure \ref{fig:powers_standard}. For each test, $r$ repetitions of $\lambda$ and $\bm b^*$ are averaged to compute the penalty and point estimate.}
    \label{fig:var_exp}
\end{figure}

Even though this randomness is negligible, in pursuit of an approach that introduces no randomness into the performance of the $L$-test, we considered the $L$-test that does CV on $(\yhat, \bm X)$. In Figure \ref{fig:diff_data}, we compare the performance of the $L$-tests that do CV on $(\tilde{\bm y}, \bm X)$ and $(\yhat, \bm X)$ to the invalid test that does CV on $(\bm y, \bm X)$. We see that the $L$-test that does CV on $(\tilde{\bm y}, \bm X)$ does better than the one that cross-validates on $(\yhat, \bm X)$ and in fact does pretty much just as well as the invalid test, demonstrating that our recommended method does very close to the best we could hope to do.
\begin{figure}[ht]
    \centering
    \begin{subfigure}[t]{0.4\linewidth}
        \centering
        \includegraphics[width=\linewidth]{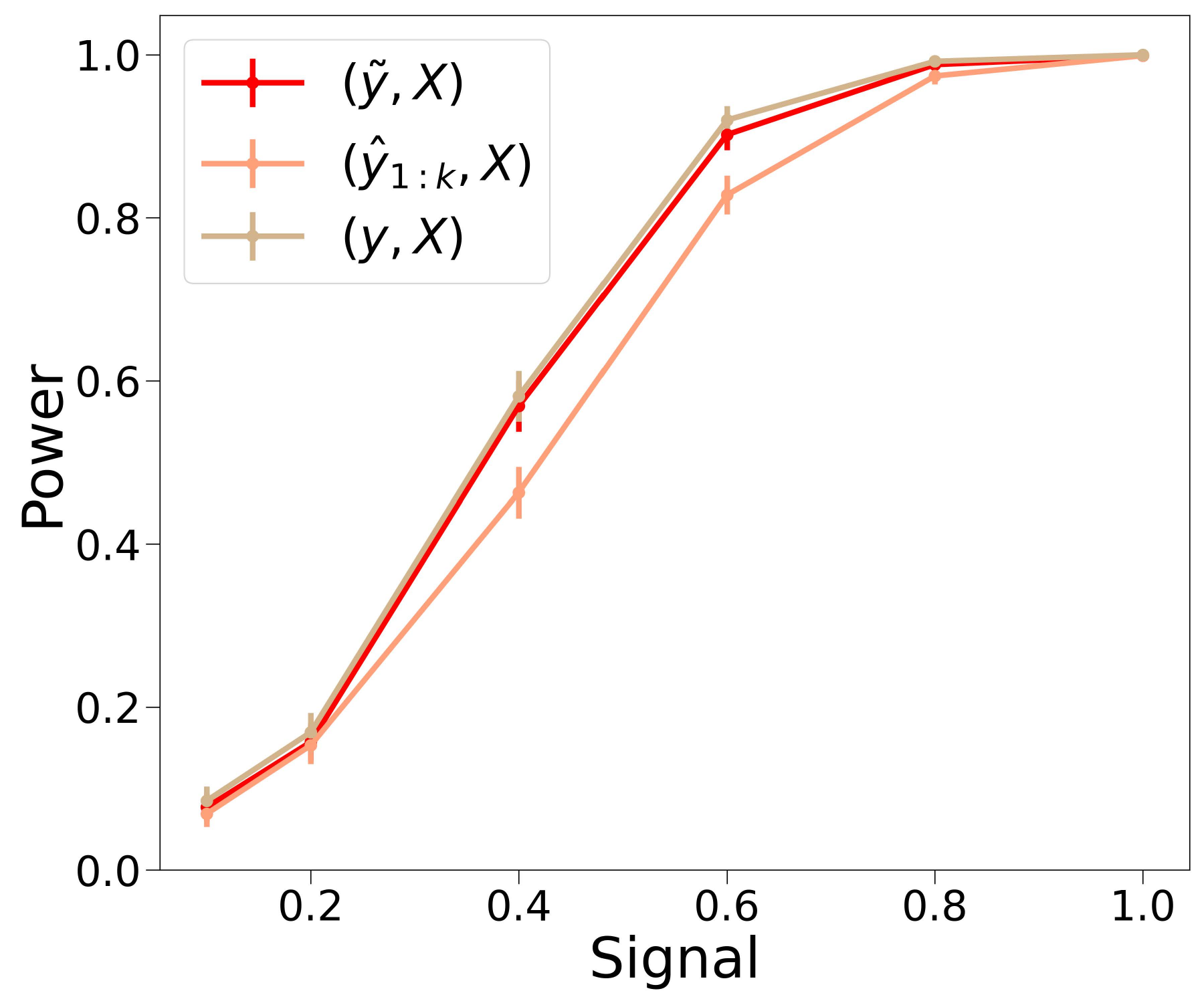}
    \end{subfigure}
    \hspace{0.2cm}
    \begin{subfigure}[t]{0.4\linewidth}
        \centering
        \includegraphics[width=\linewidth]{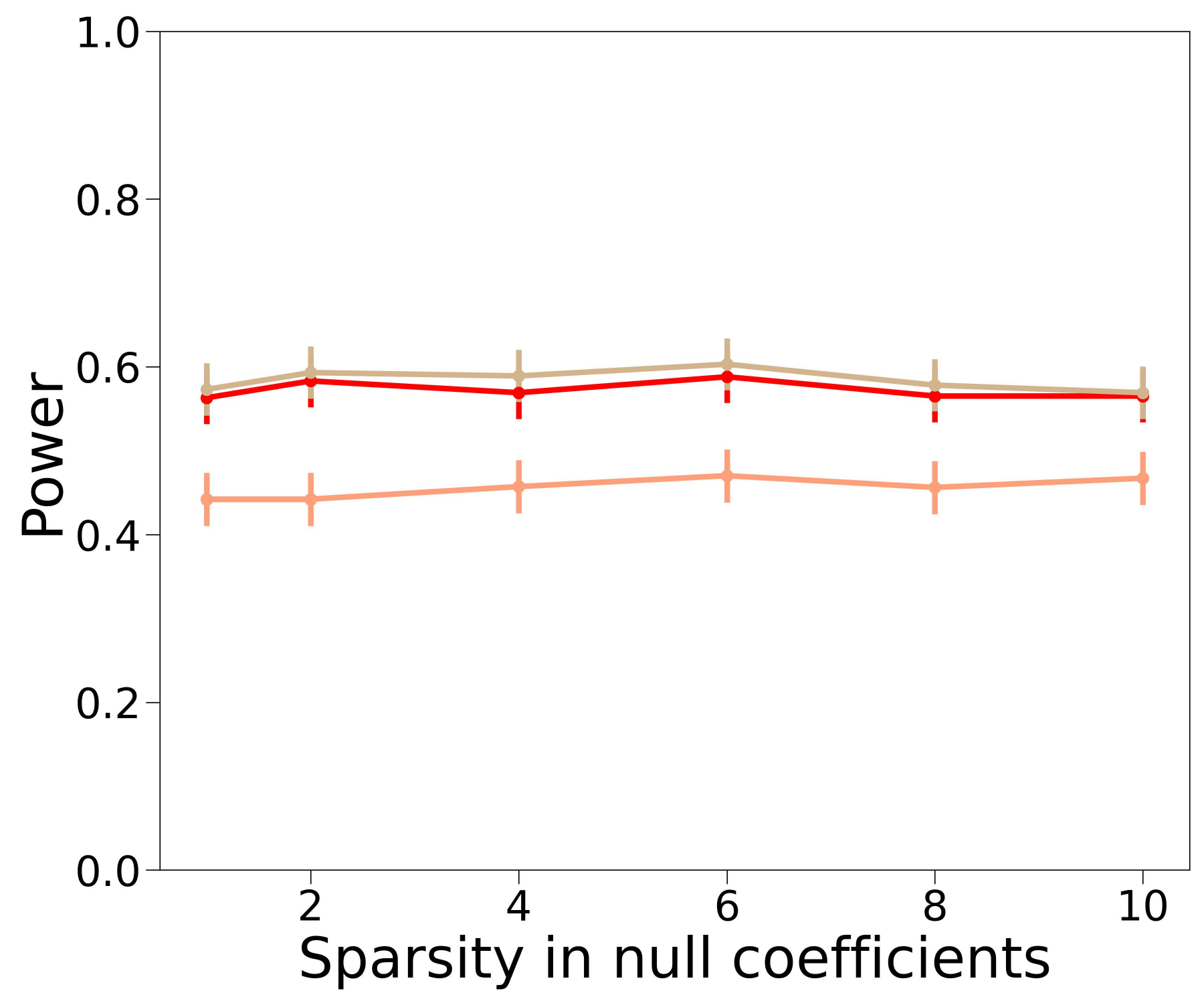}
    \end{subfigure} \\
    \begin{subfigure}[t]{0.4\linewidth}
        \centering
        \includegraphics[width=\linewidth]{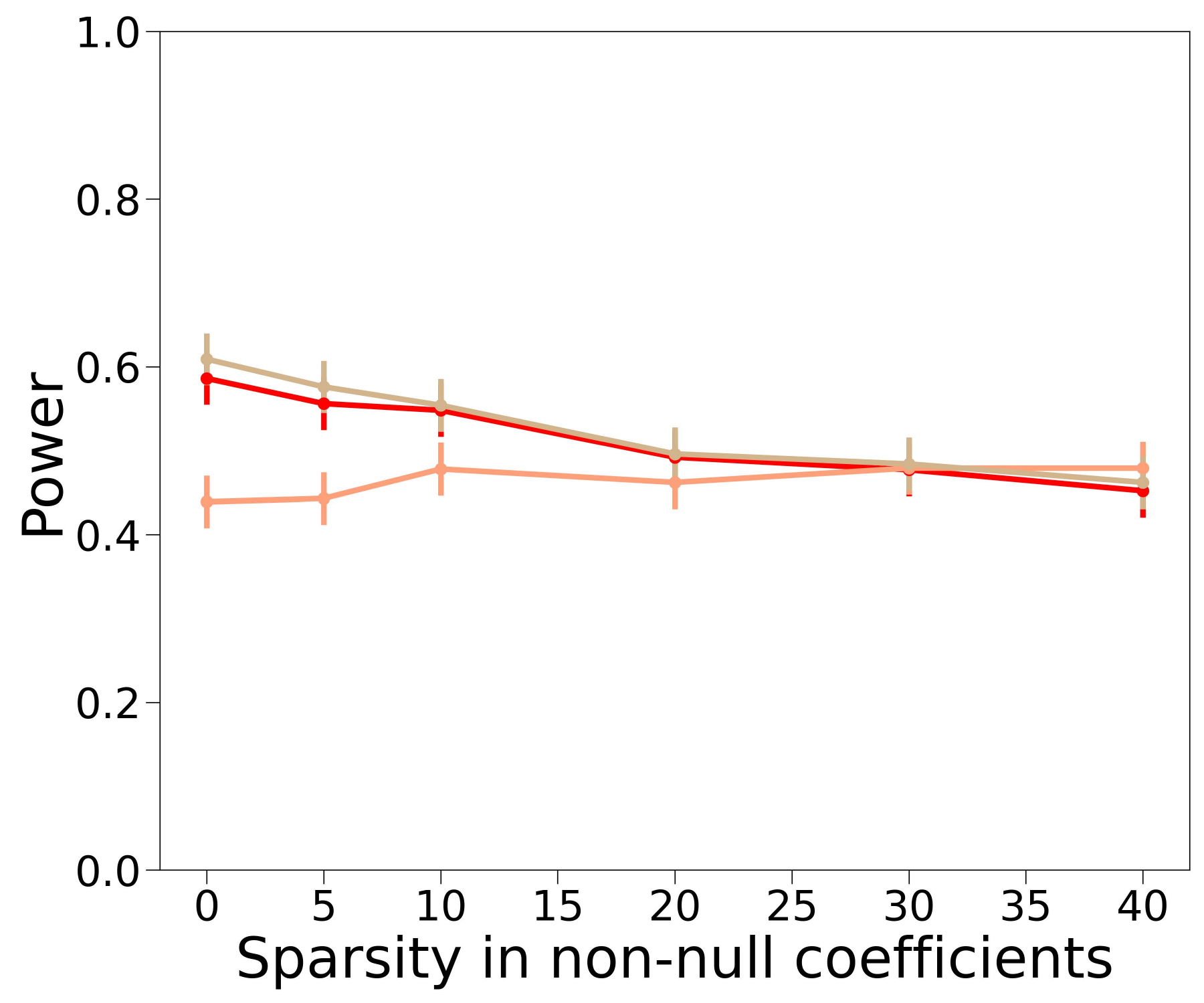}
    \end{subfigure}
    \hspace{0.2cm}
    \begin{subfigure}[t]{0.4\linewidth}
        \centering
        \includegraphics[width=\linewidth]{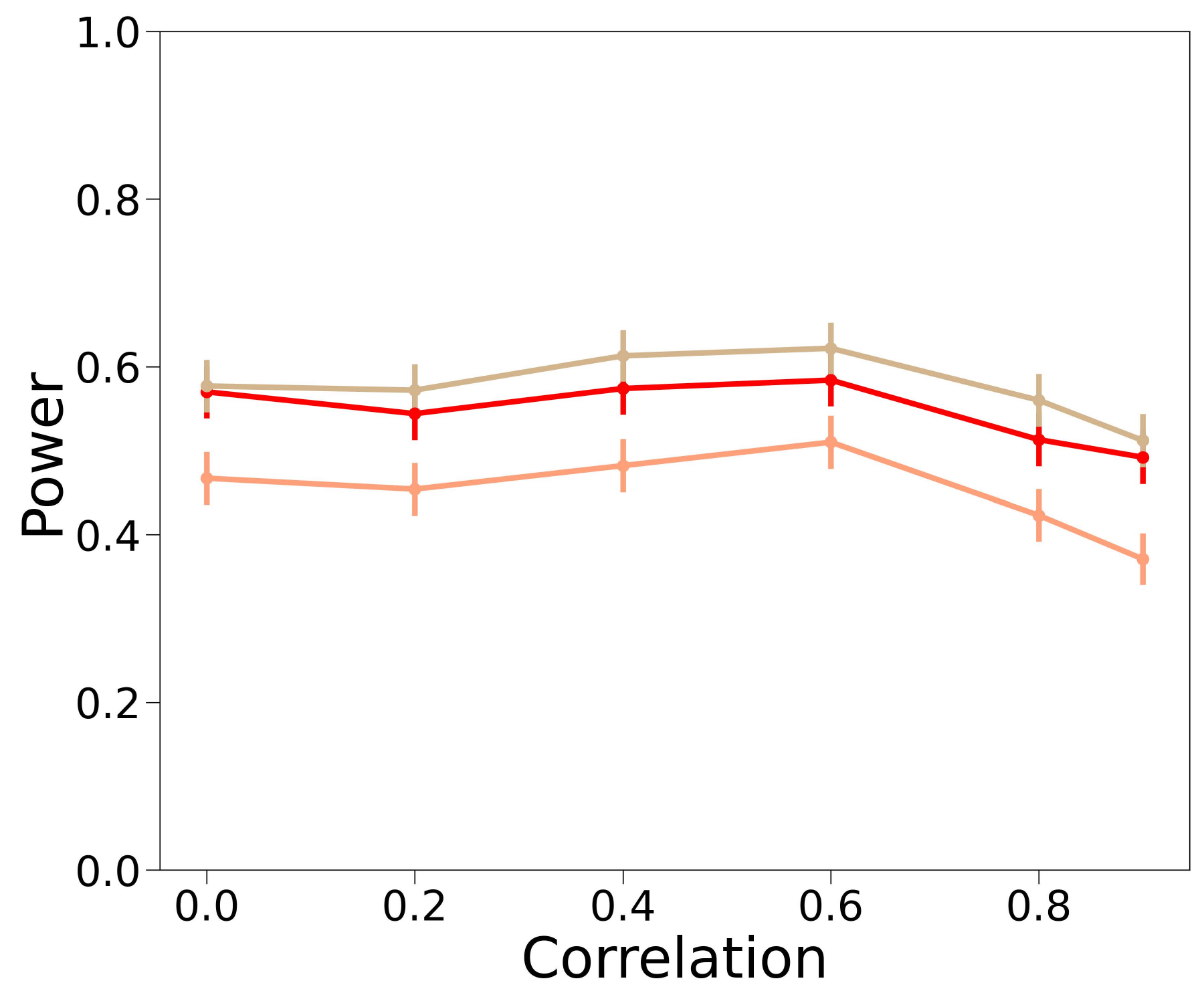}
    \end{subfigure}
    \caption{$L$-tests with $\lambda$ chosen by cross-validating on $(\bm{y}, \bm{X})$, $(\tilde{\bm{y}}, \bm{X})$, and $(\hat{\bm{y}}_{1:k}, \bm{X})$. Panel settings are the same as those in Figure \ref{fig:powers_standard}.}
    \label{fig:diff_data}
\end{figure}

\section{Additional simulation studies}
\label{sec:add_sims}

\subsection{\sectionmath{Explicit construction for $\bm V$}{Explicit construction for V}}
\label{sec:V_construction}
Recall the matrix $\bm{V} \in \R^{n \times (n-d+k)}$ described following Equation \eqref{eq:group_lasso}. We can construct such an orthonormal matrix by taking $\bm{V}_i = \frac{(\bm{I}-\bm{P}_{-1:i})\bm{X}_i}{\|(\bm{I}-\bm{P}_{-1:i})\bm{X}_i\|}$ for $i \in \{1, \ldots, k\}$. Then, letting $\bm X = \bm U \bm \Sigma \tilde{\bm V}^T$ denote the singular value decomposition of $\bm X$, define $\bm V_{(k+1):n-d+k} = \bm U_{(d+1):n}$. It is easy to check that the resulting columns of $\bm V$ are an orthonormal basis that spans a space orthogonal to the column space of $\bm{X}_{-1:k}$. But, since $\bm V$ has $n - d + k$ columns, its column space is the orthogonal complement of the column space of $\bm X_{-1:k}$. This particular construction of $\bm V$ is used in all simulation experiments.

\subsection{\sectionmath{Visualization of $f^{-1}_{\suffstat}$ from Theorem \ref{thm:glasso_unit}}{Visualization of f from Theorem \ref{thm:glasso_unit}}}
\label{sec:f_viz}
In Section \ref{sec:construction}, we explained that our approximation for the piecewise affine function $f^{-1}_{\suffstat}(\bm{b}; r)$ was motivated by visualizations in which we found it to appear globally affine. Figure \ref{fig:f_inv} shows examples of such visualizations in both low and high correlation regimes, where we can see that the contours appear ellipsoidal.
\begin{figure}[ht]
    \centering
    \begin{subfigure}[t]{0.4\linewidth}
        \centering
        \includegraphics[width=\linewidth]{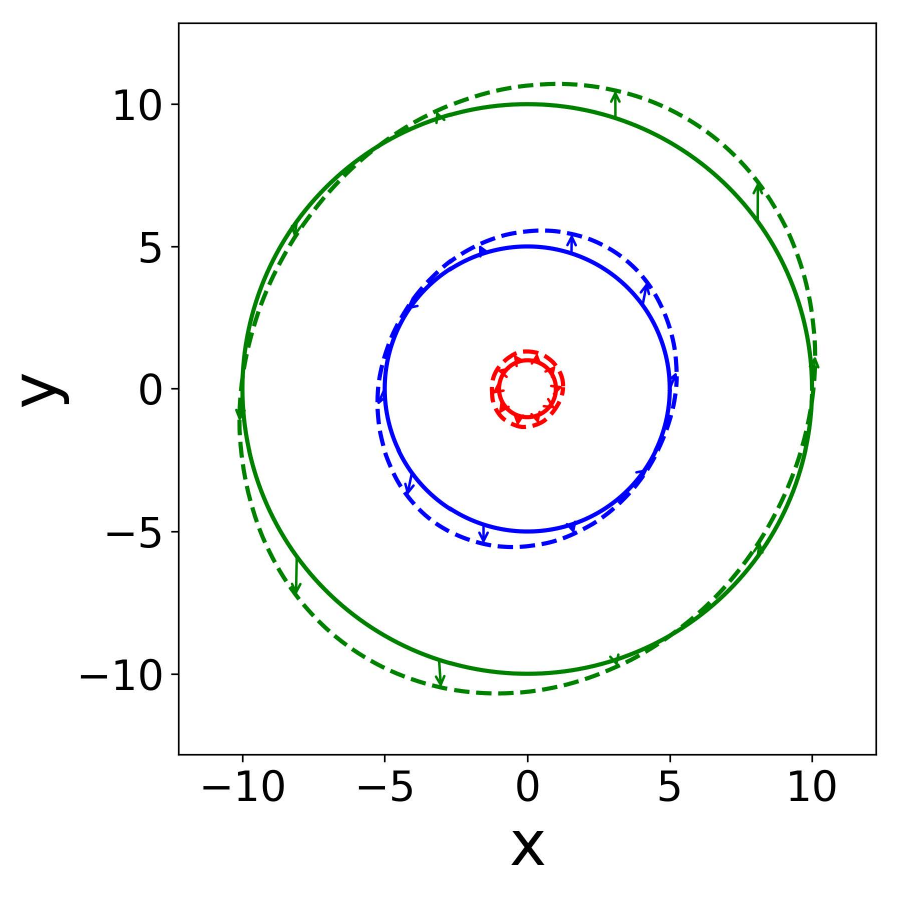}
    \end{subfigure}
    \hspace{0.2cm}
    \begin{subfigure}[t]{0.4\linewidth}
        \centering
        \includegraphics[width=\linewidth]{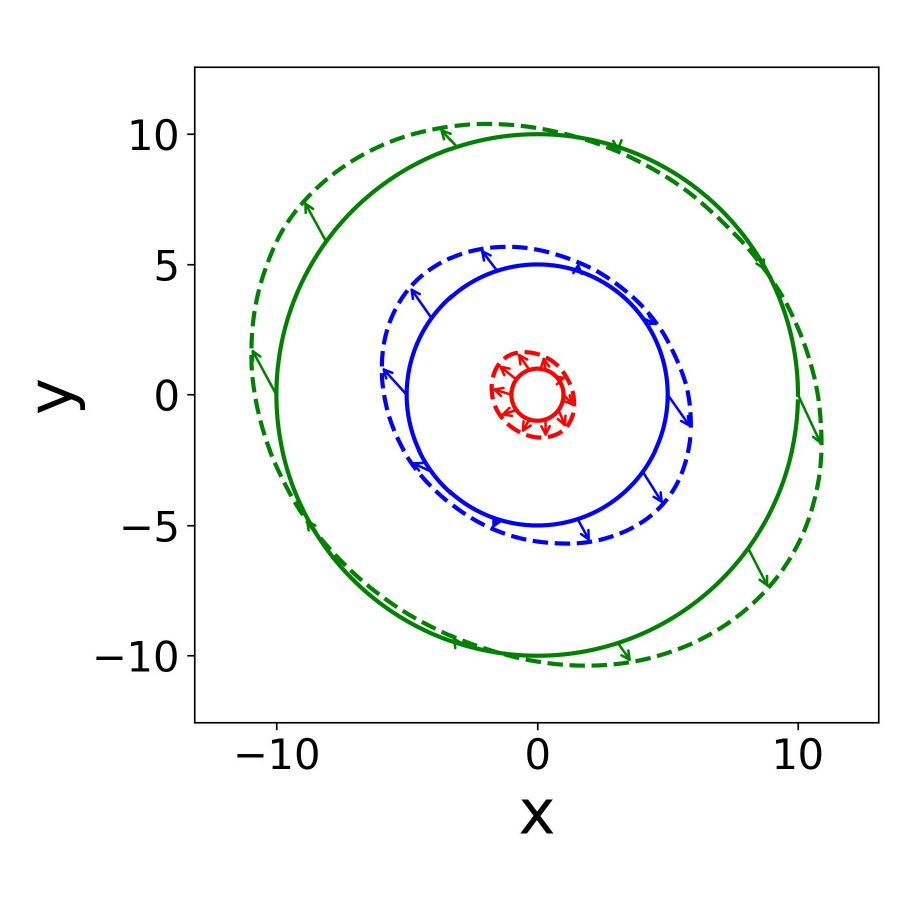}
    \end{subfigure} \\
    \begin{subfigure}[t]{0.4\linewidth}
        \centering
        \includegraphics[width=\linewidth]{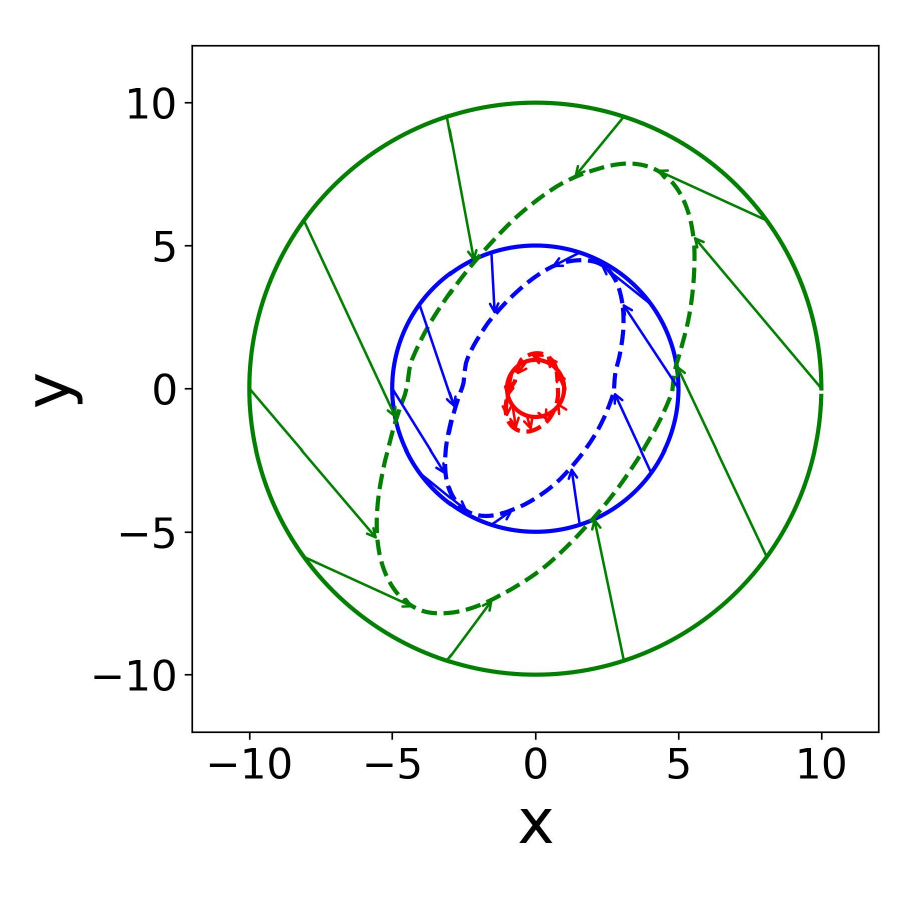}
    \end{subfigure}
    \hspace{0.2cm}
    \begin{subfigure}[t]{0.4\linewidth}
        \centering
        \includegraphics[width=\linewidth]{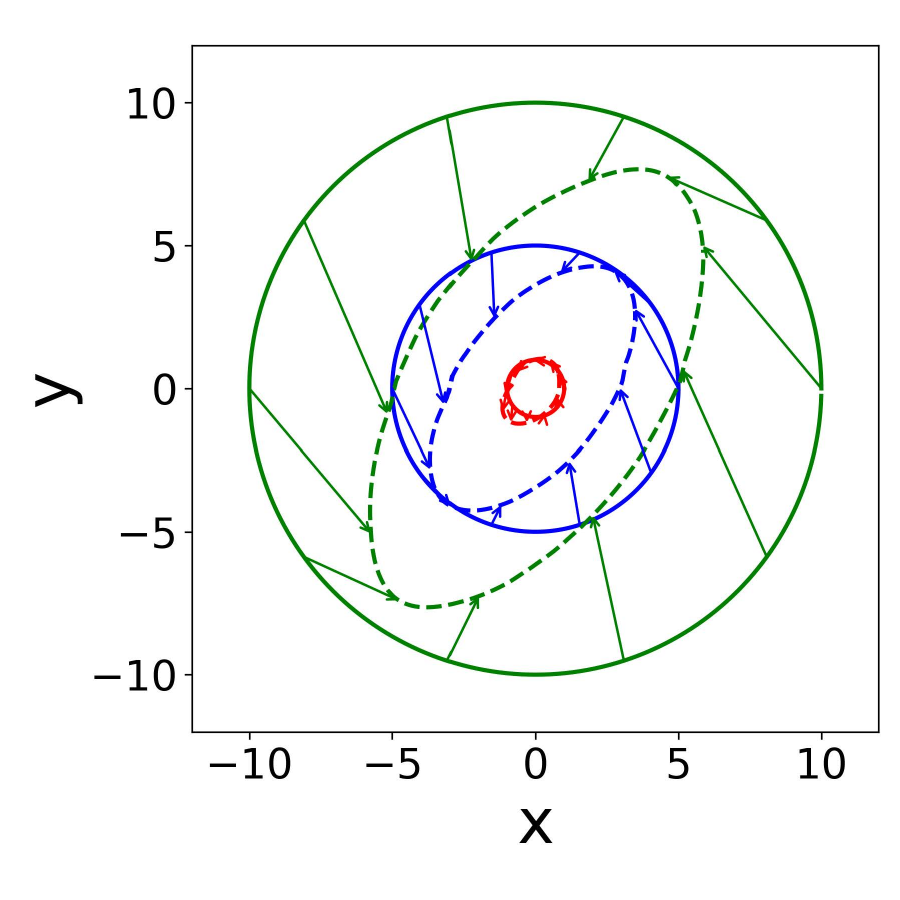}
    \end{subfigure}
    \caption{Plots show $f^{-1}_{\suffstat}$ from Theorem \ref{thm:glasso_unit} applied to three concentric circles. The arrows point to the corresponding outputs. The parameters are $n = 100$, $d = 50$, $k = 2$, $A = 0.3$, $k_1 = k$, $k_2 = 5$, and $\rho = 0$ in the top row while $\rho = 0.9$ in the bottom row. Each row shows two instantiations of $f^{-1}_{\suffstat}$, corresponding to two different data generations.}
    \label{fig:f_inv}
\end{figure}

\subsection{P-value visualizations}
Building on the theoretical intuition in Section \ref{sec:power_analysis} and p-value geometry shown in Figure \ref{fig:p-value_masses}, we empirically visualize the p-value mass of the $L$-, oracle, and $F$-tests in Figure~\ref{fig:heatmaps}. At low correlation, the $L$-test’s nearly circular ellipsoid gains power mainly from recentering; at high correlation, variance concentrates along few PCs, producing an eccentric ellipsoid nearly orthogonal to $\bm{u}_{1:k}$ and yielding near-oracle power.
\begin{figure}[ht]
    \centering
    \includegraphics[width=0.9\linewidth]{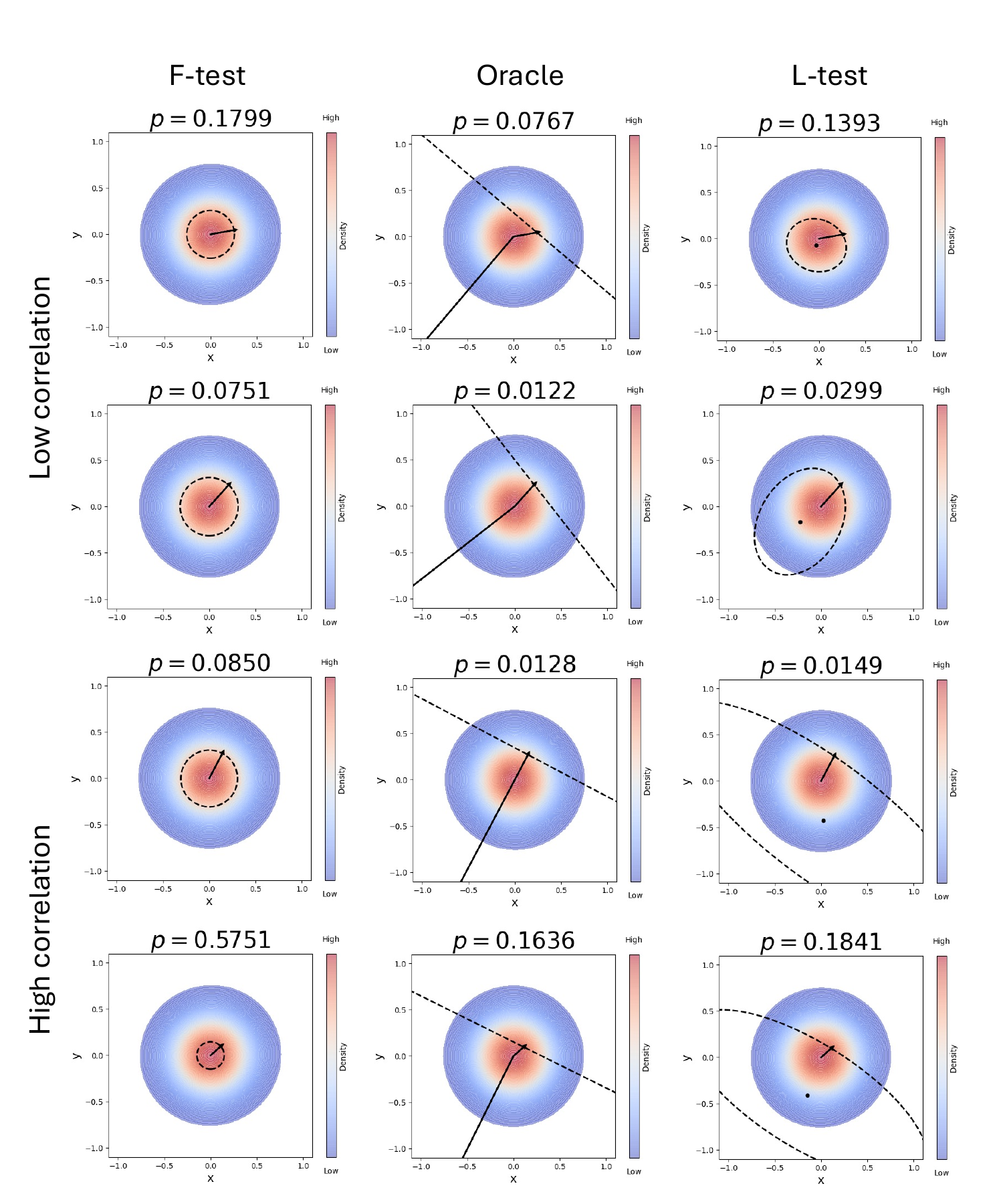}
    \caption{Heatmaps show the density of $\tilde{\bm{u}}_{1:k}$ for $\tilde{\bm{u}} \sim \Unif(\sphere^{n-d+k})$. The black arrows correspond to the observed $\bm{u}_{1:k}$. In the left column, the $F$-test p-value mass is the region outside the dashed circle. In the middle column, the oracle test p-value mass is the region outside the dashed hyperplane drawn perpendicular to the oracle's recentering vector in black and through $\bm{u}_{1:k}$. In the right column, the $L$-test's p-value mass is the region outside the dashed ellipsoid. The parameters are $n = 100$, $d = 50$, $k = 2$, $A = 0.3$, $k_2 = 5$, and $\bm{\beta}_{1:k}$ dense and nonnegative. The first two rows take $\rho = 0$ while the last two take $\rho = 0.9$ and show the p-value geometry on two data generations.}
    \label{fig:heatmaps}
\end{figure}

\subsection{Power plots}
\label{sec:power_plots_add}
In Figure \ref{fig:powers_large}, we extend the experiments from Section \ref{sec:power_plots} by further comparing the tests' powers in a large model setting with $n = 1000$ and $d = 500$ and models that are close to unidentifiability (ie. $n \approx d$).
\begin{figure}[ht]
    \centering
    \begin{subfigure}[t]{0.4\linewidth}
        \centering
        \includegraphics[width=\linewidth]{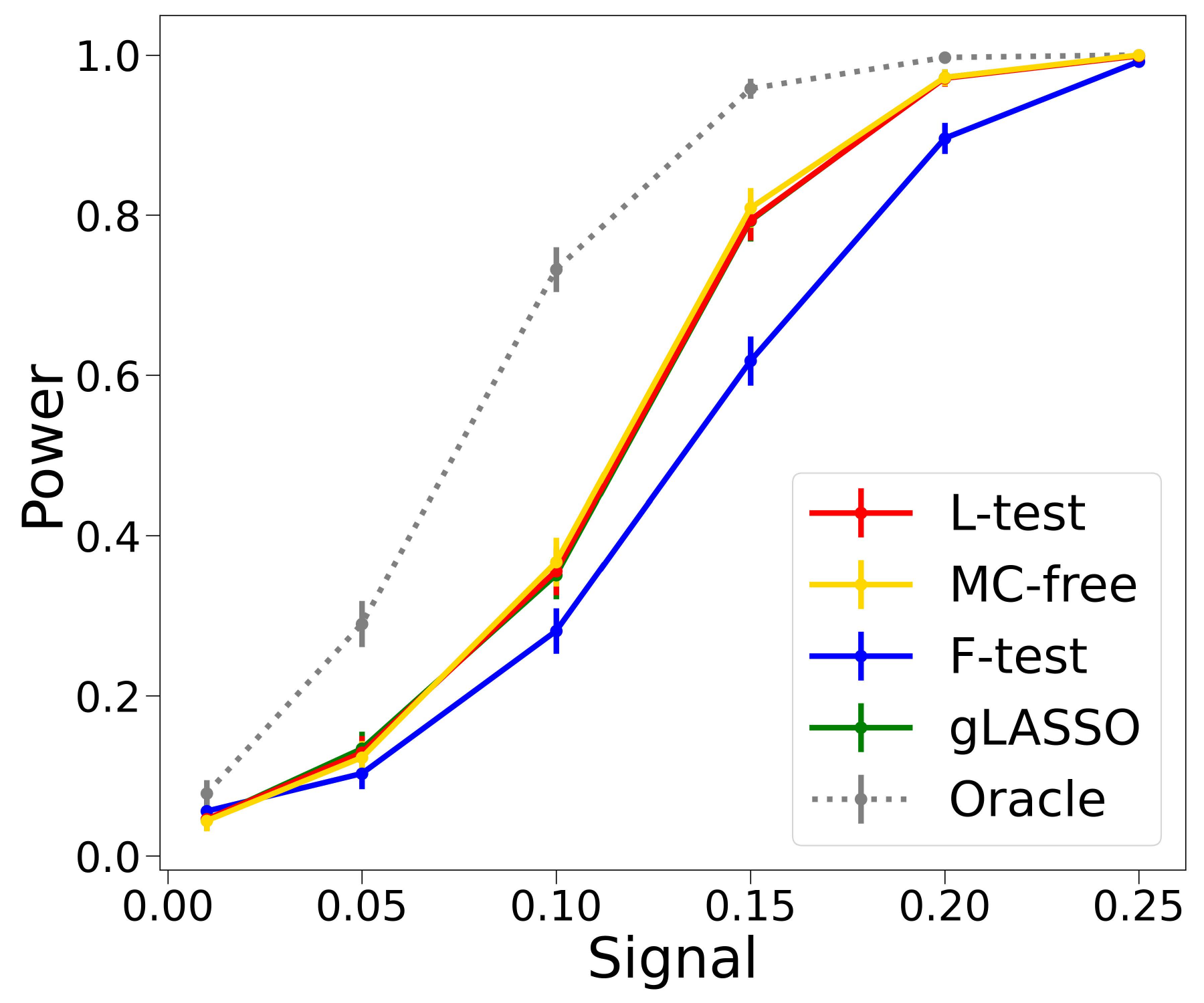}
    \end{subfigure}
    \hspace{0.2cm}
    \begin{subfigure}[t]{0.4\linewidth}
        \centering
        \includegraphics[width=\linewidth]{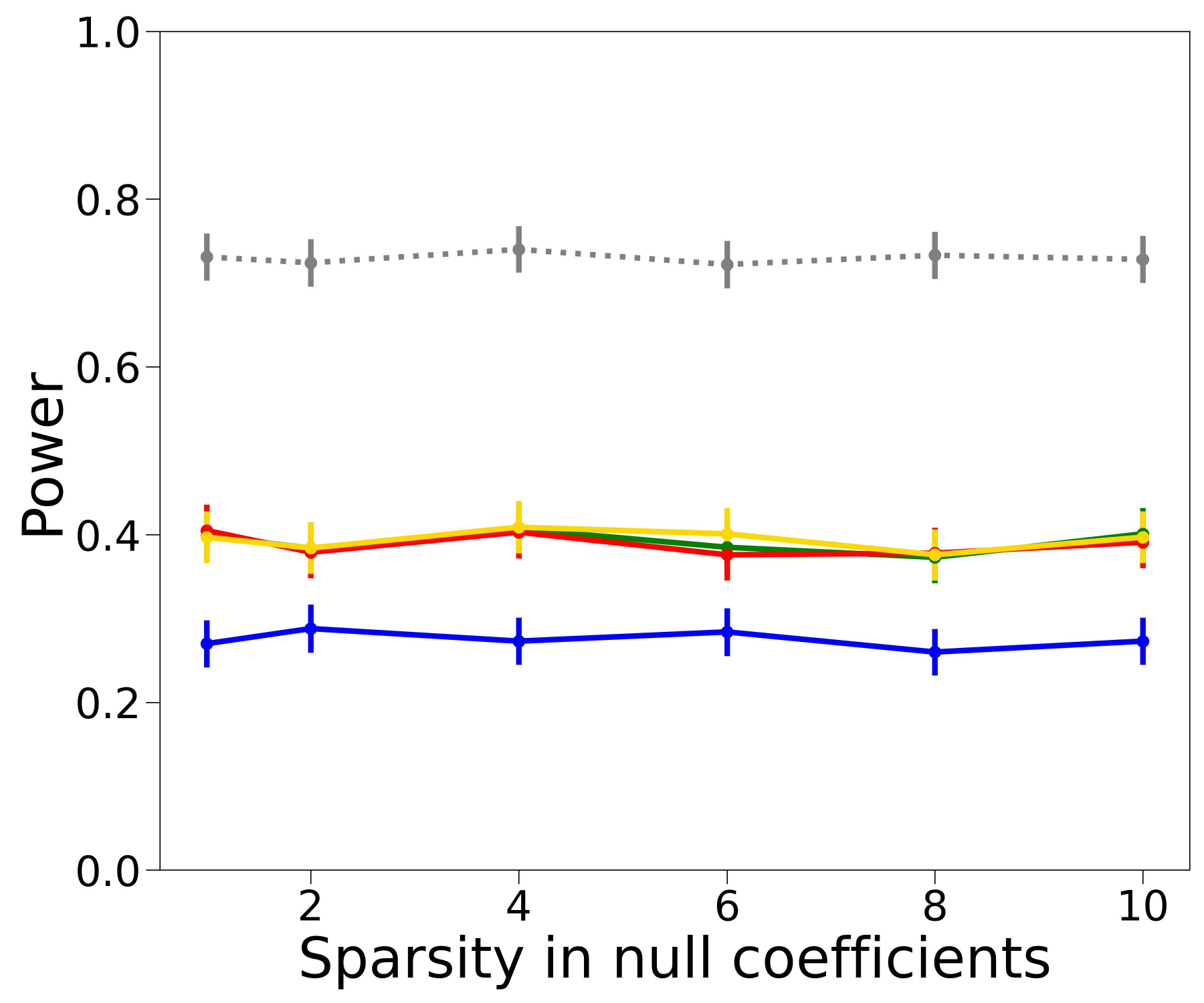}
    \end{subfigure} \\
    \begin{subfigure}[t]{0.4\linewidth}
        \centering
        \includegraphics[width=\linewidth]{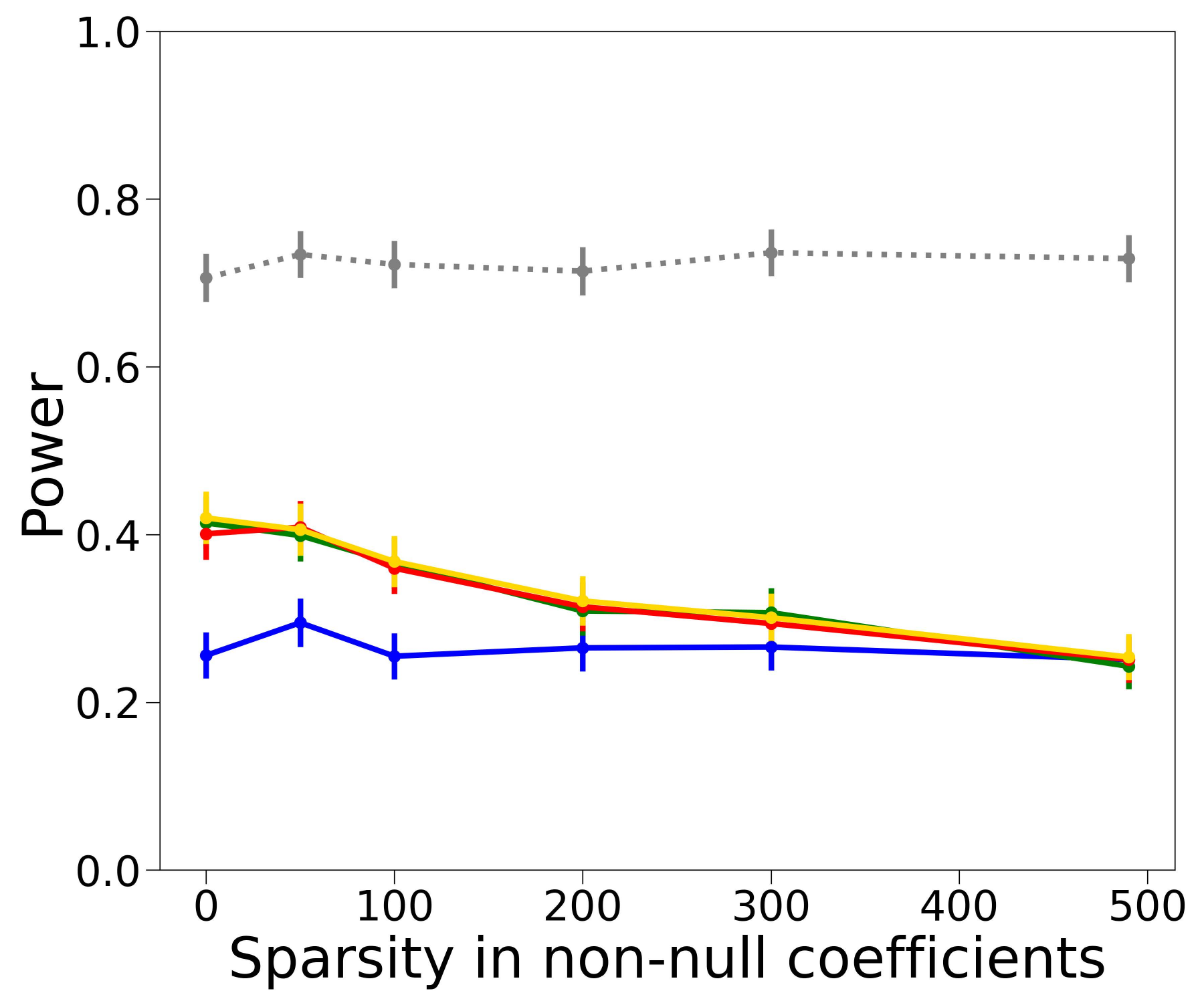}
    \end{subfigure}
    \hspace{0.2cm}
    \begin{subfigure}[t]{0.4\linewidth}
        \centering
        \includegraphics[width=\linewidth]{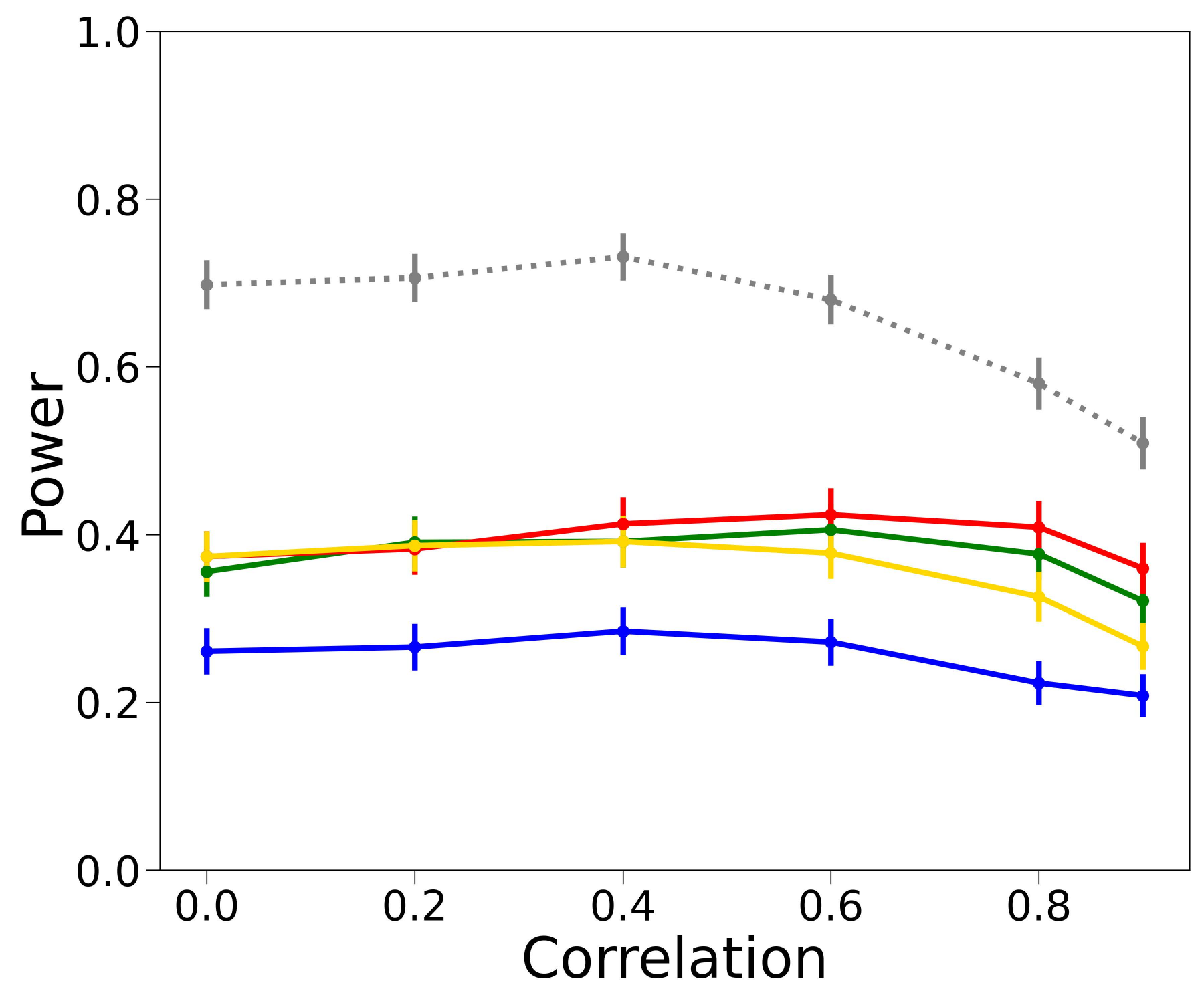}
    \end{subfigure}
    \caption{Comparison of test powers in large model. In all panels, $n = 1000$, $d = 500$, and $k = 10$. The top left panel takes $k_1 = k$, $k_2 = 49$, $\rho = 0$ and varies $A$; top right takes $A = 0.1$, $k_2 = 49$, and $\rho = 0$ and varies $k_1$; bottom left takes $A = 0.1$, $k_1 = 4$, $\rho = 0$ and varies $k_2$; bottom right takes $A = 0.1$, $k_1 = 4$, $k_2 = 49$ and varies $\rho$.}
    \label{fig:powers_large}
\end{figure}

\begin{figure}[ht]
    \centering
    \begin{subfigure}[t]{0.4\linewidth}
        \centering
        \includegraphics[width=\linewidth]{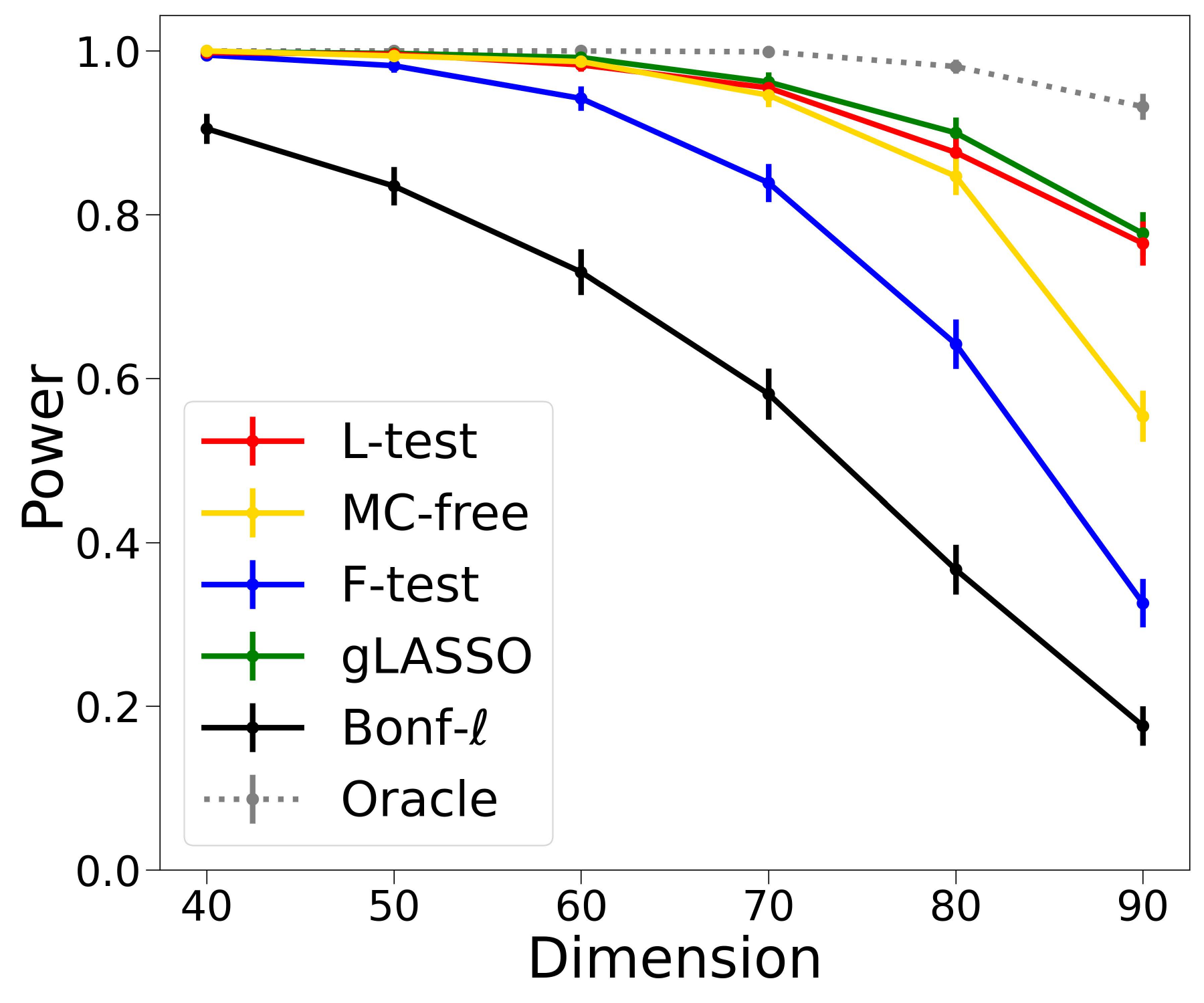}
    \end{subfigure}
    \hspace{0.2cm}
    \begin{subfigure}[t]{0.4\linewidth}
        \centering
        \includegraphics[width=\linewidth]{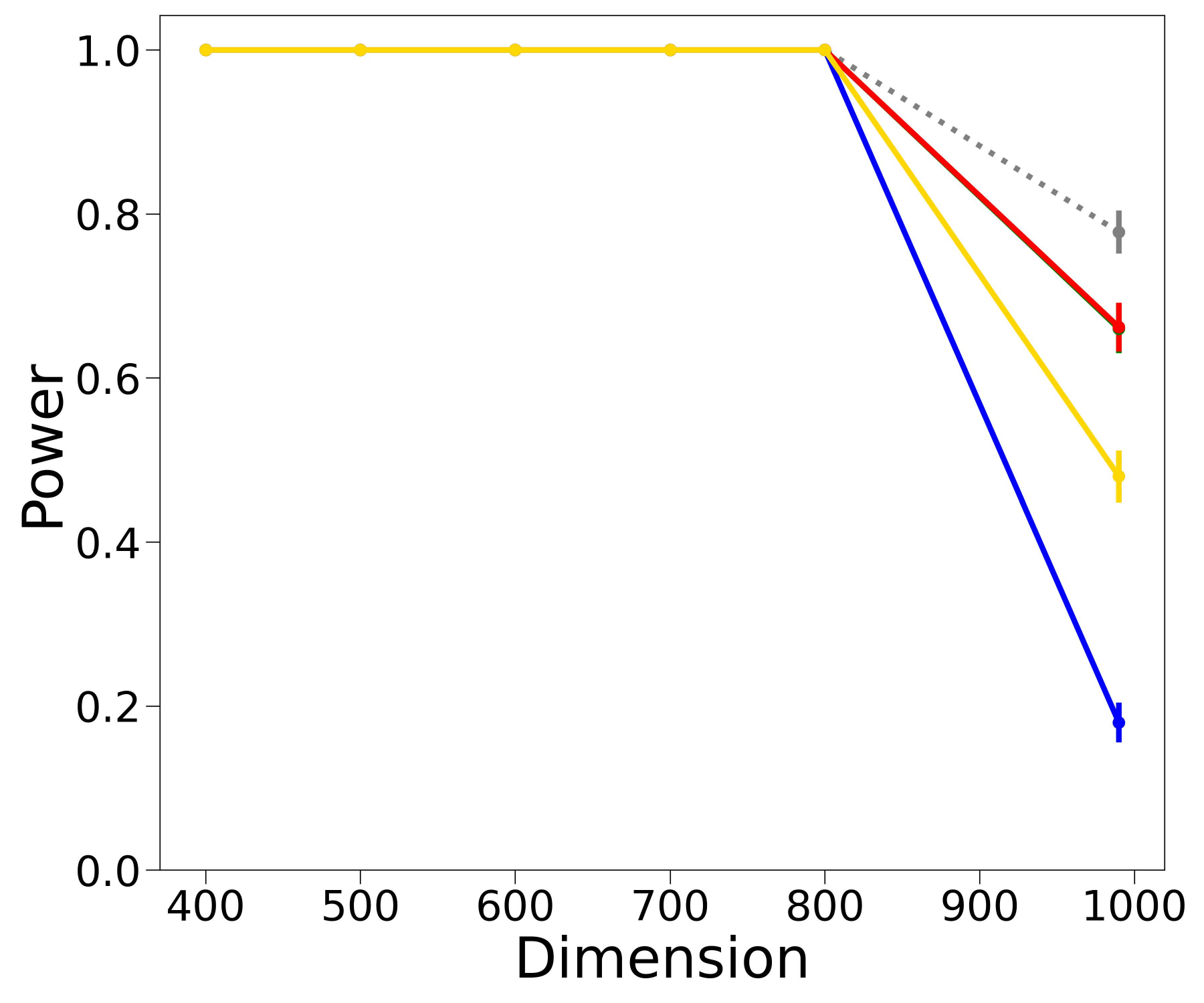}
    \end{subfigure}
    \caption{Comparison of test powers in high-dimensional models. The left panel takes $n = 100$, $k = 10$, $A = 0.8$, $k_1 = k$, and $\rho = 0$ and varies $k_2$ and $d$ to maintain 10\% non-null sparsity; the right does the same but takes $n = 1000$, $k = 10$, $A = 0.6$, $k_1 = k$, and $\rho = 0$.}
    \label{fig:powers_high}
\end{figure}

Figure \ref{fig:PC_orthog} further demonstrates the similarity between the $L$-test and PC-test in the block-orthogonal design setting studied in Section \ref{sec:power_orthog}. To show that the way both the PC-test and premultiplier $\bm{A}(\glassoc_{1:k}, \lambda)$ exploit the geometry of $\bm{X}_{1:k}$ for power gains is not specific to block orthogonality, Figure \ref{fig:PC_stand} compares the PC-test with a test that applies $\bm{A}(\glassoc_{1:k}, \lambda)$ directly to $\bm{u}_{1:k}$ (without recentering)---we call this the $\phi$-test---in a standard design setting. Interestingly, the bottom panel shows a setting where the $L$-test underperforms the group LASSO MC test slightly, suggesting greater robustness of the latter to anti-alignment between $\bm \beta_{1:k}$ and the top PCs of $\bm{X}_{1:k}$. Nevertheless, the gap is small, and the $L$-test closely matches the group LASSO MC test's performance in nearly all other simulations and our real data analysis. 
\begin{figure}[ht]
    \centering
    \begin{subfigure}[t]{0.4\linewidth}
        \centering
        \includegraphics[width=\linewidth]{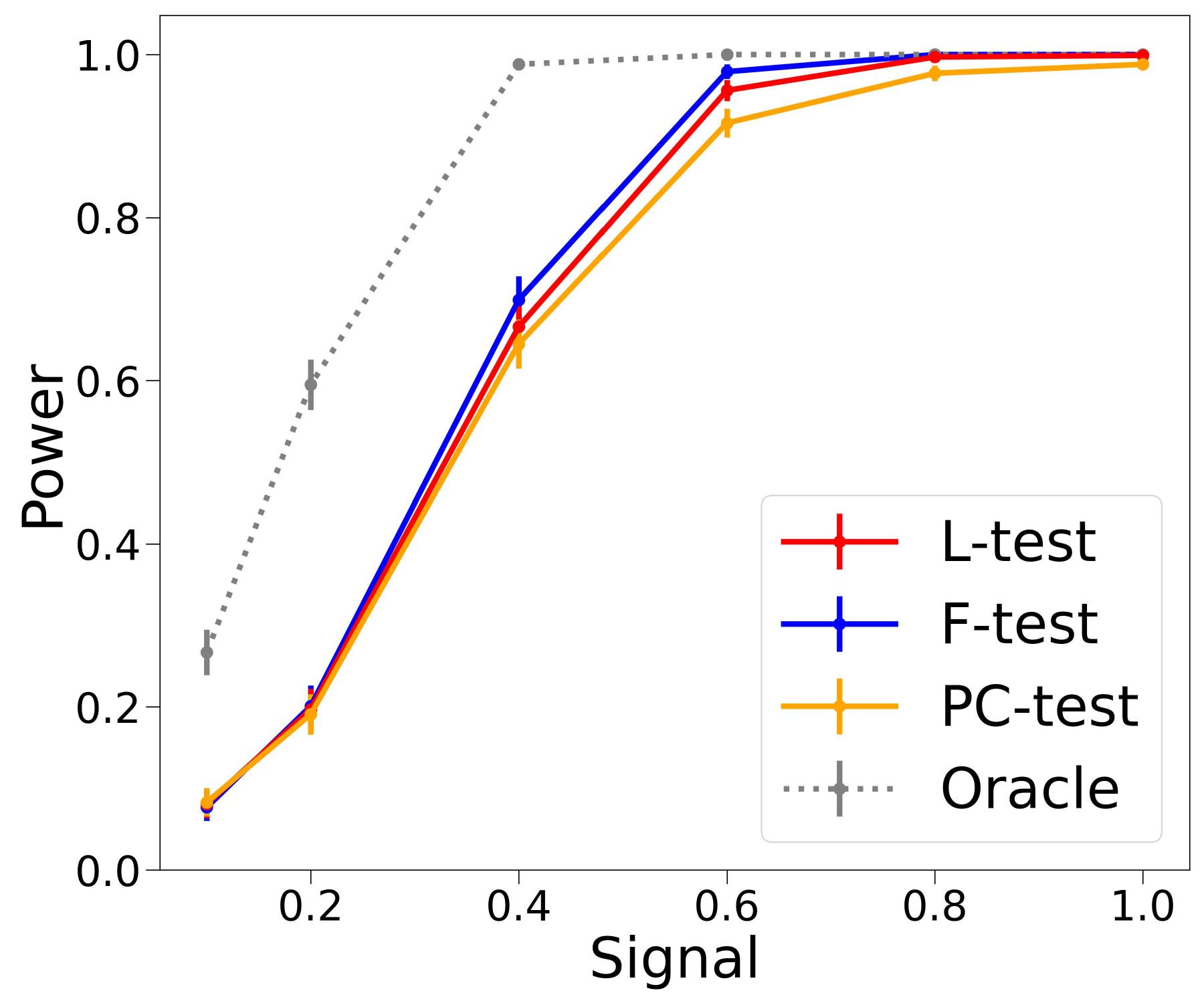}
    \end{subfigure}
    \hspace{0.2cm}
    \begin{subfigure}[t]{0.4\linewidth}
        \centering
        \includegraphics[width=\linewidth]{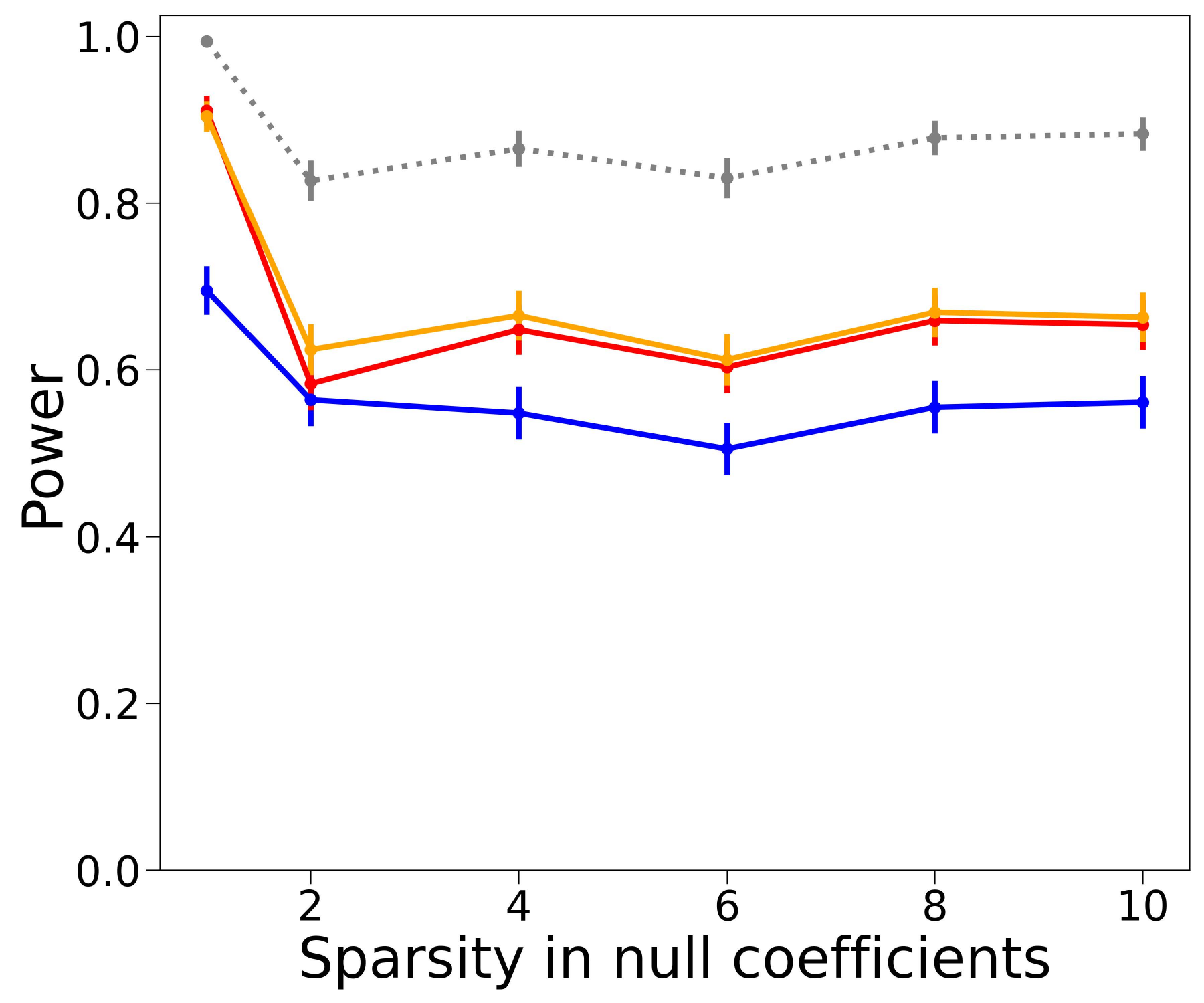}
    \end{subfigure} \\
    \begin{subfigure}[t]{0.4\linewidth}
        \centering
        \includegraphics[width=\linewidth]{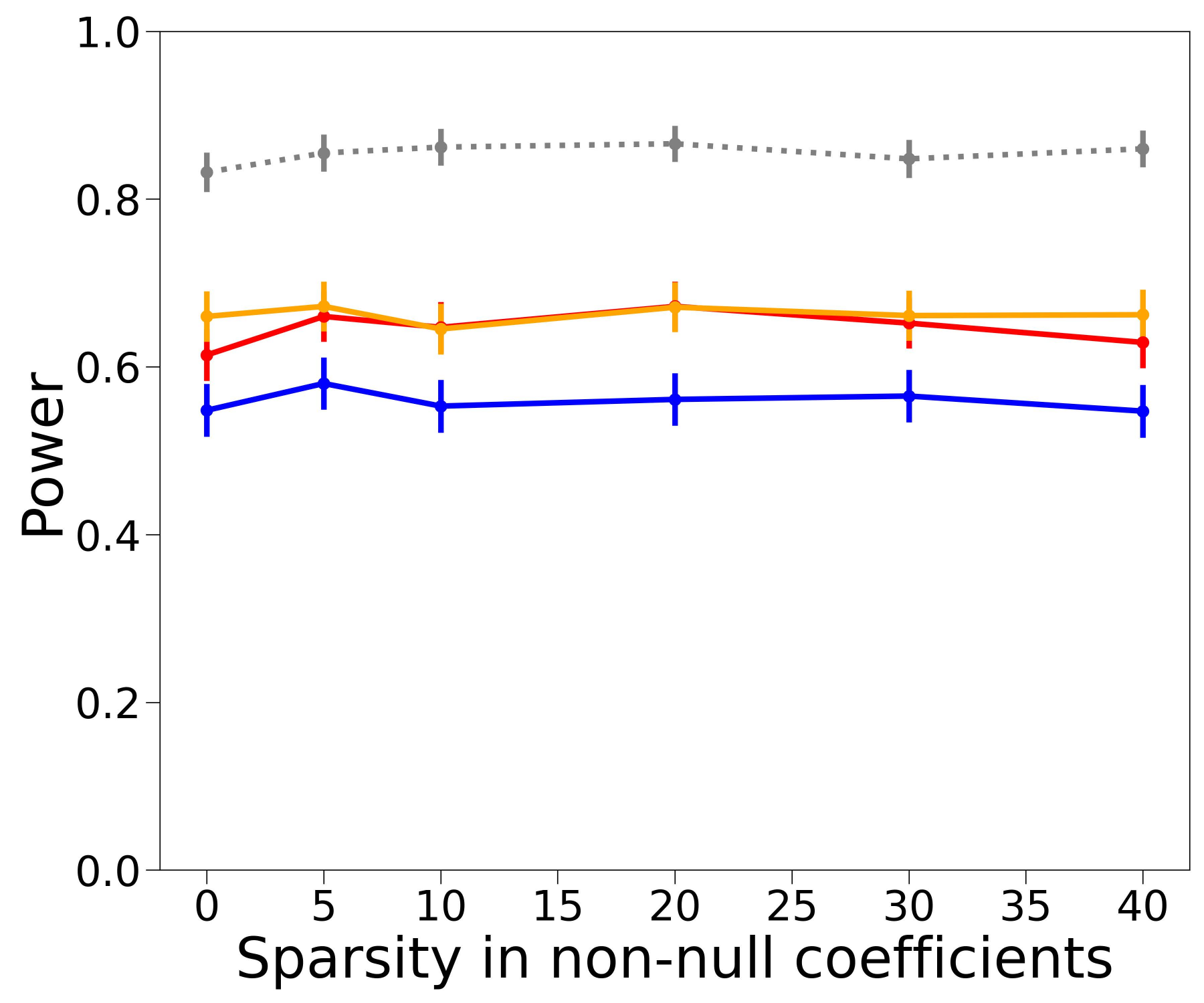}
    \end{subfigure}
    \caption{In all panels, we orthogonalize $\bm{X}_{1:k}$ to $\bm{X}_{-1:k}$ by using the design $\begin{pmatrix}
      \bm{X}_{1:k} & (\bm{I} - \bm{P}_{1:k})\bm{X}_{-1:k}
  \end{pmatrix}$, followed by standardization. Panel settings are the same as those in Figure \ref{fig:powers_standard}, except top right and bottom take $\rho = 0.9$.}
    \label{fig:PC_orthog}
\end{figure}

\begin{figure}[ht]
    \centering
    \begin{subfigure}[t]{0.4\linewidth}
        \centering
        \includegraphics[width=\linewidth]{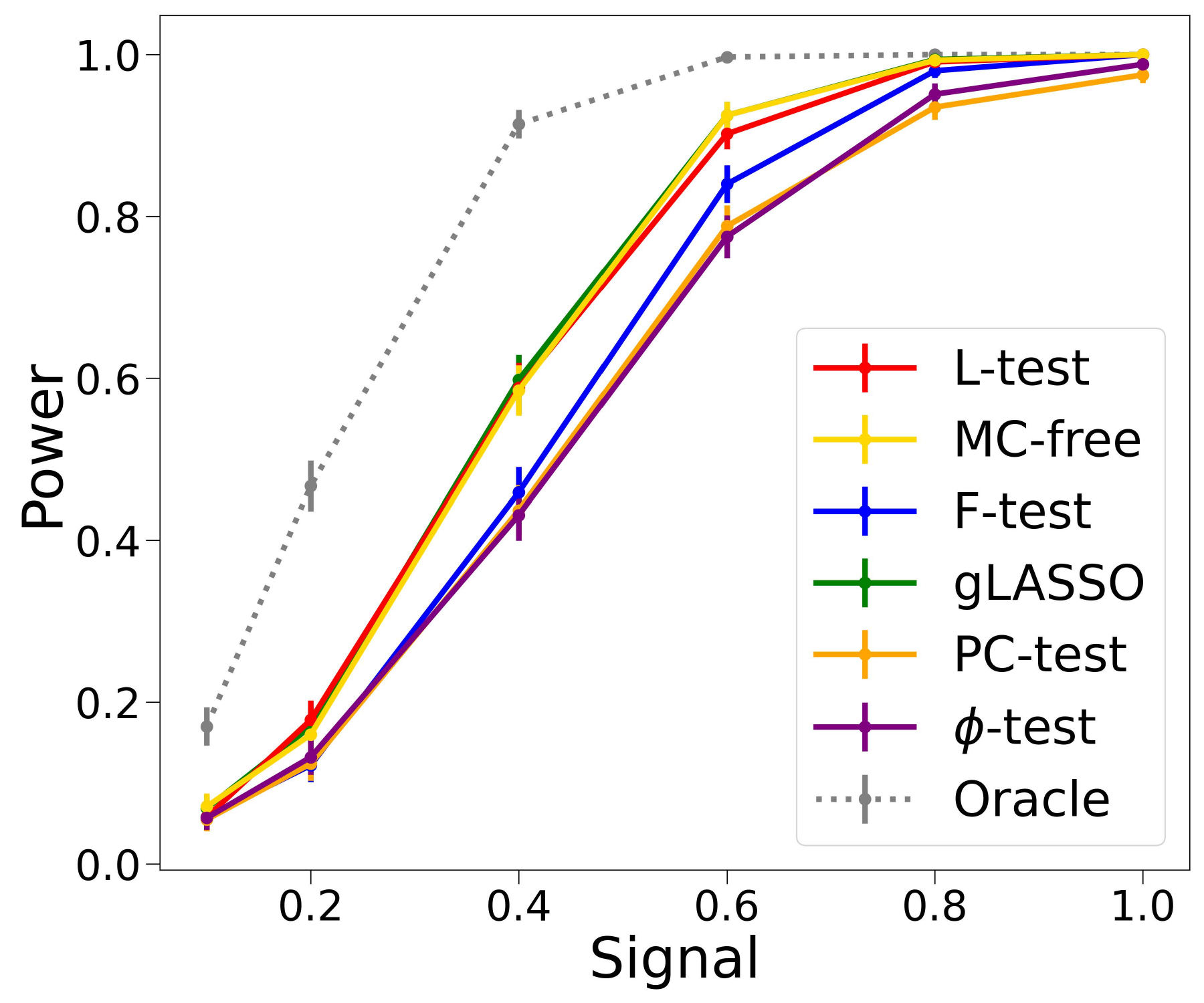}
    \end{subfigure}
    \hspace{0.2cm}
    \begin{subfigure}[t]{0.4\linewidth}
        \centering
        \includegraphics[width=\linewidth]{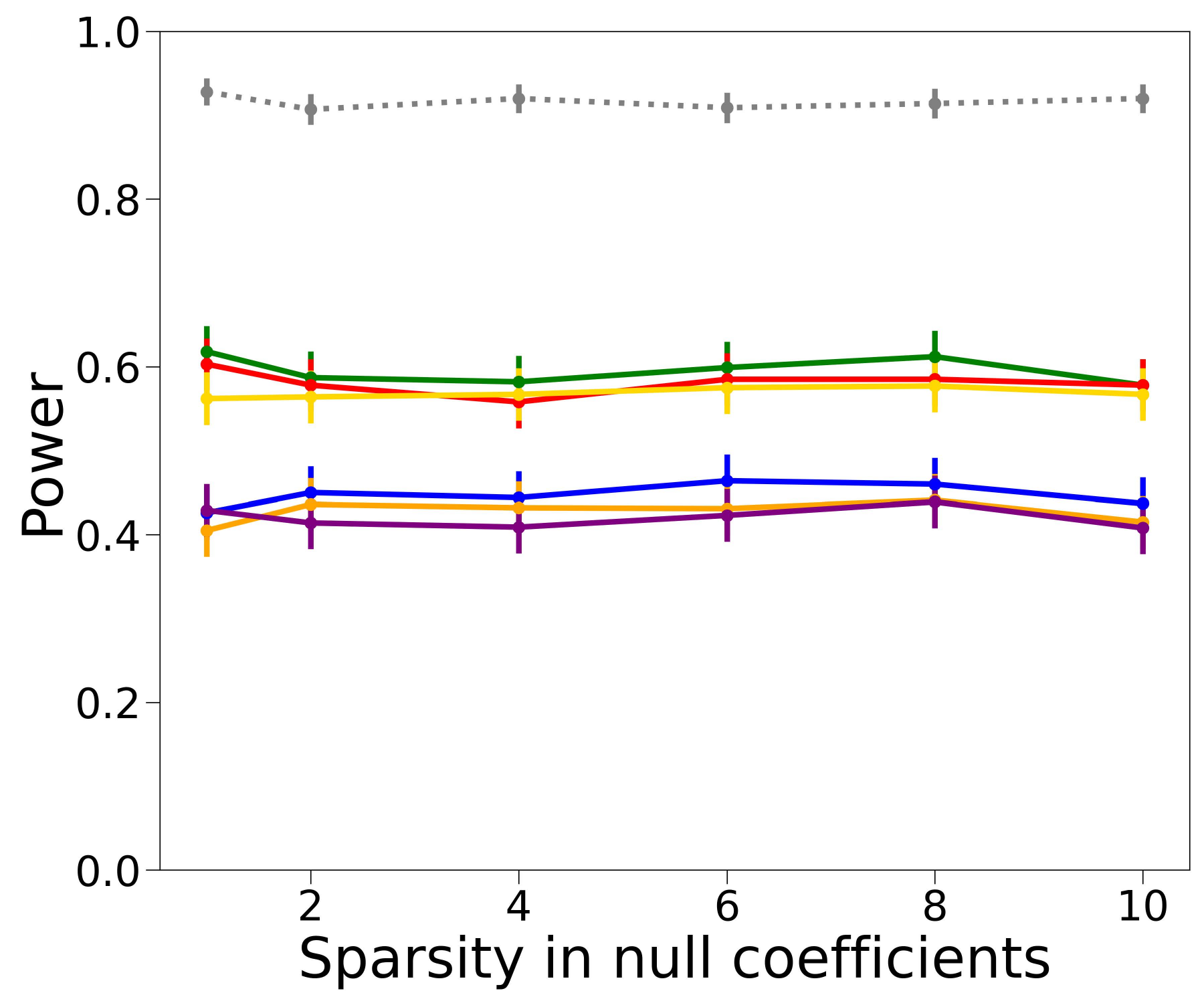}
    \end{subfigure} \\
    \begin{subfigure}[t]{0.4\linewidth}
        \centering
        \includegraphics[width=\linewidth]{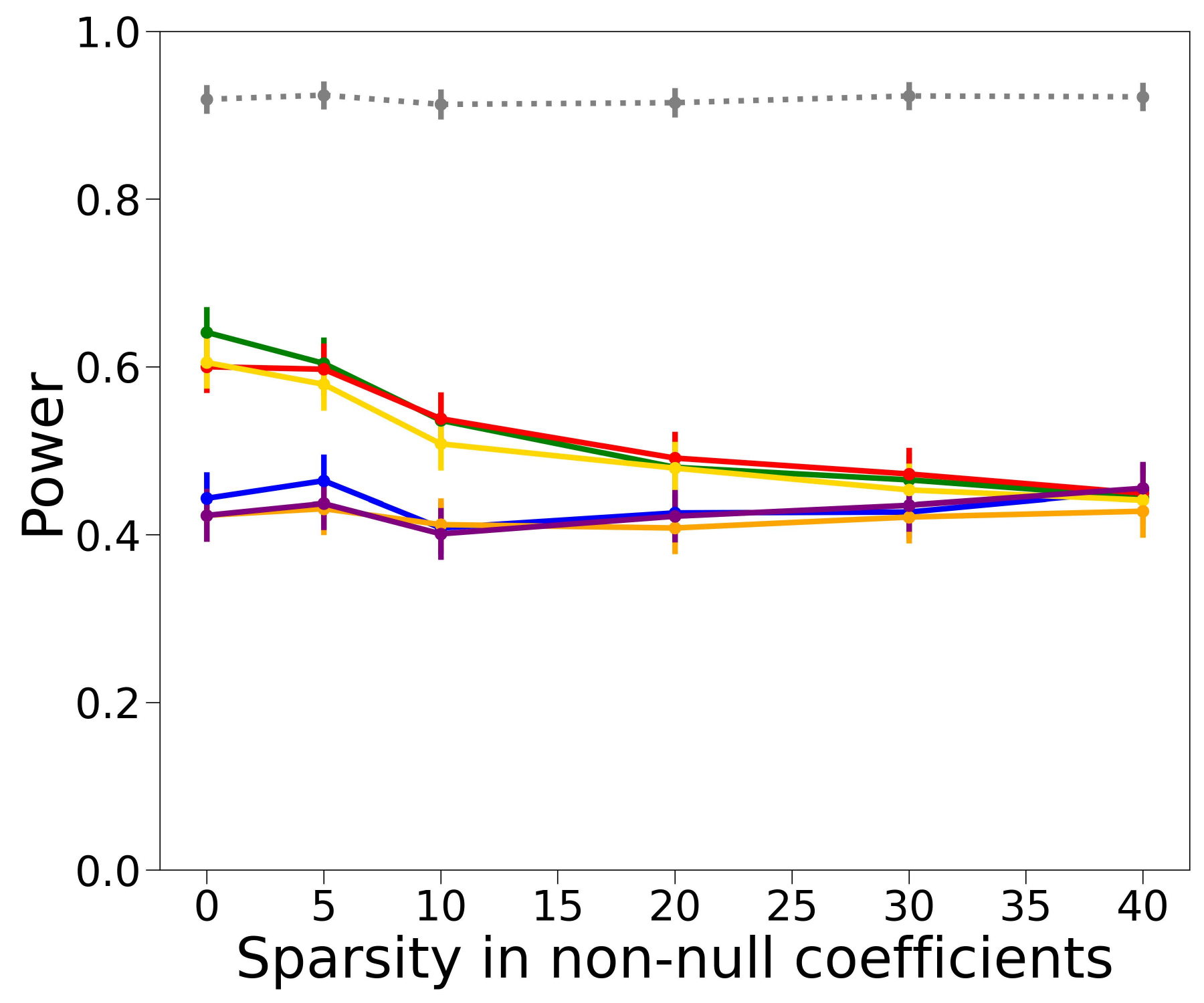}
    \end{subfigure}
    \hspace{0.2cm}
    \begin{subfigure}[t]{0.4\linewidth}
        \centering
        \includegraphics[width=\linewidth]{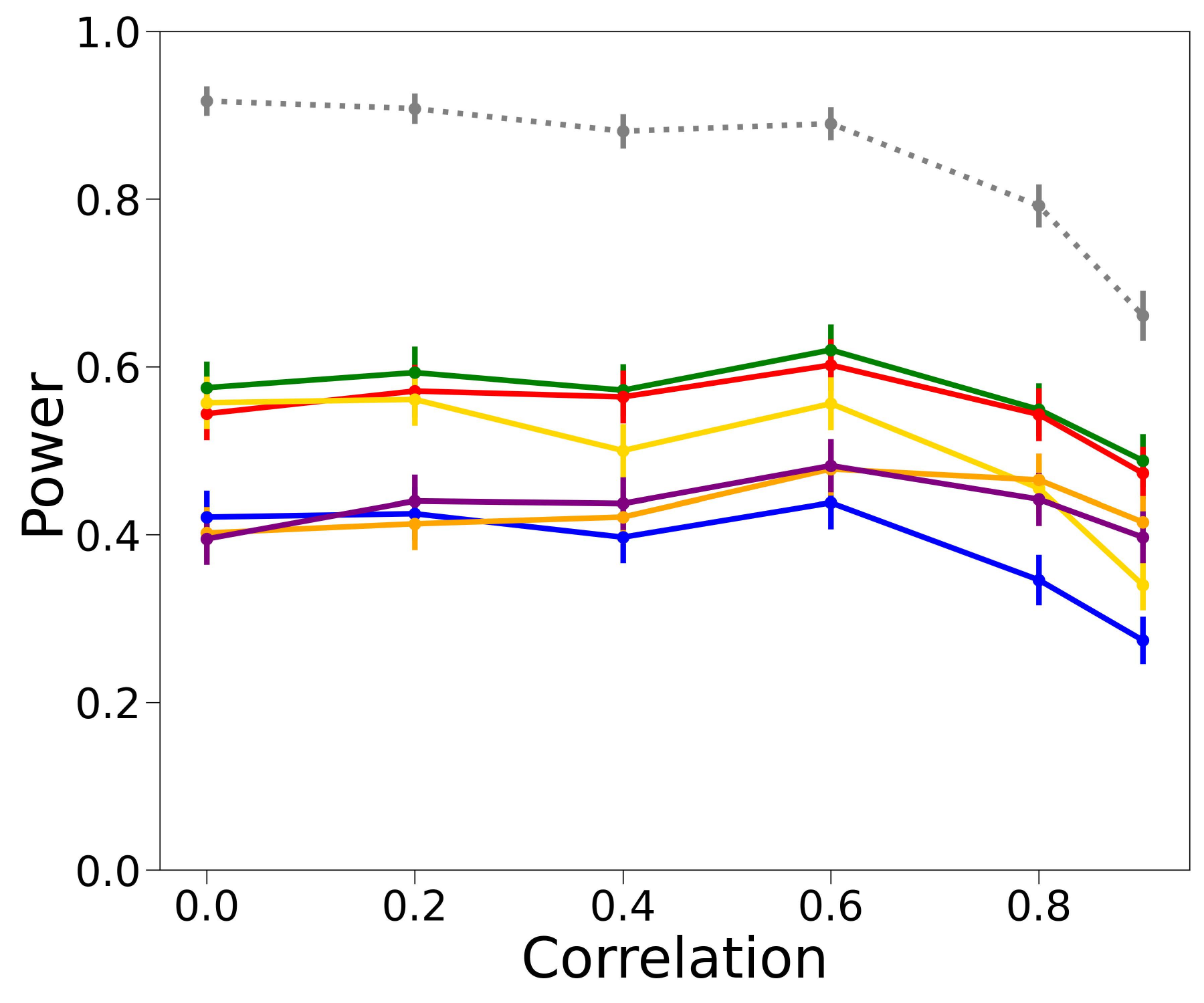}
    \end{subfigure} \\
    \begin{subfigure}[t]{0.4\linewidth}
        \centering
        \includegraphics[width=\linewidth]{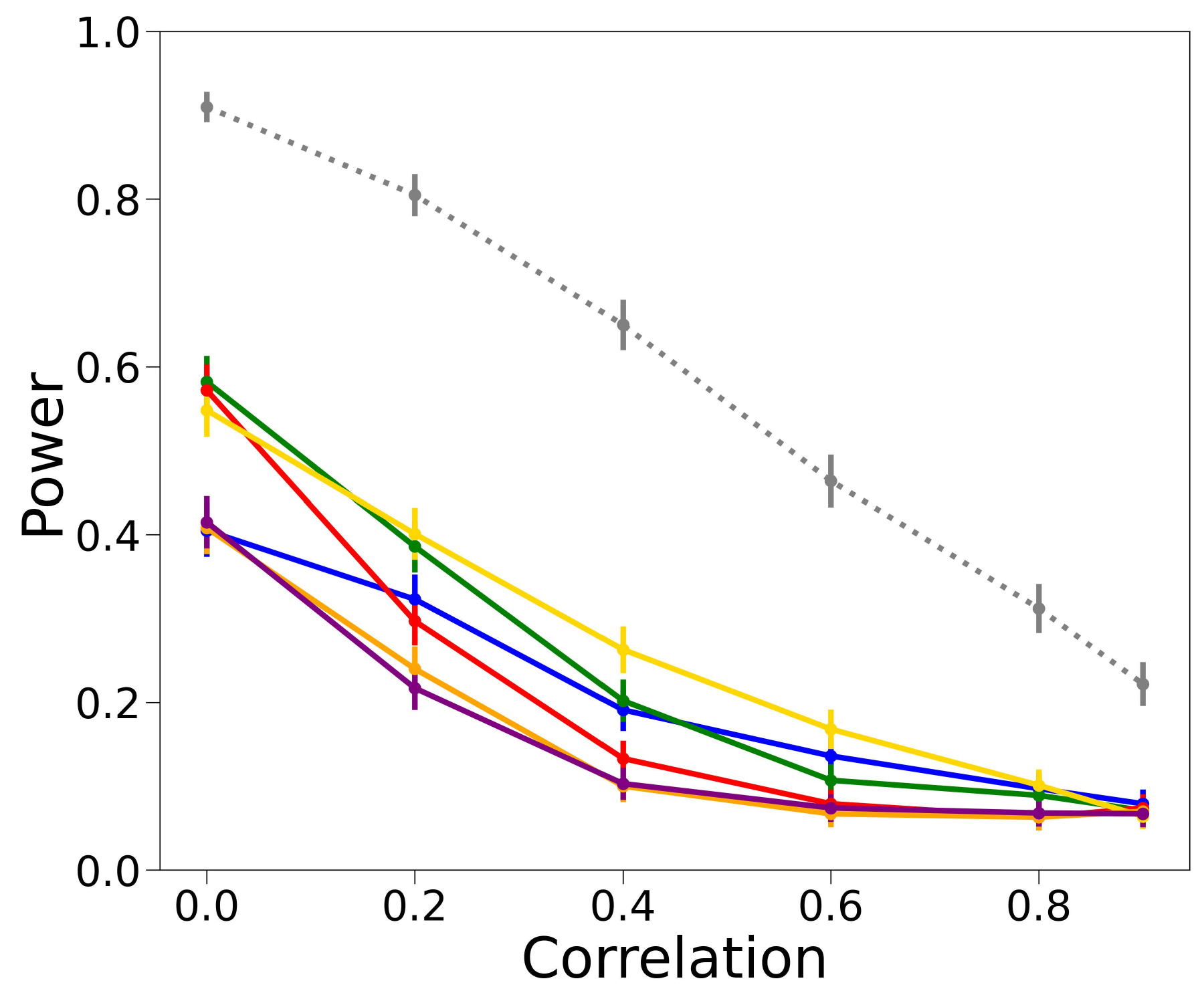}
    \end{subfigure}
     \caption{Top four panel settings are the same as those in Figure \ref{fig:powers_standard}. The bottom panel is the same as the middle panel of Figure \ref{fig:PC_vary_corr}, except using a non-orthogonal design.}
    \label{fig:PC_stand}
\end{figure}

When we first introduced the $L$-test, we mentioned that we could have considered constructing a test based on LASSO-type estimators other than the group LASSO. Figure \ref{fig:powers_lassos} compares the powers of the $L$-test and tests that use the p-value in \eqref{eq:mc_p} but with the group LASSO swapped out for the LASSO \cite{Tibshirani1996}, elastic net \cite{Zou&Hastie2005}, or sparse group LASSO \cite{Tibshirani2013}. For these tests, just as for the $L$-test, hyperparameter tuning is done on $(\tilde{\bm{y}}, \bm{X})$, where $\bm{\tilde y} \sim \bm{y}\mid \suffstat$ under $\hyp$, using 10-fold cross-validation. For the elastic net and sparse group LASSO, hyperparameter tuning is performed for both the penalty and the $\ell^1$-ratio.
\begin{figure}[ht]
    \centering
    \begin{subfigure}[t]{0.4\linewidth}
        \centering
        \includegraphics[width=\linewidth]{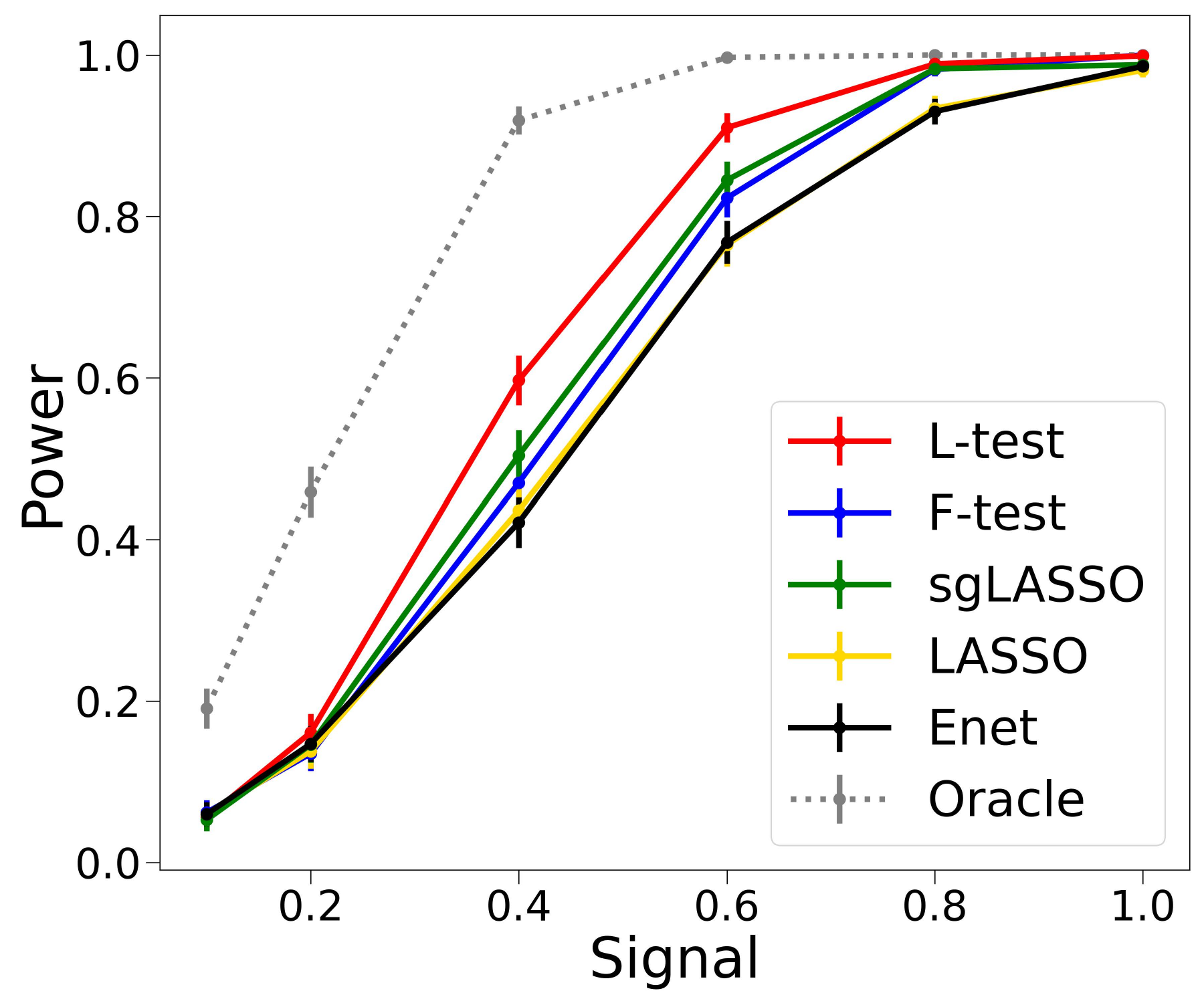}
    \end{subfigure}
    \hspace{0.2cm}
    \begin{subfigure}[t]{0.4\linewidth}
        \centering
        \includegraphics[width=\linewidth]{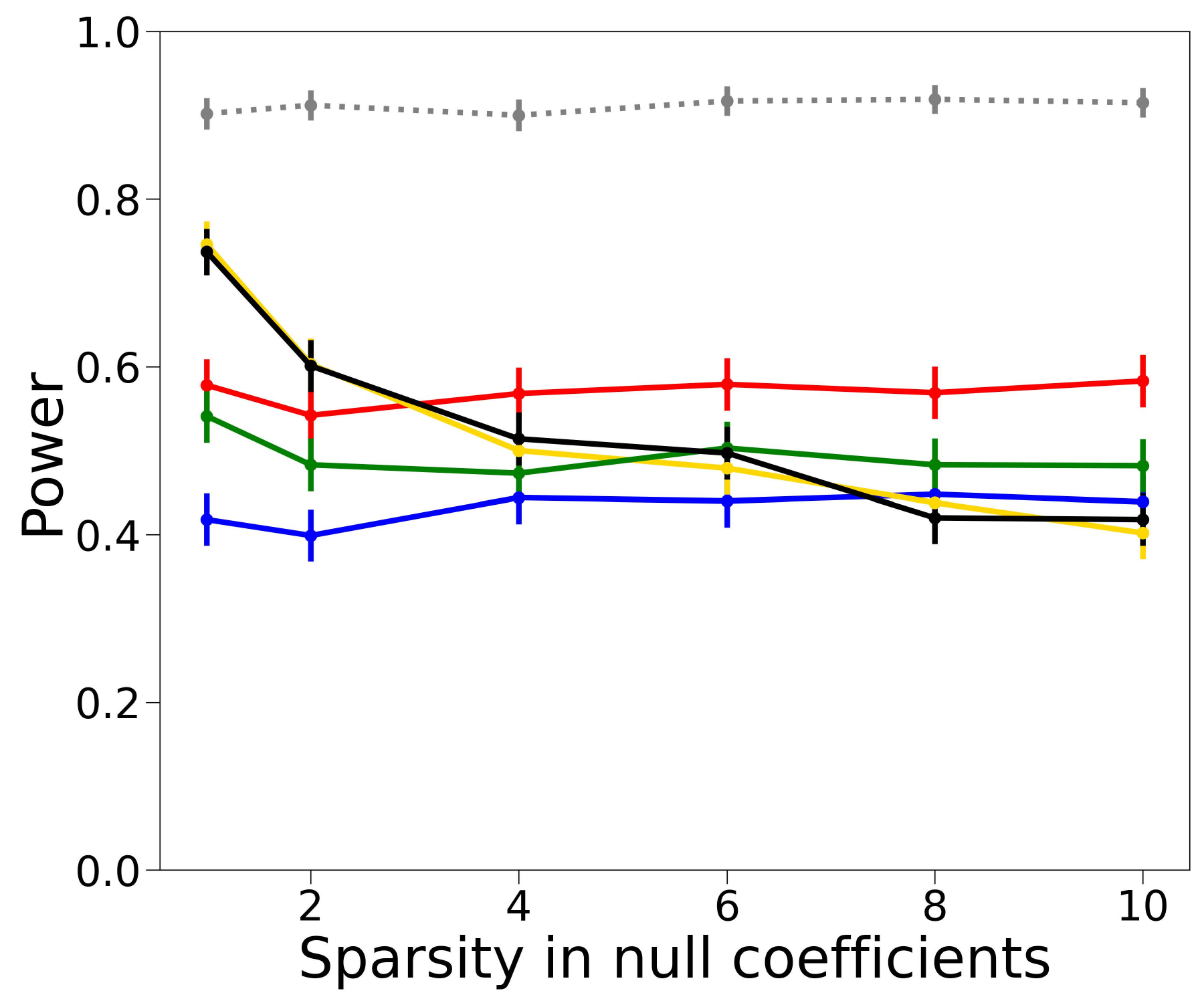}
    \end{subfigure} \\
    \begin{subfigure}[t]{0.4\linewidth}
        \centering
        \includegraphics[width=\linewidth]{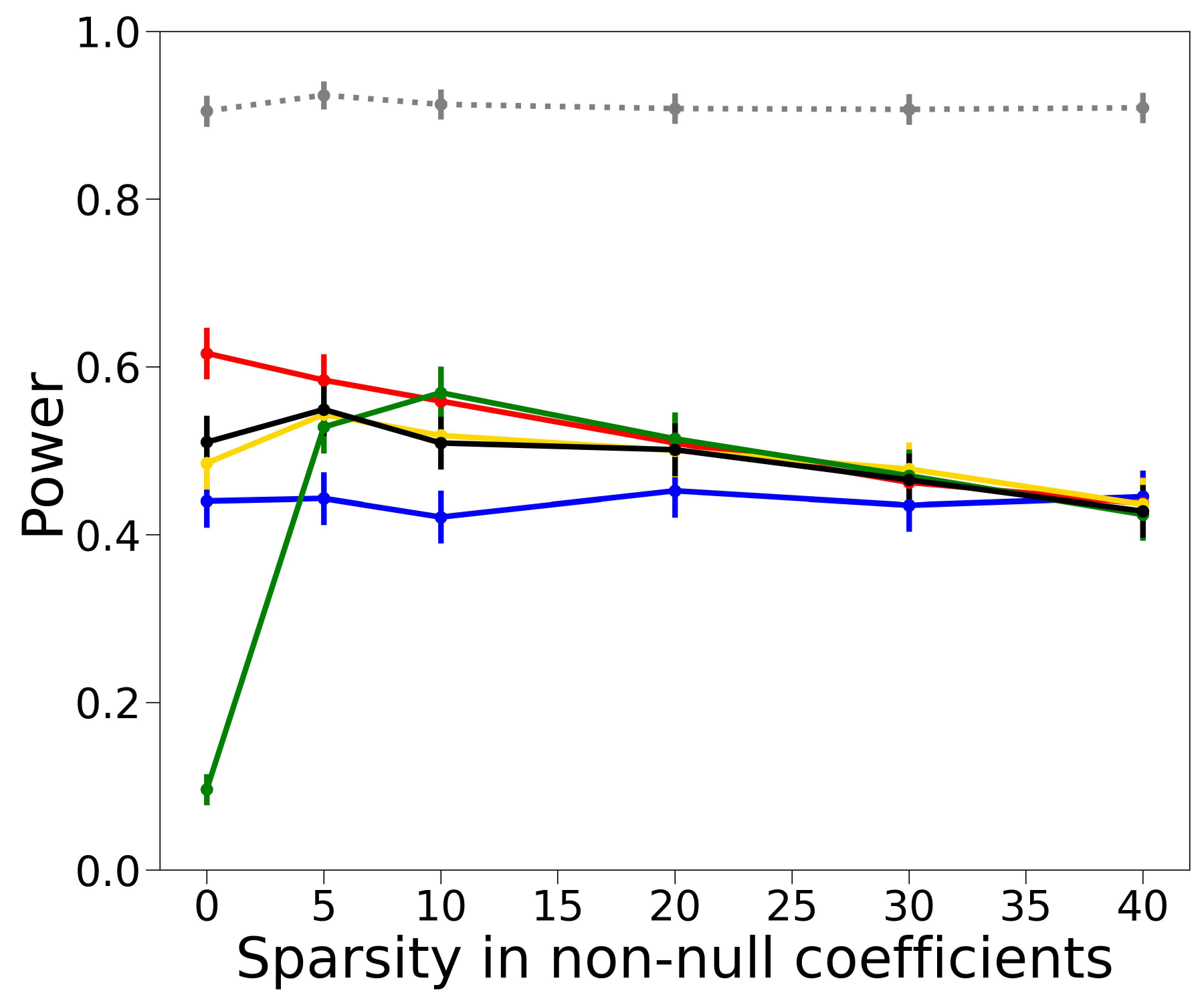}
    \end{subfigure}
    \hspace{0.2cm}
    \begin{subfigure}[t]{0.4\linewidth}
        \centering
        \includegraphics[width=\linewidth]{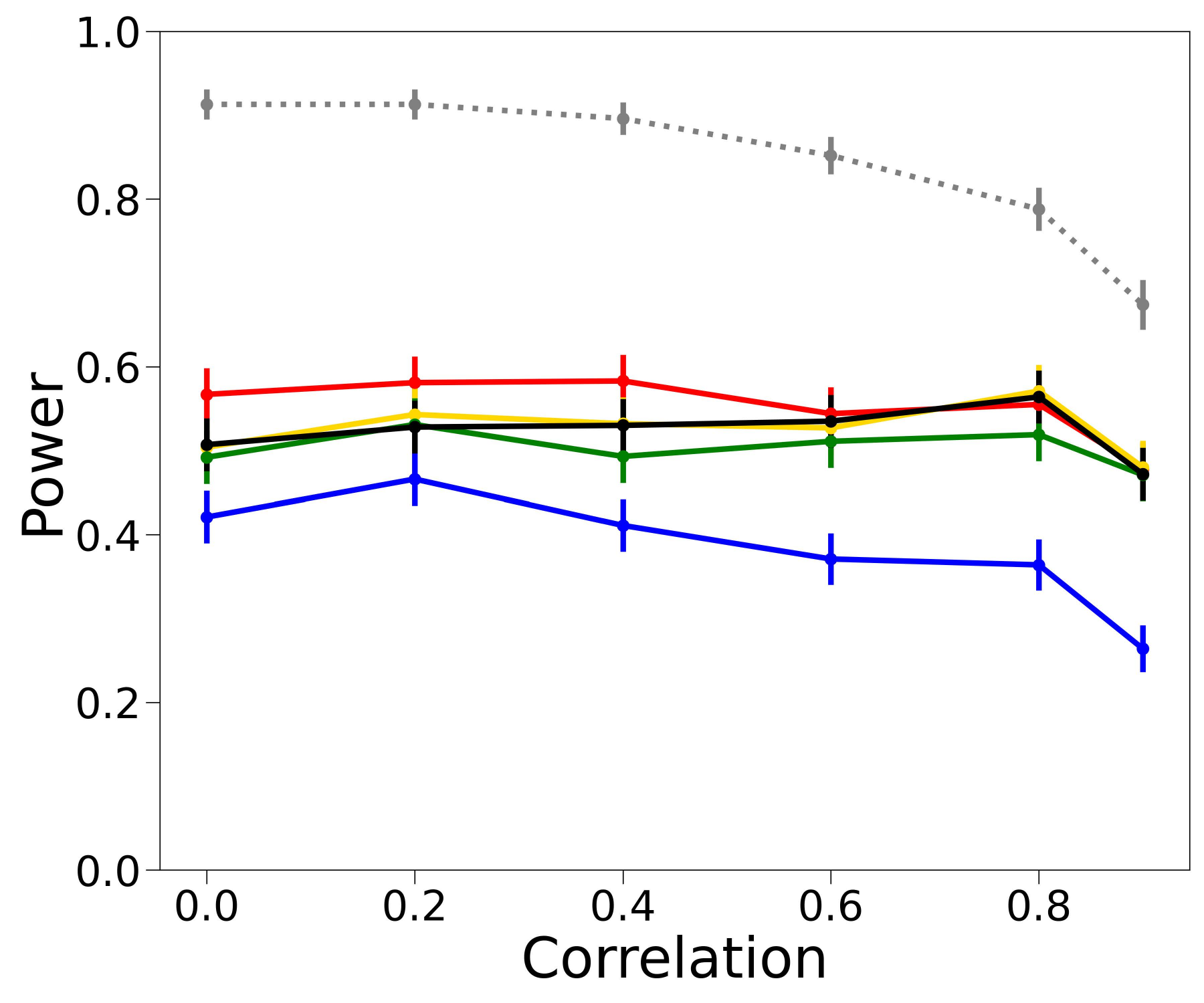}
    \end{subfigure}
    \caption{Comparison of $L$-test with other LASSO-based tests. Same settings as in Figure \ref{fig:powers_standard}.}
    \label{fig:powers_lassos}
\end{figure}

The $L$-test and sparse group LASSO test outperform the $F$-test and almost always have higher power than the LASSO and elastic net tests, which perform almost identically. The only exception is the setting where the null is 1-sparse, exactly when we would expect these latter estimates to perform well. While the $L$-test could in theory be modified to approximate the MC test that instead uses the statistic obtained by substituting the LASSO or elastic net for $\hat{\bm{\beta}}_{1:k, \mathrm{OLS}}$ in \eqref{eq:F-test_stat_cond} to achieve even higher power in this setting, it represents a relatively narrow scenario, unlikely to occur in practice when testing groups of coefficients, where one typically expects more than a single nonzero element.

Almost all of the tests show worse performance as sparsity among the non-null parameters increases. The only exception to this trend is the initial poor performance of the sparse group LASSO test when $k_2 = 0$. In this setting, the sparse group LASSO often returns the zero vector, resulting in large p-values. The other LASSO-based estimators do not exhibit this behavior because they penalize coefficients differently: the LASSO, with individual-level penalties, is able to identify the strong signals in $k_1$ out of the first $k$ coefficients, while the group LASSO, with group-level penalties, is able to select the first group. The sparse group LASSO combines both penalties, and their interaction drives all coefficients to zero in this setting.

\subsection{Robustness}
\label{sec:robustness_add}
This section extends the results from Section \ref{sec:robustness} by considering additional settings for each of the four types of model violation. These settings are described below. All settings are tested in both low ($\rho = 0$) and high ($\rho = 0.5$) correlation regimes.
\begin{enumerate}
    \item \textbf{Heavy-tailed errors}: For each $(n, d)$ pair, we take the errors to be independently and identically distributed from the $t_\nu$-distribution with $\nu$ degrees of freedom for $\nu \in \{2, 5, 10, 15, 20, 30\}$. Errors are standardized by the theoretical standard deviation of the t-distribution except when $\nu = 2$, since there is no finite second moment in this case. Smaller values of $\nu$ correspond to larger tails, and taking $\nu = \infty$ results in standard normal tails. 
    \item \textbf{Skewed errors}: Same as the heavy-tailed error experiments, except errors are taken to be independently and identically distributed from the Gamma distribution with scale parameter 1 and shape parameter $\alpha$ for $\alpha \in \{1, 2, 4, 6, 8, 10\}$. Errors are standardized. Larger values of $\alpha$ correspond to greater symmetry.  
    \item \textbf{Heteroskedastic errors}: Similar to the prior two settings, except errors are generated independently in the following manner. Letting $r_i$ denote the mean of the $i$-th row of $\bm{X}$,
    \begin{align*}
        \epsilon_i \sim 
        \begin{cases}
            \mathcal N(0, 1) \text{, if $r_i \leq 0$} \\
            \mathcal{N}(0, \eta^2) \text{, if $r_i > 0$}
        \end{cases}.
    \end{align*}
    We test $\eta^2 \in \{0.01, 0.25, 0.5, 1, 4, 8\}$, with larger departures from 1 corresponding to a greater degree of heteroskedasticity. 
    \item \textbf{Non-linearity}: Unlike the prior settings, errors are taken to be independently and identically distributed from the normal distribution with mean 0 and variance $\sigma^2$, but the design matrix is taken to be $\bm{X}^\delta$, where the $ij$-th entry is defined as $(X^\delta)_{ij} = \sign(X_{ij})|X_{ij}|^\delta$. We test $\delta \in \{0.3, 0.5, 1, 2, 3, 4\}$, where larger departures from 1 correspond to greater non-linearity in the model.
\end{enumerate}

\begin{figure}[ht]
    \centering
    \begin{subfigure}[t]{0.4\linewidth}
        \centering
        \includegraphics[width=\linewidth]{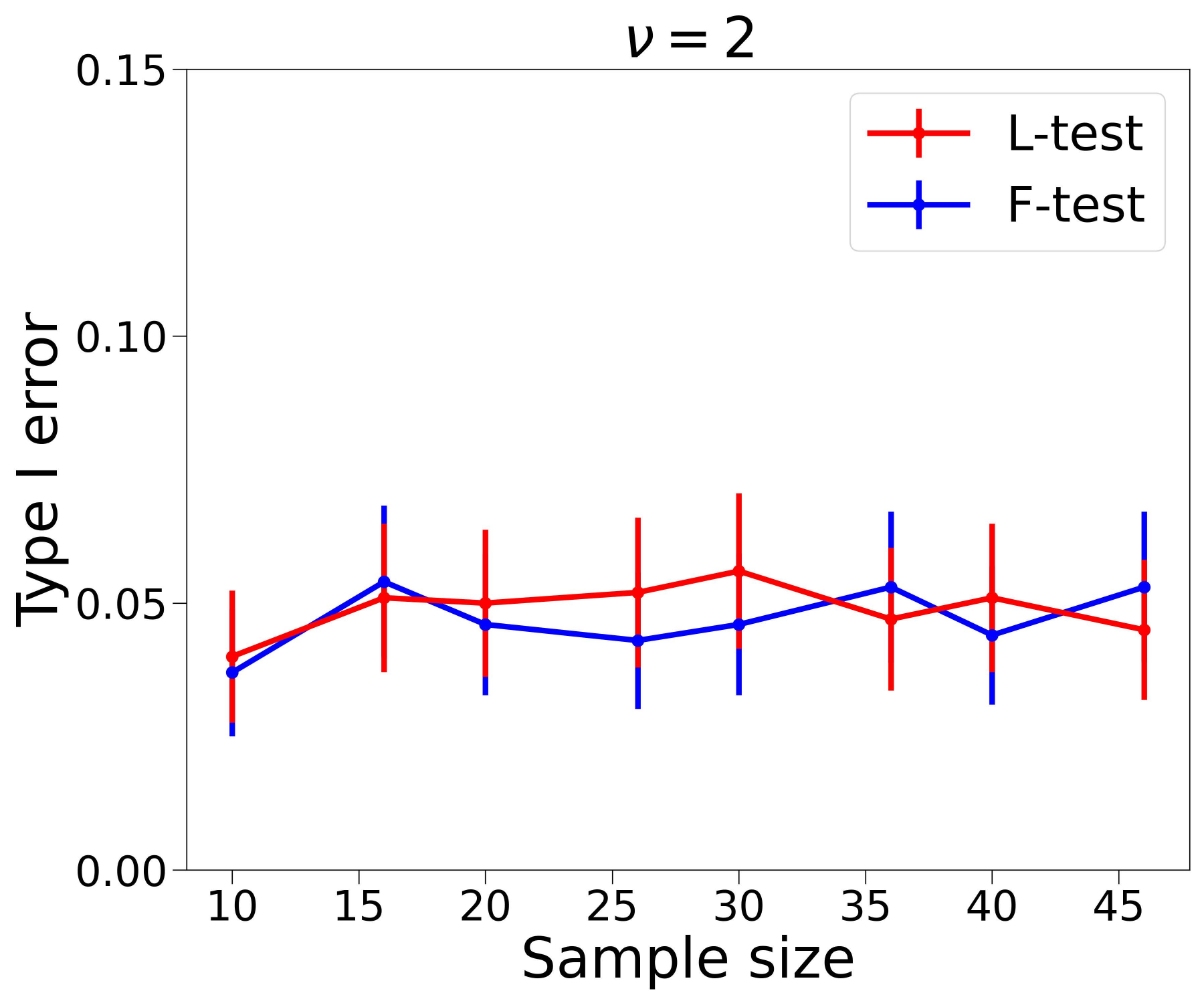}
    \end{subfigure}
    \hspace{0.2cm}
    \begin{subfigure}[t]{0.4\linewidth}
        \centering
        \includegraphics[width=\linewidth]{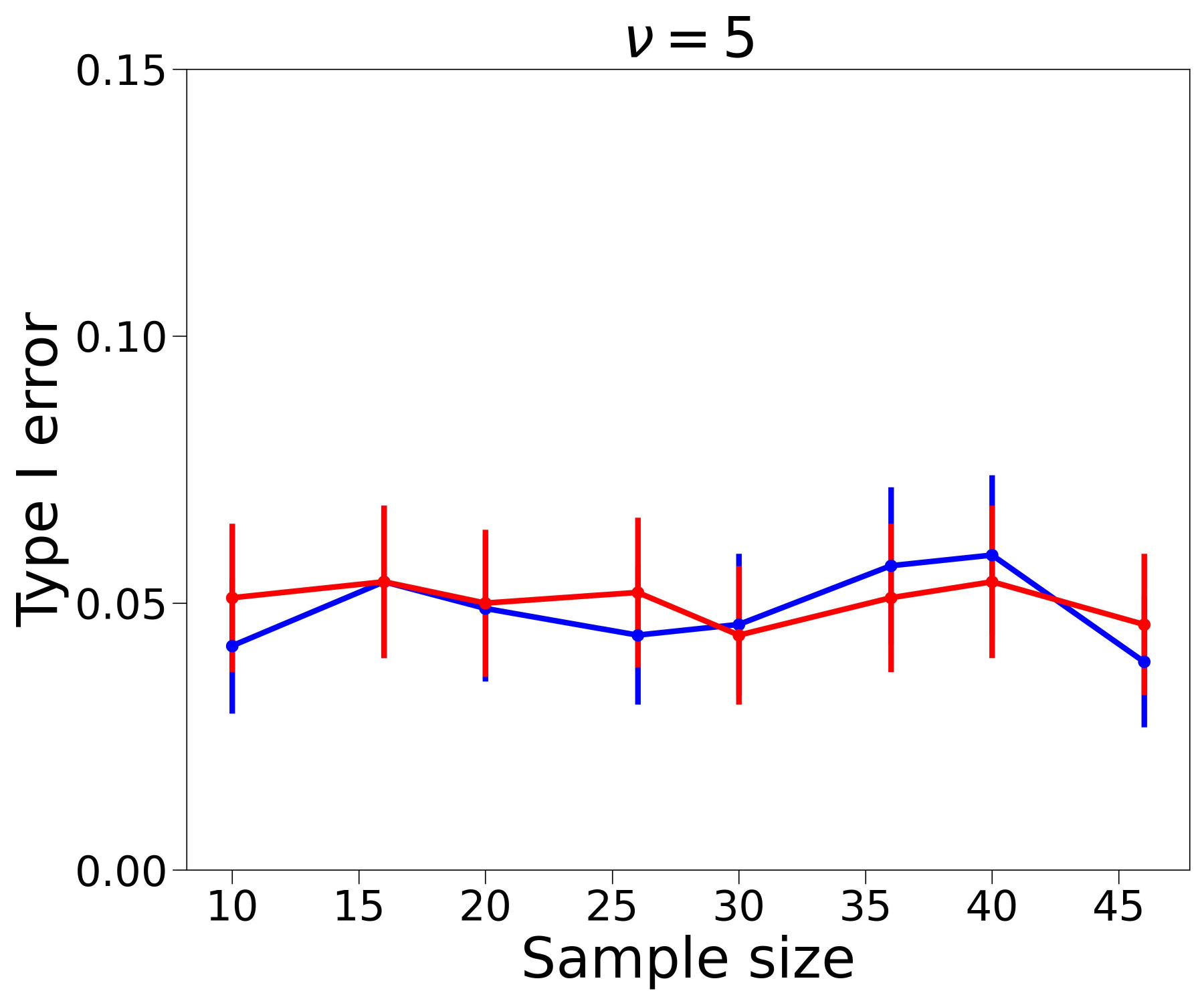}
    \end{subfigure} \\
    \begin{subfigure}[t]{0.4\linewidth}
        \centering
        \includegraphics[width=\linewidth]{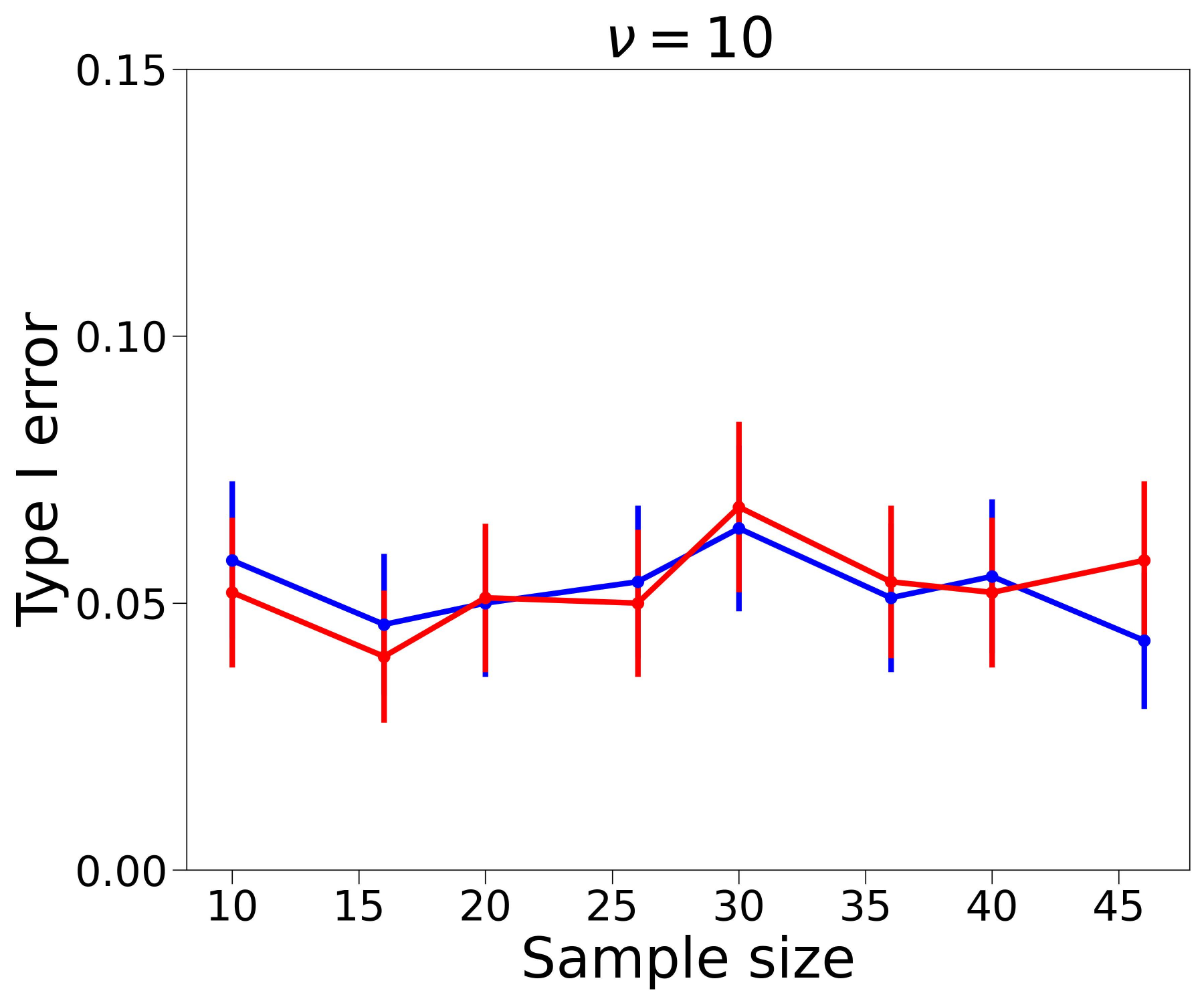}
    \end{subfigure}
    \hspace{0.2cm}
    \begin{subfigure}[t]{0.4\linewidth}
        \centering
        \includegraphics[width=\linewidth]{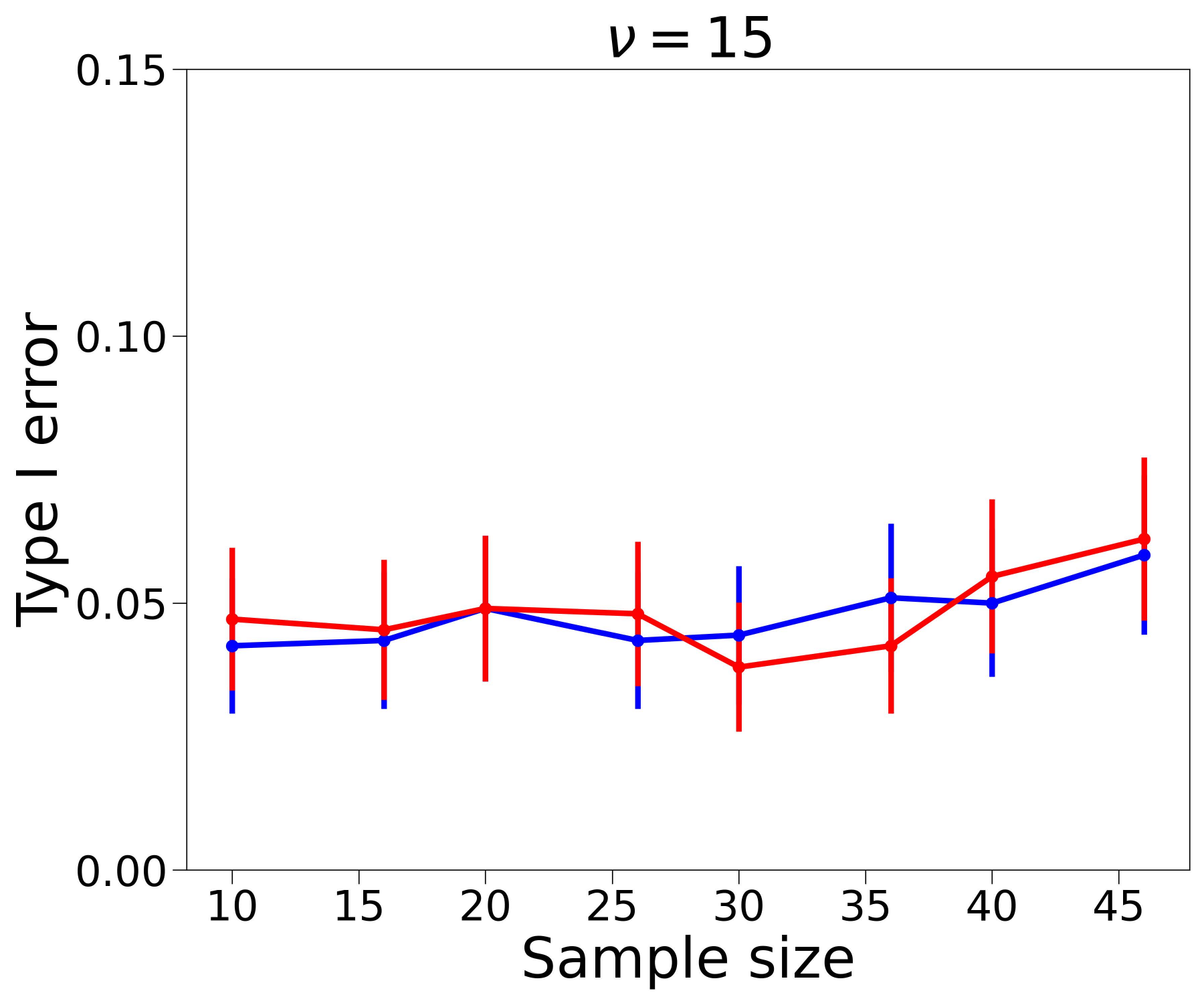}
    \end{subfigure} \\
    \begin{subfigure}[t]{0.4\linewidth}
        \centering
        \includegraphics[width=\linewidth]{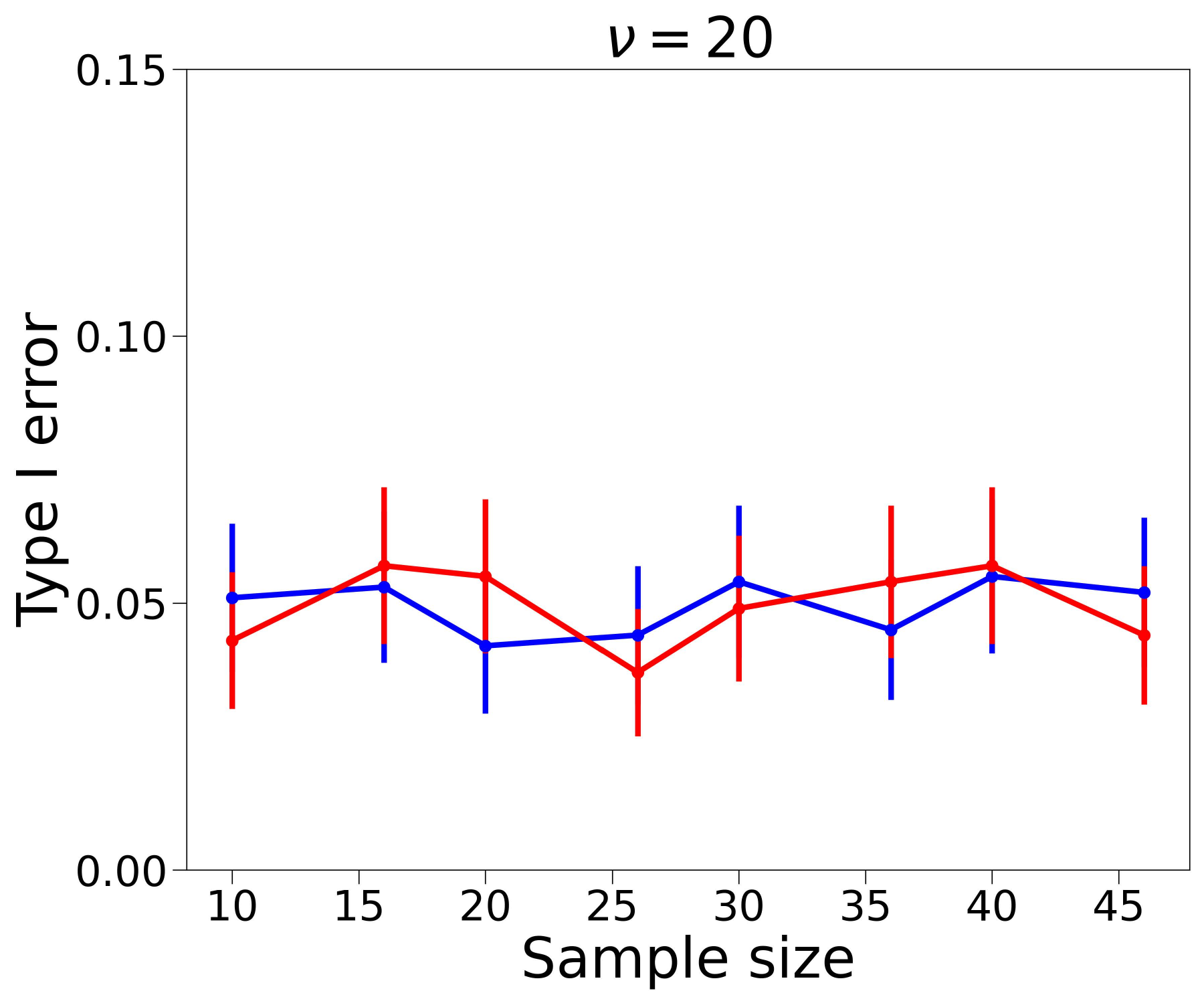}
    \end{subfigure}
    \hspace{0.2cm}
    \begin{subfigure}[t]{0.4\linewidth}
        \centering
        \includegraphics[width=\linewidth]{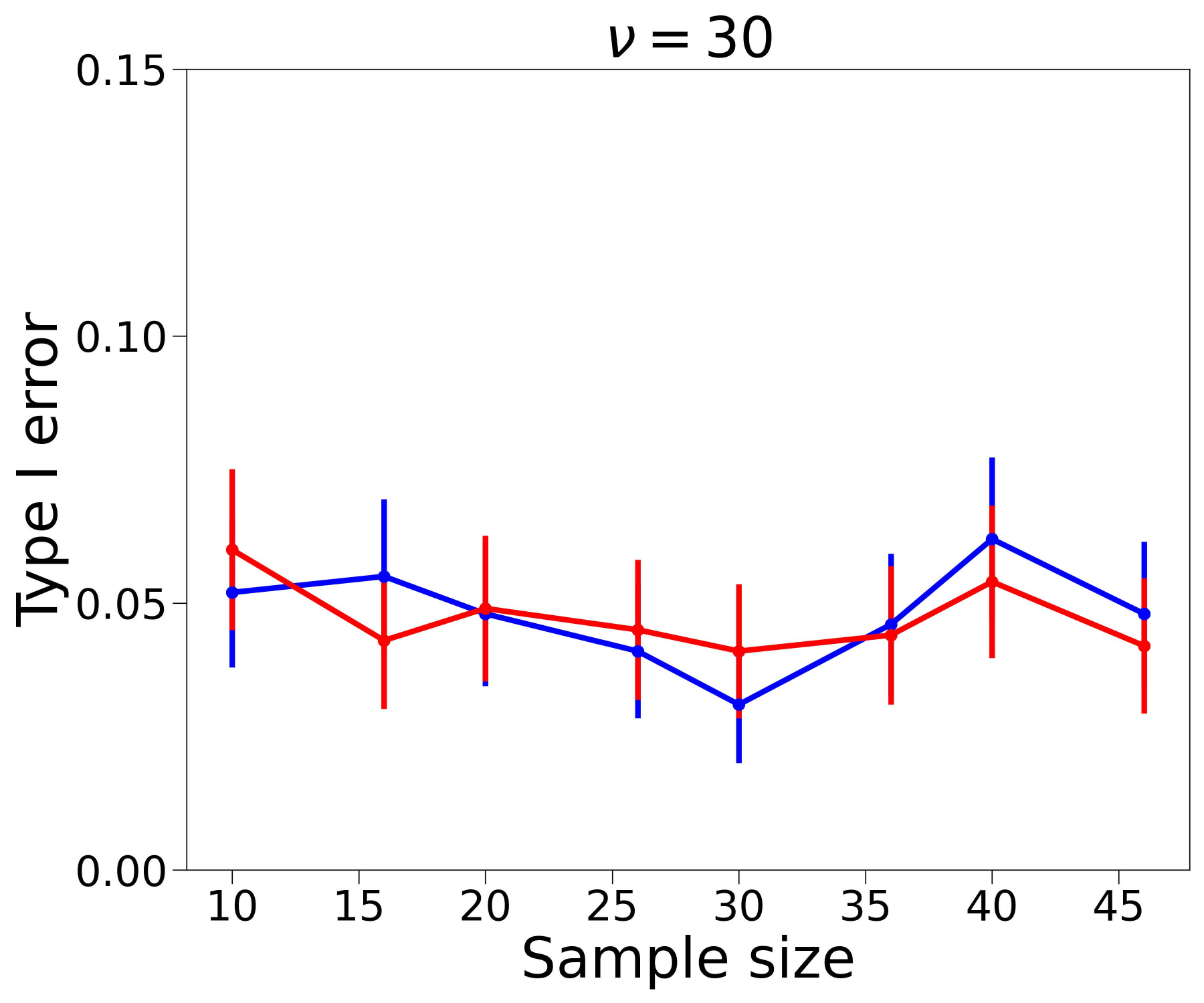}
    \end{subfigure}
    \caption{$\rho = 0$. $F$-test and $L$-test size for $t$-distributed errors with varying degrees of freedom.}
    \label{fig:rob_1}
\end{figure}

\begin{figure}[ht]
    \centering
    \begin{subfigure}[t]{0.4\linewidth}
        \centering
        \includegraphics[width=\linewidth]{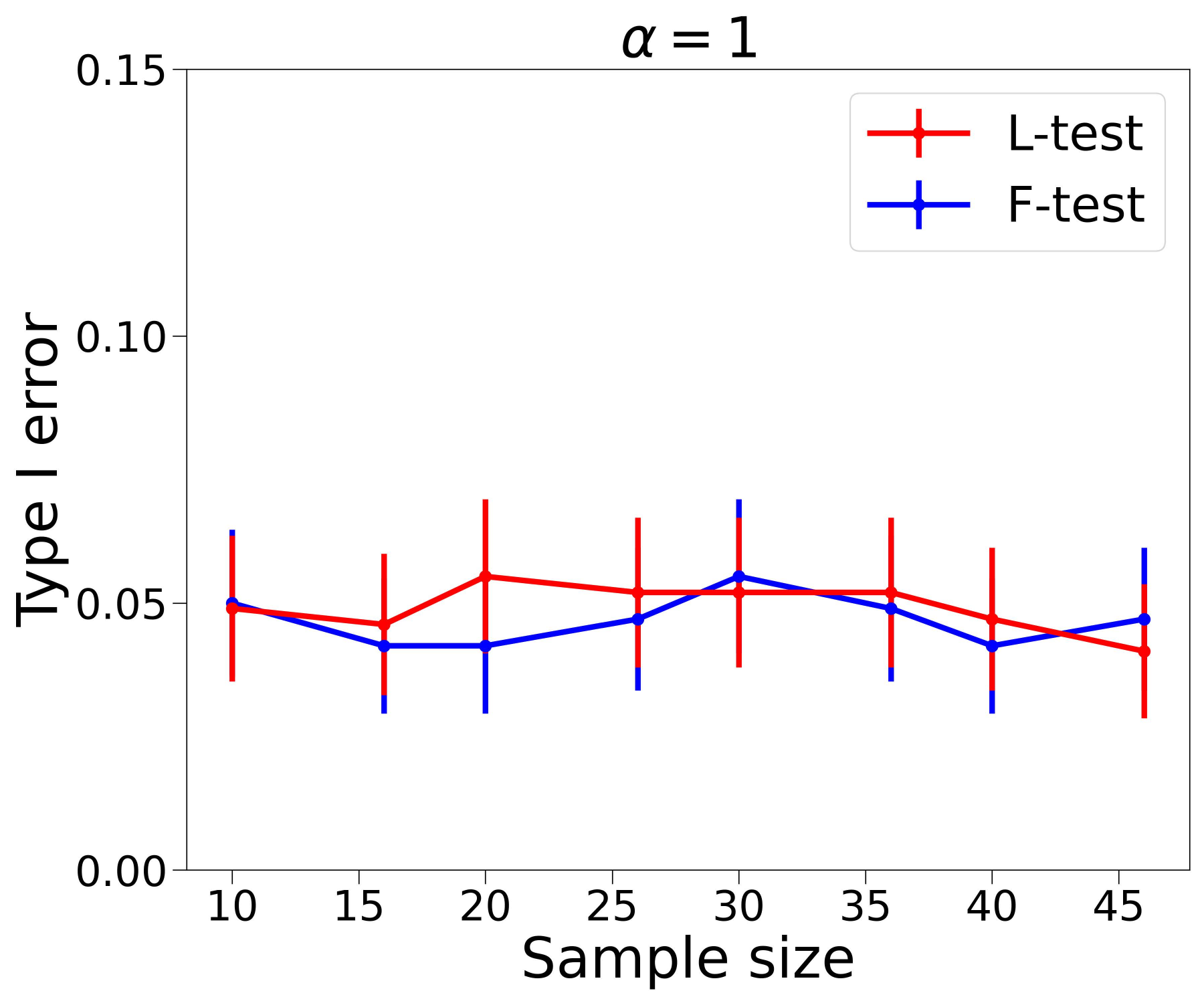}
    \end{subfigure}
    \hspace{0.2cm}
    \begin{subfigure}[t]{0.4\linewidth}
        \centering
        \includegraphics[width=\linewidth]{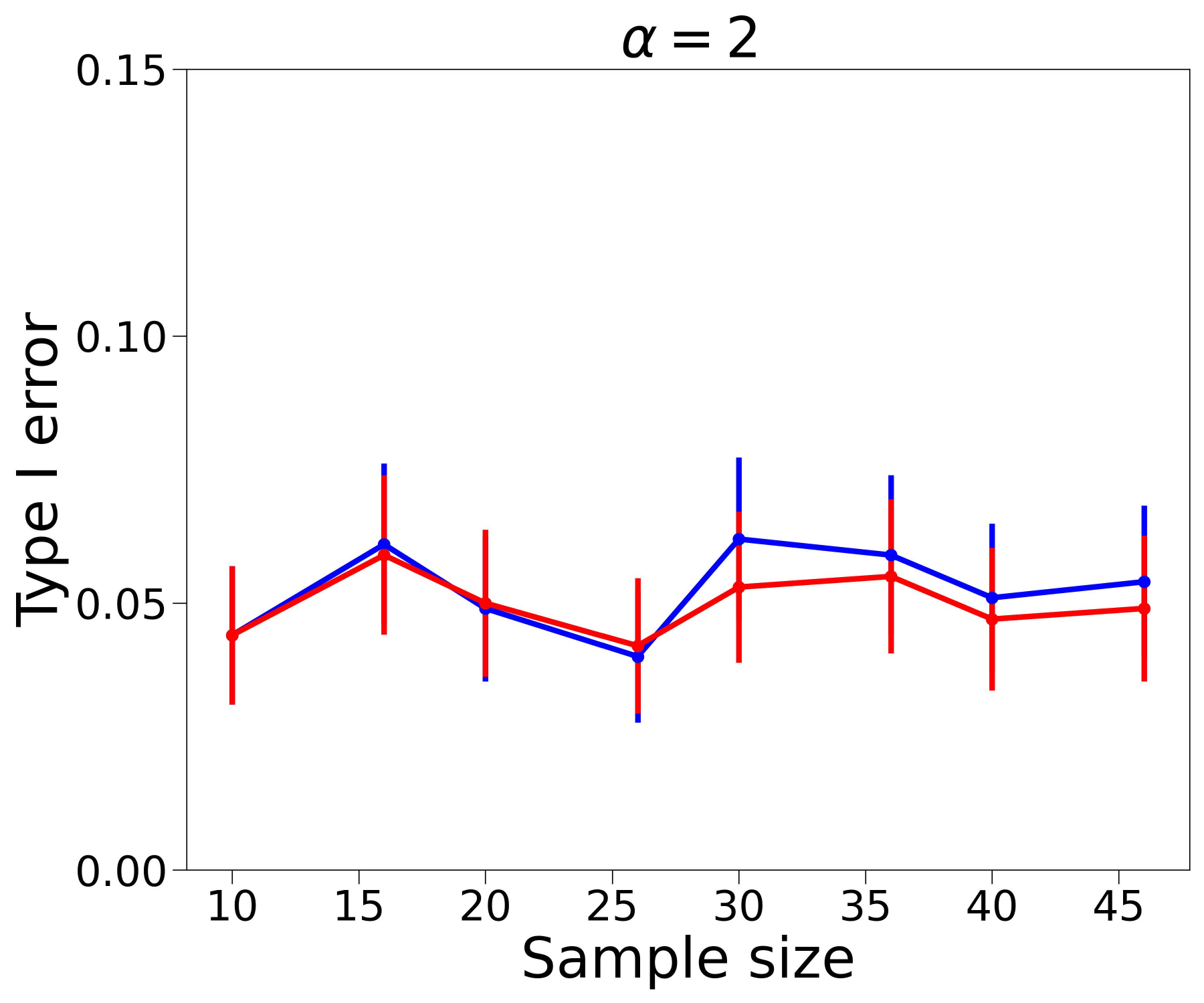}
    \end{subfigure} \\
    \begin{subfigure}[t]{0.4\linewidth}
        \centering
        \includegraphics[width=\linewidth]{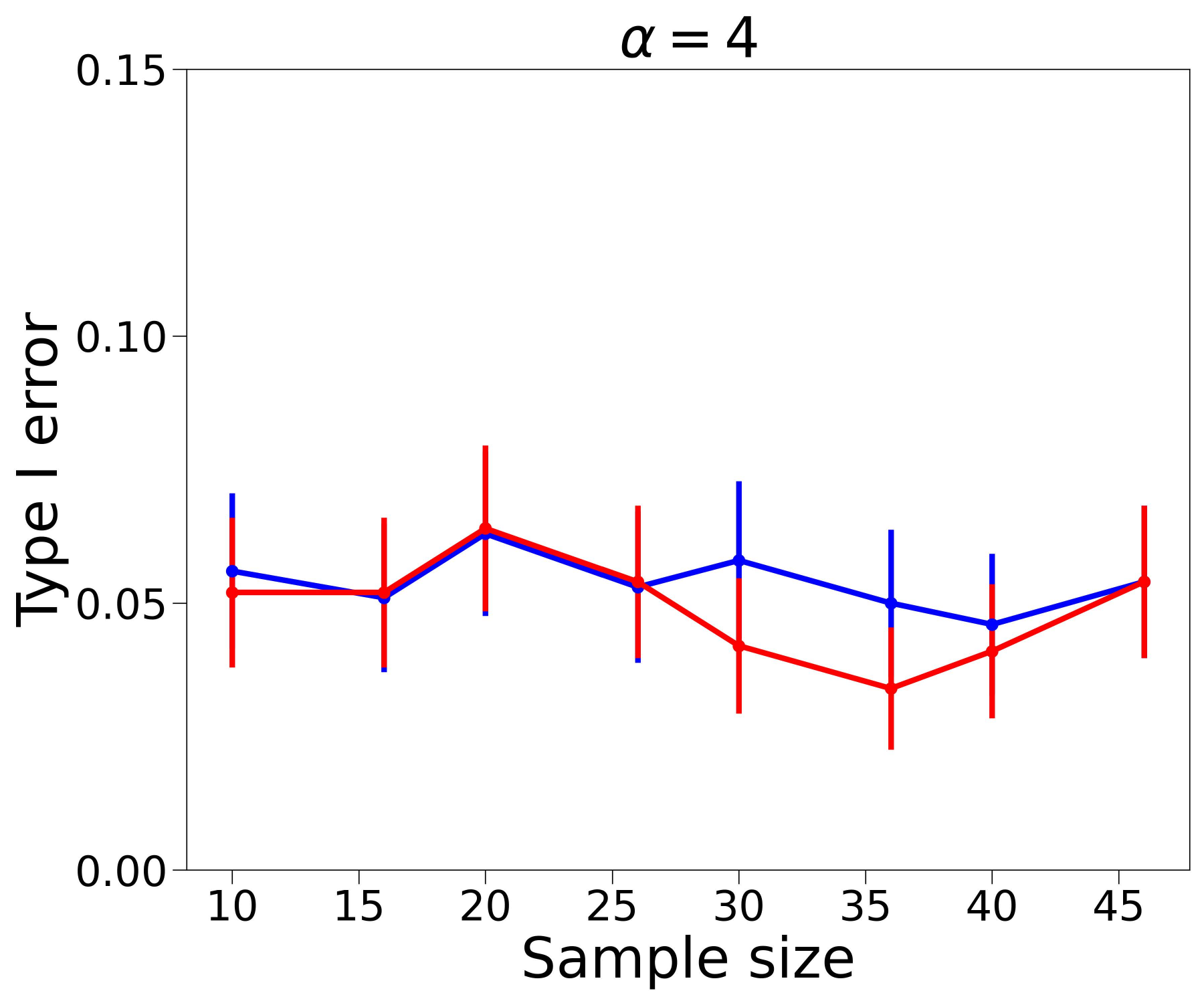}
    \end{subfigure}
    \hspace{0.2cm}
    \begin{subfigure}[t]{0.4\linewidth}
        \centering
        \includegraphics[width=\linewidth]{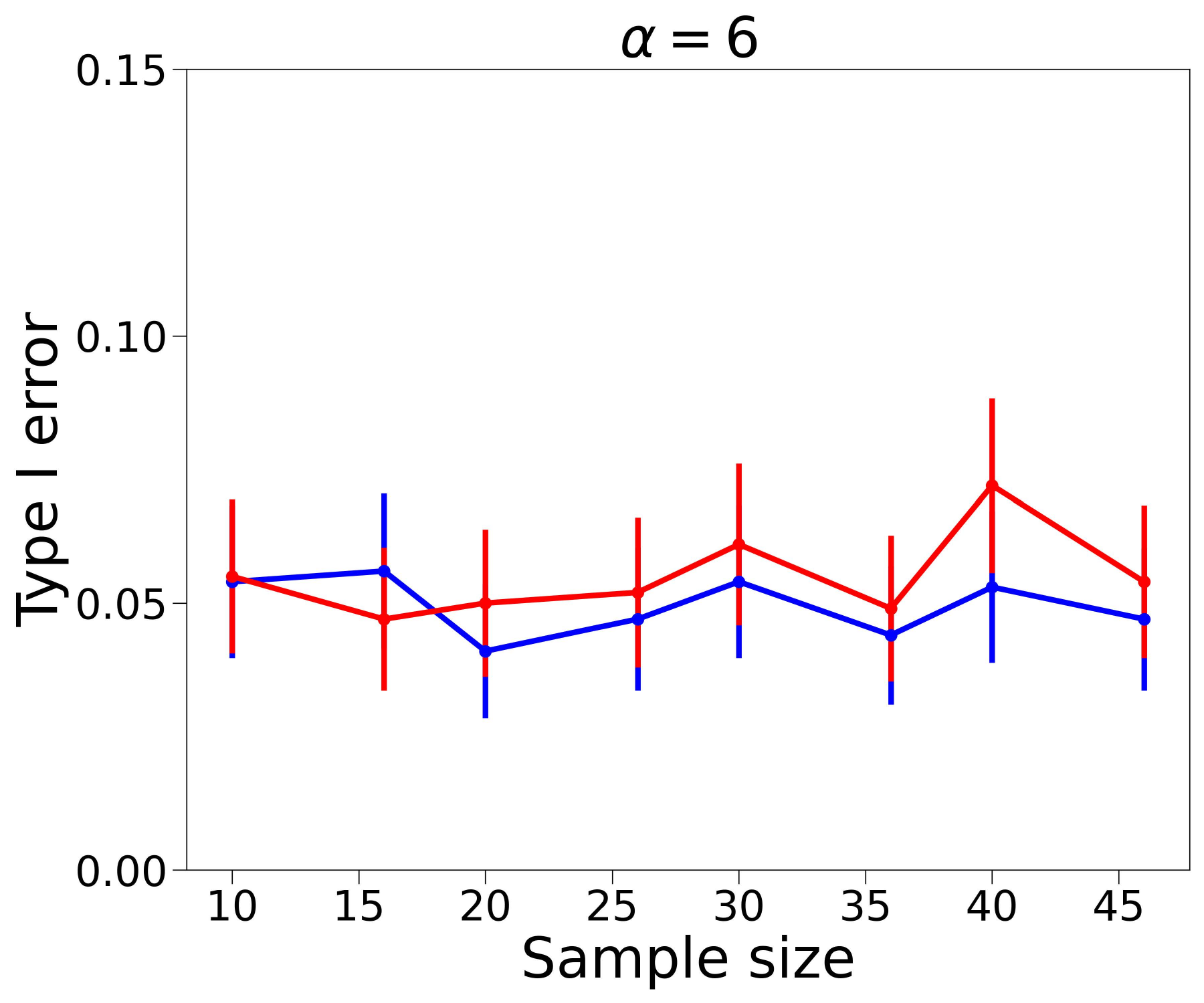}
    \end{subfigure} \\
    \begin{subfigure}[t]{0.4\linewidth}
        \centering
        \includegraphics[width=\linewidth]{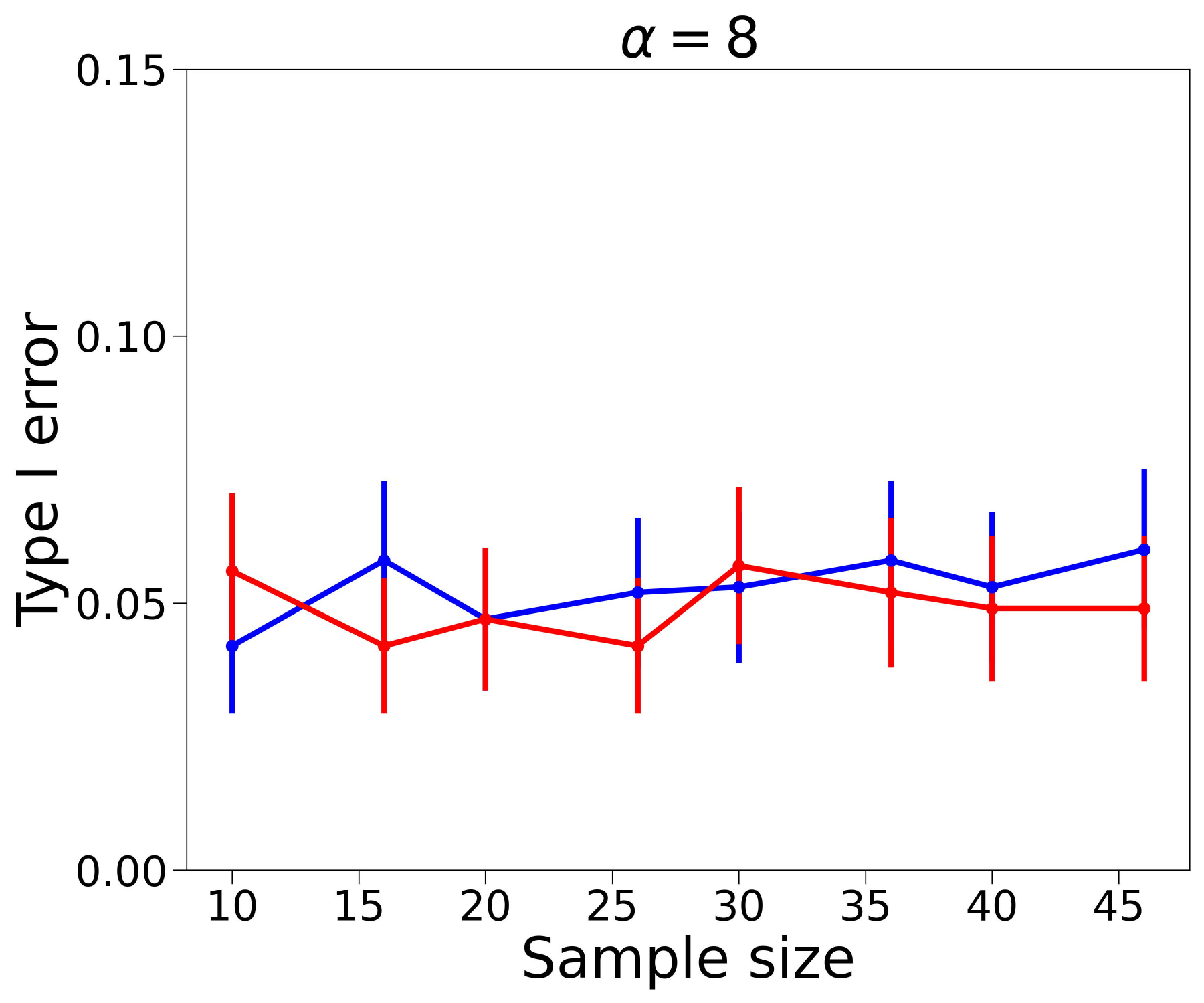}
    \end{subfigure}
    \hspace{0.2cm}
    \begin{subfigure}[t]{0.4\linewidth}
        \centering
        \includegraphics[width=\linewidth]{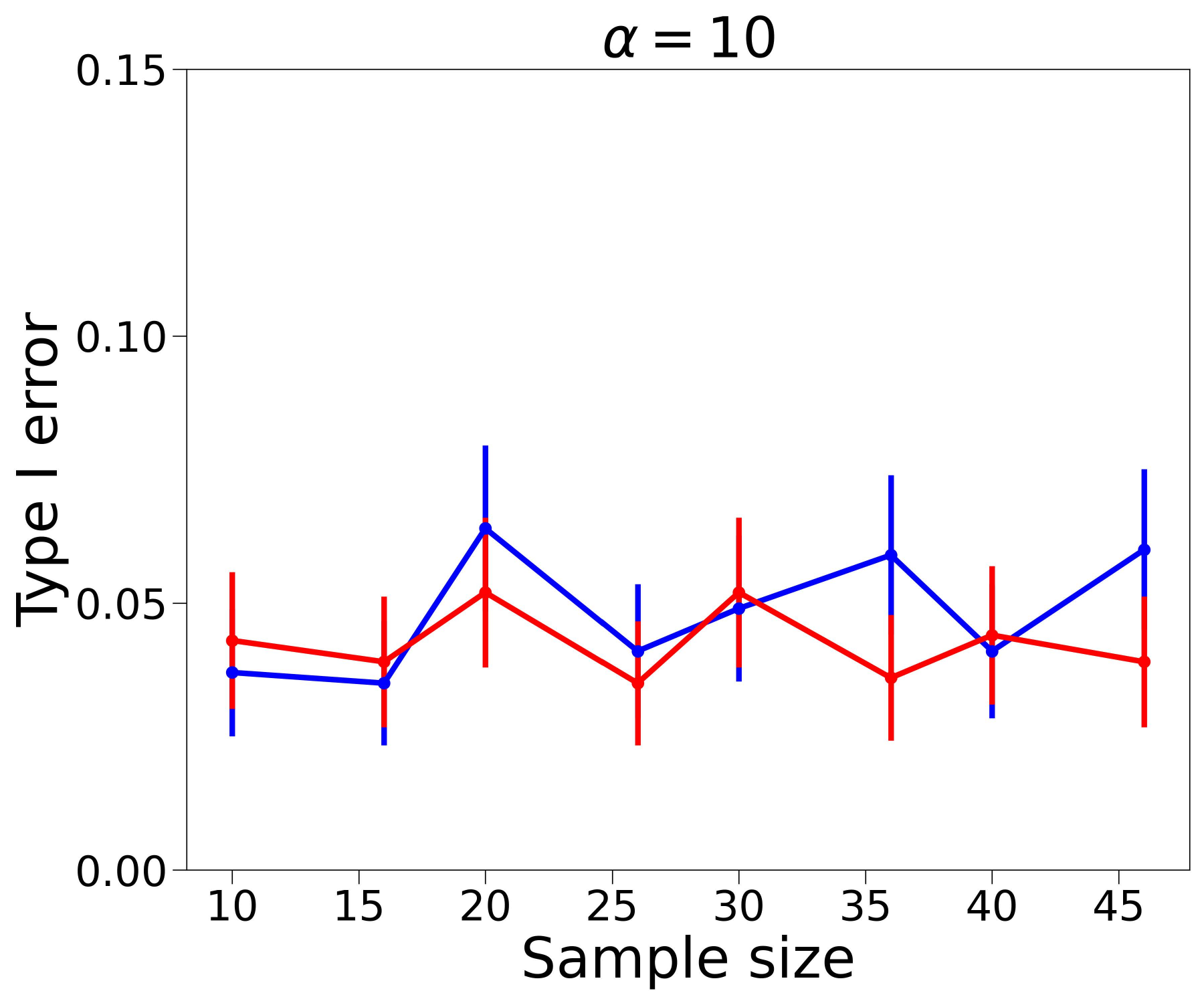}
    \end{subfigure}
    \caption{$\rho = 0$. $F$-test and $L$-test size for Gamma distributed errors with varying shape parameter (scale parameter = 1).}
    \label{fig:rob_2}
\end{figure}

\begin{figure}[ht]
    \centering
    \begin{subfigure}[t]{0.4\linewidth}
        \centering
        \includegraphics[width=\linewidth]{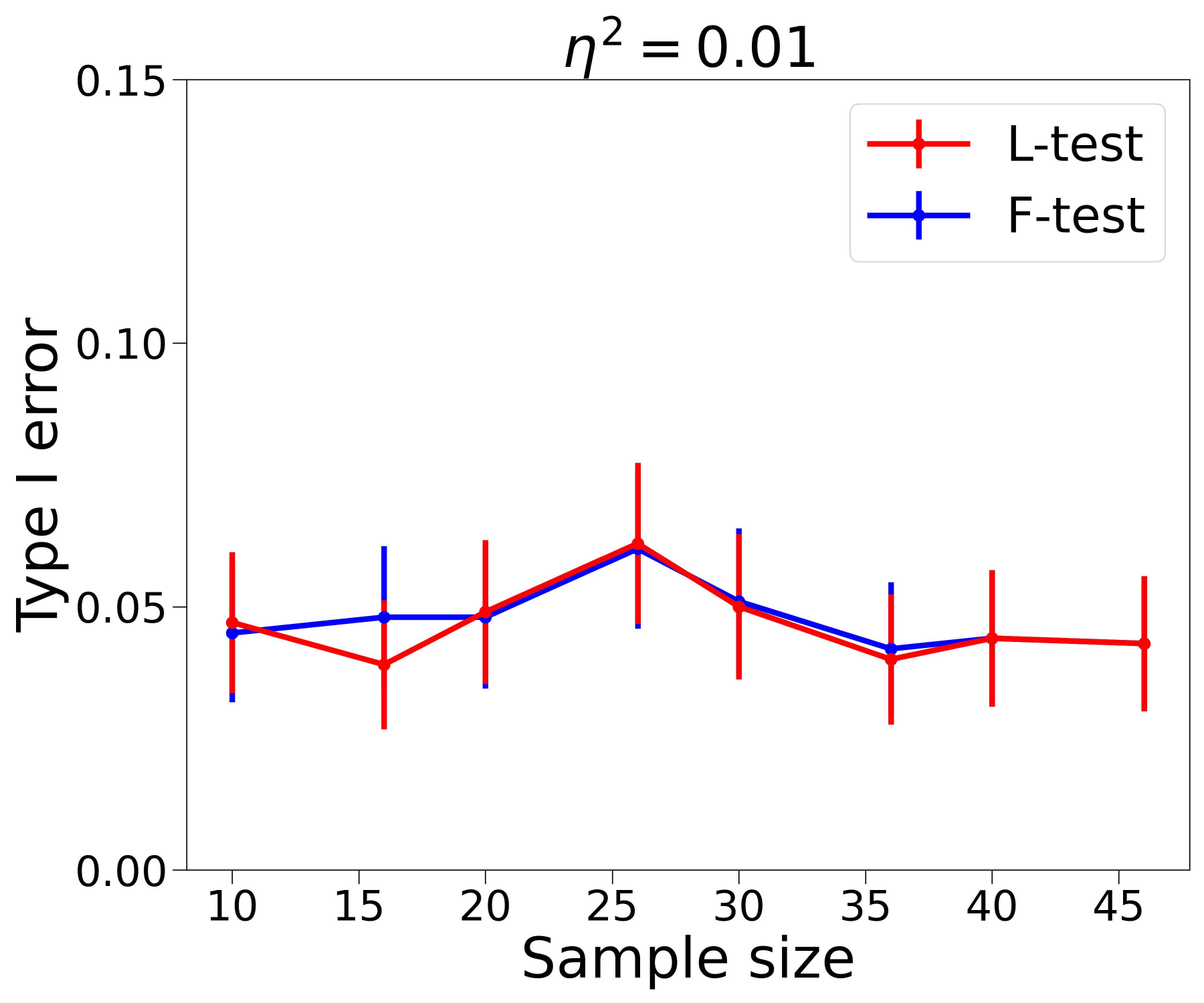}
    \end{subfigure}
    \hspace{0.2cm}
    \begin{subfigure}[t]{0.4\linewidth}
        \centering
        \includegraphics[width=\linewidth]{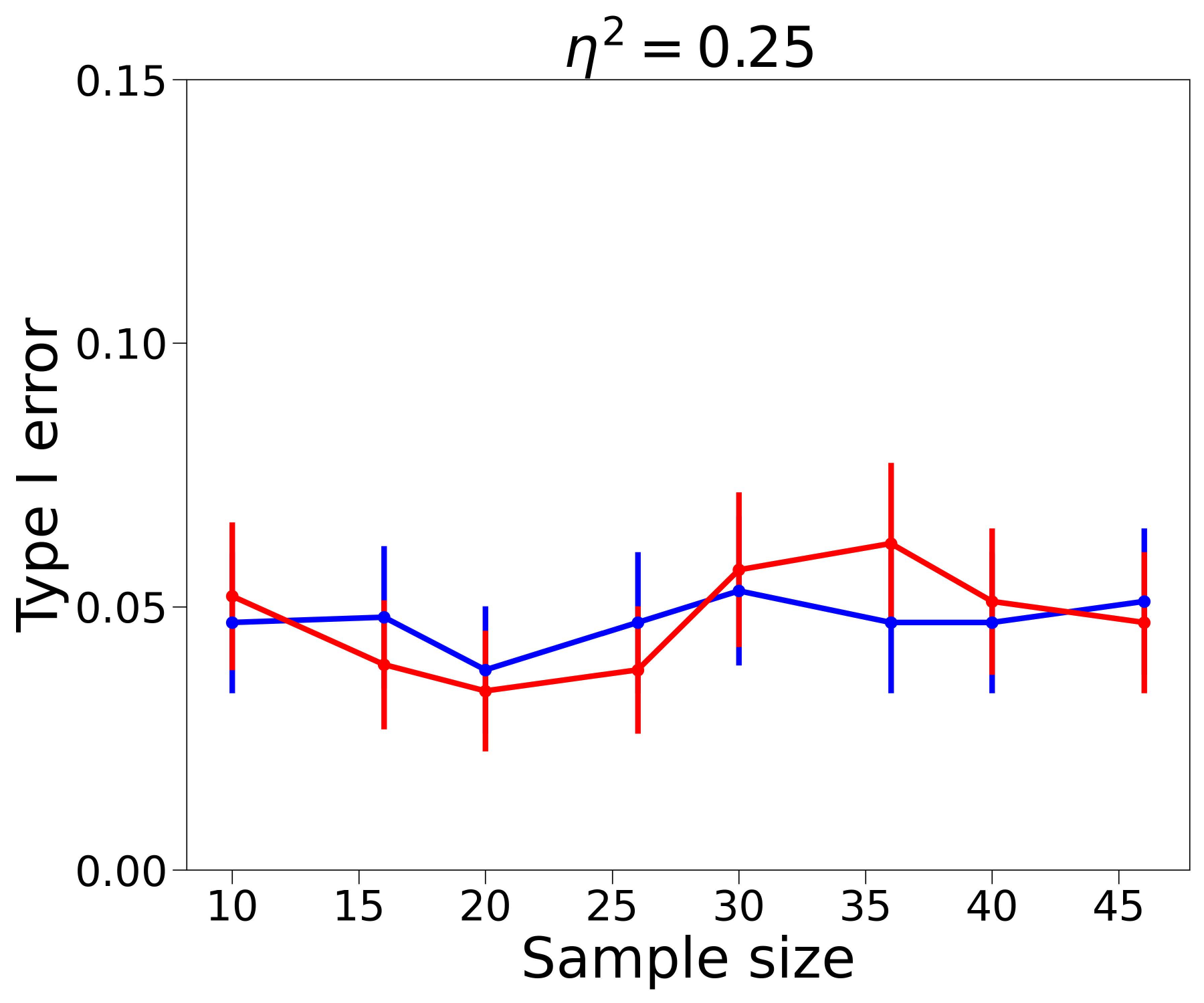}
    \end{subfigure} \\
    \begin{subfigure}[t]{0.4\linewidth}
        \centering
        \includegraphics[width=\linewidth]{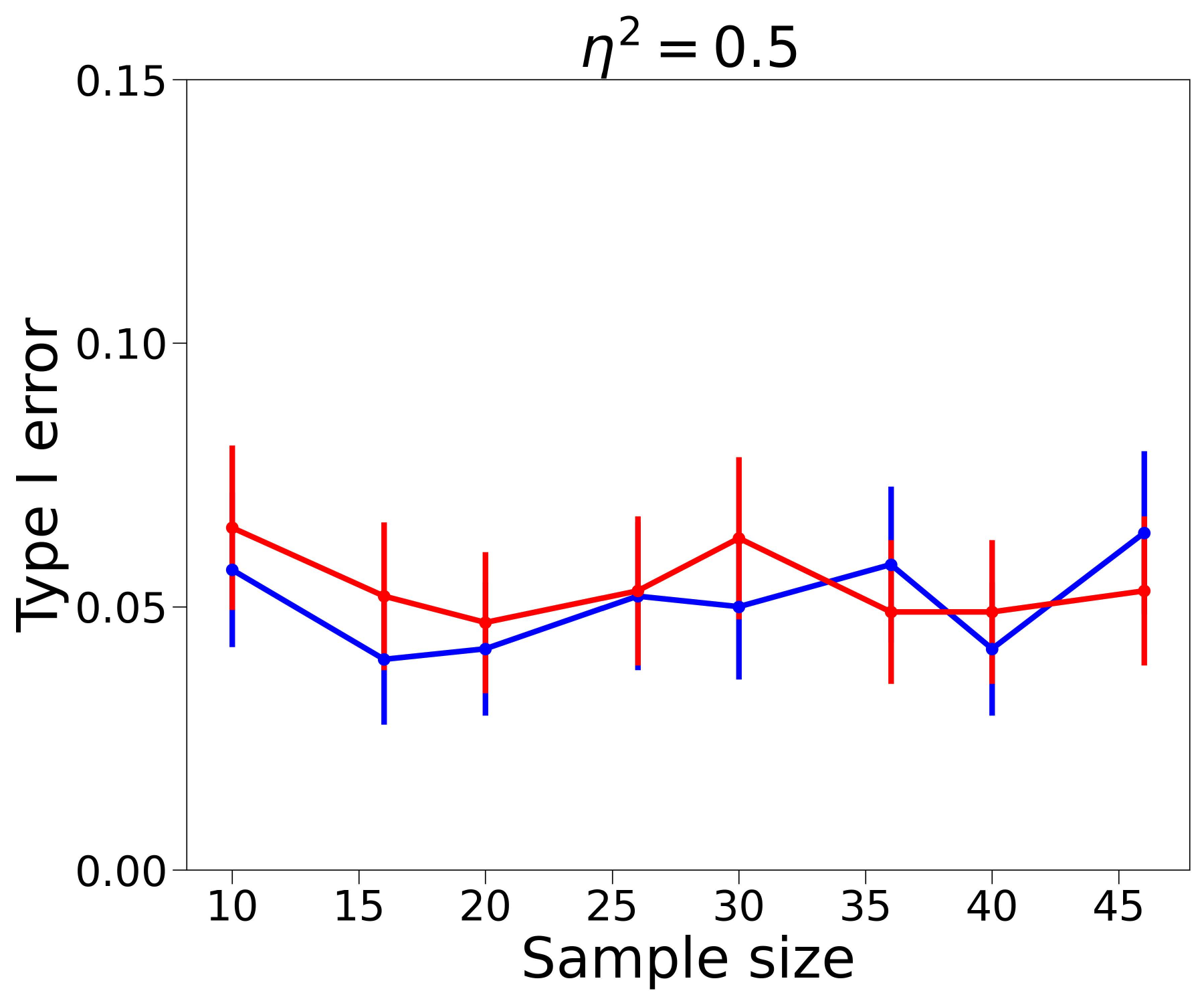}
    \end{subfigure}
    \hspace{0.2cm}
    \begin{subfigure}[t]{0.4\linewidth}
        \centering
        \includegraphics[width=\linewidth]{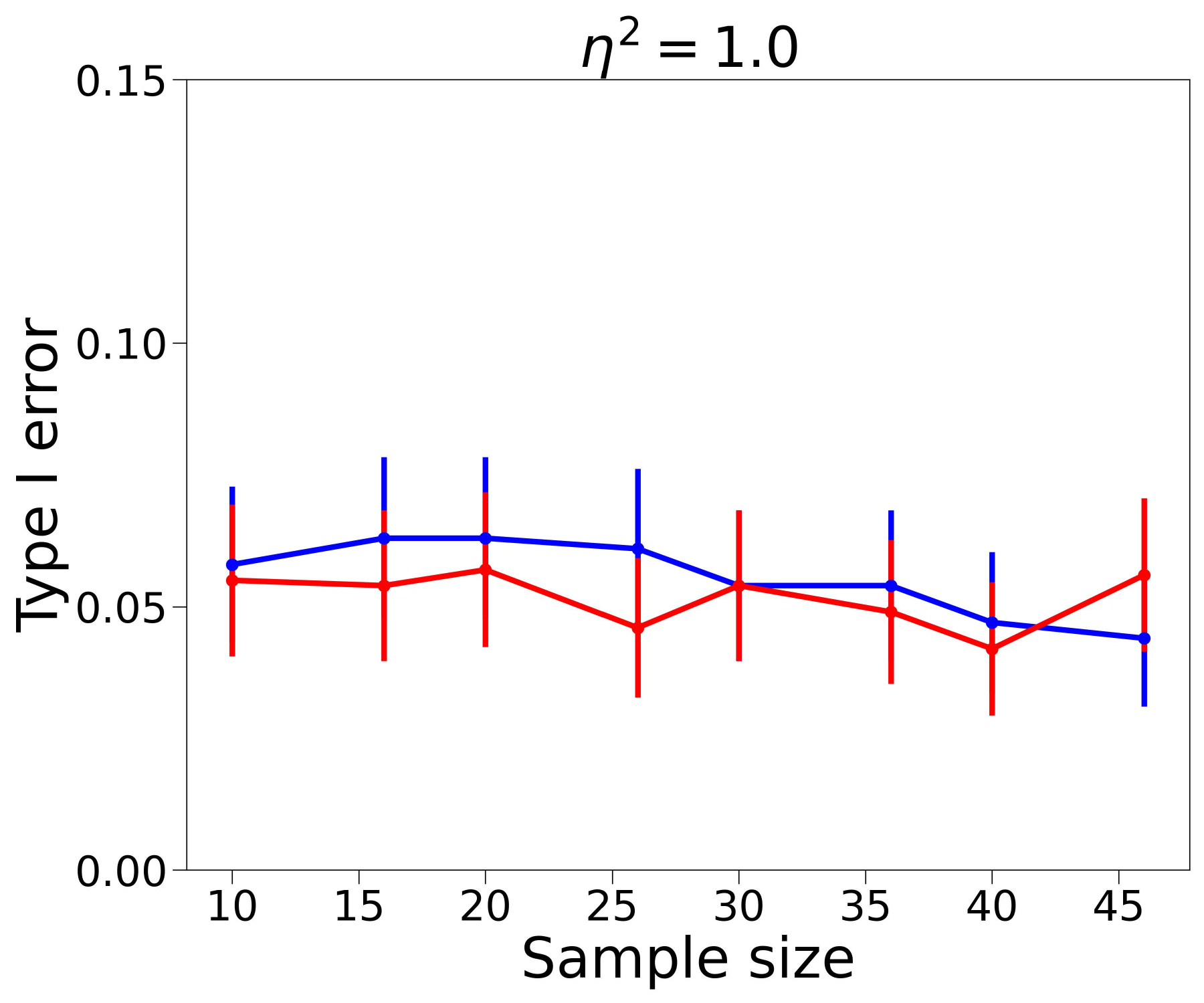}
    \end{subfigure} \\
    \begin{subfigure}[t]{0.4\linewidth}
        \centering
        \includegraphics[width=\linewidth]{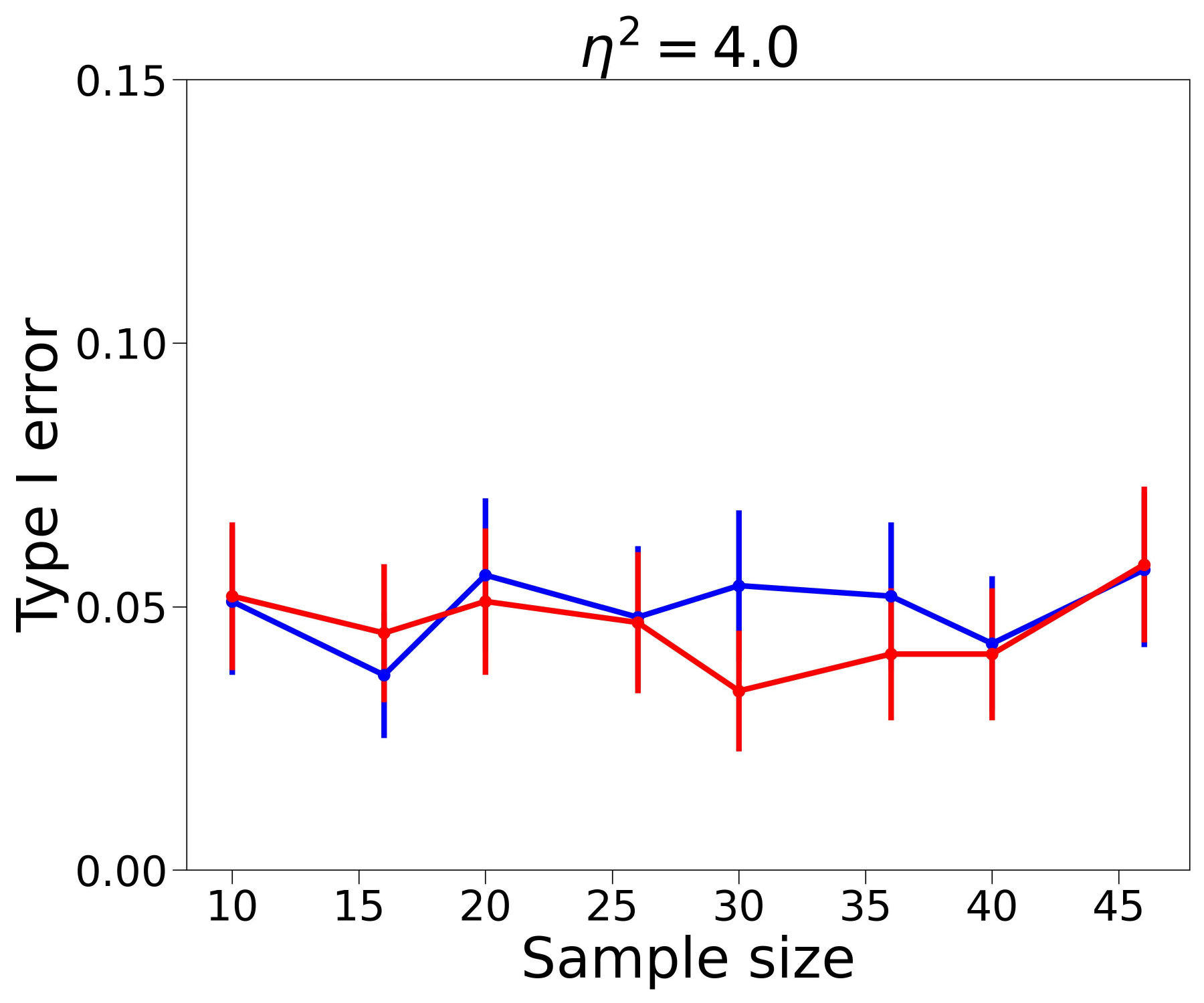}
    \end{subfigure}
    \hspace{0.2cm}
    \begin{subfigure}[t]{0.4\linewidth}
        \centering
        \includegraphics[width=\linewidth]{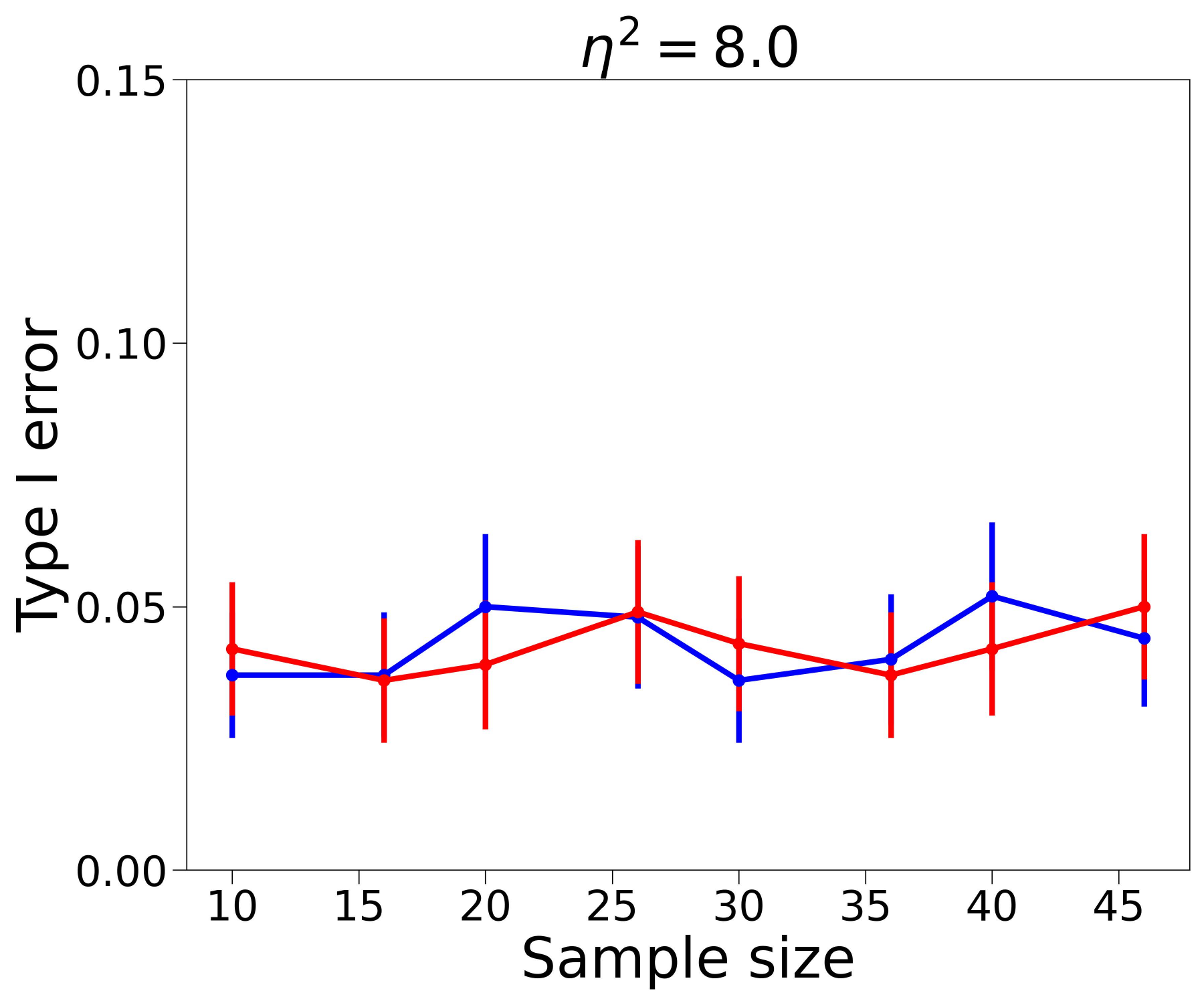}
    \end{subfigure}
    \caption{$\rho = 0$. $F$-test and $L$-test size for heteroskedastic errors with departures of $\eta^2$ from 1 indicating a greater degree of heteroskedasticity.}
    \label{fig:rob_3}
\end{figure}

\begin{figure}[ht]
    \centering
    \begin{subfigure}[t]{0.4\linewidth}
        \centering
        \includegraphics[width=\linewidth]{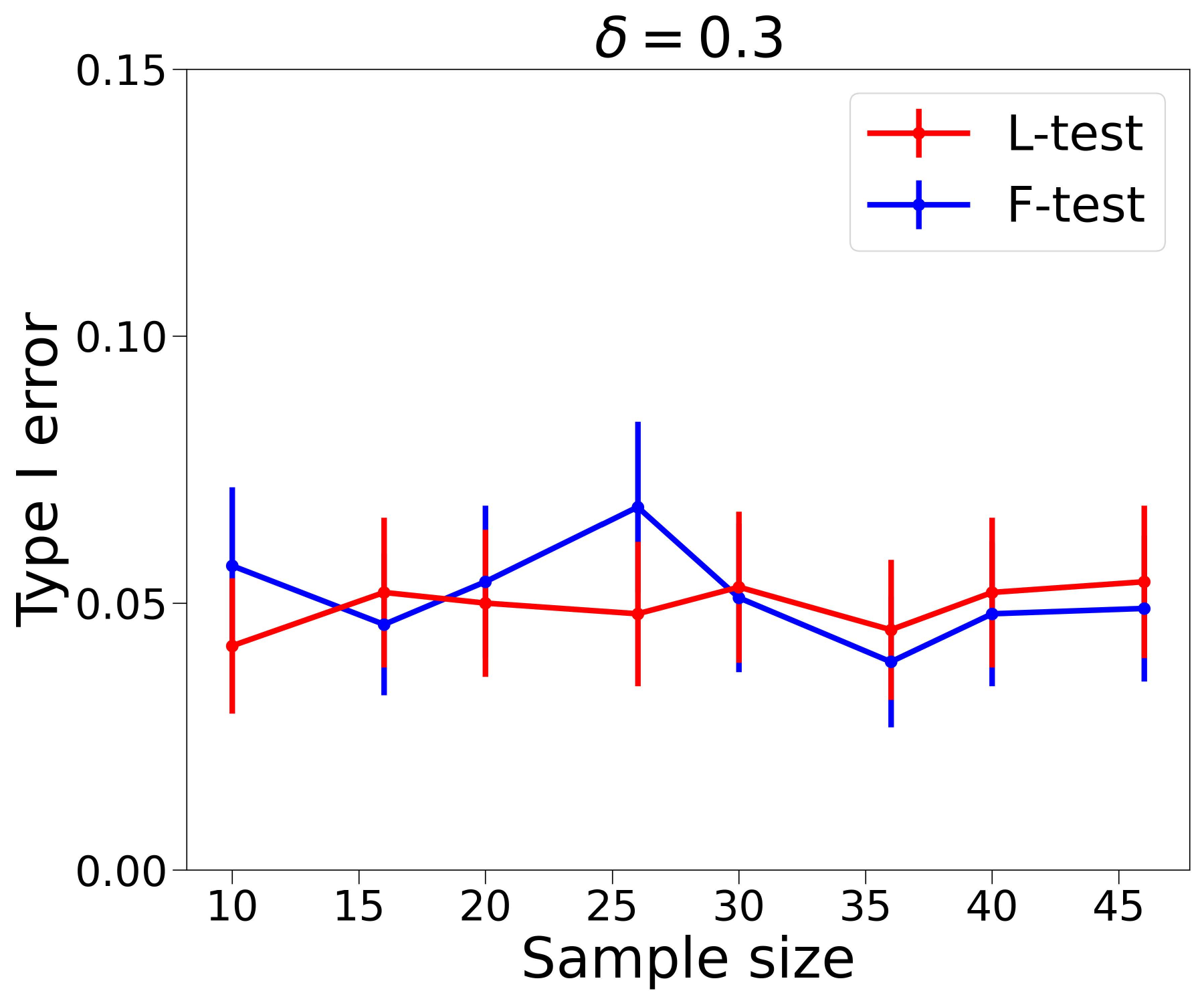}
    \end{subfigure}
    \hspace{0.2cm}
    \begin{subfigure}[t]{0.4\linewidth}
        \centering
        \includegraphics[width=\linewidth]{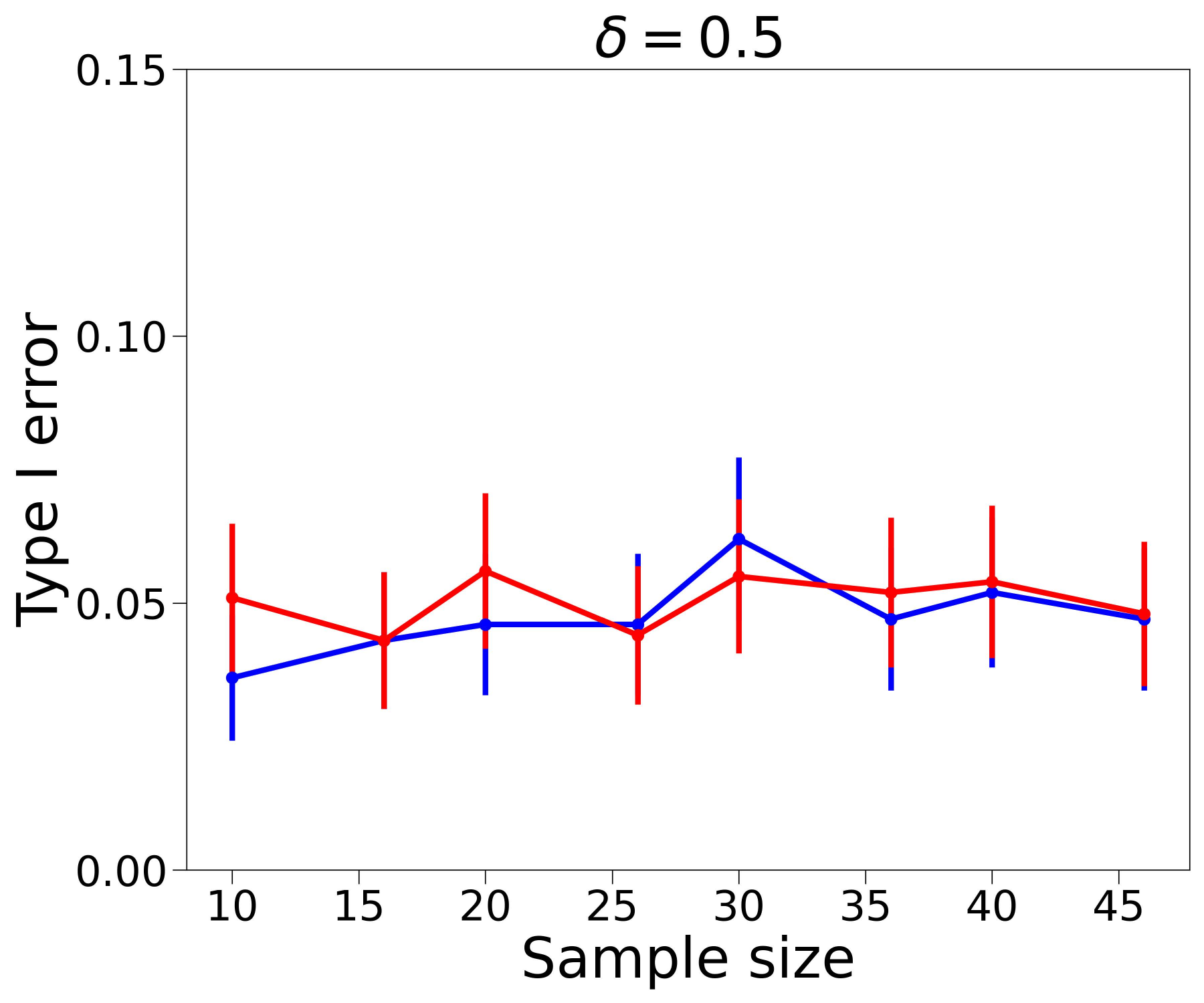}
    \end{subfigure} \\
    \begin{subfigure}[t]{0.4\linewidth}
        \centering
        \includegraphics[width=\linewidth]{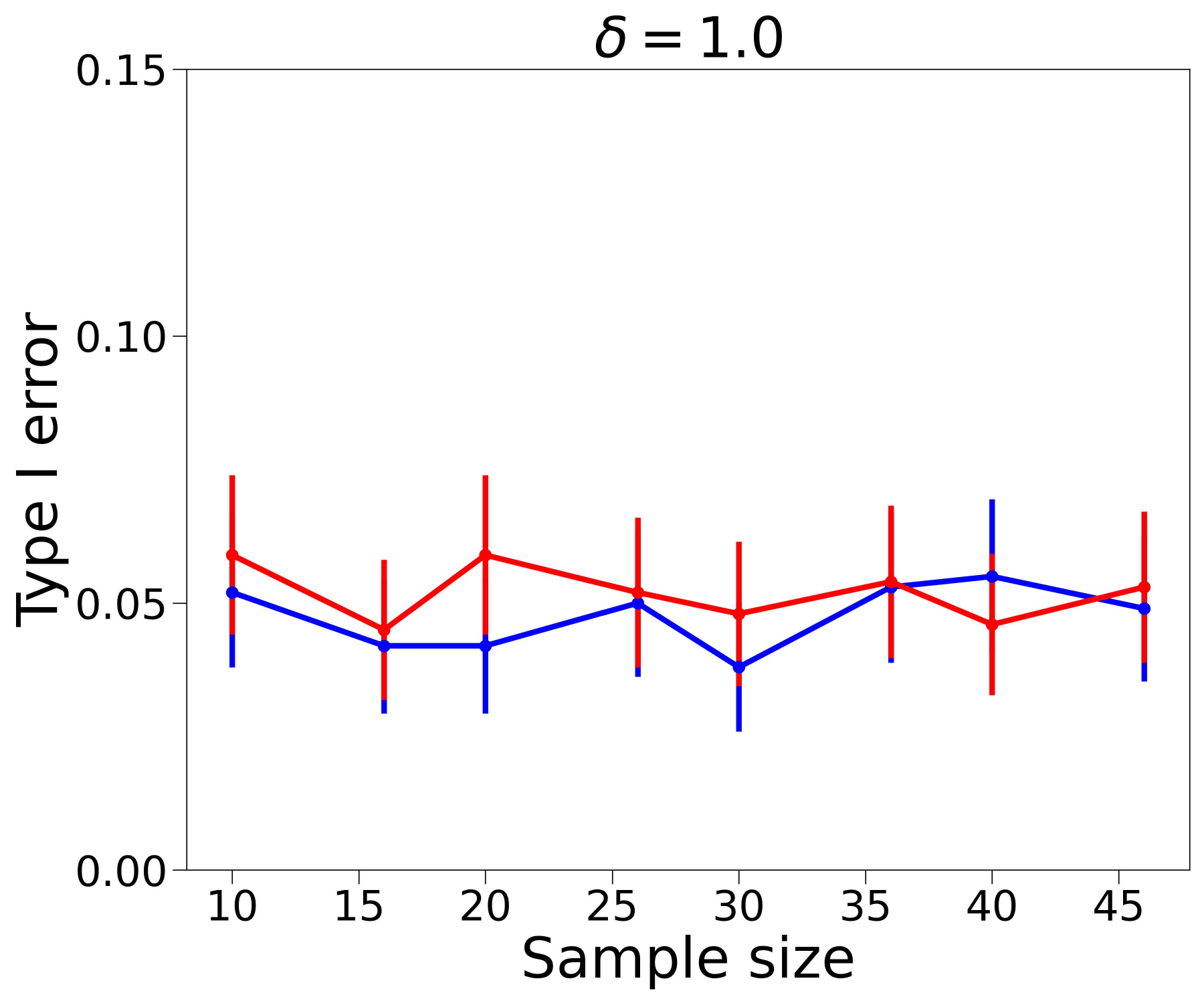}
    \end{subfigure}
    \hspace{0.2cm}
    \begin{subfigure}[t]{0.4\linewidth}
        \centering
        \includegraphics[width=\linewidth]{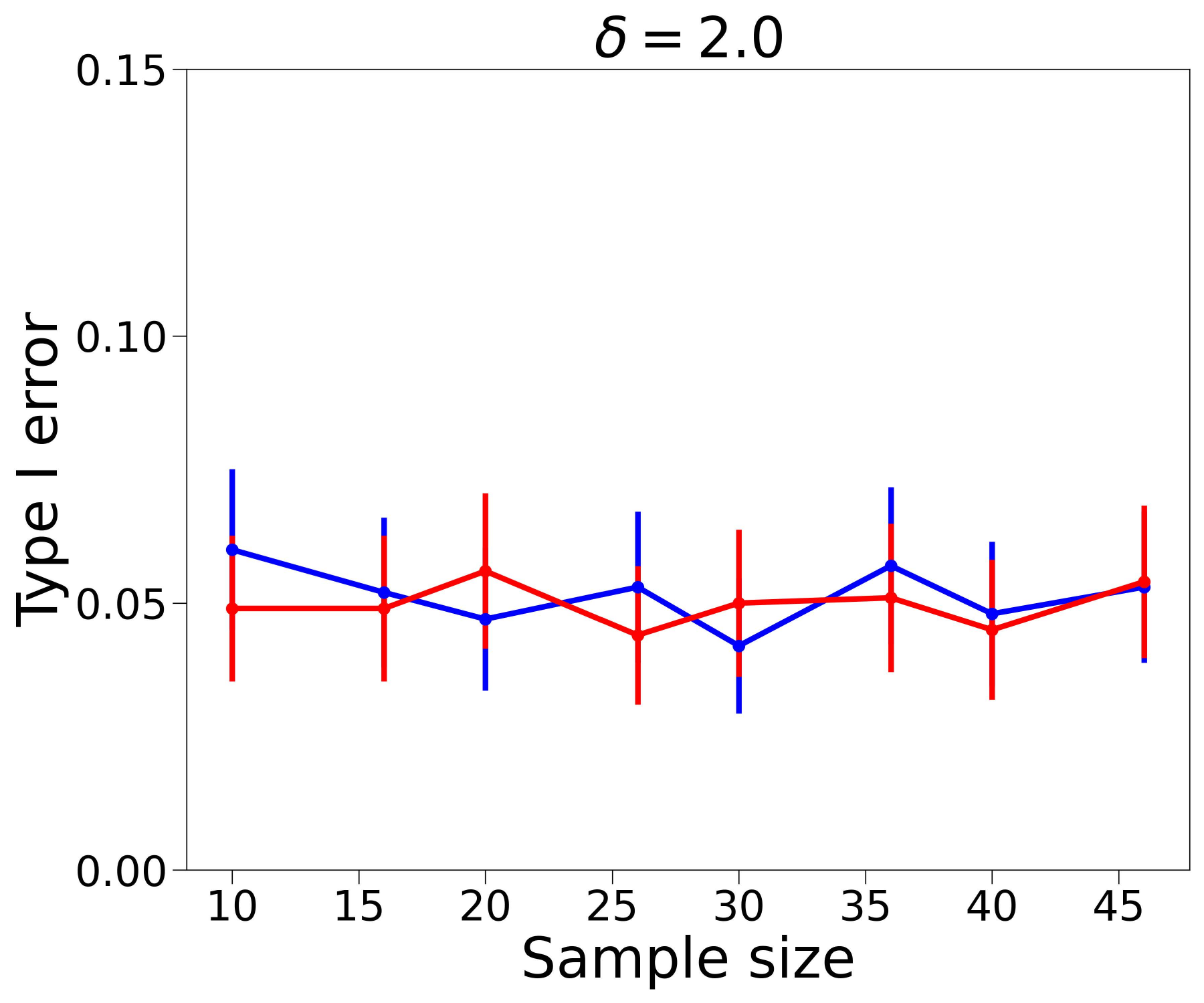}
    \end{subfigure} \\
    \begin{subfigure}[t]{0.4\linewidth}
        \centering
        \includegraphics[width=\linewidth]{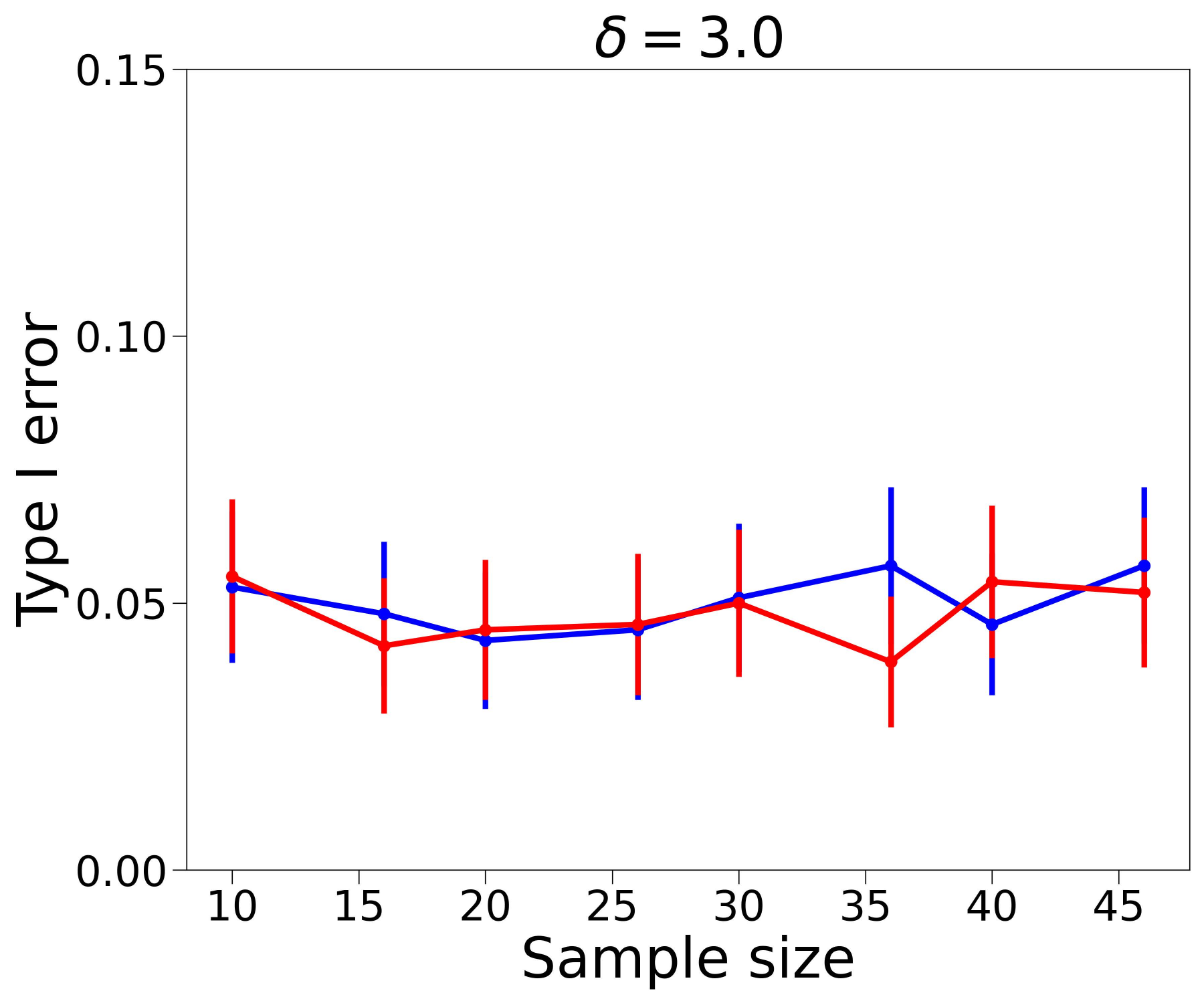}
    \end{subfigure}
    \hspace{0.2cm}
    \begin{subfigure}[t]{0.4\linewidth}
        \centering
        \includegraphics[width=\linewidth]{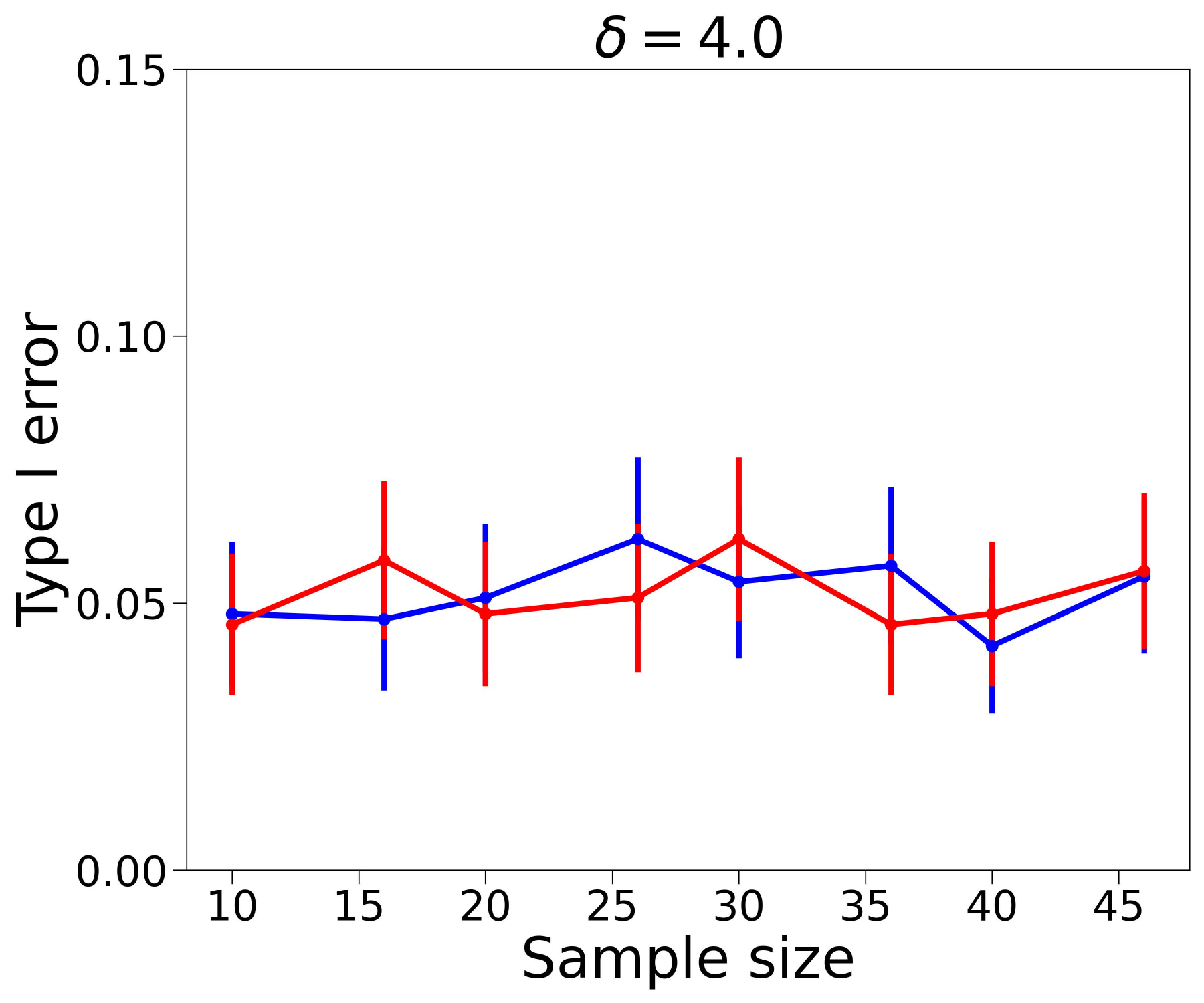}
    \end{subfigure}
    \caption{$\rho = 0$. $F$-test and $L$-test size under non-linearity, where departures of $\delta$ from 1 indicate a greater degree of non-linearity.}
    \label{fig:rob_4}
\end{figure}

\begin{figure}[ht]
    \centering
    \begin{subfigure}[t]{0.4\linewidth}
        \centering
        \includegraphics[width=\linewidth]{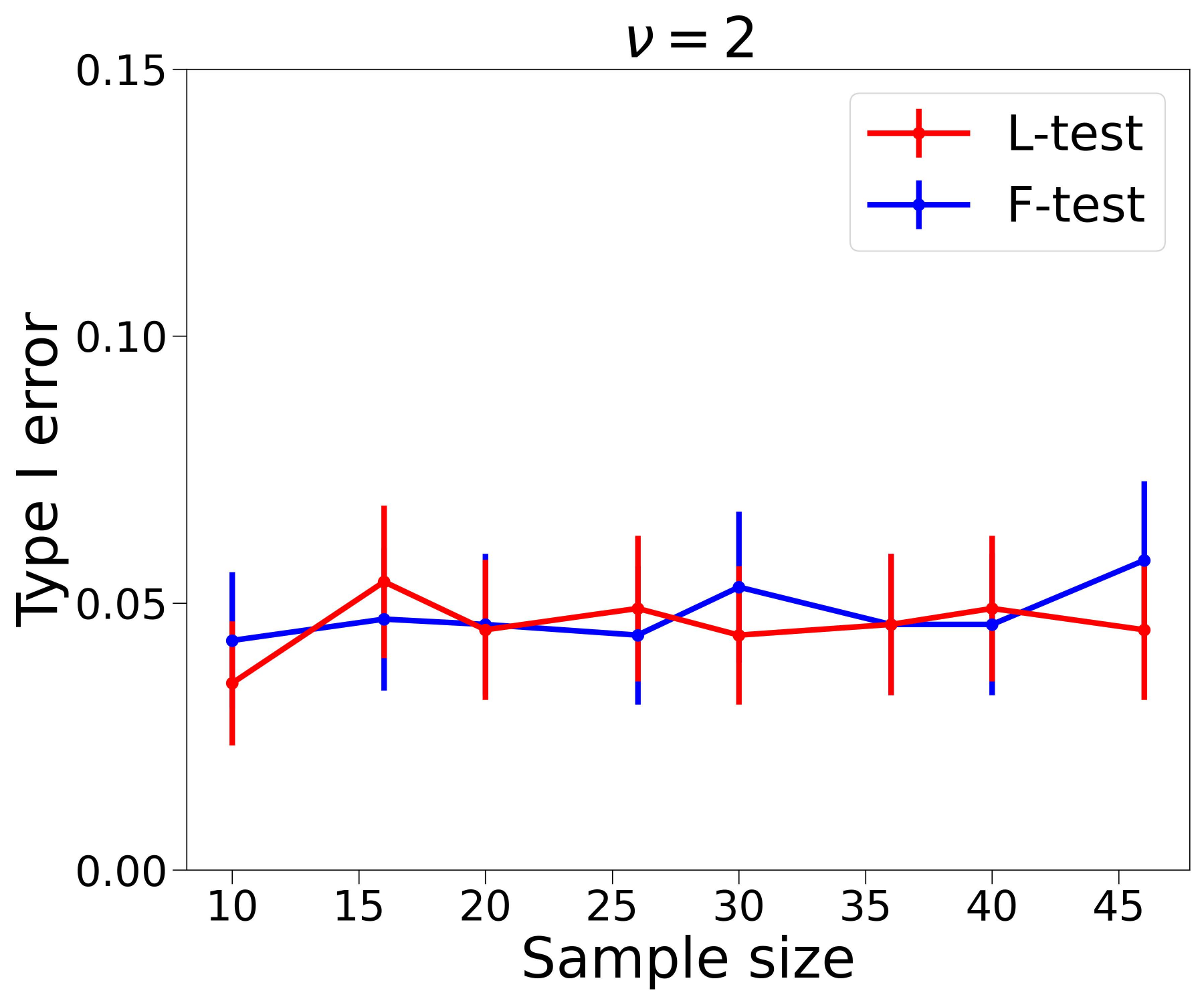}
    \end{subfigure}
    \hspace{0.2cm}
    \begin{subfigure}[t]{0.4\linewidth}
        \centering
        \includegraphics[width=\linewidth]{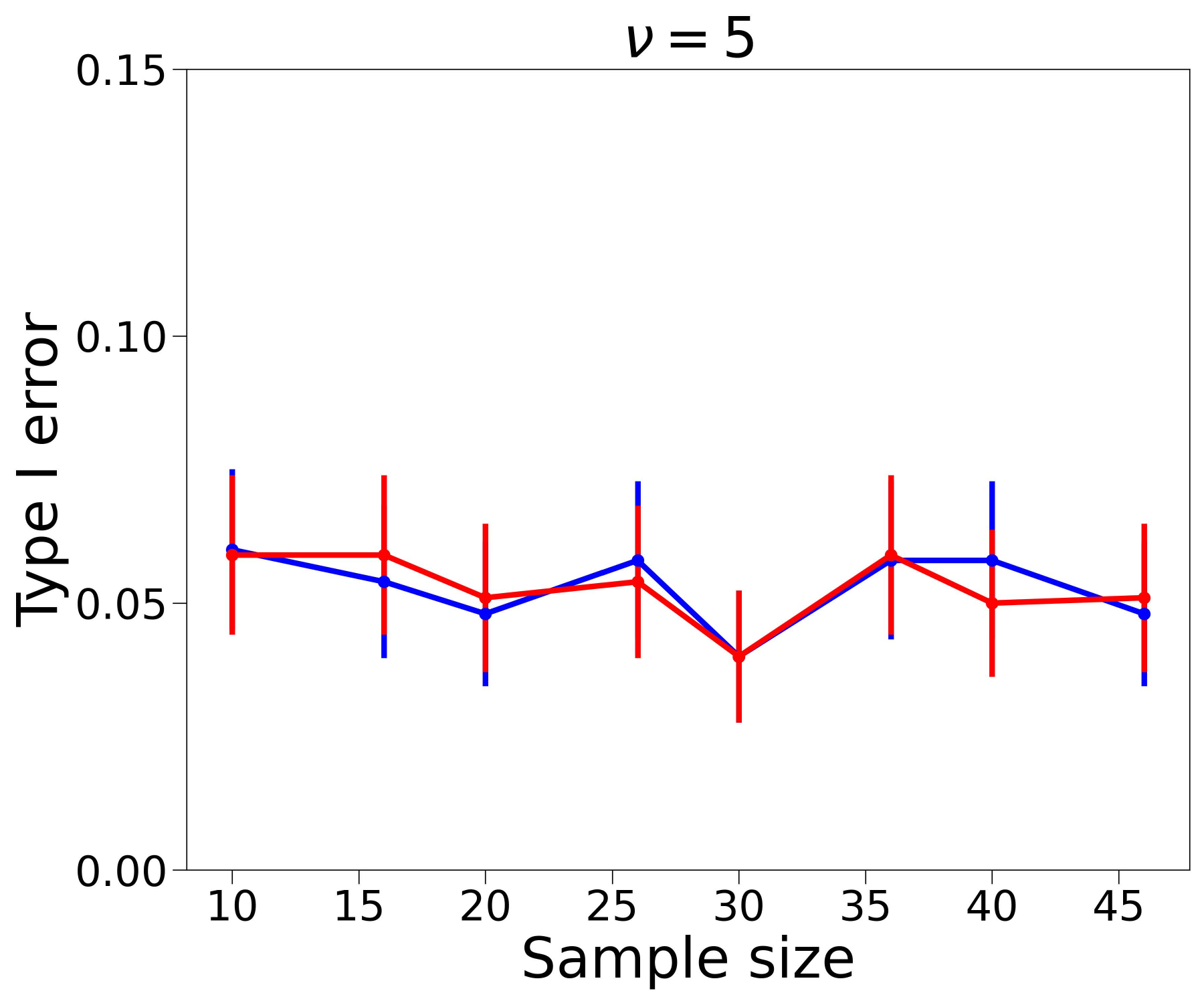}
    \end{subfigure} \\
    \begin{subfigure}[t]{0.4\linewidth}
        \centering
        \includegraphics[width=\linewidth]{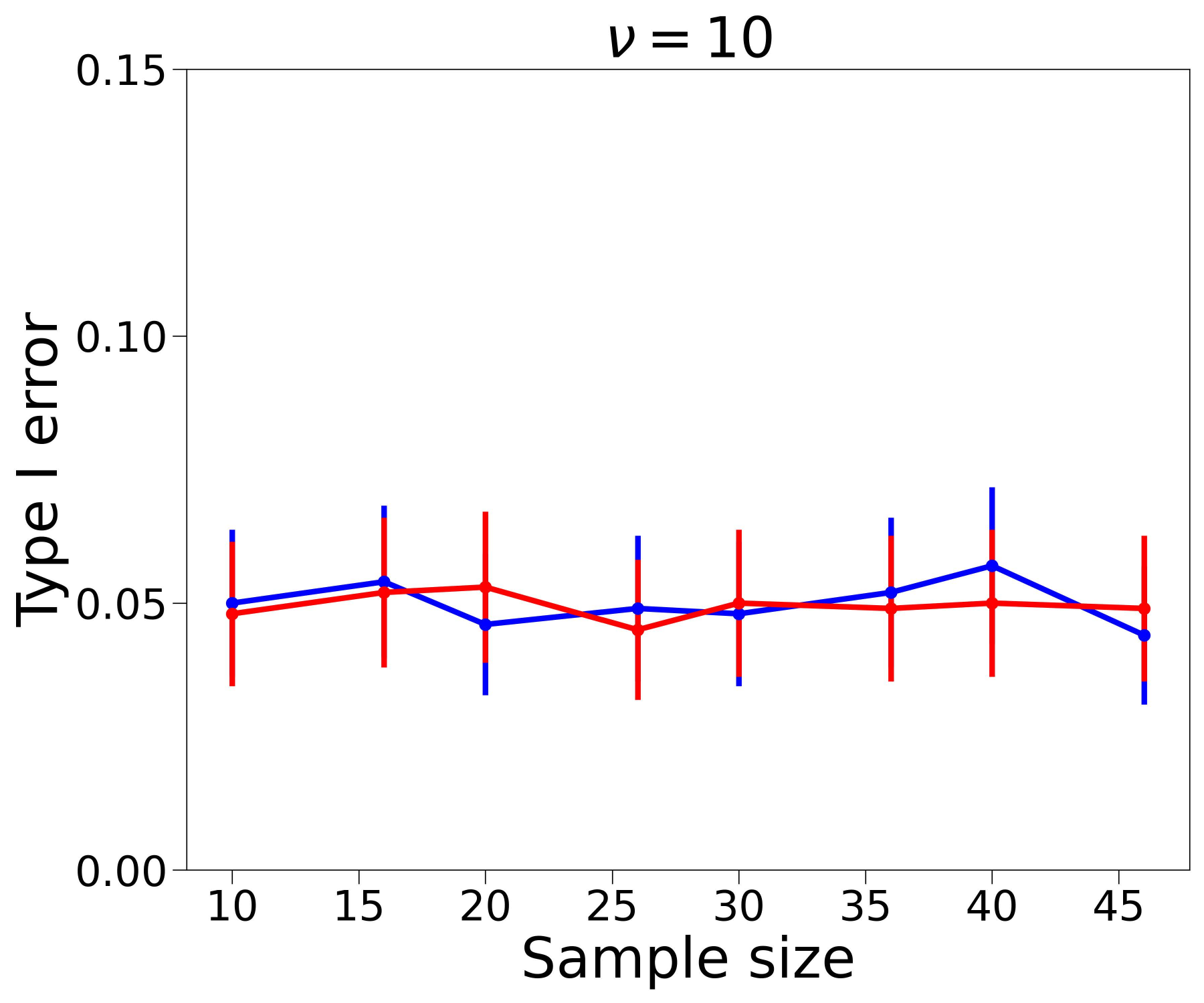}
    \end{subfigure}
    \hspace{0.2cm}
    \begin{subfigure}[t]{0.4\linewidth}
        \centering
        \includegraphics[width=\linewidth]{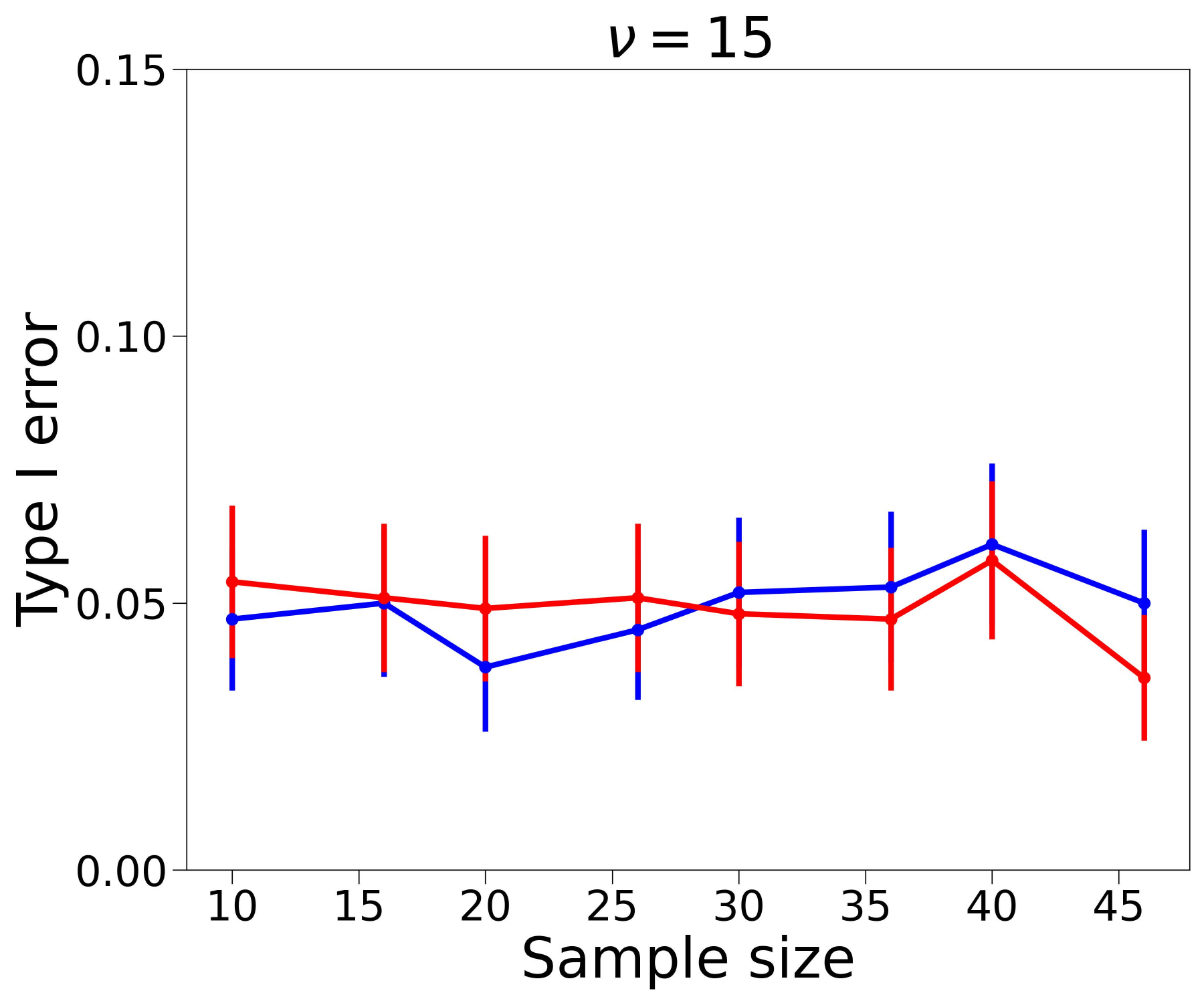}
    \end{subfigure} \\
    \begin{subfigure}[t]{0.4\linewidth}
        \centering
        \includegraphics[width=\linewidth]{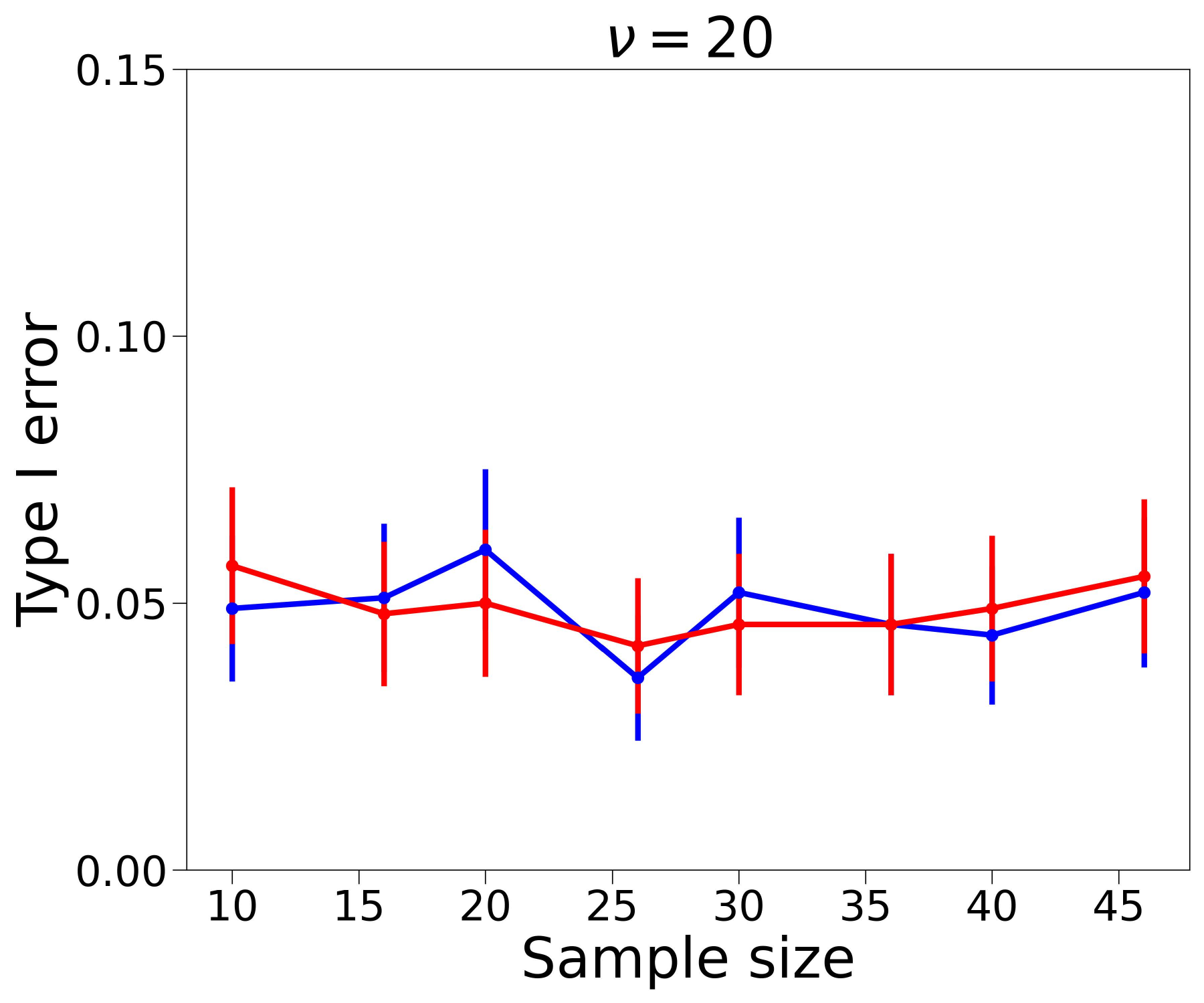}
    \end{subfigure}
    \hspace{0.2cm}
    \begin{subfigure}[t]{0.4\linewidth}
        \centering
        \includegraphics[width=\linewidth]{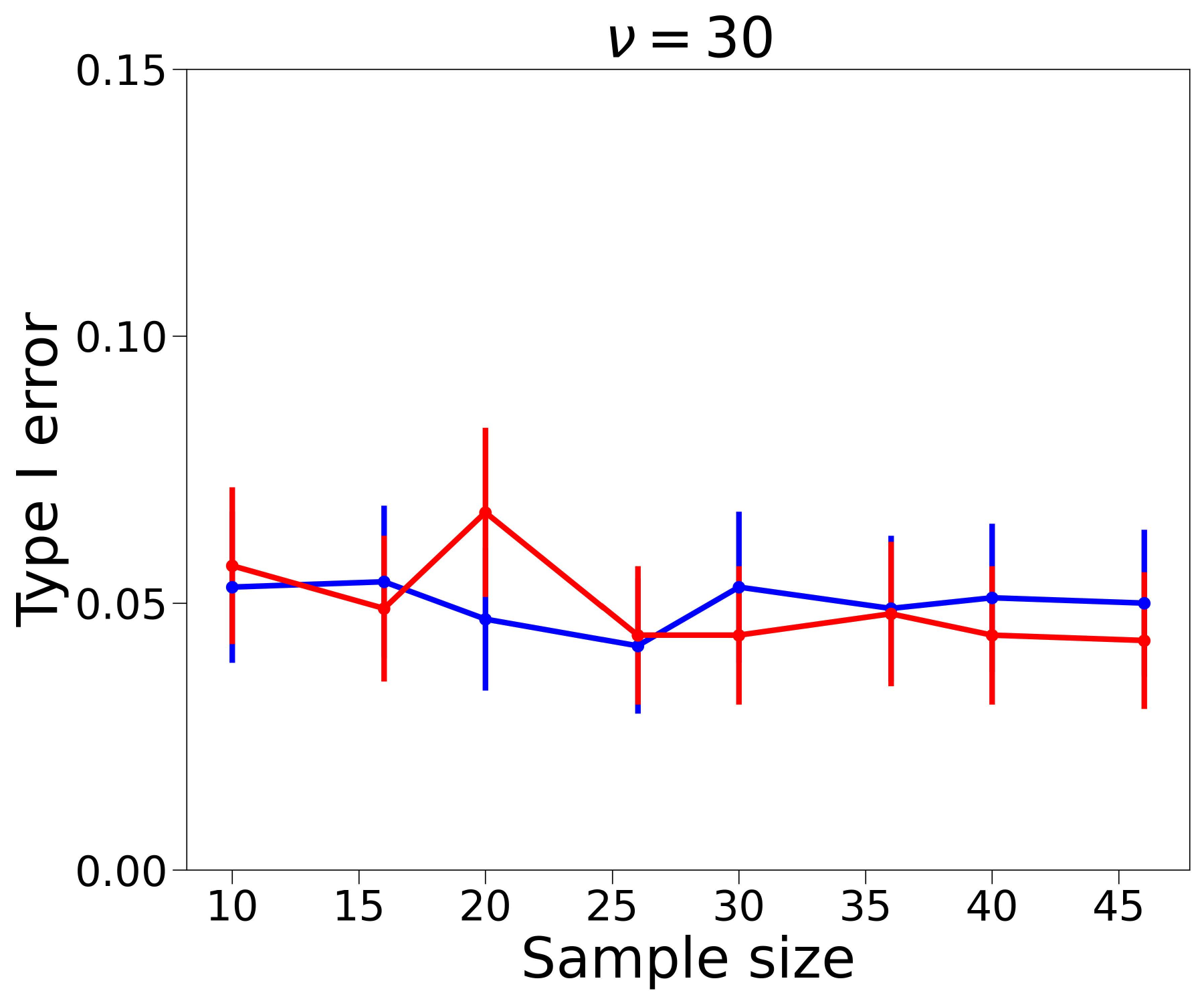}
    \end{subfigure}
    \caption{Same as Figure \ref{fig:rob_1} but with $\rho = 0.5$.}
    \label{fig:rob_5}
\end{figure}

\begin{figure}[ht]
    \centering
    \begin{subfigure}[t]{0.4\linewidth}
        \centering
        \includegraphics[width=\linewidth]{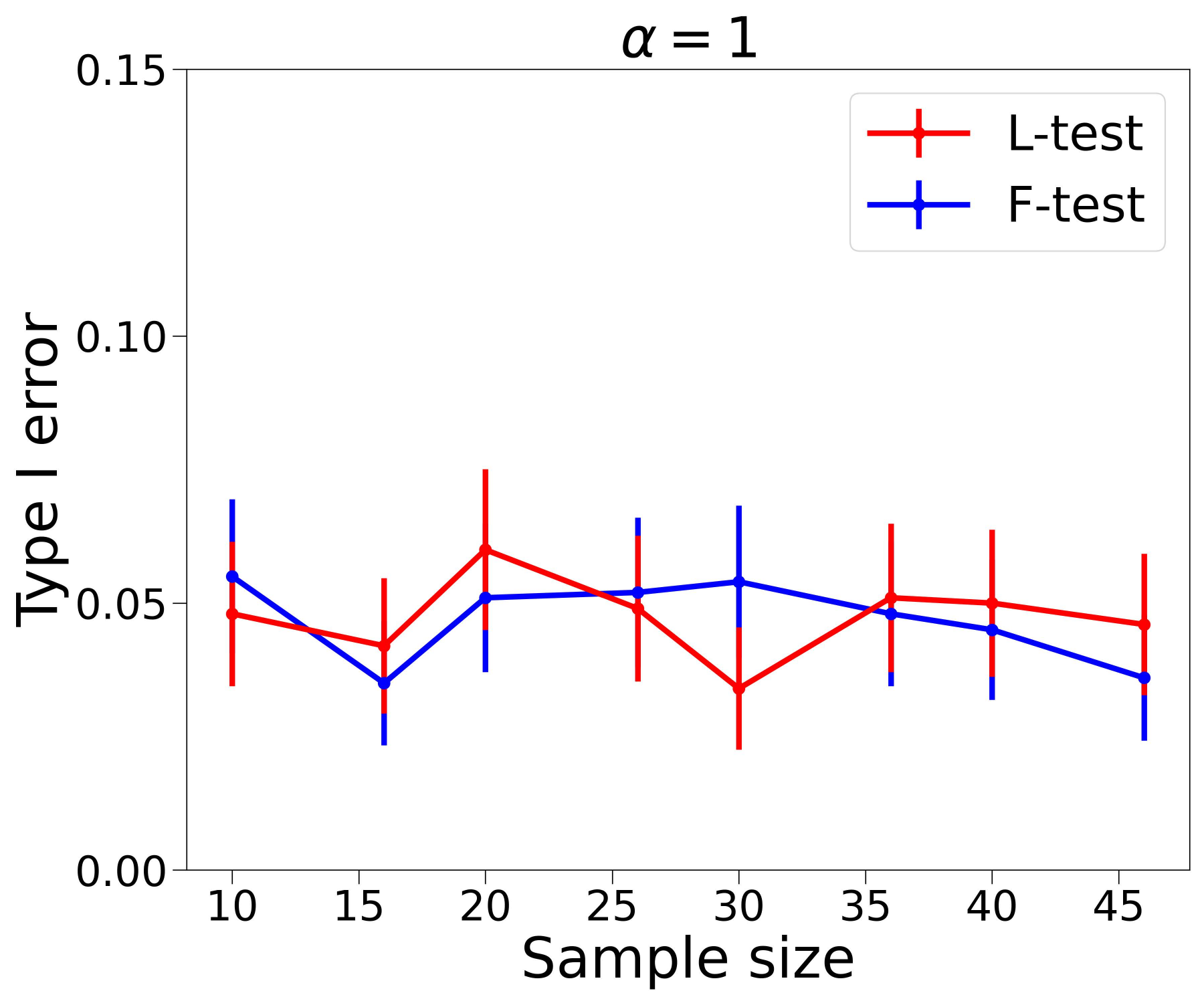}
    \end{subfigure}
    \hspace{0.2cm}
    \begin{subfigure}[t]{0.4\linewidth}
        \centering
        \includegraphics[width=\linewidth]{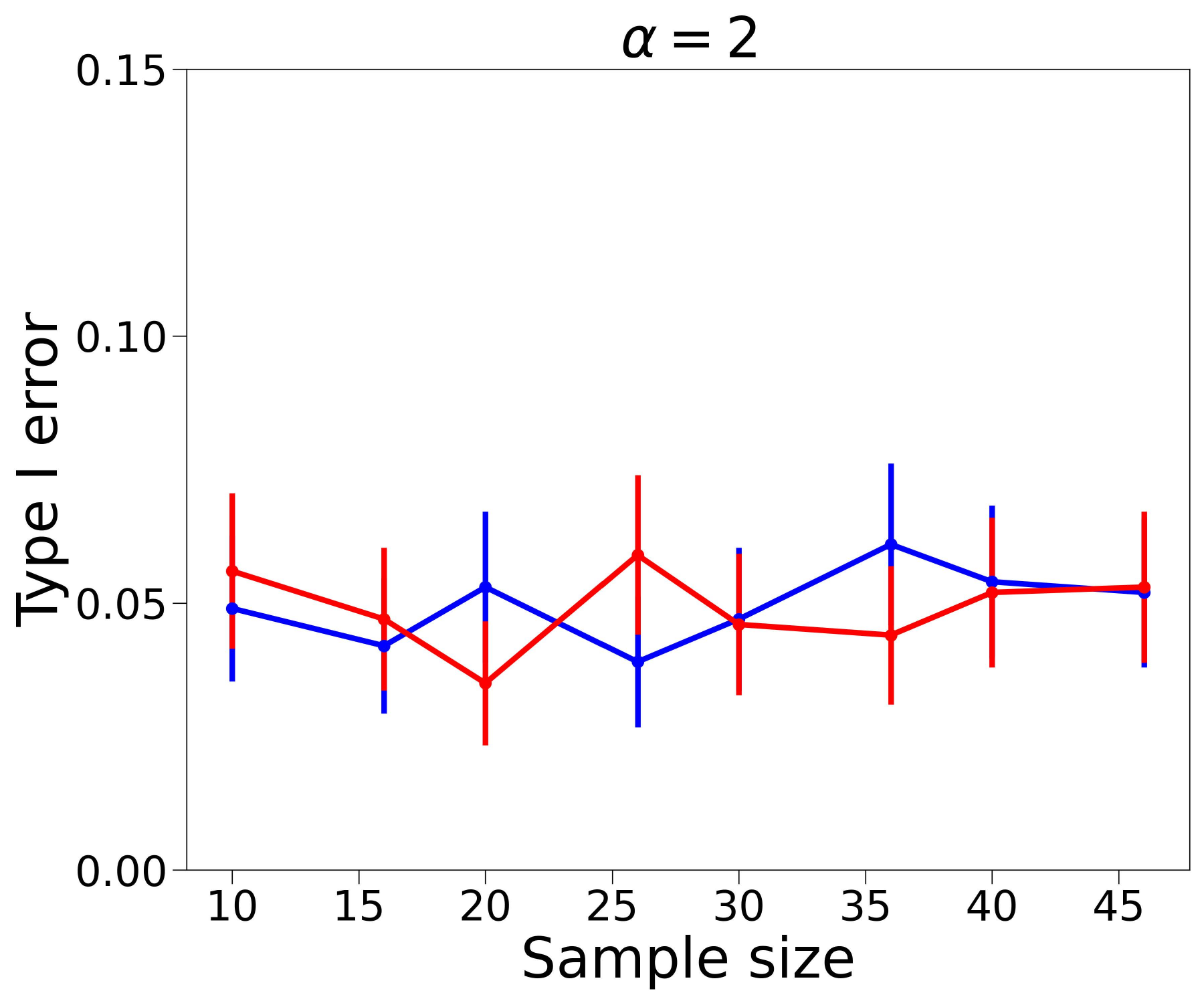}
    \end{subfigure} \\
    \begin{subfigure}[t]{0.4\linewidth}
        \centering
        \includegraphics[width=\linewidth]{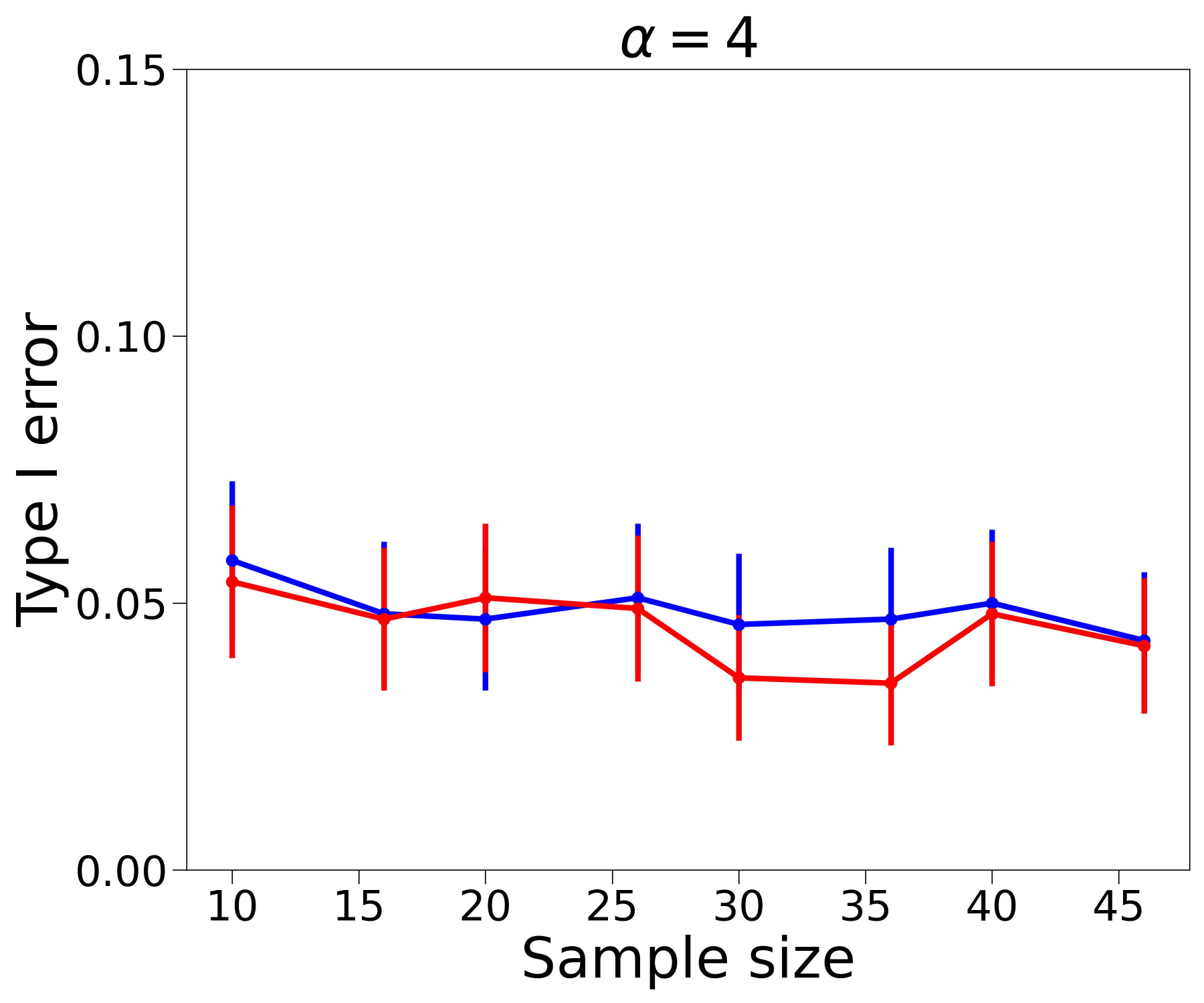}
    \end{subfigure}
    \hspace{0.2cm}
    \begin{subfigure}[t]{0.4\linewidth}
        \centering
        \includegraphics[width=\linewidth]{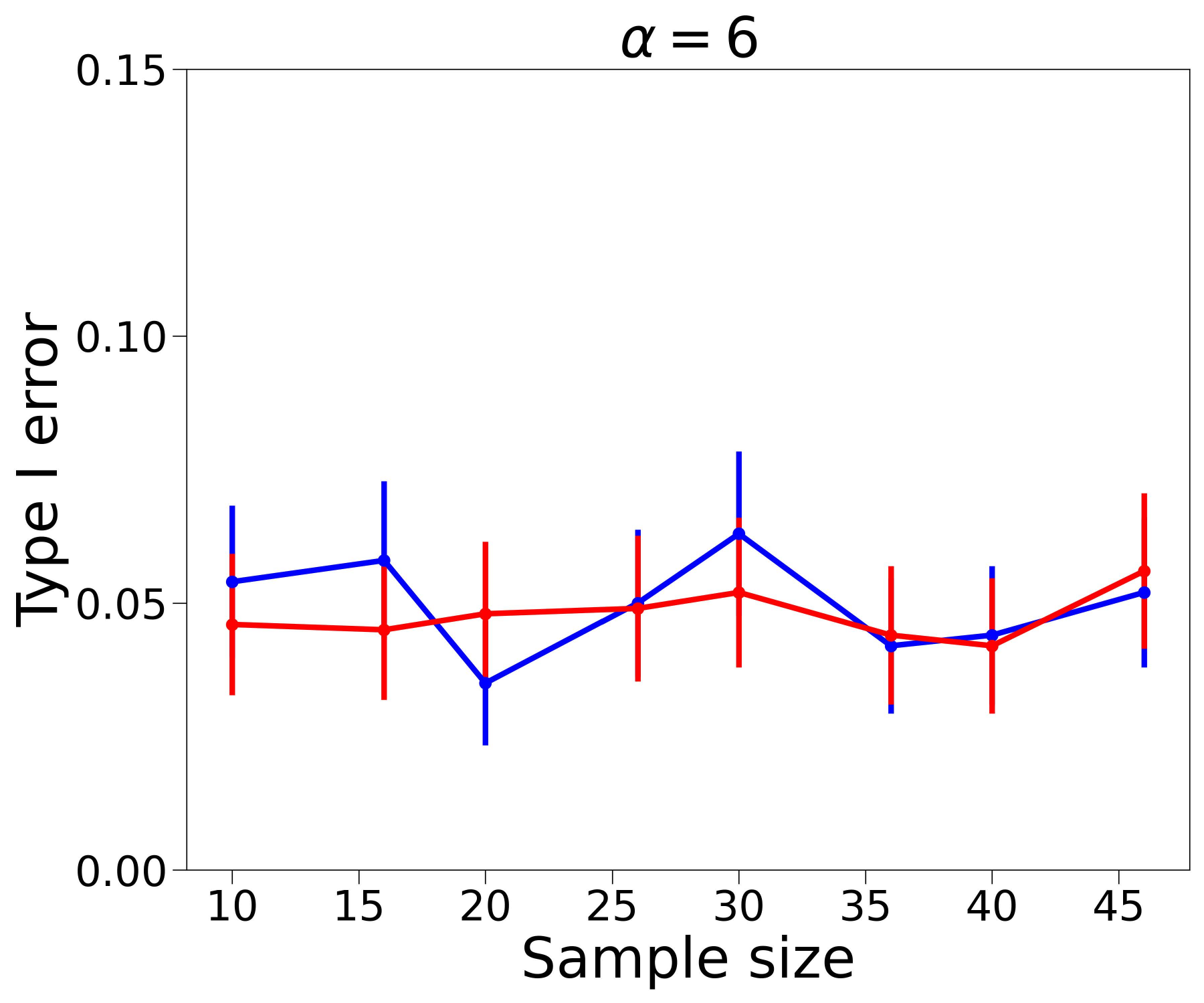}
    \end{subfigure} \\
    \begin{subfigure}[t]{0.4\linewidth}
        \centering
        \includegraphics[width=\linewidth]{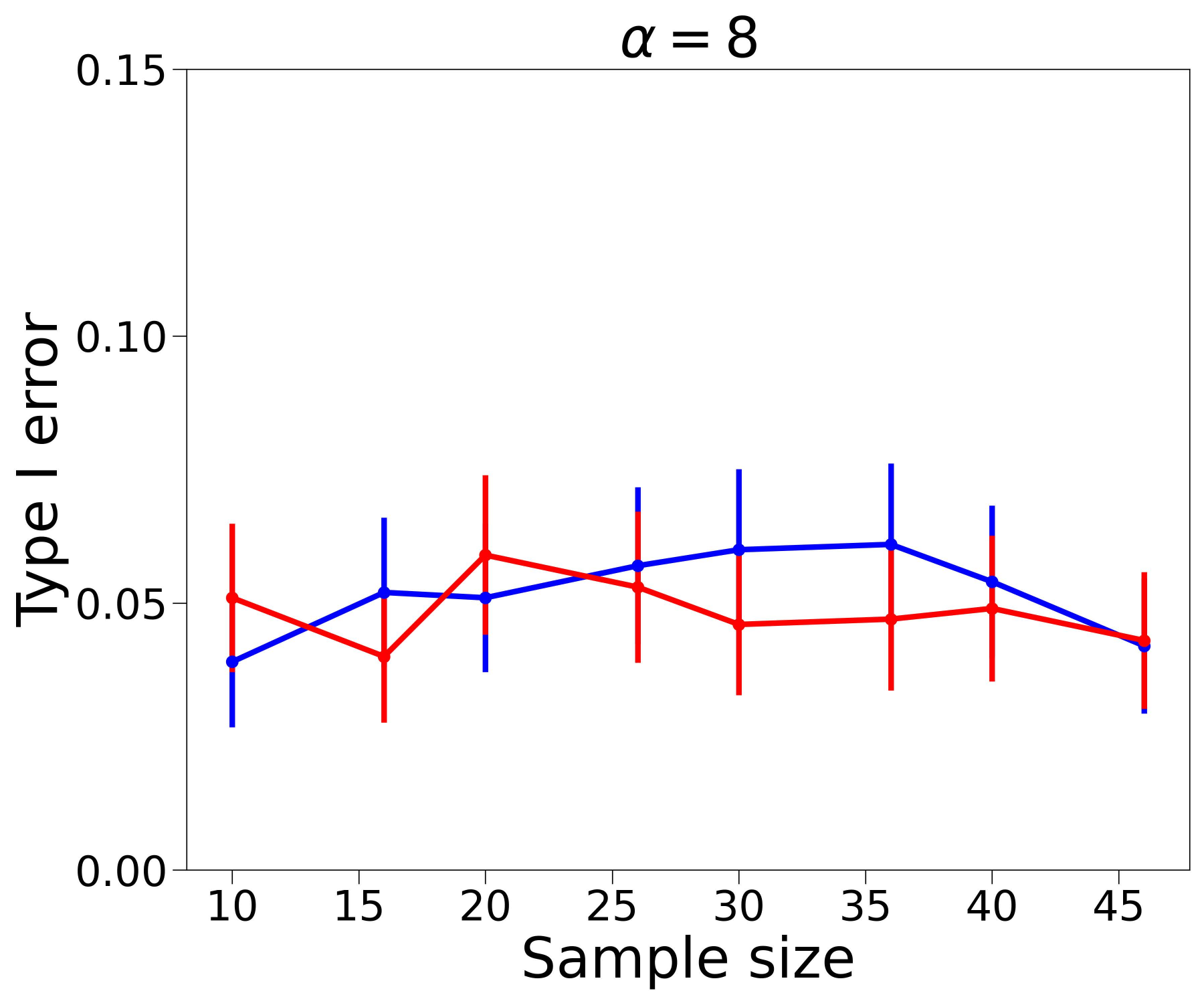}
    \end{subfigure}
    \hspace{0.2cm}
    \begin{subfigure}[t]{0.4\linewidth}
        \centering
        \includegraphics[width=\linewidth]{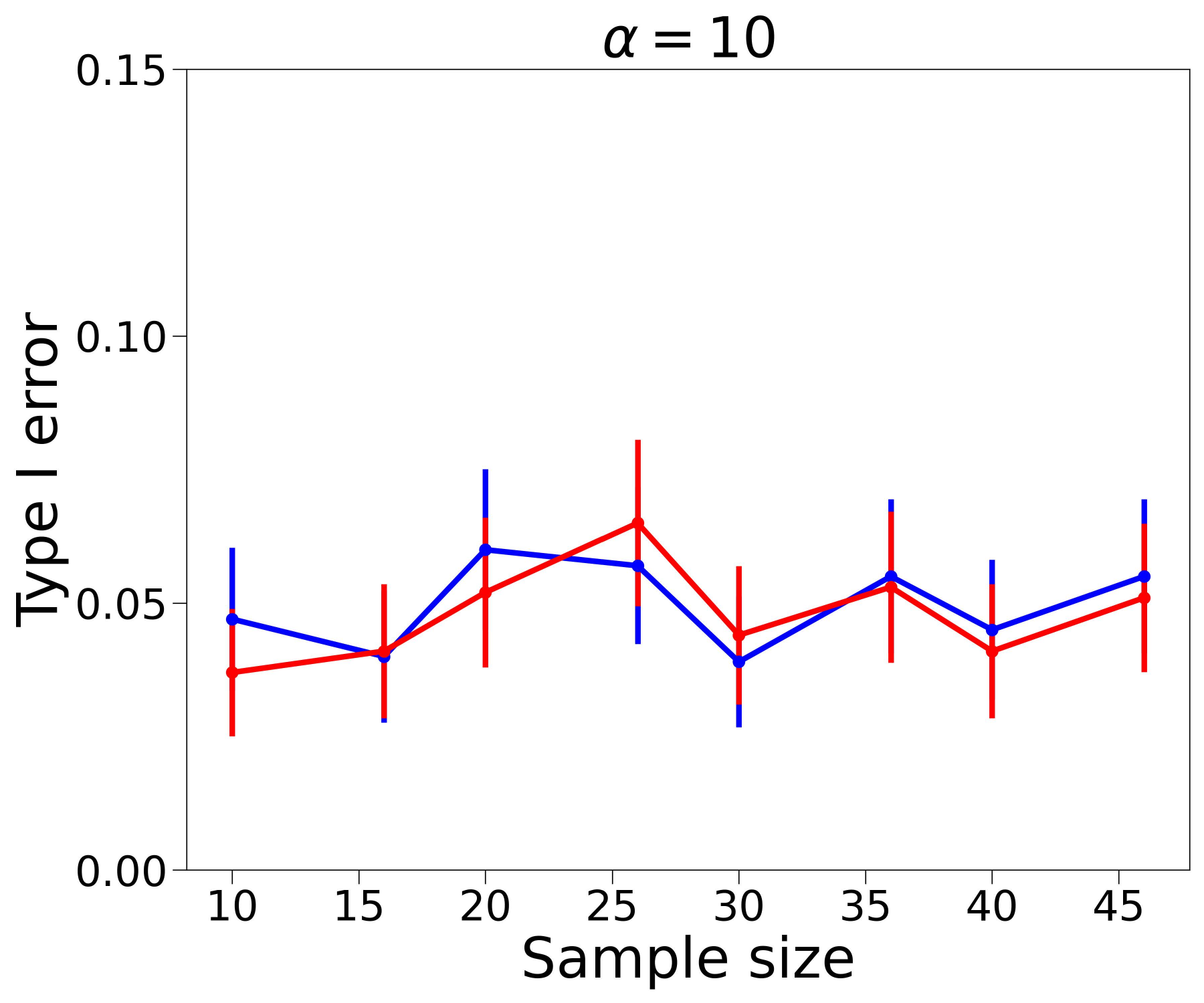}
    \end{subfigure}
    \caption{Same as Figure \ref{fig:rob_2} but with $\rho = 0.5$.}
    \label{fig:rob_6}
\end{figure}

\begin{figure}[ht]
    \centering
    \begin{subfigure}[t]{0.4\linewidth}
        \centering
        \includegraphics[width=\linewidth]{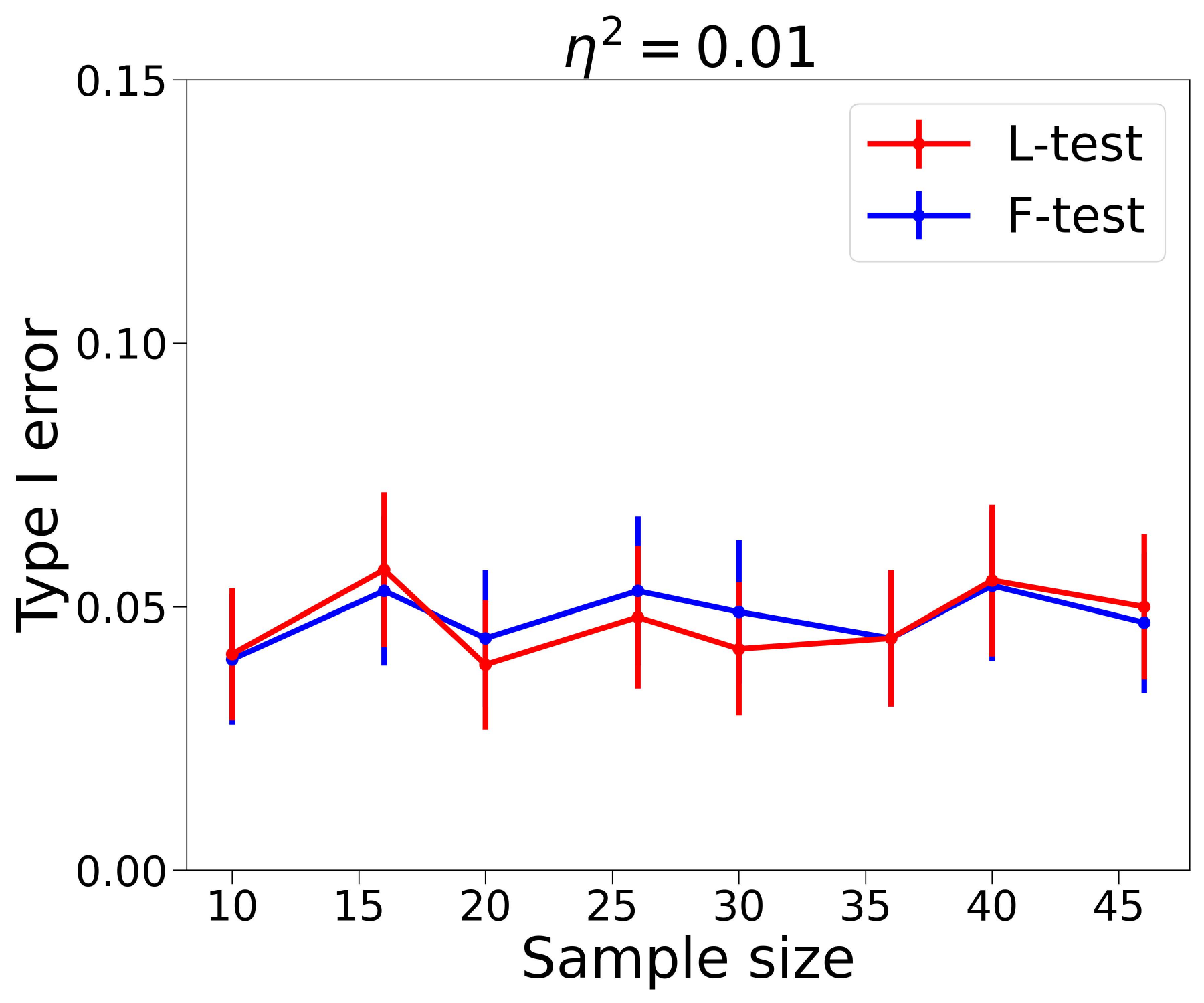}
    \end{subfigure}
    \hspace{0.2cm}
    \begin{subfigure}[t]{0.4\linewidth}
        \centering
        \includegraphics[width=\linewidth]{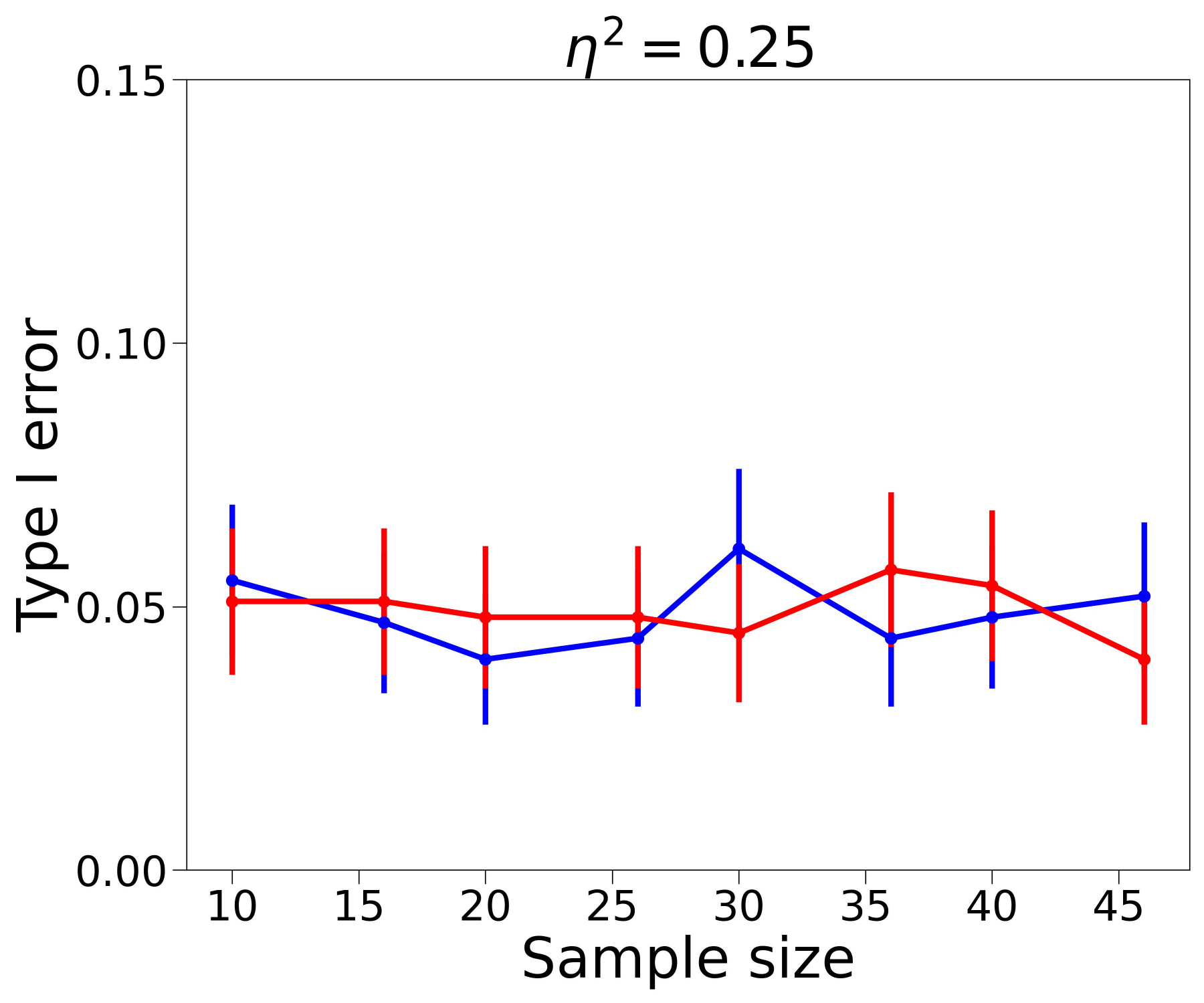}
    \end{subfigure} \\
    \begin{subfigure}[t]{0.4\linewidth}
        \centering
        \includegraphics[width=\linewidth]{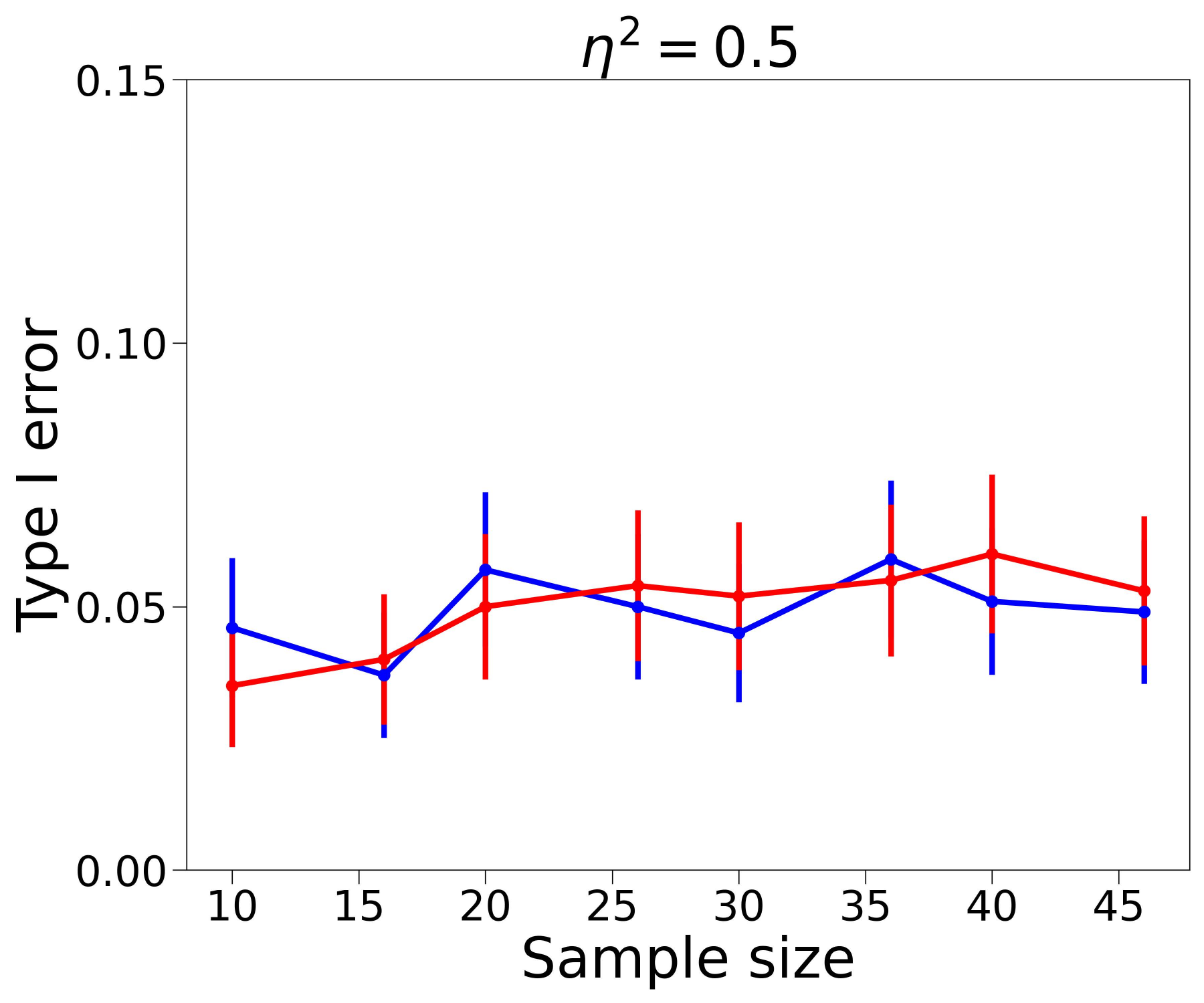}
    \end{subfigure}
    \hspace{0.2cm}
    \begin{subfigure}[t]{0.4\linewidth}
        \centering
        \includegraphics[width=\linewidth]{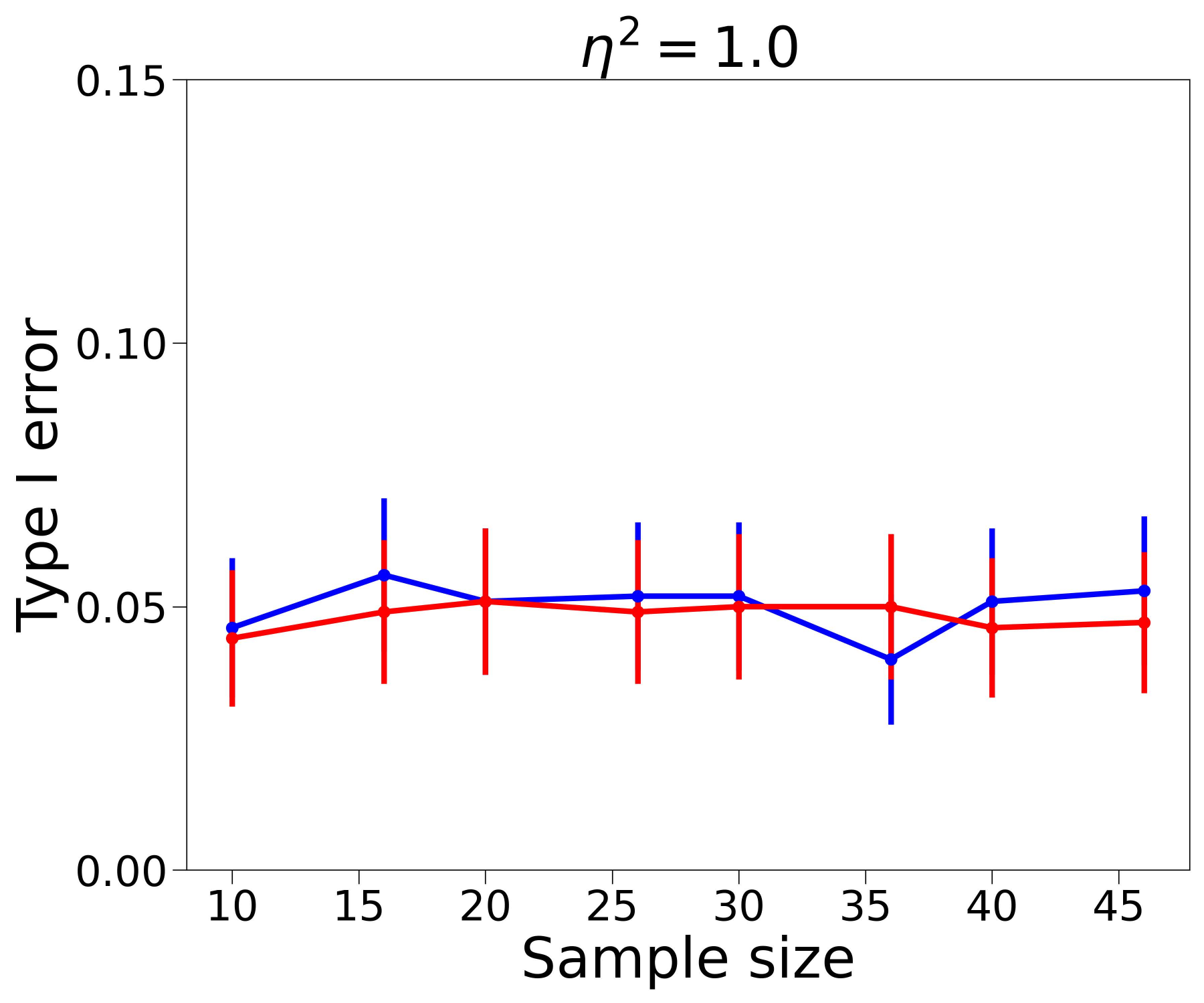}
    \end{subfigure} \\
    \begin{subfigure}[t]{0.4\linewidth}
        \centering
        \includegraphics[width=\linewidth]{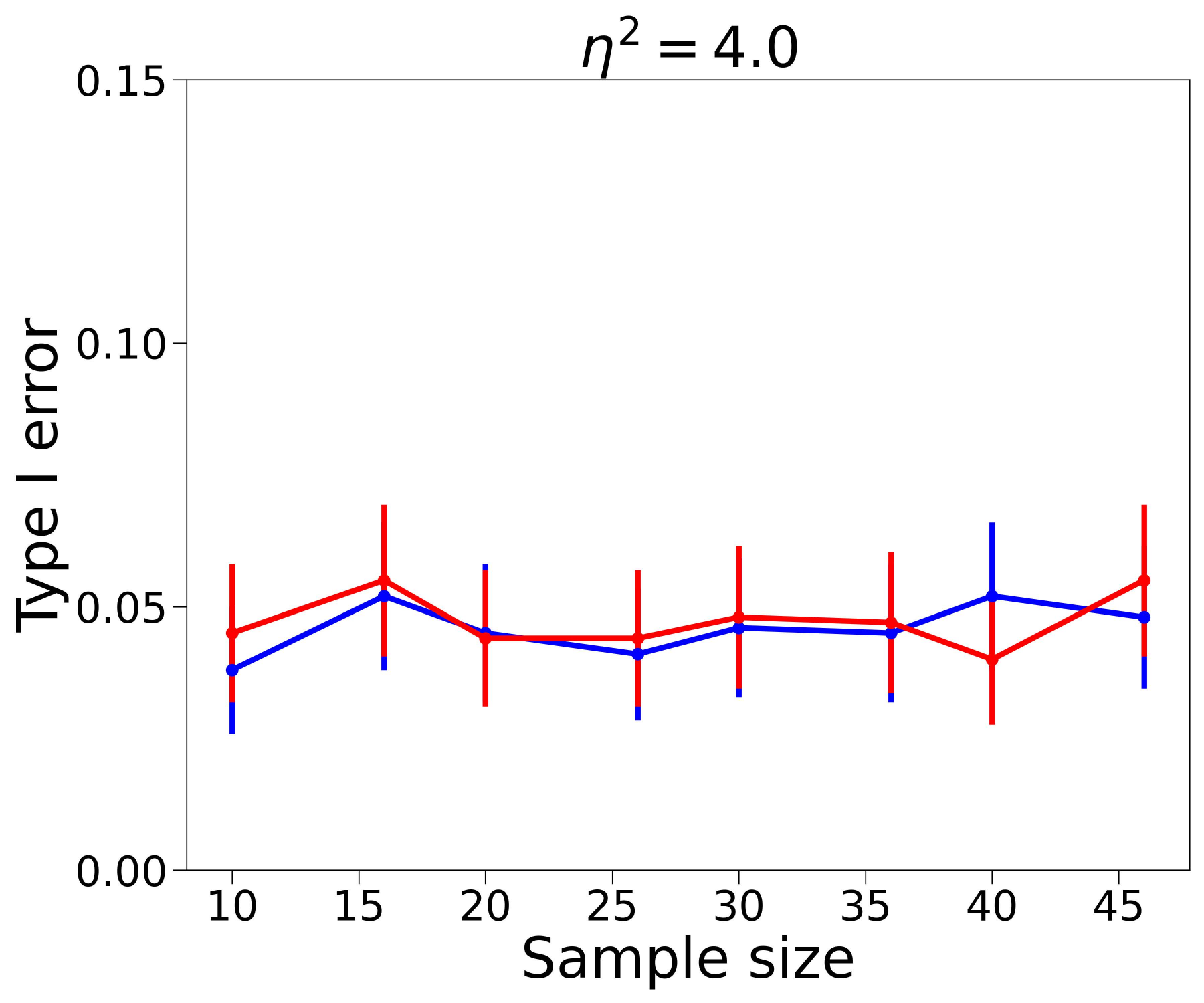}
    \end{subfigure}
    \hspace{0.2cm}
    \begin{subfigure}[t]{0.4\linewidth}
        \centering
        \includegraphics[width=\linewidth]{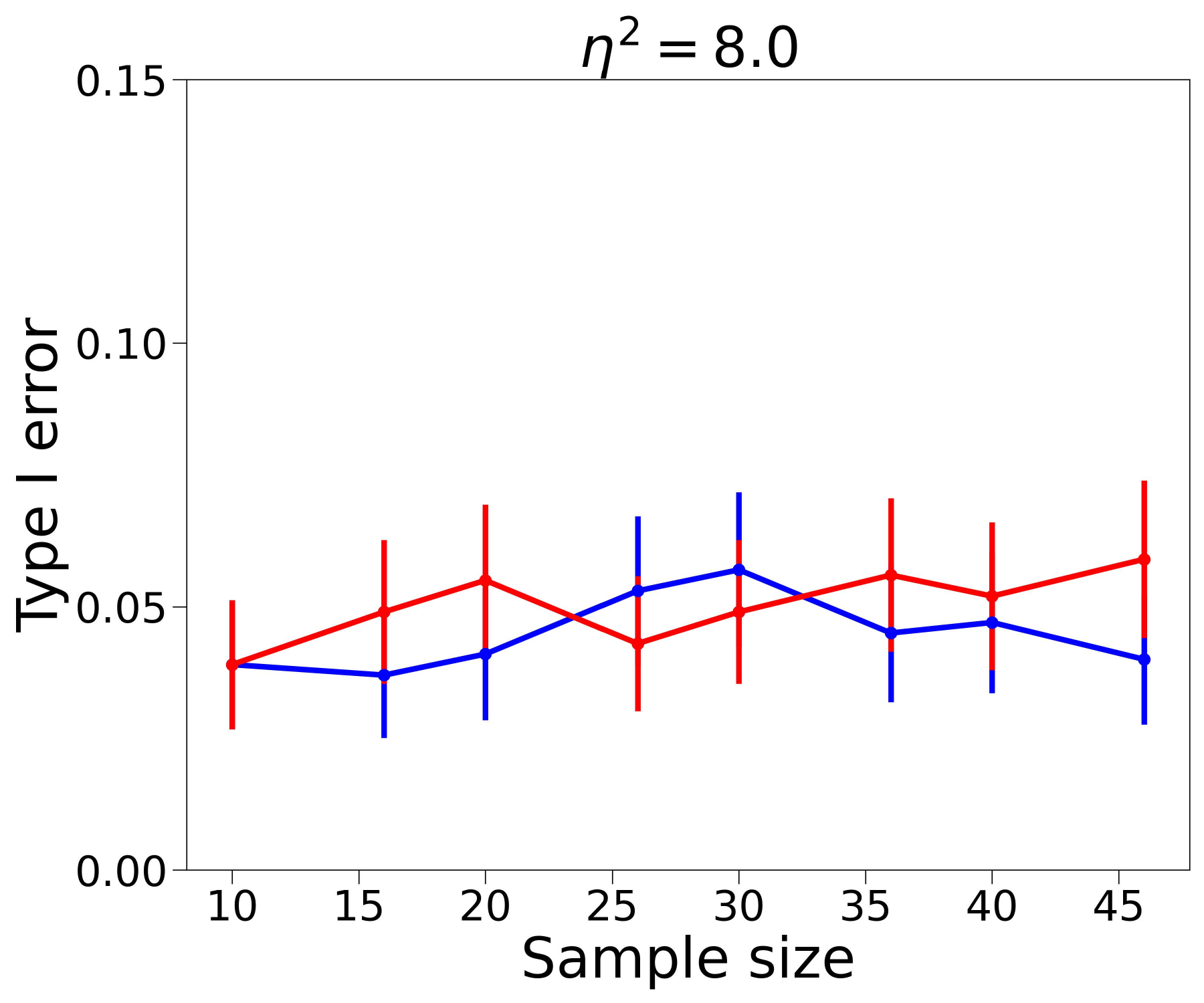}
    \end{subfigure}
    \caption{Same as Figure \ref{fig:rob_3} but with $\rho = 0.5$.}
    \label{fig:rob_7}
\end{figure}

\begin{figure}[ht]
    \centering
    \begin{subfigure}[t]{0.4\linewidth}
        \centering
        \includegraphics[width=\linewidth]{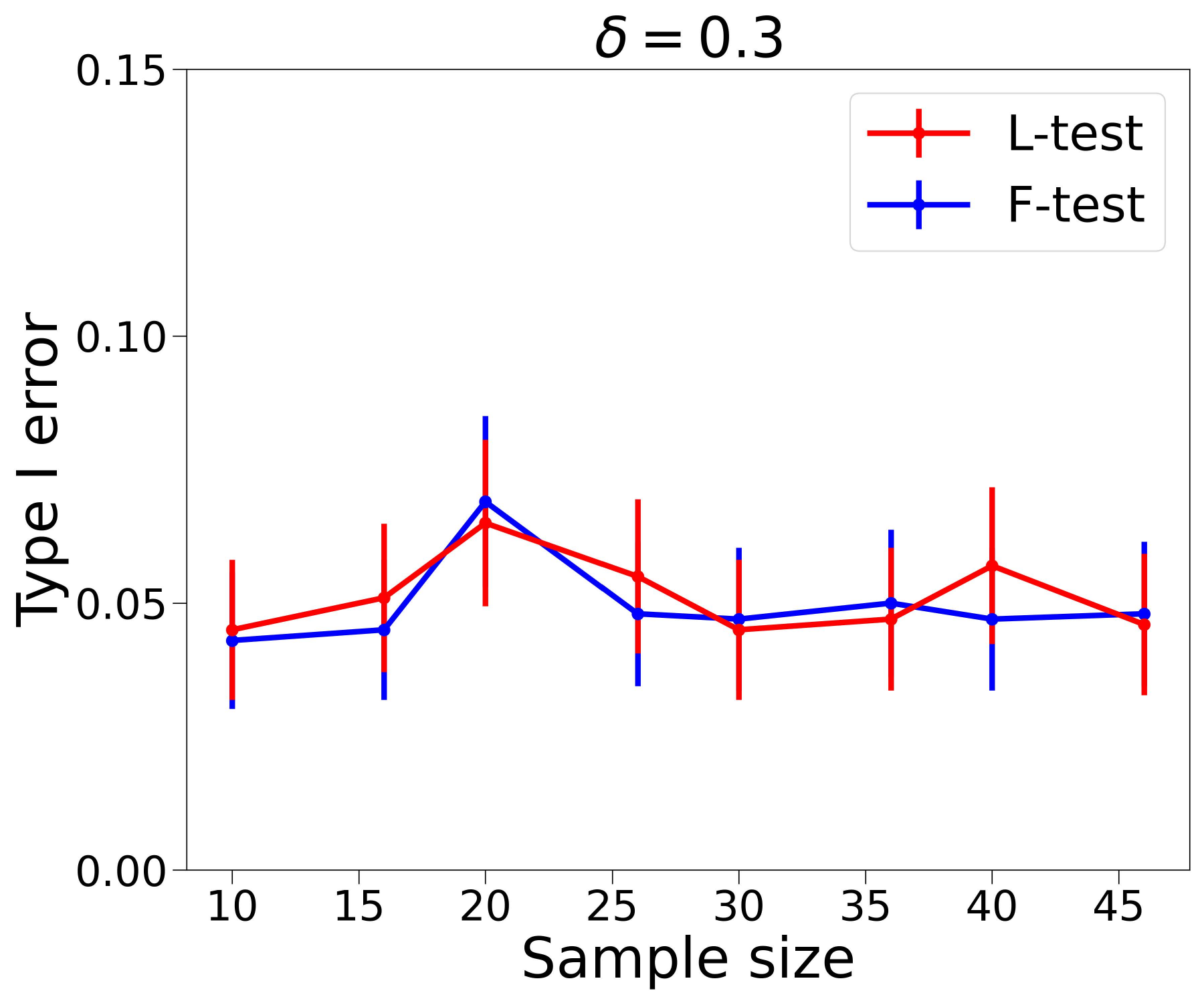}
    \end{subfigure}
    \hspace{0.2cm}
    \begin{subfigure}[t]{0.4\linewidth}
        \centering
        \includegraphics[width=\linewidth]{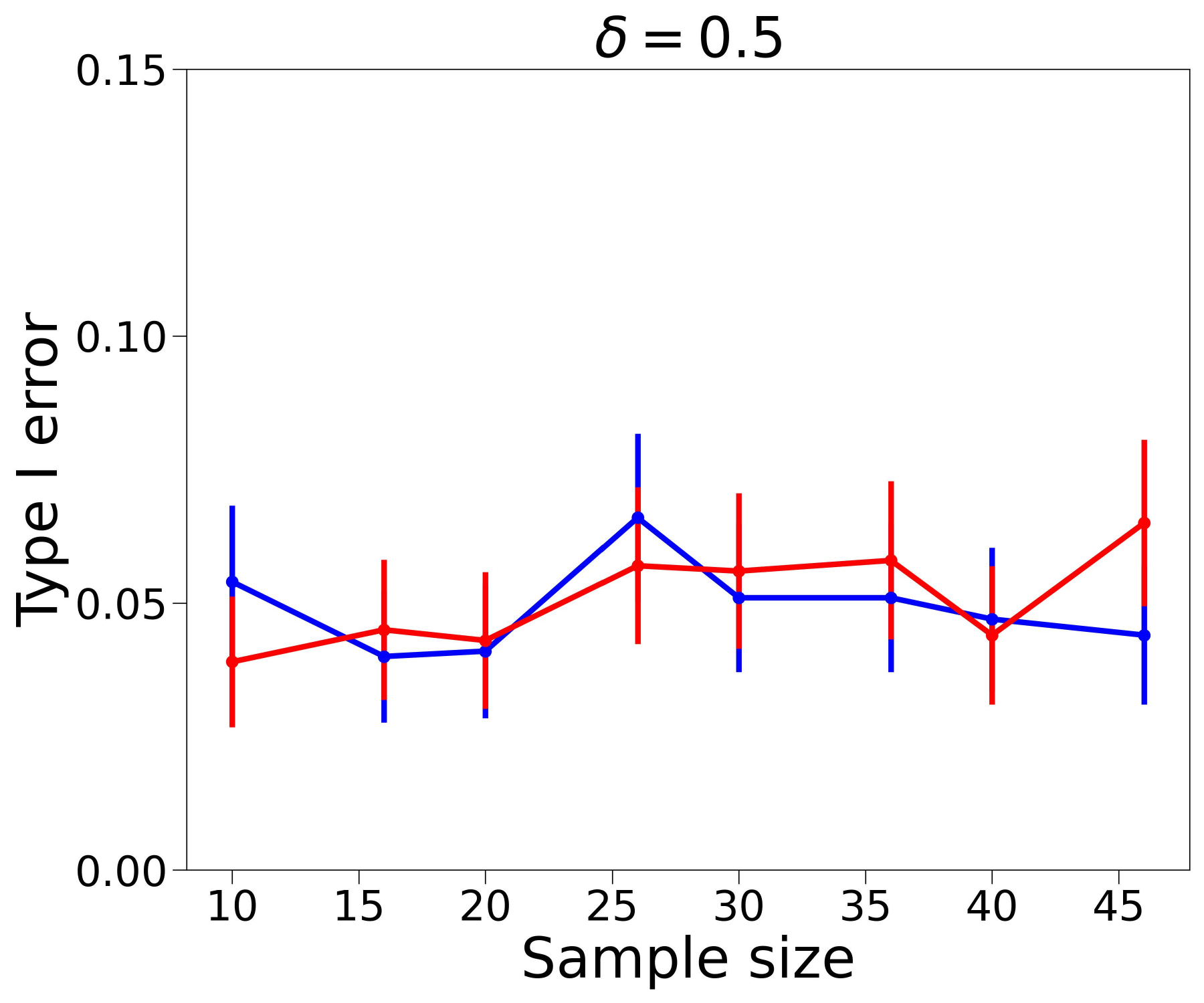}
    \end{subfigure} \\
    \begin{subfigure}[t]{0.4\linewidth}
        \centering
        \includegraphics[width=\linewidth]{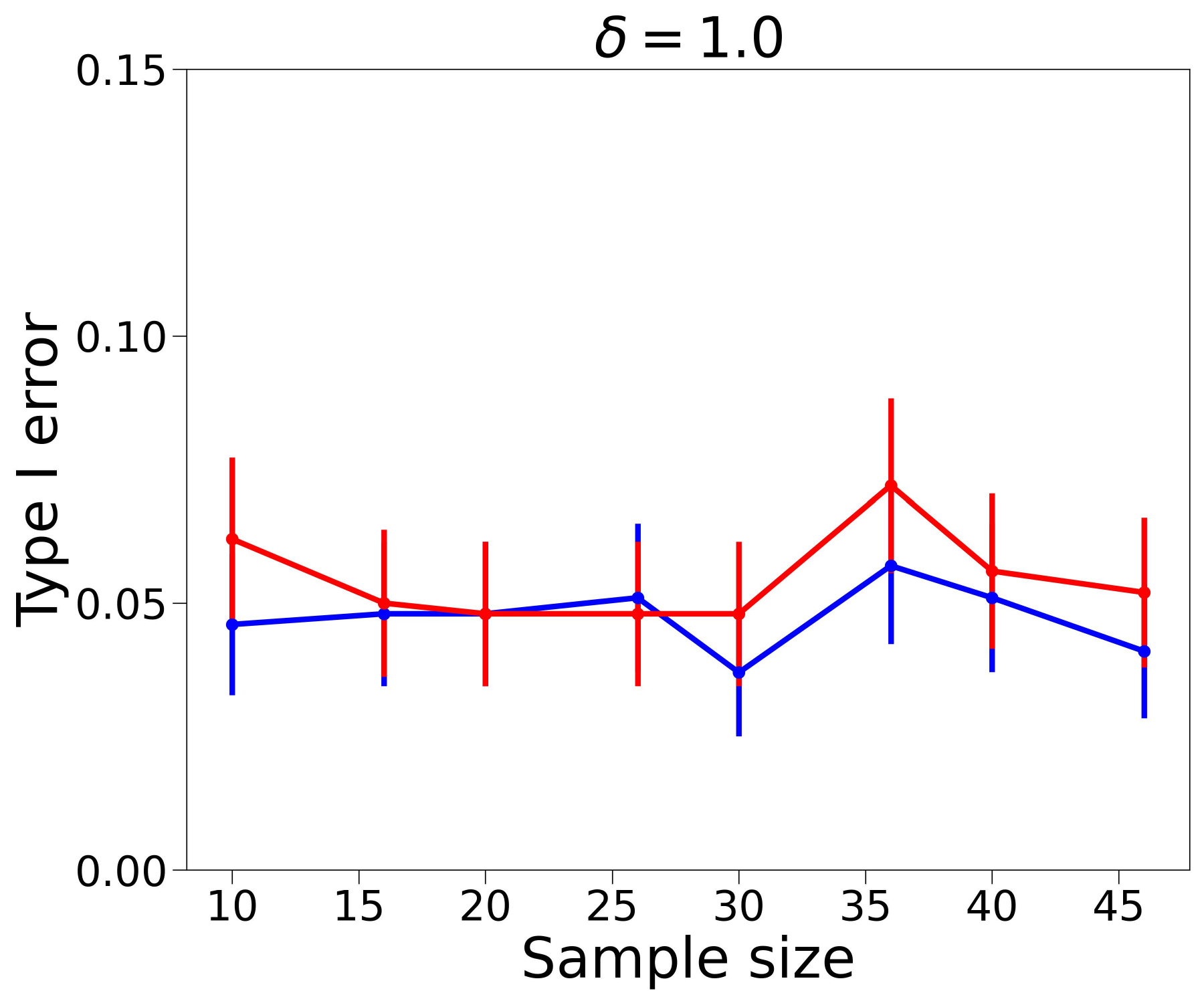}
    \end{subfigure}
    \hspace{0.2cm}
    \begin{subfigure}[t]{0.4\linewidth}
        \centering
        \includegraphics[width=\linewidth]{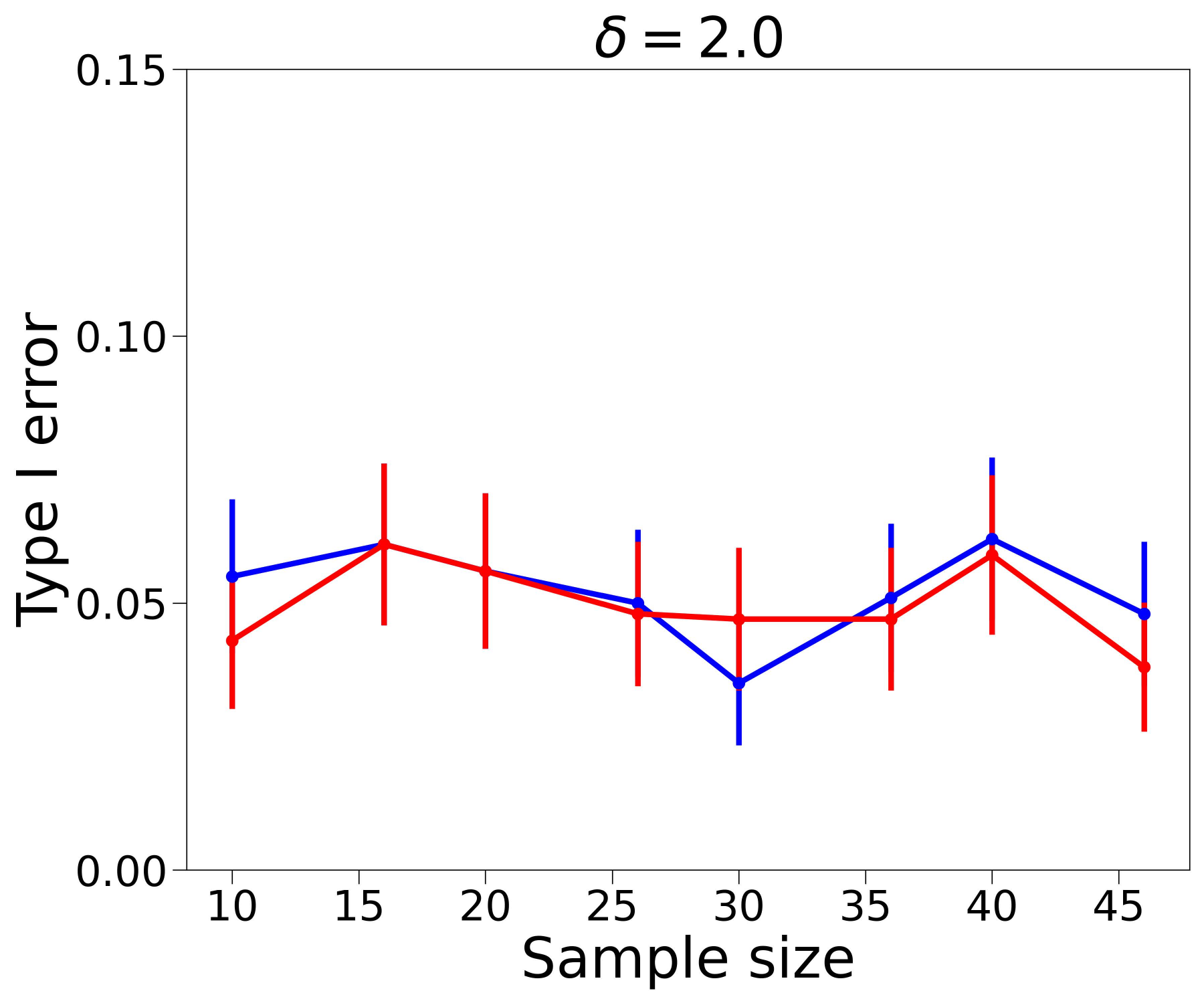}
    \end{subfigure} \\
    \begin{subfigure}[t]{0.4\linewidth}
        \centering
        \includegraphics[width=\linewidth]{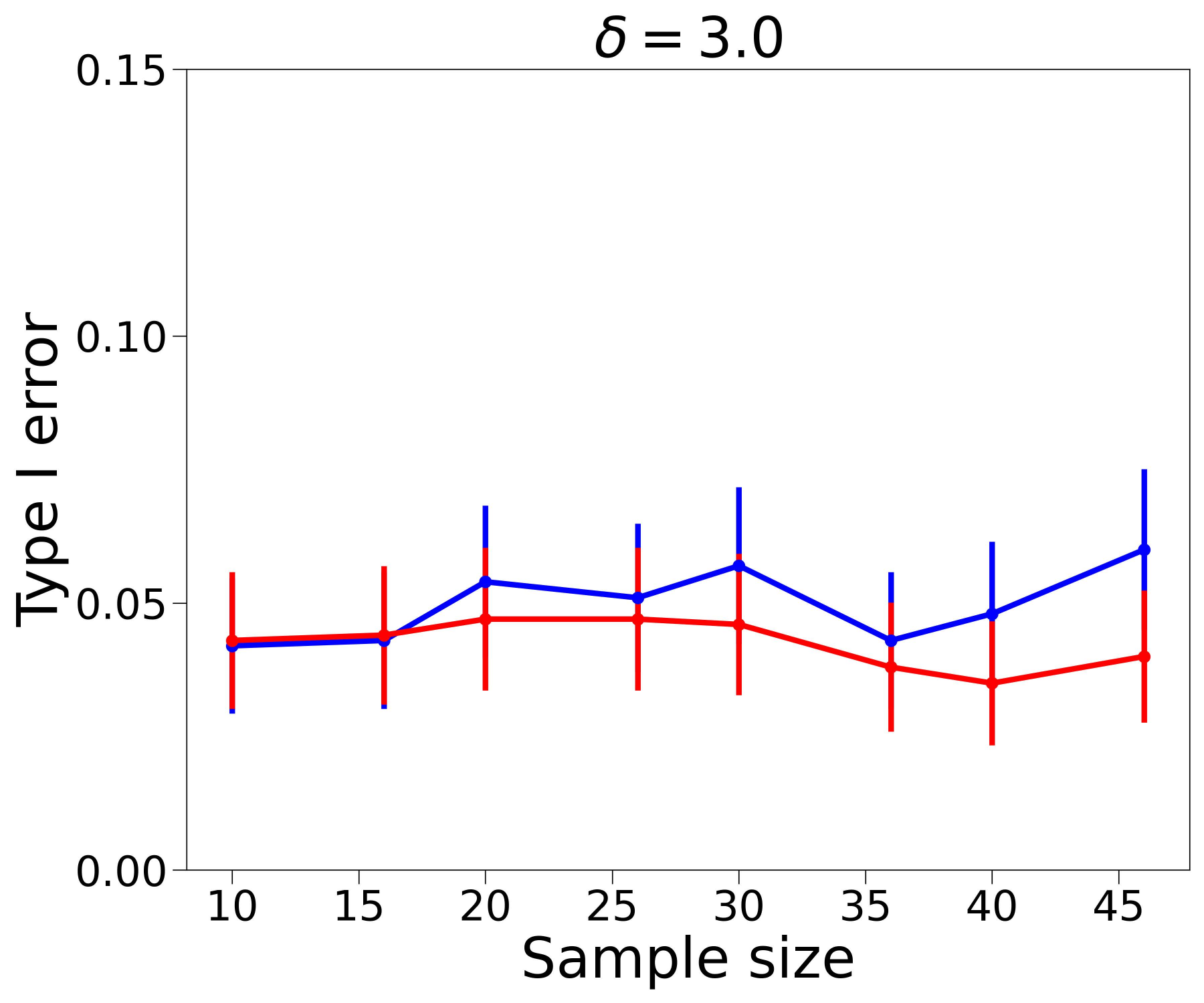}
    \end{subfigure}
    \hspace{0.2cm}
    \begin{subfigure}[t]{0.4\linewidth}
        \centering
        \includegraphics[width=\linewidth]{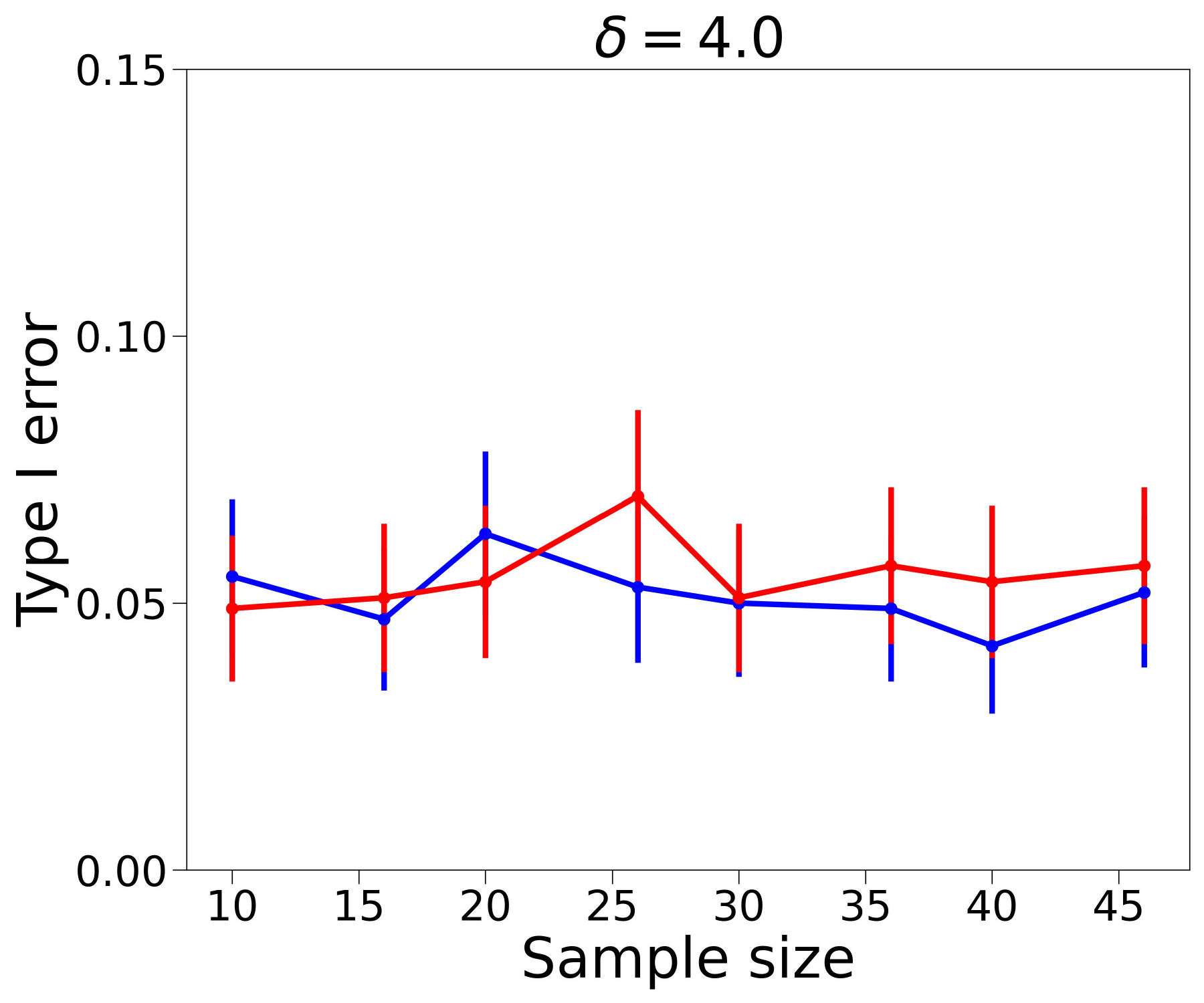}
    \end{subfigure}
    \caption{Same as Figure \ref{fig:rob_4} but with $\rho = 0.5$.}
    \label{fig:rob_8}
\end{figure}

\end{document}